\definecolor{hkured}{HTML}{EE4123}
\definecolor{hkublue}{HTML}{009BD4}
\definecolor{hkugreen}{HTML}{00B38C}
\declaretheorem{theorem}
\declaretheorem[sibling=theorem]{lemma}
\declaretheorem[sibling=theorem]{corollary}
\declaretheorem[style=definition]{definition}
\newcommand{\dif}{\mathrm{d}}
\newcommand{\ALG}{\mathrm{ALG}}
\newcommand{\OPT}{\mathrm{OPT}}
\newcommand{\E}{\mathbf{E}}
\newcommand{\Distribution}{F}
\newcommand{\mass}{f}
\newcommand{\matchrate}{\mu}
\renewcommand{\Pr}[2][]{\mbox{\rm\bf Pr}_{#1}\left[#2\right]}%
\title{Stochastic Online Correlated Selection}
\author{
	Ziyun Chen
	\thanks{University of Washington. Email: ziyuncc@cs.washington.edu. This work was done when the author was at Tsinghua University.}
	\and
	Zhiyi Huang
	\thanks{The University of Hong Kong. Email: zhiyi@cs.hku.hk, sunenze@connect.hku.hk.}
	\and
	Enze Sun
	\footnotemark[2]
}
\date{August 2024}
\begin{document}

\begin{titlepage}
    \thispagestyle{empty}
    \maketitle
    \begin{abstract}
        \thispagestyle{empty}
        We initiate the study of Stochastic Online Correlated Selection (SOCS), a family of online rounding algorithms for the general Non-IID model of Stochastic Online Submodular Welfare Maximization and its special cases such as unweighted and vertex-weighted Online Stochastic Matching, Stochastic AdWords, and Stochastic Display Ads.
At each time step, the algorithm sees the type of an online item and a fractional allocation of the item, then immediately allocates the item to an agent.
We propose a metric called the convergence rate that measures the quality of SOCS algorithms in the above special cases.
This is cleaner than most metrics in the related Online Correlated Selection (OCS) literature and may be of independent interest.

We propose a Type Decomposition framework that reduces the design of SOCS algorithms to the easier special case of two-way SOCS.
First, we sample a surrogate type whose fractional allocation is half-integer.
The rounding is trivial for a one-way surrogate type fully allocated to one agent.
For a two-way surrogate type split equally between two agents, we round it using a two-way SOCS.
We design the distribution of surrogate types to get two-way types as often as possible, while respecting the original fractional allocation in expectation.

Following this framework, we make progress on numerous problems including two open questions related to AdWords:
\begin{itemize}
\item \emph{Online Stochastic Matching:~} 
	We improve the state-of-the-art $0.666$ competitive ratio for unweighted and vertex-weighted matching by Tang, Wu, and Wu (STOC 2022) to $0.69$.
\item \emph{Query-Commit Matching:~} 
	We further enhance the above competitive ratio to $0.705$ in the random-order relaxation.
    Using a known reduction, we get that same ratio in the Query-Commit model.
	This result improves the best previous ratios $0.696$ for unweighted matching by Mahdian and Yan (STOC 2011) and $0.662$ for vertex-weighted matching by Jin and Williamson (WINE 2021).
\item \emph{Stochastic AdWords:~}
	We give a $0.6338$ competitive algorithm for Stochastic AdWords, breaking the $1-\frac{1}{e}$ barrier for the first time.
	This answers a decade-old open question from the survey by Mehta (2013) about breaking this barrier in the IID special case.
\item \emph{AdWords:~}
	The framework of Type Decomposition can also be applied to the adversarial model if the two-way rounding algorithm is oblivious to the distribution of future items.
    From the two-way algorithm's viewpoint, the fixed adversarial sequence of items is a non-IID distribution that is a point mass, and the stochasticity comes from sampling surrogate types.
	Following this framework, we get the first multi-way OCS for AdWords, addressing an open question in the OCS literature.
	This further leads to a $0.504$ competitive ratio for AdWords, improving the previous $0.501$ ratio by Huang, Zhang, and Zhang (FOCS 2020).
\item \emph{Stochastic Display Ads:~}
	We design a $0.644$ competitive online algorithm for Stochastic Display Ads, breaking the $1-\frac{1}{e}$ barrier for the first time.
\end{itemize}

    \end{abstract}
\end{titlepage}

\pagestyle{empty}
\setcounter{tocdepth}{2}
\tableofcontents
\clearpage
\pagestyle{plain}
\setcounter{page}{1}

\section{Introduction}

Lots of practical scenarios involve solving matching problems under uncertainty of the underlying graphs.
For example, online advertising service providers like Google Ads and Microsoft Ads match user impressions to advertisers without accurate knowledge of the impressions that may come next.
Ride-hailing mobile apps including Uber and Lyft match riders and drivers under uncertainty of the ride requests and the drivers' geographical information in the future.
Kidney exchange platforms such as the National Kidney Registry match donor-patient pairs with each other, with only partial information about their compatibility.

One way to model uncertainty is to consider a maximin approach, where the algorithm needs to make online decisions that are robust to the worst-case future input as if it would be chosen by an adversary.
We will refer to this as the \emph{adversarial model}.
A classical example is the \emph{Online Bipartite Matching} problem introduced by \citet*{KarpVV:STOC:1990}

Another popular way is to capture the uncertainty in optimization problems stochastically.
For instance, consider online advertising and treat the user impressions and advertisers as the vertices on the two sides of a bipartite graph.
The advertisers register beforehand and hence are known to the platform.
User impressions arrive one by one and the platform needs to match each impression upon its arrival.
Therefore, we will refer to the advertisers and user impressions as offline and online vertices respectively.
While the platform cannot know in advance which impression will come next, it can predict based on data that the next user is drawn from a prior distribution, e.g., the uniform distribution over the user population.
Depending on the business model, we may want to maximize the cardinality of the matching, or the sum of the values of the matched edges, etc.
This is the \emph{Online Stochastic Matching} problem posed by \citet*{FeldmanMMM:FOCS:2009}.

The prior distribution may be time-dependent in some scenarios.
In the ride-hailing example, the pattern of ride requests in the morning when people are commuting to work could be different from the pattern in the evening when people are going home.
In other words, the prior distributions are no longer independently and identically distributed (IID) like in \emph{Online Stochastic Matching}.
\citet*{TangWW:STOC:2022} recently considered such a non-IID variant of \emph{Online Stochastic Matching}.

Both examples above consider vertex arrivals and reveal the incident edges as vertices arrive.
In other applications such as kidney exchange, however, one needs to proactively probe an edge to find whether it exists.
Before probing an edge $(i,j)$, we can only estimate the probability $0 \le p_{ij} \le 1$ that it presents.
For instance, a hospital needs to evaluate compatibility through blood tests, tissue typing, etc., before matching two donor-patient pairs.
Further, once we find that two donor-patient pairs are compatible, i.e., that the edge exists, we must commit to matching them.
This is the \emph{Query-Commit} model by \citet*{ChenIKMR:ICALP:2009}.

In the past few years, a lot of progress has been made on several long-standing open questions related to online matching, through the study of online rounding algorithms.
Most related to this paper is the concept of \emph{Online Correlated Selection (OCS)} introduced by \citet*{FahrbachHTZ:FOCS:2020} for the \emph{Display Ads} problem posed by \citet{FeldmanKMMP:WINE:2009}, which is an edge-weighted generalization of the \emph{Online Bipartite Matching} problem.
Online matching problems usually become easier if we relax them by allowing matching vertices fractionally.
For example, \citet{FeldmanKMMP:WINE:2009} solved the fractional variant of Display Ads by giving an optimal $1-\frac{1}{e}$ competitive algorithm.
By contrast, it had been open for more than a decade since then, until the work of \citet{FahrbachHTZ:FOCS:2020}, whether there is an online algorithm for the (integral) Display Ads problem with a competitive ratio better than $0.5$, the baseline set by the greedy algorithm.
In a nutshell, OCS is an online rounding algorithm that rounds the fractional matching decisions made by an online algorithm for the relaxed problem, by making randomized integral decisions.
Importantly, it will negatively correlate the decisions regarding any fixed offline vertex, because the baseline independent rounding algorithm only leads to the trivial $0.5$ competitive ratio for the online matching problems.

\citet{FahrbachHTZ:FOCS:2020} designed an OCS with provable negative correlation, and used it to achieve a $0.508$ competitive ratio for Display Ads.
Since then, the technique of OCS has been improved in a series of follow-up research by \citet{ShinA:ISAAC:2021}, \citet{GaoHHNYZ:FOCS:2021}, and finally \citet{BlancC:FOCS:2021}, who gave the state-of-the-art $0.536$-competitive algorithm for Display Ads.
Further, \citet*{HuangZZ:FOCS:2020} designed an OCS for the \emph{AdWords} problem posed by \citet*{MehtaSVV:JACM:2007}, and gave the first algorithm that breaks the $0.5$ barrier in the general case, answering another decade-old open question in the literature of online matching.

Finally, there were sporadic attempts to apply the OCS technique to the stochastic models of online matching.
\citet*{HuangSY:STOC:2022} modified the OCS and analysis by \citet{GaoHHNYZ:FOCS:2021} to get an OCS in the Poisson arrival model of unweighted and vertex-weighted matching.
Based on their Poisson OCS and an asymptotic equivalence between the IID Online Stochastic Matching and Poisson arrival models, they gave the state-of-the-art $0.716$-competitive algorithm.
\citet*{TangWW:STOC:2022} analyzed the OCS algorithm by \citet{GaoHHNYZ:FOCS:2021} in the model of Non-IID Online Stochastic Matching, and obtained a $0.666$ competitive ratio, breaking the $1-\frac{1}{e}$ barrier for the first time.

\subsection{Conceptual Contribution: Stochastic Online Correlated Selection (SOCS)}

This paper initiates the study of \emph{Stochastic Online Correlated Selection} (SOCS), formalizing the concept of online rounding algorithms for stochastic online matching problems.
We highlight two differences compared to the previous studies on OCS.
Compared to \citet{HuangSY:STOC:2022} and \citet{TangWW:STOC:2022} who also studied OCS in the stochastic model, this paper will consider the whole family of online matching problems under a unified framework in the general non-IID model, rather than only focusing on unweighted and vertex-weighted matching.
Further, we will measure the quality of SOCS algorithms through a unified metric called the \emph{convergence rate}.
Given a SOCS algorithm's convergence rate, the competitive ratio for the corresponding online matching problem will follow easily as a corollary.
This is cleaner than the counterparts in previous studies, which usually considered complicated metrics of rounding quality (e.g., the concept of consecutive steps in Display Ads~\cite{FahrbachHTZ:FOCS:2020, ShinA:ISAAC:2021, GaoHHNYZ:FOCS:2021, BlancC:FOCS:2021} and the panorama view in AdWords~\cite{HuangZZ:FOCS:2020}) and often required a nested analysis of the OCS and the underlying online matching problem.

We now introduce the concept of SOCS.
Consider a general problem known as \emph{Online Submodular Welfare Maximization}, which captures all aforementioned online matching problems as its special cases. 
Let there be a set of offline agents (offline vertices) and a set of online items (online vertices).
Each agent $j$ has a non-negative, non-decreasing, and submodular value function $v_j$ over subsets of items.
The items arrive one by one in $T$ discrete time steps.
The item at each time $t$ is sampled independently (but not identically in general) from a distribution $\Distribution^t$.
In the beginning, the online algorithm knows the distributions but not the realization of items.
Further, the algorithm needs to immediately allocate each item to an agent upon the item's arrival.
We want to maximize the sum of the agents' values for the subsets of items allocated to them, known as the \emph{social welfare}.

In each time step $t$ in the SOCS setting, the algorithm further receives a fractional allocation $\matchrate^t = \big( \matchrate^t_j \big)$ where $\matchrate^t_j$ is the fraction of the item allocated to agent $j$.
Naturally, the fractional allocation ensures that  $\matchrate^t_j \ge 0$ and $\sum_j \matchrate^t_j = 1$.
We further assume that the fractional allocation satisfies a set of linear constraints that would be satisfied by the allocation statistics of any offline allocation rule, i.e., if $\matchrate^t_j$ was the probability that the offline allocation rule would allocate item $t$ to $j$ over the random realization of the other items.
The SOCS algorithm will then decide how to allocate the item at time $t$, guided by the fractional allocation.
For example, we consider independent rounding as the baseline SOCS, which samples an agent $j$ independently in time step $t$ treating $\matchrate^t$ as a distribution.
This paper will introduce better SOCS algorithms for different matching problems, intuitively by introducing negative correlation in the decisions related to any fixed agent $j$.
We will next elaborate on the metrics for evaluating SOCS algorithms.

\paragraph{Unweighted and Vertex-Weighted Matching.}
These are the special case when each offline vertex/agent $j$'s value function is the indicator for whether it receives an item that it has an edge with, scaled by the vertex-weight if applicable.
We consider the total amount of items allocated to agent $j$ according to the fractional allocation, which we will denote as:
\[
	y_j = \sum_t \matchrate^t_j
	~.
\]
Here, we assume without loss of generality that $\matchrate^t_j$ is positive only when there is an edge between the vertices.
If $y_j$ is small, then the algorithm can match offline vertex $j$ with little probability.
After all, it is supposed to be guided by the fractional allocation.
On the flip side, if $y_j$ is large, then it must match $j$ with a large probability.
Hence, we will quantify the quality of SOCS by the function $g(y_j)$ that upper bounds the probability that $j$ stays unmatched.
This is similar to how \citet{GaoHHNYZ:FOCS:2021} measured the quality of OCS in the adversarial model of unweighted and vertex-weighted matching.
We will call function $g$ the \emph{convergence rate} of the SOCS algorithm.

\paragraph{AdWords.}
This is the special case when the offline vertex/agent $j$'s value is a budget-additive function $v_j(S) = \min \big\{ \sum_{t \in S} b^t_j, B_j \big\}$ where $b^t_j$ is agent $j$'s bid for the item at time $t$, and $B_j$ is agent $j$'s budget.
Observe that the unweighted matching problem is the special case of AdWords when $b^t_j \in \{0, 1\}$ and $B_j = 1$.
Further, from any agent $j$'s point of view, the SOCS's guarantee shall be invariant to scaling its budget $B_j$ and bids $b^t_j$'s by the same multiplicative factor.
Therefore, it is natural to generalize the definition therein to consider:
\[
	y_j = \sum_t \matchrate^t_j \cdot \frac{b^t_j}{B_j}
	~.
\] 
Correspondingly, an algorithm's convergence rate $g(y_j)$ upper bounds an agent $j$'s expected portion unused budget, i.e., agent $j$'s expected value for the rounded allocation is at least $\big(1 - g(y_j) \big) B_j$.

\paragraph{Display Ads.}
This is the special case when offline vertex/agent $j$'s value equals the maximum edge-weight allocated to it.
Following the now standard approach originally proposed by \citet{DevanurHKMY:TEAC:2016}, we will account for each agent $j$'s contribution to the objective by considering different weight-levels.
More precisely, for any weight-level $w > 0$, consider the total amount of edges with weights at least $w$ that is allocated to $j$, which we will denote as:
\[
	y_j(w) = \sum_{t : w^t_j \ge w} \matchrate^t_j
	~.
\]

Agent $j$'s value for the fractional allocation is then:
\[
	\sum_t w^t_j \matchrate^t_j = \int_0^\infty y_j(w) \:\dif w
	~.
\]

We require a SOCS to have a uniform convergence rate at every weight-level. 
That is, for the same function $g$, and for every weight-level $w > 0$, the probability of \emph{not} allocating any edges with weight at least $w$ to agent $j$ is at most $g\big(y_j(w)\big)$.

\subsection{Conceptual Contribution: General SOCS from Two-Way SOCS}

The literature on OCS suggests that the two-way special case is easier than the general case.
A two-way OCS/SOCS instance's $\matchrate^t$ is half-integer in every time step $t$, i.e., $\matchrate^t_j \in \{0, \frac{1}{2}\}$.
In other words, the fractional allocation provides two choices in each step for the OCS/SOCS to choose from, and has no preference between the two choices.
For example, \citet{GaoHHNYZ:FOCS:2021} gave the optimal two-way OCS for unweighted and vertex-weighted matching.
By contrast, the optimal multi-way counterpart remains elusive.
For AdWords, \citet{HuangZZ:FOCS:2020} only provided a two-way (Panoramic) OCS.
It was open before this paper how to design a non-trivial multi-way OCS for AdWords.
Finally, the first OCS for Display Ads by \citet{FahrbachHTZ:FOCS:2020} was only applicable to two-way instances.
The multi-way counterpart was only introduced a year later by \citet{BlancC:FOCS:2021}.

Due to the above observation, we originally only focused on the two-way special case of SOCS.
To our surprise, however, we found that one can convert any two-way SOCS into a general SOCS through a general method which we called \textbf{Type Decomposition}.
Given the fractional allocation $\matchrate^t$ at time step $t$, we randomly sample either a \emph{two-way surrogate type} with two choices of offline agents $\{j, k\}$, or a \emph{one-way surrogate type} with only one choice $j$.
In the former case, we let the two-way SOCS select between agents $j$ and $k$ and allocate the item to that agent.
In the latter case, we allocate the item to the only choice $j$.

What is an appropriate distribution over the surrogate types?
We argue the following two factors are important.
First, the expected fractional allocation to any offline vertex $j$ shall equal the original fractional allocation $\matchrate^t$, so that the $y_j$ defined in the previous subsection remains the same in expectation.
Second, we prefer two-way surrogate types over the one-way counterpart because we would like to exploit the power of the two-way SOCS algorithm to the maximum extent possible.
As we will show in the next few subsections, one of the main challenges in our analyses will be upper bounding the influence of one-way surrogate types.

We next demonstrate the distribution driven by these two factors through two examples.
The first example considers four offline agents $\{1, 2, 3, 4\}$ and fractional allocation $\matchrate^t = (0.1, 0.2, 0.3, 0.4)$.
A distribution driven by the above factors would only sample two-way surrogate types, e.g., $\{1, 3\}$ with probability $0.2$, $\{2, 4\}$ with probability $0.4$, and $\{3, 4\}$ with probability $0.4$.
On the other hand, observe that the expected allocation to an agent $j$ cannot exceed half if we only sample two-way surrogate types.
Hence, if the original fractional allocation $\matchrate^t$ assigns more than half of the item to an agent $j$, then we must sample one-way type $j$ with positive probability in order to satisfy the invariant about the expected allocation.
The second example considers a fractional allocation $\matchrate^t = (0.1, 0.1, 0.1, 0.7)$ over the same four offline agents.
The unique distribution determined by the above factors would sample a one-way surrogate type $j = 4$ with probability $0.4$, and two-way surrogate types $\{1, 4\}, \{2, 4\}, \{3, 4\}$ each with probability $0.2$.
In general, we will use a sampling algorithm~\cite{JailletL:MOR:2014, HuangS:STOC:2021} originally designed for Online Stochastic Matching to get an appropriate right distribution of surrogate types. See Subsection~\ref{sec:socs-reduction} for details.

This approach is quite natural and almost obvious in hindsight, but was either overlooked or underestimated in previous works.
Surprisingly, it is powerful enough to improve the state-of-the-art of many stochastic online matching problems.
In particular, we will use it to break the $1-\frac{1}{e}$ barrier for AdWords and Display Ads in the Non-IID model.
This answers a decade-old open question from the survey by \citet{Mehta:FTTCS:2013} about breaking the barrier for AdWords in the more restrictive IID model.
We will further use this approach to get the first multi-way OCS for AdWords, answering an open question from the OCS literature.
We will elaborate on the results and techniques in different settings in the next few subsections.
See Table~\ref{tab:summary} for a summary of our results. 

Due to its wide range of applications, we consider this reduction from general SOCS (and also OCS in the case of AdWords) to two-way SOCS as another conceptual contribution of this paper.

\begin{table}[t]
\centering
\caption{Summary of Results}
\label{tab:summary}
\renewcommand{\arraystretch}{1.5}
\begin{tabular}{|l|c|c|c|c|}
    \hline
    & Adversarial & IID & Non-IID & Query-Commit \\
    \hline
    Unweighted & \multirow{2}{*}{$1-\frac{1}{e}$~\cite{KarpVV:STOC:1990, AggarwalGKM:SODA:2011}} & \multirow{2}{*}{$0.716$~\cite{HuangSY:STOC:2022}} & \multirow{2}{*}{$0.666$~\cite{TangWW:STOC:2022} $\to$ $\bm{0.69}$} & $0.696$~\cite{MahdianY:STOC:2011} $\to$ $\bm{0.705}$ \\
    \cline{1-1} \cline{5-5}
    Vertex-Weighted & & & & $0.662$~\cite{JinW:WINE:2021} $\to$ $\bm{0.705}$ \\
    \hline
    AdWords & $0.501$~\cite{HuangZZ:FOCS:2020} $\to$ $\bm{0.504
    }$ & \multicolumn{2}{|c|}{$1-\frac{1}{e}$~\cite{KapralovPV:SODA:2013} $\to$ $\bm{0.6338}$} & n.a. \\
    \hline    
    Display Ads & $0.536$~\cite{BlancC:FOCS:2021} & $0.706$~\cite{HuangSY:STOC:2022} & $1-\frac{1}{e}$~\cite{KapralovPV:SODA:2013} $\to$ $\bm{0.644}$ & n.a. \\
    \hline
\end{tabular}
\end{table}

\subsection{Results and Techniques: Unweighted and Vertex-Weighted Matching}

\paragraph{Optimal Two-Way SOCS.}
Our first result (Theorem~\ref{thm:stochastic-ocs}) is a two-way SOCS for unweighted and vertex-weighted matching with convergence rate:
\[
	g(y_j) = e^{-2y_j} (1+y_j)
	~.
\]
This is tight in the sense that for any $y_j > 0$ and any two-way SOCS, there is an instance and an offline vertex $j$ therein with this value of $y_j$ which stays unmatched with probability $g(y_j)$.

\paragraph{Summary of Techniques.}
We now describe the algorithm and the main ideas in its analysis. 
Consider selecting between two choices $j$ and $k$ in some time step $t$.
If only one of the two choices is still unmatched, we will obviously select that vertex.
In other words, the only non-trivial case is when both $j$ and $k$ are still unmatched.
To select between them, we will consider the expectation sum of fractional allocation to $j$ and $k$ so far \emph{before time step $t$}, which we will abuse notation and also denote as $y_j$ and $y_k$ in this discussion.
We will scale the sampling weights of $j$ and $k$ by $w_j = e^{2 y_j}$ and $w_k = e^{2 y_k}$, i.e., giving higher priority to an offline vertex if it has more opportunities to get matched in the past in expectation (but is still unmatched).
Intuitively, this compensates for its misfortune in the past.
The sampling probability of $j$ can then be written as:
\[
	\frac{\frac{1}{2} \cdot w_j \cdot \mathbf{1}_\text{$j$ unmatched}}{\frac{1}{2} \cdot w_j \cdot \mathbf{1}_\text{$j$ unmatched} \,+\, \frac{1}{2} \cdot w_k \cdot \mathbf{1}_\text{$k$ unmatched}}
	~.
\]

Although the idea of scaling the sampling weights based on the vertices' cumulative fractional allocation is from the existing literature~\cite{GaoHHNYZ:FOCS:2021, HuangSY:STOC:2022}, we stress that our choice of $w_j$ and $w_k$ is more aggressive than what the existing analysis could allow, and based on a new analysis that may be of independent interest. 
The choice of $w_j$ and $w_k$ in the existing algorithms is driven by the invariant that $w_j \cdot \mathbf{E} [ \mathbf{1}_\text{$j$ unmatched} ] \le 1$ so that each choice's expected contribution to the denominator is at most $\frac{1}{2}$, the intended fractional allocation of this step.
For example, the Poisson OCS~\cite{HuangSY:STOC:2022} lets $w_j = e^{y_j}$ based on $\mathbf{E} [ \mathbf{1}_\text{$j$ unmatched} ] = \Pr{\text{$j$ unmatched}} \le e^{-y_j}$.
By contrast, we let $w_j = e^{2 w_j}$ even when $j$ could stay unmatched with probability more than $e^{-2y_j}$ according to our convergence rate.

Why is our choice of $w_j$ and $w_k$ still feasible then?
To demonstrate the idea, we consider the following recurrence about the relation among the probabilities that $j$ stays unmatched before and after time $t$, denoted as $u_j^{t-1}$ and $u_j^t$, and the probability that both $j$ and $k$ are unmatched before time $t$, denoted as $u_{jk}^{t-1}$:
\[
	u_j^t
	~=~ 
	\Pr{\text{\,$j$ is not a choice\,}} \cdot u_j^{t-1}
	+ 
	\sum_{k \ne j} \Pr{\text{\,$j, k$ are the choices\,}} \cdot u_{jk}^{t-1} \cdot  \frac{\frac{1}{2} \cdot w_k}{\frac{1}{2} \cdot w_j \,+\, \frac{1}{2} \cdot w_k}
	~.
\]

Had we followed the logic in the old analysis~\cite{HuangSY:STOC:2022}, $k$'s contribution to $u_{jk}^{t-1}$, e.g., the mentioned baseline bound of $e^{-y_k}$, would need to cancel $w_k$ so that we could derive a new recurrence that is only about $j$ but not the other vertices.
Our new insight is that the last term scales proportional to the squared root of $w_k$ rather than linearly, by applying AM-GM to the denominator.
Hence, we can use the more aggressive $w_k = e^{2 y_k}$ to get the optimal convergence rate.

Another difference compared to the existing analyses~\cite{GaoHHNYZ:FOCS:2021, HuangSY:STOC:2022} is that we avoid applying the Jensen inequality to aggregate the terms on the right-hand-side of the recurrence, exploiting the simpler structure of the two-way special case.
See Subsection~\ref{sec:optimal-two-way-socs} for the detailed analysis.

\paragraph{Improved Algorithms in Non-IID and Query-Commit Models.}
Next, we apply the type decomposition to obtain a general SOCS for unweighted and vertex-weighted matching.
Moreover, we let the fractional allocation be the optimal solution of a linear program (LP) relaxation of Non-IID Online Stochastic Matching.
By rounding this fractional allocation using the general SOCS, we improve the state-of-the-art of these two problems in Non-IID Online Stochastic Matching and the Query-Commit model.
For unweighted and vertex-weighted Non-IID Online Stochastic Matching, we get a $0.69$-competitive algorithm (Corollary~\ref{cor:vertex-weighted-matching}), improving the existing $0.666$-competitive algorithm~\cite{TangWW:STOC:2022}.
Further, we achieve a better $0.705$ competitive ratio in the \emph{Random-Order} model (Corollary~\ref{cor:vertex-weighted-matching-random-order}).
Through a known reduction~\cite{CostelloTT:ICALP:2012, GamlathKS:SODA:2019} (see also Appendix~\ref{app:lossless-simulation}), we get an algorithm that obtains the same competitive ratio $0.705$ in the Query-Commit model (Corollary~\ref{cor:vertex-weighted-matching-query-commit}), which improves the previous $0.696$-competitive and $0.662$-competitive algorithms for unweighted~\cite{MahdianY:STOC:2011} and vertex-weighted matching~\cite{JinW:WINE:2021}.

\paragraph{Summary of Techniques.}
Recall that we can only exploit the power of SOCS on two-way surrogate types.
Hence, the main challenge is to upper bound the influence of one-way surrogate types.
Similar issues were handled in the IID case through an asymptotic equivalence between IID Online Stochastic Matching and the (homogeneous) Poisson arrival model~\cite{JailletL:MOR:2014, HuangS:STOC:2021, HuangSY:STOC:2022}.
If there was such an asymptotic equivalence in our problem, the expected allocation of one-way surrogate types to an offline vertex would be at most $1-\ln 2 < 0.307$, according to the Converse Jensen Inequality~\cite{HuangS:STOC:2021} (more precisely, its straightforward generalization to the non-homogeneous Poisson arrival model).
See Subsection~\ref{sec:jl-demonstration} for a demonstration of the subsequent analysis.

Unfortunately, Non-IID Online Stochastic Matching is \emph{not} asymptotically equivalent to the (non-homogeneous) Poisson arrival model. 
This paper provides two methods to overcome the absence of asymptotic equivalence.
For unweighted and vertex-weighted matching, we modify the analysis of SOCS to explicitly capture the influence of one-way types in the recurrence. 
Further, we manage to transform the recurrence in both the Non-IID model and its random-order relaxation, so that resulting coefficients are related to the left-hand-side of the following LP constraints:
\begin{equation}
	\label{eqn:intro-matching-constraint}	
	\prod_{t \le t'} \bigg( 1 - \sum_{i \,:\, \matchrate^t_{ij} > \frac{1}{2}} \mass_i^t \bigg) \le 1 - \sum_{t \le t'} \sum_{i \,:\, \matchrate^t_{ij} > \frac{1}{2}} \mass^t_i \cdot \matchrate_{ij}^t
	~.
\end{equation}
Here, $\mass^t_i$ is the probability of realizing an online vertex of type $i$ at time  $t$, and $\matchrate^t_{ij}$ is the fractional (LP) allocation of the online vertex at time $t$ to offline vertex $j$, when the online type is $i$.
Hence, the left-hand-side is the probability that, up to time $t'$, we do not have any online types $i$ more than half of which are allocated to $j$ by the LP solution.
The inequality holds because the right-hand-side is the probability of \emph{not} matching $j$ to such online types.
Recall that these are the only online types for which the type decomposition would sample a one-way surrogate type with a positive probability.
After bounding the coefficients by the LP constraints, we get the stated competitive ratios by solving the relaxed recurrence.
See Subsections~\ref{sec:unweighted-socs} and \ref{sec:unweighted-ro} for the detailed analyses.

\subsection{Results and Techniques: AdWords}

\paragraph{Breaking the $\bm{1-\frac{1}{e}}$ Barrier in Stochastic Model.}
We give a two-way SOCS for AdWords with a convergence rate $g(y_j)$ strictly better than the baseline $e^{-y_j}$ (Theorem~\ref{thm:two-way-socs-adwords}).
Further, we introduce a novel LP relaxation for the stochastic model of AdWords.
By letting the fractional allocation be the optimal LP solution and rounding it through type decomposition and the two-way SOCS, we get an algorithm for Non-IID Stochastic AdWords that is better than $1-\frac{1}{e}$ competitive.

As mentioned earlier, it was unknown before this paper how to break the $1-\frac{1}{e}$ barrier even in the IID model, which has been open for at least a decade since Mehta's survey~\cite{Mehta:FTTCS:2013}.
Following the framework of SOCS, we directly break the barrier in the more general Non-IID model.

\paragraph{Summary of Techniques.}
It is well known that the allocation of larger bids is the crux of the AdWords problem.
For example, when agents' bids for any item are at most half their budgets, \citet*{DevanurSA:EC:2012} already gave an online algorithm that is $0.73 > 1 - \frac{1}{e}$ competitive for IID Stochastic AdWords.
Similarly, we observe that independent rounding already achieves a convergence rate strictly better than $e^{-y_j}$ in the small-bid regime.
Hence, we only need to introduce negative correlations among the selections of large bids. 

We use a different threshold based on our analysis to define large and small bids.
Consider an agent $j$'s bid for an online item of type $i$, denoted as $b_{ij}$.
We say that it is small if $b_{ij} \le \frac{2}{3} B_j$, and is large if $b_{ij} > \frac{2}{3} B_j$.
Further, we say that a two-way surrogate type gets a large bid from agent $j$ if (1) agent $j$ is one of the two choices and (2) agent $j$'s bid for this type of item is large.

This paper gives a simple two-way algorithm that merely injects a mild amount of negative correlation to the decisions regarding large bids, based on an idea similar to the first OCS algorithm by \citet{FahrbachHTZ:FOCS:2020}.
We consider it a proof of concept and leave it for future research to design better SOCS algorithms for AdWords.
Our two-way SOCS works as follows (see Subsection~\ref{sec:two-way-socs-adwords} for the formal definition and its analysis):

\begin{itemize}
\item If the two-way surrogate type at time $t$ gets a large bid from agent $j$, mark the time step with $j$ with probability half (reserving the other half for the other agent in the two-way type).
\item For any offline agent $j$, the two-way SOCS makes the opposite selections randomly in the first two time steps marked with $j$, i.e., it either selects $j$ in the first such time step, and selects the choice other than $j$ in the second, or the other way around, each with probability half.
\item In the other time steps, the algorithm selects independently and uniformly at random.
\end{itemize}

To utilize this two-way SOCS, we again face the challenge of upper bounding the influence of one-way surrogate types, in particular, those that get large bids for agent $j$.
The existing LP for Stochastic AdWords~\cite{DevanurSA:EC:2012} is insufficient because it lacks constraints that correspond to Equation~\eqref{eqn:intro-matching-constraint} for unweighted and vertex-weighted matching.
By contrast, we introduce a new LP by including the following constraints: 
\begin{equation}
	\label{eqn:intro-adwords-constraint}
	\sum_t \sum_{i \in L} \,\mass^t_i \cdot \matchrate_{ij}^t \cdot b_{ij}
	~\le~
	\E \bigg[ \min \bigg\{ \sum_t \sum_{i \in L} \,X^t_i \cdot b_{ij} \,,\, B_j \bigg\} \bigg]
	~.
\end{equation}

Here, $L$ is a subset of online types, intuitively those that get large bids from agent $j$.
Further, $X^t_i$ is the indicator that the online vertex realized at time $t$ is of type $i$.
The left-hand-side is the agent $j$'s expected budget spent on items of types $i \in L$ according to the fractional allocation $\matchrate^t$'s;
the right-hand-side is the maximum budget that could be spent on these items, even if we allocated all realized items of these types to agent $j$.
If we were in the non-homogeneous Poisson arrival model, then these constraints would give a Converse Jensen Inequality similar to the counterpart for matching.
In particular, it would imply that the one-way surrogate types that are large bids for agent $j$ can contribute at most $0.36$ to $y_j$.

To break the $1-\frac{1}{e}$ barrier in the Non-IID model without an asymptotic equivalence to the (non-homogeneous) Poisson arrival model, we prove an approximate Converse Jensen Inequality for the Non-IID model, which may be of independent interest and find further applications in the Non-IID stochastic models of other online algorithms.
We need this alternative approach because we cannot apply the previous method for unweighted and vertex-weighted matching, which relies on a clean recurrence that we cannot replicate in the more complicated AdWords problem. 
See Subsection~\ref{sec:adwords-general-socs} for details.

\paragraph{Multi-Way OCS and Improved Algorithm in Adversarial Model.}
Note that the above two-way SOCS for AdWords does not rely on distributional information. 
As a result, we are able to further use it to get the first multi-way OCS for AdWords (Theorem~\ref{thm:ocs-adwords}), where the fractional allocation $\matchrate^t$'s are chosen by an adversary.
Given the fractional allocation $\matchrate^t$ at time step $t$, we use the type decomposition to sample a one-way or two-way surrogate type, and then select an offline vertex using the two-way SOCS.
As a corollary, we get a $0.504$-competitive algorithm for AdWords in the adversarial model (Corollary~\ref{cor:adwords}), improving the previous $0.501$-competitive algorithm~\cite{HuangZZ:FOCS:2020}.

Our multi-way OCS for AdWords achieves a convergence rate strictly better than $e^{-y_j}$ in the adversarial model.
This guarantee is simpler and more direct than the (two-way) Panorama OCS by \citet{HuangZZ:FOCS:2020}.
We believe the simpler unified metric for OCS/SOCS by their convergence rates will lead to further improvements for AdWords in both stochastic and adversarial models.

\paragraph{Summary of Techniques.}
The main ingredient, beyond what we have already explained in the stochastic model, is a new argument for bounding the influence of one-way surrogate types that are large bids, tailored for the adversarial model.
Consider a time step $t$ in which (1) agent $j$'s bid is large, and (2) with a positive probability the type decomposition samples a one-way surrogate type with $j$ as the only choice.
Recall that our type decomposition would sample such a one-way type only when $\matchrate^t$ allocates more than half of the item to agent $j$, i.e., $\matchrate^t_j \in (\frac{1}{2}, 1]$.
The intuition behind the new argument is best demonstrated on the two extremes, when $\matchrate^t_j$ is close to either $\frac{1}{2}$ or $1$.
In the former case, the one-way surrogate type $j$ in this step only makes a negligible contribution compared to the two-way surrogate types, which also include $j$ as one of the two choices.
Hence, we get sufficient improvement from the two-way SOCS.
In the latter case, we allocate a large bid to $j$ and consume at least $\frac{2}{3}$ of its budget, \emph{almost with certainty}.%
\footnote{The (almost) certainty is critical and the reason why the same argument does not work in the stochastic model.}
In other words, agent $j$'s expected value is sufficiently large due to this step alone.
More efforts are needed to extend the argument to all cases of $\matchrate^t_j$;
see Subsection~\ref{sec:adwords-ocs} for details.

\subsection{Results and Techniques: Display Ads}

\paragraph{Breaking the $\bm{1-\frac{1}{e}}$ Barrier in Non-IID Model.}
We consider a two-way SOCS for Display Ads similar to the counterpart for AdWords:
each time step is marked with one of the two agents uniformly at random, and for each agent $j$ the algorithm makes the opposite selections in the first two time steps marked with agent $j$.
Our analysis gives a convergences rate that is equal to the baseline $e^{-y_j}$ when $0 \le y_j \le \theta$ for some threshold $\theta < 1$, and is strictly better when $\theta < y_j \le 1$ (Theorem~\ref{thm:display-ads-two-way}).
This is sufficient for solving the Non-IID Stochastic Display Ads problem.
By letting the fractional allocation be the optimal LP solution, and rounding it through type decomposition and the two-way SOCS, we get a $0.644$ competitive online algorithm, breaking the $1-\frac{1}{e}$ barrier in the Non-IID model.

\paragraph{Summary of Techniques.}
Our results for Non-IID Stochastic Display Ads essentially follow by adopting the techniques we have developed in the other problems, and fitting them into the Display Ads problem with minor modifications.
For example, we prove an approximate Converse Jensen Inequality for the Non-IID matching model similar to the counterpart for AdWords.
It allows us to upper bound the contribution of one-way surrogate types.

Nonetheless, we need to handle the following subtlety rooted from the requirement of achieving the convergence rate at all weight-levels.
Recall that we consider an offline vertex $j$'s value by each weight-level $w > 0$, where $y_j(w)$ denotes the expected fractional allocation to offline vertex $j$ from edges whose weights are at least $w$.
Then, offline vertex $j$'s value for the fractional allocation is $\int_0^\infty y_j(w) \,\dif w$.
Correspondingly, the SOCS algorithm needs to allocate an edge with weight at least $w$ to $j$ with probability at least $1 - g\big(y_j(w)\big)$, with a convergence rate $g(\cdot)$ better than $e^{-y_j(w)}$.
However, the definition of the two-way SOCS marks time steps independent of the edge-weights therein.
As a result, the appearances of two-way types with edge-weight strictly smaller than $w$ for agent $j$ may stop the algorithm from making opposite selections in the time steps that matter, i.e., those with edge-weights at least $w$.

We resolve this issue based on the fact that the expected total allocation to any offline vertex $j$ is at most $1$, since Display Ads is a matching problem from the offline optimal solution's viewpoint.
On one hand, if $0 \le y_j(w) \le \theta$ for some threshold $\theta < 1$, then the baseline convergence rate $e^{-y_j(w)}$ is already good enough, because
the SOCS algorithm allocates an edge with weight at least $w$ to vertex $j$ with probability at least $1 - e^{-y_j(w)} \ge \frac{1 - e^{-\theta}}{\theta} y_j(w) > 0.644 \cdot y_j(w)$.
On the other hand, if $\theta < y_j(w) \le 1$, the expected allocation to agent $j$ from lower weight-levels is less than $1-\theta$.
Hence, such time steps do not appear very often.
In fact, with a constant probability, all time steps involving agent $j$ have edge-weights at least $w$.
See Section~\ref{sec:display-ads} for the detailed argument.

\subsection{Future Directions}

We initiate the study of SOCS, a family of online rounding algorithms for various stochastic online matching problems.
As the first paper on the topic, we often opt for simpler algorithms as proofs of concept, rather than pushing for the optimal algorithms. 
Hence, we believe there is plenty of room for further improvements.
The obvious future direction is to design better algorithms to get improved convergence rates and competitive ratios in these problems.
Beyond that, we further outline several future directions below that are conceptually interesting.

\paragraph{Genuine Multi-Way SOCS.}
This paper reduces the design of multi-way SOCS algorithms to that of two-way SOCS via type decomposition.
On one hand,  this simple method is surprisingly powerful and already capable of making progress on many problems including two open questions related to AdWords.
On the other hand, this is intrinsically wasteful.
This is most apparent in unweighted and vertex-weighted matching.
Even if an offline vertex $j$ has already been matched and thus could make no further contribution to the objective, the type decomposition would still sample a one-way or two-way surrogate type involving $j$ with a positive probability.
Ultimately, we would like to have genuine multi-way SOCS algorithms that do not rely on this intrinsically wasteful type decomposition.
For example, can one obtain such a multi-way SOCS for unweighted and vertex-weighted matching, by combining the ideas behind the Poisson OCS~\cite{HuangSY:STOC:2022}, which may be viewed as a SOCS for the IID case, and the new argument based on AM-GM in this paper?

\paragraph{OCS for Display Ads and Convergence Rate.}
Combining the results from this paper and those by \citet{GaoHHNYZ:FOCS:2021}, we can now measure the performance of OCS and SOCS in almost all settings under the unified metric of convergence rate.
The only exception is the OCS for Display Ads, for which we still need the more complicated metrics based on the concept of consecutive steps.
We conjecture the existence of multi-way OCS algorithms for Display Ads with convergence rates strictly better than the baseline $e^{-y}$.
Here, we can either follow the definition of convergence rate for SOCS and require an OCS algorithm to allocate an edge with weight at least $w$ to offline vertex $j$ with probability at least $1 - g\big(y_j(w)\big)$, or allow amortization across different weight-levels and only require the expected maximum edge-weight allocated to $j$ to be at least:
\[
	\int_0^\infty \big( 1 - g(y_j(w)) \big) \,\dif w
	~.
\]

In either case, we believe that developing OCS algorithms under this cleaner and more unified metric will advance our understanding of the Display Ads problem and lead to online algorithms with better performance.

\paragraph{SOCS/OCS for Online Submodular Welfare Maximization.}
Last but not least, it would be very interesting to explore SOCS or OCS algorithms for the more general Online Submodular Welfare Maximization problem.
The impossibility results by \citet*{KapralovPV:SODA:2013} are computational hardness (of Maximum Coverage) rather than information theoretic. 
In particular, the impossibility result for the stochastic model may be interpreted as the hardness of computing the fractional allocation $\matchrate^t$'s considered in this paper, and does not rule out the possibility of SOCS algorithms.
Another possible approach to circumvent the computational hardness and focus on the online decision-making aspect of this general problem is to assume access to an oracle that would enable us to solve the Maximum Coverage problem, e.g., one that can evaluate the concave closure of the agents' value functions.
Finally, we remark that the special case of coverage function is a particularly interesting frontier, because (1) the agents' value function in AdWords and Display Ads can both be viewed as a special form of coverage function (e.g., \cite{FahrbachHTZ:FOCS:2020, HuangZZ:FOCS:2020}), and (2) the hardness results by \citet{KapralovPV:SODA:2013} already hold for general coverage functions.

\subsection{Related Works}

\paragraph{Online Bipartite Matching.}
Following the seminal work by \citet*{KarpVV:STOC:1990}, there is a vast literature on online bipartite matching problems with an adversarially chosen graph and arrival order of online vertices.
\citet*{KarpVV:STOC:1990} and \citet*{AggarwalGKM:SODA:2011} gave optimal $1-\frac{1}{e}$ competitive algorithms for unweighted and vertex-weighted matching respectively.
\citet{FeldmanKMMP:WINE:2009} proposed the Display Ads problem, a.k.a., edge-weighted online matching with free disposal, and gave a $1-\frac{1}{e}$ competitive ratio under a large-market assumption.
\citet*{FahrbachHTZ:FOCS:2020} introduced the OCS technique and broke the $\frac{1}{2}$ barrier without the large-market assumption.
The OCS technique was then improved in a series of follow-up papers~\cite{ShinA:ISAAC:2021, GaoHHNYZ:FOCS:2021, BlancC:FOCS:2021}, leading to the state-of-the-art ratio $0.536$ by \citet{BlancC:FOCS:2021}.
The \emph{AdWords} problem was proposed by \citet{MehtaSVV:JACM:2007}, who also gave a $1-\frac{1}{e}$ competitive ratio under a large-market assumption.
\citet*{HuangZZ:FOCS:2020} used OCS to break the $\frac{1}{2}$ barrier without assumption.
This paper mainly studies the stochastic versions of these problems and their OCS, but also contributes to the AdWords problem in the adversarial model.

For the problem with submodular functions, known as \emph{Online Submodular Welfare Maximization}, \citet*{KapralovPV:SODA:2013} proved that no polynomial-time online algorithm can be better than $\frac{1}{2}$ competitive in the adversarial model, or better than $1-\frac{1}{e}$ in the stochastic model.

\paragraph{Online Stochastic Matching.}
\citet*{FeldmanMMM:FOCS:2009} introduced the problem and showed an unweighted matching algorithm with a competitive ratio better than $1-\frac{1}{e}$ based on the power of two choices.
This techniques was refined in a series of subsequent works \cite{ManshadiOS:MOR:2012, BahmaniK:ESA:2010, JailletL:MOR:2014, HuangS:STOC:2021, Yan:SODA:2024, FengQWZ:WINE:2023}, leading to $0.711$ and $0.7$ competitive two-choice algorithms for unweighted and vertex-weighted matching~\cite{HuangS:STOC:2021}, and a $0.65$ competitive two-choice algorithm for edge-weighted matching (without free disposal)~\cite{FengQWZ:WINE:2023}.

The recent literature further explored the power of multiple choices.
\citet*{HuangSY:STOC:2022} gave multi-choice algorithms with the state-of-the-art $0.716$ competitive ratio for unweighted and vertex-weighted matching, and $0.706$ competitive ratio for Stochastic Display Ads.
\citet*{TangWW:STOC:2022} introduced the non-IID model, and broke the $1-\frac{1}{e}$ barrier using an OCS algorithm by \citet{GaoHHNYZ:FOCS:2021}.
While in principle multi-choice algorithms are more general, and hence, more powerful than two-choice algorithms, they are also much harder to design effectively.
There may be a lasting competition between the two techniques.
This paper reduces multi-choice algorithms to two-choice algorithms via type decomposition. 
The resulting algorithms give a better competitive ratio than the existing multi-choice algorithm~\cite{TangWW:STOC:2022} in non-IID unweighted and vertex-weighted matching, and break the $1-\frac{1}{e}$ barrier in AdWords and Display Ads.

\paragraph{Query-Commit Model.}
The original model~\cite{ChenIKMR:ICALP:2009} was motivated by kidney exchange and online dating, and had an additional constraint that we can only probe a limited number of edges incident to each vertex, known as its patience.
The best competitive ratios so far are $0.5$ for unweighted matching~\cite{Adamczyk:IPL:2011} and $0.395$ for edge-weighted matching~\cite{PollnerRSW:EC:2022}.
When the graph is bipartite and only one side has patience constraints, \citet{BorodinM:arXiv:2023} gave a $1-\frac{1}{e}$ competitive algorithm.

The model without patience constraints, which we focus on in this paper, was first studied by \citet{MolinaroR:arXiv:2011}.
\citet*{CostelloTT:ICALP:2012} gave a $0.573$ competitive algorithm for unweighted matching on general graphs, and showed that no algorithm could be better than $0.898$ competitive.
\citet*{GamlathKS:SODA:2019} considered edge-weighted bipartite matching, and proposed a $1-\frac{1}{e}$ algorithm.
The ideas behind the rounding algorithms in these two papers allow us to losslessly simulate a Random-Order Non-IID algorithm in the Query-Commit model.
\citet*{DerakhshanF:SODA:2023} recently proposed a different rounding algorithm to improve the ratio for edge-weighted bipartite matching to $1-\frac{1}{e}+0.0014$.
For edge-weighted matching in general graphs, \citet{FuTWWZ:ICALP:2021} obtained a $8/15 \approx 0.533$-competitive algorithm by studying Random-Order Contention Resolution Schemes.
For unweighted and vertex-weighted bipartite matching, the best algorithms are from the random-order model of online bipartite matching~\cite{MahdianY:STOC:2011, KarandeMT:STOC:2011, HuangTWZ:TALG:2019, JinW:WINE:2021}.
The best competitive ratios are $0.696$~\cite{MahdianY:STOC:2011} and $0.662$~\cite{JinW:WINE:2021} respectively.

\paragraph{Edge-Weighted Online Stochastic Matching.}
Last but not least, the edge-weighted version of Online Stochastic Matching (without free disposal) has also been extensively studied. 
It may be viewed as a generalization of the classical Prophet Inequality, whose optimal competitive ratio is $0.5$~\cite{Samuel:AnnaProb:1984}.
\citet*{FeldmanGL:SODA:2014} studied combinatorial auctions via posted prices and their results imply a $0.5$ competitive algorithm for the edge-weighted matching problem.
\citet*{PapadimitriouPSW:EC:2021} considered comparing a polynomial-time online algorithm's performance to the optimal (exponential-time) online algorithm, and gave a $0.51$-competitive polynomial-time algorithm.
This was later improved to $0.527$ by \citet{SaberiW:ICALP:2021}, to $1-\frac{1}{e}$ by \citet*{BravermanDM:EC:2022}, and to $0.652$ by \citet*{NaorSW:arxiv:2023}.

\section{Preliminaries}
\label{sec:prelim}

\paragraph{Notations.}
We write $z^+$ for function $\max \{ z, 0 \}$.
Let $[n] = \{1, 2, \dots, n\}$ for any positive integer $n$.
For any set $S$ and any element $e$, we write $S + e$ for $S \cup \{e\}$ and $S - e$ for $S \setminus \{e\}$.
For any variable $x^t$ indexed in the superscript by a time step $t \in [T]$, and any subset of time steps  $S \subseteq [T]$, we write $x^S$ for $\sum_{t \in S} x^t$.
Further, we write $t_1:t_2$ for $\{t_1, t_1+1, \dots, t_2\}$ and thus $x^{t_1:t_2} = \sum_{t=t_1}^{t_2} x^t$.

\subsection{Stochastic Online Submodular Welfare Maximization}

Consider a set of \emph{online item types} $I$ and a set of \emph{offline agents} $J$.
Each agent $j \in J$ has a value function $v_j : 2^I \to [0, \infty)$ over subsets of online item types satisfying:
\begin{itemize}
\item $v_j(S) \ge 0$ for any $S \subseteq I$;
	\hfill (\emph{Non-negativity}) 
\item $v_j(S) \ge v_j(S')$ for any $S' \subset S \subseteq I$; and 
	\hfill (\emph{Monotonicity})
\item $v_j(S) + v_j(S') \ge v_j(S \cup S') + v_j(S \cap S')$ for any $S, S' \subseteq I$.
	\hfill (\emph{Submodularity})
\end{itemize}

Further, consider $T$ discrete time steps.
In each time step $t \in [T]$, an online item arrives with its type drawn from a distribution $\Distribution^t$ over the online types $I$, independent to the realization of previous items' types.
We will write $\mass_i^t$ for the probability that an online item of type $i$ arrives at step $t$ according to distribution $\Distribution^t$.
When each online item arrives, the online algorithm must allocate it to an offline agent immediately.
The goal is to maximize the \emph{social welfare}, i.e., the sum of the agents' values for the subsets of items allocated to them.

We make two remarks about this model.
First, it is general enough to capture the possibility that with a positive probability no online vertex arrives at time step $t$, e.g., by having a dummy online item type that contributes zero to the agents' value functions.
Second, multiple items of the same type $i \in I$ may arrive at different time steps in this model, and may further be allocated to the same agent $j$, even though agent $j$'s value function is defined on subsets of $I$ rather than multi-subsets.
We follow the treatment that the second item of an online type contributes zero to an agent's value. 
Nonetheless, the Non-IID model can also capture scenarios in which the agents have positive values for additional items of the same online type, e.g., by renaming online type $i$ at time step $t$ as type $(i,t)$ and thus, making the supports of distributions $\Distribution^t$'s disjoint.

Given any online algorithm, we let $\ALG$ denote the algorithm's expected objective value, over the random realization of the online items' types, and the algorithm's intrinsic randomness. 

Following the standard competitive analysis, we will compare an online algorithm's objective to the expectation of the offline optimal social welfare in hindsight, denoted as $\OPT$.
In other words, $\OPT$ is the expectation of the best achievable welfare if we had full information of the realized item types, and computed the best allocation accordingly with unlimited computational power.

An online algorithm is $\Gamma$-competitive if it guarantees $\ALG \ge \Gamma \cdot \OPT$ for all instances of the problem.
In other words, the algorithm would achieve at least a $\Gamma$ fraction of the expected optimal social welfare, even if the problem instance, including the offline agents, the online item types, the agents' value functions, and the distributions of online item types at different time steps (but not their realization), were chosen by an adversary who knows the algorithm (but not the realization of its internal random bits).

\clearpage
\subsection{Special Cases}

This paper focuses on the following four special cases that are widely studied in the literature.

\paragraph{Unweighted Matching.}
Consider a bipartite \emph{type graph} $G = (I, J, E)$.
The online item types are vertices on the left.
The offline agents are vertices on the right.
In this problem and the next, we will use online items and online vertices interchangeably, and similarly use offline agents and offline vertices interchangeably.
Further, $E$ denotes the set of edges between online vertex types $I$ and offline vertices $J$.
This problem considers maximizing the cardinality of the matching.
Hence, it is the special case when:
\[
	v_j(S) =
	\begin{cases}
		1 & \mbox{if there is $i \in S$ such that $(i, j) \in E$;} \\
		0 & \mbox{otherwise.}	
	\end{cases}
\]

Here, we interpret the allocation of a subset of  (types of) online vertices $S$ to $j$ as matching to $j$ the first online vertex therein whose type $i$ is $j$'s neighbor in $G$.
We will refer to this special case as \emph{Online Stochastic Matching}.

\paragraph{Vertex-Weighted Matching.}
This generalizes unweighted matching by associating each offline vertex $j \in J$ with a positive vertex-weight $w_j > 0$.
Instead of maximizing the cardinality, we now want to maximize the sum of the \emph{matched} offline vertices' weights. 
Hence, we have:
\[
	v_j(S) =
	\begin{cases}
		w_j & \mbox{if there is $i \in S$ such that $(i, j) \in E$;} \\
		0 & \mbox{otherwise.}	
	\end{cases}
\]

We will refer to this special case as \emph{Vertex-Weighted Online Stochastic Matching}.

\paragraph{AdWords.}
This problem considers an online advertising platform, where the online items are the impressions and the offline agents are the advertisers. 
For each online item type $i \in I$ and each agent $j \in J$, consider a non-negative bid $b_{ij} \ge 0$ that represents agent $j$'s willingness-to-pay for an item of type $i$.
Further, each agent $j$ has a positive budget $B_j > 0$ that upper bounds its total payment.
Hence, agent $j$'s value function is 
\[
	v_j(S) = \min \bigg\{\, \sum_{i \in S} b_{ij} \,,\, B_j \,\bigg\}
	~,
\]
i.e., either the sum of the allocated bids or its budget $B_j$, whichever is smaller.
We will refer to this special case as \emph{Stochastic AdWords}.

\paragraph{Display Ads.}
This problem also considers an online advertising platform.
Instead of setting a budget for its payment, each agent $j$ will only pay for the most valuable item allocated to it.
Following the terminology in previous works on this problem, we refer to agent $j$'s willingness-to-pay for an item of type $i$ as its edge-weight $w_{ij} \ge 0$.
This is the special case when:
\[
	v_j(S) = \max_{i \in S} \, w_{ij}
	~.
\]

This problem is also known as Edge-Weighted Online Bipartite Matching with Free Disposal, because the algorithm computes an edge-weighted matching where each agent is matched to the most valuable item allocated to it.
Allocating multiple items to an agent while keeping the maximum edge-weight is equivalent to allowing the agent to dispose previously allocated but lighter edges for free.
We will refer to this special case as \emph{Stochastic Display Ads}.

\subsection{Random-Order and Query-Commit Models}

For unweighted and vertex-weighted matching, we will further consider two relaxed models.

\paragraph{Random-Order Non-IID Model.}

In this model, the type graph and the distributions $\Distribution(t)$'s are still adversarially chosen, but the time steps are shuffled uniformly at random.
That is, consider a random permutation $\pi$ of $[T]$.
At each step $t \in [T]$, the algorithm observes $\pi(t)$ and an online item with its type drawn from $D^{\pi(t)}$, and needs to immediately allocate it to an agent.
We consider the algorithm's expected objective value over the random realization of the permutation and the online vertex types, and the algorithm's internal random bits as well. 
This kind of model has been studied for its own merit (see, e.g.,  the Prophet Secretary problem~\cite{EsfandiariHLM:SIDMA:2017}). 
In this paper, it will be a stepping stone toward designing online algorithms for the next model.

\paragraph{Query-Commit Model.}
Consider the following model of the unweighted and vertex-weighted matching problem.
Consider a bipartite graph $G = (I, J, E)$.
Each edge $(i,j) \in E$ exists independently with probability $0 \le p_{ij}\le 1$.
In the beginning, the algorithm knows the probabilities but not the realization of edges.
In each time step, the algorithm may query an edge $(i,j)$, and if $(i,j)$ exists then the algorithm must commit to including it in the matching.
We remark that the two sides of the bipartite graph have equal roles in this problem.
Nevertheless, we can artificially treat the two sides as offline and online vertices respectively in order to leverage algorithms from the previous Random-Order model.

This paper will consider a seemingly different but equivalent model in which the algorithm just needs to immediately decide whether to include the queried edge into the matching if the edge exists.
In other words, the algorithm can choose not to include it.
To simulate the option of not including an edge $(i,j)$ after querying it in the original model, the algorithm can flip a coin itself (instead of probing nature's coin flip), and proceed as if the edge existed with probability $p_{ij}$;
otherwise, the algorithm can proceed as if the edge did not exist.

Appendix~\ref{app:lossless-simulation} explains how to losslessly simulate an online algorithm designed for the Random-Order model in the Query-Commit model, using a known reduction~\cite{CostelloTT:ICALP:2012, GamlathKS:SODA:2019}.

\subsection{Existing Linear Program Relaxations}
\label{sec:existing-lp}

\subsubsection*{Stochastic Matching Linear Program}

Recall that $\mass_i^t$ is the probability that an online vertex of type $i$ arrives at time step $t$ according to distribution $\Distribution^t$.
We have $\sum_{i \in I} \mass_i^t \le 1$ for any time step $t \in [T]$.
Let $x_{ij}^t = \mass_i^t \cdot \matchrate^t_{ij}$ denote the probability that the offline optimal solution matches an online vertex of type $i$ to offline vertex $j$ at time step $t$.
With these variables, consider the following linear program by \citet{GamlathKS:SODA:2019}, which we will refer to as the \emph{Stochastic Matching LP}:
\begin{align}
    \text{maximize} \quad & \sum_{i \in I} \sum_{j \in J} \sum_{t \in [T]} w_{ij} \cdot x_{ij}^t \notag \\
    \text{subject to} \quad & \sum_{j \in J} x_{ij}^t \le \mass_i^t && \forall i \in I, \forall t \in [T] \notag \\
    & \sum_{(t,i) \in S} x_{ij}^t \le 1 - \prod_t \Big( 1 - \sum_{i : (t,i) \in S} \mass_i^t \Big) && \forall j \in J, \forall S \subseteq T \times I \label{eqn:discrete-time-lp} \\
    & x_{ij}^t \ge 0 && \forall i \in I, \forall j \in J, \forall t \in [T] \notag 
\end{align}

We will apply this LP to the Non-IID stochastic models for unweighted matching, vertex-weighted matching, and Display Ads. 
Unweighted matching is the special case when $w_{ij} = 1$ for edges $(i,j) \in E$ and $w_{ij} = 0$ otherwise.
Vertex-weighted matching is the special case when there are offline-vertex weights $(w_j)_{j \in J}$ such that $w_{ij} = w_j$ for edges $(i,j) \in E$ and $w_{ij} = 0$ otherwise.

The first set of constraints states that the probability of matching an online vertex of type $i$ at time step $t$ cannot exceed the probability that such a vertex arrives.
The second set of constraints says that the probability of matching an offline vertex $j$ to an online vertex with arrival time $t$ and type $i$ such that $(t,i) \in S$ is upper bounded by the probability of having at least one such online vertex arrive in the first place, which equals right-hand-side.

\begin{lemma}[Optimality, e.g., \cite{GamlathKS:SODA:2019}]
    \label{lem:discrete-time-lp-optimality}
    The optimal objective value of the Stochastic Matching LP is greater than or equal to the expected objective of the optimal matching in hindsight.
\end{lemma}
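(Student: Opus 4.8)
The plan is to prove the inequality by exhibiting a single feasible solution to the Stochastic Matching LP whose objective value equals $\OPT$; since the LP is a maximization, this suffices. The natural candidate is the ``allocation statistics'' of the offline optimum, as hinted by the intended meaning of the variables $x_{ij}^t = \mass_i^t \cdot \matchrate^t_{ij}$ stated just before the LP.

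First I would fix, for every realization of the online types (one type sampled from $\Distribution^t$ at each step $t$), a maximum-weight matching of the realized bipartite graph, breaking ties by an arbitrary fixed rule so that this choice depends measurably on the realization. In all special cases at hand --- unweighted, vertex-weighted, and Display Ads with free disposal --- the offline optimum can be taken to assign each offline vertex $j$ to at most one online vertex (for Display Ads, the agent effectively keeps a single heaviest incident edge), so I may treat it as such a matching. Then I would set
\[
	x_{ij}^t \;=\; \Pr{\text{the item at step $t$ has type $i$, and the offline optimum matches it to $j$}} ~.
\]
Nonnegativity is immediate. For the first constraint family, summing over $j \in J$ yields the probability that the step-$t$ item has type $i$ and is matched at all, which is at most $\Pr{\text{the step-$t$ item has type $i$}} = \mass_i^t$. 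For the second family, fix $j$ and $S \subseteq T \times I$: since the offline optimum matches $j$ at most once, the events indexed by distinct pairs $(t,i)$ are pairwise disjoint, so $\sum_{(t,i)\in S} x_{ij}^t$ is exactly the probability that the optimum matches $j$ to some online vertex with $(t,i) \in S$. That event entails that such a vertex arrives, so its probability is at most the probability that for some step $t$ the realized type $i$ satisfies $(t,i) \in S$; by independence of the types across steps this equals $1 - \prod_t \big( 1 - \sum_{i : (t,i) \in S} \mass_i^t \big)$, the right-hand side of \eqref{eqn:discrete-time-lp}. Finally, by linearity of expectation the objective of this solution is $\sum_{i,j,t} w_{ij}\, x_{ij}^t = \OPT$, because in each realization the optimum's value is the sum of $w_{ij}$ over its matched triples $(t,i,j)$.

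I do not expect a deep obstacle here; the only real care is in pinning down ``the offline optimum'' so that the disjointness argument is valid --- namely measurable tie-breaking, the convention that a repeated online type contributes zero to an agent's value, and the free-disposal interpretation for Display Ads that reduces each agent to a single kept edge. Once the optimum is a genuine one-sided matching, disjointness of the match events for fixed $j$ together with independence of the arrivals across time steps delivers both constraint families, and the objective identity follows directly from linearity of expectation.
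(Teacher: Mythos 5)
The paper cites \citet{GamlathKS:SODA:2019} for this lemma and does not reproduce a proof, so there is no in-paper argument to compare against directly; your argument is the standard one and is correct. It is also the matching-case simplification of the paper's own proof of the analogous Lemma~\ref{lem:adwords-lp-optimality} for Stochastic AdWords: there the offline optimum is not a matching and $x^t_{ij}$ must be defined by a fractional split $\eta_{Sj}\cdot v_j(S)/b_j(S)$, whereas here the offline optimum may be taken to be a genuine one-sided matching (as you correctly observe, including for Display Ads under free disposal and with the convention that a repeated type adds no value), so $x^t_{ij}$ is a bare matching probability, the first constraint family is a trivial probability bound, the second follows from disjointness of the match events for fixed $j$ together with independence of arrivals across time steps, and the objective identity is linearity of expectation.
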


\begin{lemma}[Computational Efficiency, e.g., \cite{GamlathKS:SODA:2019}]
    \label{lem:discrete-time-lp-polytime}
    The Stochastic Matching LP is solvable within polynomial time.
\end{lemma}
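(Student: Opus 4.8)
The plan is to invoke the ellipsoid method, so the real task is to build a polynomial-time separation oracle for the Stochastic Matching LP. The first family of constraints, $\sum_{j \in J} x_{ij}^t \le \mass_i^t$, together with the non-negativity constraints, number only polynomially many, so I would check these directly. The whole difficulty is the second family, which has one inequality for every pair $(j, S)$ with $j \in J$ and $S \subseteq T \times I$, hence exponentially many.

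The key step is to observe that, for a fixed offline vertex $j$, the right-hand side of the second family, viewed as a set function
\[
	h(S) ~=~ 1 - \prod_{t} \Big( 1 - \sum_{i \,:\, (t,i) \in S} \mass_i^t \Big)
	~,
\]
is monotone and submodular on $2^{T \times I}$ with $h(\emptyset) = 0$. To see this I would compute the marginal gain of adding an element $(t,i) \notin S$: it equals $\mass_i^t \cdot \prod_{t' \ne t} \big( 1 - \sum_{i' \,:\, (t',i') \in S} \mass_{i'}^{t'} \big)$, where every factor lies in $[0,1]$ since $\sum_{i'} \mass_{i'}^{t'} \le 1$; this gain is nonnegative and weakly shrinks as $S$ grows, which is exactly monotonicity and submodularity. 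The left-hand side, $x^{(j)}(S) := \sum_{(t,i) \in S} x_{ij}^t$, is a nonnegative modular function of $S$. Hence the second family for a fixed $j$ is equivalent to $\min_{S \subseteq T \times I} \big( h(S) - x^{(j)}(S) \big) \ge 0$ — equivalently, to the membership of the vector $(x_{ij}^t)_{t,i}$ in the polymatroid associated with $h$.

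Given this, the separation oracle I would implement is: on input $x$, verify the first family and non-negativity directly; then, for each $j \in J$, minimize the set function $S \mapsto h(S) - x^{(j)}(S)$, which is submodular as the sum of the submodular $h$ and the modular $-x^{(j)}$, using a polynomial-time submodular minimization algorithm; if the minimum is negative, output the violated inequality for that $j$ together with a minimizing set $S^{\star}$, and otherwise declare $x$ feasible. This runs in polynomial time, since it performs a polynomial number of submodular minimizations.

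Finally I would feed this oracle to the ellipsoid method. The LP is feasible (take $x \equiv 0$) and bounded (the first family forces $x_{ij}^t \le \mass_i^t \le 1$, so the objective is at most $\sum_{i,j,t} w_{ij}$), and all numbers in its description have polynomial bit-length, so the ellipsoid method returns an exact optimal vertex solution in polynomial time. I expect the only genuine obstacle to be the exponential second family; the crucial insight that unlocks it is recognizing its right-hand side as a monotone submodular function, after which separation reduces to submodular function minimization and everything else is standard machinery.
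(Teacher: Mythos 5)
Your proposal is correct and follows essentially the same route as the paper: identify the second constraint family for each fixed $j$ as a polymatroid defined by a submodular set function, obtain a polynomial-time separation oracle (via submodular function minimization), and invoke the ellipsoid method. Your explicit verification of submodularity via the marginal-gain computation is a useful detail the paper states without proof, but the underlying argument is the same.
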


We include its short proof below because the argument is insightful and will be useful for proving a similar lemma for our LP for Stochastic AdWords.

\begin{proof}
    The second set of constraints for any fixed $j \in J$ forms a polymatroid because the right-hand-side is a submodular set function over $T \times I$.
    Hence, the polytope of the Stochastic Matching LP is the intersection of polynomially many linear constraints (the first set of constraints) and $|J|$ polymatroids (the second set of constraints).
    Therefore, we have a polynomial-time separation oracle for it and can solve it in polynomial time using the ellipsoid method.
\end{proof}

\subsubsection*{Fluid Stochastic AdWords Linear Program}

\citet{DevanurSA:EC:2012} proposed a linear program for the IID special case of the Stochastic AdWords problem.
Its natural generalization to the Non-IID case is as follows:
\begin{align}
    \text{maximize} \quad & \sum_{i \in I} \sum_{j \in J} \sum_{t \in [T]} b_{ij} \cdot x_{ij}^t \notag \\
    \text{subject to} \quad & \sum_{j \in J} x_{ij}^t \le \mass_i^t && \forall i \in I, \forall t \in [T] \notag \\
    & \sum_{i \in I} \sum_{t \in [T]} b_{ij} \cdot x_{ij}^t \le B_j && \forall j \in J \label{eqn:adwords-fluid-lp} \\
    & x_{ij}^t \ge 0 && \forall i \in I, \forall j \in J, \forall t \in [T] \notag 
\end{align}

This kind of LPs is sometimes referred to as the fluid LPs in the literature on various offline and online optimization problems.
Therefore, we will call it the \emph{Fluid Stochastic AdWords LP}.
The first set of constraints is the same as the counterpart in the Stochastic Matching LP.
However, Constraint \eqref{eqn:adwords-fluid-lp} simply states that the expected amount of agent $j$'s budget spent on the allocation cannot exceed $B_j$.
This is substantially weaker than its counterpart in the Stochastic Matching LP, i.e., Constraint~\eqref{eqn:discrete-time-lp} therein, and is insufficient for our analysis.
We will introduce a new LP by strengthening this constraint in Subsection~\ref{sec:adwords-lp}.

\section{Stochastic Online Correlated Selection}
\label{sec:socs}

\subsection{Model}

Stochastic Online Correlated Selection (SOCS) is an online rounding algorithm for the Non-IID Stochastic Online Submodular Welfare Maximization problem and its special cases.
We will first recall the non-IID stochastic model.
Consider a set of online item types $I$ and a set of offline agents $J$.
Each offline agent $j \in J$ has a non-negative, monotone (non-decreasing), and submodular value function $v_j$ over subsets of $I$.
Consider $T$ discrete time steps.
In each time step $t \in [T]$, an online item arrives with its type drawn from distribution $\Distribution^t$.

In the model of SOCS, the online rounding algorithm is further given a fractional allocation $\matchrate_i^t = \big( \matchrate_{ij}^t \big)_{j \in J}$ at each time step $t \in [T]$ and for any online item type $i \in I$.
We will consider $\matchrate_{ij}^t$ as the probability that an online item of type $i$ arriving at time $t$ would be allocated to offline agent $j$ by some feasible offline allocation rule, over the random realization of the other online vertices' types.
To obtain a competitive online algorithm using SOCS, ideally we would like to use the optimal offline allocation to define $\mu^t_i$.
For computational efficiency, however, we will relax it to any fractional allocation such that $x_{ij}^t = \mass_i^t \cdot \matchrate_{ij}^t$ satisfies the linear constraints in either the LP relaxation in Subsection~\ref{sec:existing-lp}, and those in a new LP that we will develop in Subsection~\ref{sec:adwords-lp}.

Upon observing the online item's type $i$ at time $t$ and the fractional allocation $\matchrate^t_i$, the SOCS algorithm needs to immediately select an offline agent $j \in J$ and allocate the item to it, based on the fractional allocation $\matchrate_i^t$ and the selections in the previous time steps.

\subsubsection*{Special Case: Two-Way SOCS}

We will study the two-way SOCS as an important special case, where the fractional allocation for any time step $t \in [T]$ and any online item type $i \in I$ is half-integer, i.e., $\matchrate_{ij}^t \in \{0, \frac{1}{2}\}$ for all $j \in J$.
Effectively, the fractional allocation shortlists a pair of offline agents $J^t = \{j, k\}$ with no preference between the two, and the two-way SOCS needs to select one of them and allocate the online item to the selected agent.
We remark that this model is general enough to allow having no pair arrive in some time step $t \in [T]$ with a positive probability, e.g., by introducing dummy offline agents whose values are always zero, and interpreting this case as having a pair of dummies.

\subsubsection*{Convergence Rate}

Intuitively, we want to measure the quality of SOCS algorithms by the expectation of each offline agent $j$'s value function $v_j$ for the online items allocated to agent $j$, comparing against agent $j$'s contribution to the fractional allocation's objective.
Quantitatively, we will capture this by a non-increasing function $g : [0, 1] \to [0, 1]$ with $g(0) = 1$, which we will refer to as the \emph{convergence rate} of SOCS.
We will next elaborate on the guarantee of SOCS in the four special cases of Stochastic Online Submodular Welfare Maximization that we will study in this paper.

\paragraph{Unweighted and Vertex-Weighted Online Stochastic Matching.}
An offline agent/vertex $j$'s contribution to the fractional allocation's objective, normalized by its vertex-weight $w_j$, is:
\[
    y_j \: = \:  \sum_{t=1}^T \sum_{i \in I} \: \mass_i^t \cdot \matchrate_{ij}^t
    ~.
\]

A SOCS algorithm has convergence rate $g(\cdot)$ for unweighted and vertex-weighted matching if for any offline vertex $j$, the probability that vertex $j$ stays unmatched at the end is at most $g(y_j)$, over the random realization of online vertices' types and the internal randomness of the SOCS.

For notational convenience in the subsequent analysis, we further let $y^t_j = \sum_{i \in I} \mass_i^t \cdot \matchrate_{ij}^t$ denote the contribution from a time step $t \in [T]$.
We will also consider similar auxiliary notations in the other two problems.

\paragraph{Display Ads.}
As outlined in the introduction, we will consider an offline agent $j$'s objective by different weight-levels $w > 0$.
This has become the standard practice for designing and analyzing online algorithms for Display Ads~\cite{DevanurHKMY:TEAC:2016, FahrbachHTZ:FOCS:2020}.
For any weight-level $w > 0$, let:
\[
	y_j(w) \: = \: \sum_{t=1}^T \sum_{i \in I : w_{ij} \ge w } \mass_i^t \cdot \matchrate_{ij}^t
\]
denote the fractional allocation to offline agent $j$ from the online items types whose edge-weights are at least $w$.
Then, an offline agent $j$'s contribution to the fractional allocation's objective can be written as:
\[
    \int_0^\infty y_j(w) \:\dif w
    ~.
\]

Accordingly, we will measure the performance of SOCS for Display Ads not only for each offline agent $j \in J$ but also at each weight-level $w > 0$. 
Formally, a SOCS algorithm for Display Ads has convergence rate $g(\cdot)$ if for any offline agent $j$ and any weight-level $w > 0$, the probability that agent $j$ is \emph{not} allocated with any online item with edge-weight $w_{ij} \ge w$ is at most $g\big( y_j(w) \big)$, over the random realization of online items' types and the internal randomness of the SOCS algorithm.
This implies that the expected maximum edge-weight allocated to agent $j$ is at least:
\[
	\int_0^\infty \big( 1 - g(y_j(w)) \big) \,\dif w
	~.
\]

\paragraph{AdWords.}
An offline agent $j$'s contribution to the fractional allocation's objective, normalized by its budget $B_j$, equals:
\[
    y_j = \sum_{t=1}^T \sum_{i \in I} \: \mass_i^t \cdot \matchrate_{ij}^t \cdot \frac{b_{ij}}{B_j}
    ~,
\]
which can be viewed as the fraction of offline agent $j$'s budget used by the fractional allocation.

We say that a SOCS algorithm for AdWords has convergence rate $g(\cdot)$ if for any offline agent $j$, the allocation selected by SOCS leaves at most a $g(y_j)$ fraction of agent $j$'s budget \emph{unused} at the end in expectation, over the random realization of online items' types and the internal randomness of the SOCS algorithm.
That is, agent $j$'s expected value is at least:
\[
	\big( 1 - g(y_j) \big) \cdot B_j
	~.
\]

\paragraph{Baseline: Independent Rounding.}

For example, consider the baseline SOCS algorithm that allocates an item of type $i$ to agent $j$ with probability $\matchrate^t_{ij}$, independently at each time step $t \in [T]$.
For all aforementioned problems, this baseline algorithm's convergence rate is:
\[
    g(y_j) = e^{-y_j}
    ~,
\]

For unweighted and vertex-weighted matching, the baseline convergence rate follows by:
\[
	\prod_{t = 1}^T \Big( 1 - \sum_{i \in I} \mass_i^t \cdot \matchrate_{ij}^t \Big) 
	~\le~
	\prod_{t = 1}^T \exp \Big( - \sum_{i \in I} \mass_i^t \cdot \matchrate_{ij}^t \Big)
	~=~
	e^{-y_j}
	~.
\]

The proof for Display Ads is almost identical, except that we change the range of summation from all online types $i \in I$ to those with edge-weight at least $w$, i.e., $w_{ij} \ge w$.

Finally, the baseline convergence rate for AdWords follows by applying the above argument to auxiliary random variables that second-order stochastically dominate the actual spent budget. 
More precisely, for any offline agent $j$ and any time step $t \in [T]$, let random variable $X^t$ be the bid allocated to agent $j$ at time $t$, normalized by its budget: 
$X^t = \nicefrac{b_{ij}}{B_j}$ with probability $\mass_i^t \cdot \matchrate_{ij}^t$; and $X^t = 0$ if the agent receives no item.
Further, define auxiliary Bernoulli random variables $Y^t$ that equals $1$ with probability $X^t$, and $0$ otherwise.
By Jensen's inequality, the expected unused portion of agent $j$'s budget is:
\[
	\E\, \Big( 1 - \sum_t X^t \Big)^+ 
	~\le~
	\E\, \Big( 1 - \sum_t Y^t \Big)^+
	~=~
	\prod_{t=1}^T \Big( 1 - \sum_{i \in I} \: \mass_i^t \cdot \matchrate_{ij}^t \cdot \frac{b_{ij}}{B_j} \Big) 
	~\le~
	e^{-y_j}
	~.
\]

\subsection{General SOCS from Two-Way SOCS}
\label{sec:socs-reduction}

We will focus on a simple yet surprisingly powerful framework for designing SOCS algorithms, by reducing the problem to the two-way special case.

Upon observing the online item's type $i$ at time step $t \in [T]$, we will sample a \emph{surrogate type} based on the fractional allocation $\matchrate_i^t$.
We call this procedure \textbf{Type Decomposition} and defer its details to the end of the subsection.
Each surrogate type corresponds to either a single offline agent $j$ or a pair of offline agents $\{j, k\}$. 
We will write them as $i \sim j$ and $i \sim \{j, k\}$, and refer to them as \emph{one-way types} and \emph{two-way types} respectively.
We shall consider a one-way type $i \sim j$'s fractional allocation as fully allocating it to offline agent $j$, and a two-way online type $i \sim \{j, k\}$'s fractional allocation as allocating half of it to each of offline agents $j$ and $k$.
Then, instead of working with the original online types and the distribution $\Distribution^t$ over them at time $t$, we now have a distribution over the surrogate types.

We will now consider a SOCS algorithm that draws a surrogate type in each time step $t \in [T]$ and runs a two-way SOCS algorithm in the background as follows.
If we draw a one-way surrogate type $i \sim j$, then we have no choice but to allocate the online item to agent $j$.
Accordingly, let there be no arrival in the two-way instance for the two-way SOCS.
If we draw a two-way surrogate type $i \sim \{j, k\}$, let $\{j, k\}$ be the arriving pair for the two-way SOCS.
We will allocate the online item to the offline agent $j$ or $k$ selected by the two-way SOCS.

\bigskip

\begin{tcolorbox}
    \textbf{SOCS from Two-Way SOCS}\\[2ex]
    Upon observing the type $i$ of the online vertex at time $t$:
    \begin{enumerate}
        \item Draw a surrogate type using \textbf{Type Decomposition}.
        \item If it is a one-way type $i \sim j$, select $j$.
        \item If it is a two-way type $i \sim \{j, k\}$, select $j$ or $k$ using the two-way SOCS.
    \end{enumerate}
\end{tcolorbox}

\bigskip

We consider the following \textbf{Type Decomposition} algorithm that comes from an algorithm for (IID) Online Stochastic Matching~\cite{HuangS:STOC:2021, JailletL:MOR:2014}.
See also Figure~\ref{fig:correlated-sample} for an illustration.

\bigskip

\begin{tcolorbox}
    \textbf{Type Decomposition}\\[2ex]
    For each online type $i \in I$ and each time step $t \in [T]$:
    \begin{enumerate}
        \item Consider an interval $[0, 1)$.
        \item Align left-closed right-open subintervals $I_j$ of lengths $\matchrate_{ij}^t$ from left to right by lexicographical order of $j \in J$.\\[1ex]
            Let the unassigned interval be $I_\perp$, treating $\perp$ as a dummy agent with zero valuation.
        \item Sample $\eta \in [0, \frac{1}{2})$ uniformly at random, and let $\eta' = \eta + \frac{1}{2} \in [\frac{1}{2}, 1)$.
        \begin{enumerate}
            \item Find offline agents $j, k \in J$ such that $\eta \in I_j$ and $\eta' \in I_k$.
            \item If $j = k$, then return surrogate type $i \sim j$.
            \item Otherwise, return surrogate type $i \sim \{j,k\}$.
        \end{enumerate}
    \end{enumerate}
\end{tcolorbox}

\medskip

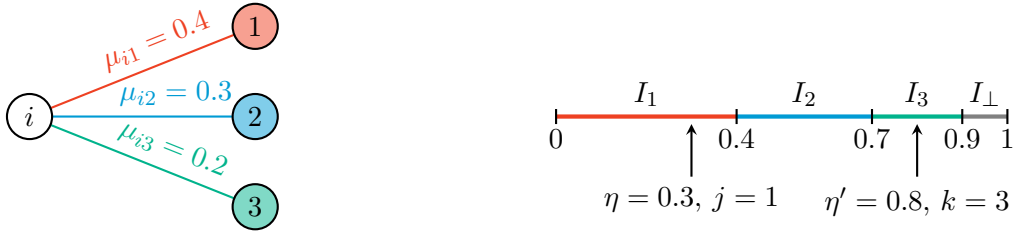
\begin{figure*}[ht]
	\centering
	\begin{tikzpicture}
		\draw(0,0)--(2.4,0)[hkured,ultra thick];
		\draw(2.4,0)--(4.2,0)[hkublue,ultra thick];
		\draw(4.2,0)--(5.4,0)[hkugreen,ultra thick];
		\draw(5.4,0)--(6,0)[gray,ultra thick];
		\draw[thick](0,-0.1)--(0,0.1);
		\draw[thick](2.4,-0.1)--(2.4,0.1);
		\draw[thick](4.2,-0.1)--(4.2,0.1);
		\draw[thick](5.4,-0.1)--(5.4,0.1);
		\draw[thick](6,-0.1)--(6,0.1);
		\draw(1.2,0.3)node{$I_1$};
		\draw(3.3,0.3)node{$I_2$};
        \draw(4.8,0.3)node{$I_3$};
        \draw(5.7,0.3)node{$I_\perp$};
        \draw(0,-0.3)node{$0$};
        \draw(2.4,-0.3)node{$0.4$};
        \draw(4.2,-0.3)node{$0.7$};		
		\draw(5.4,-0.3)node{$0.9$};
        \draw(6.0,-0.3)node{$1$};
		\draw(-7,0)--(-4,1.2)[hkured,thick] node[pos=0.6,above,sloped] {$\matchrate_{i1}=0.4$};;
		\draw(-7,0)--(-4,0)[hkublue,thick] node[pos=0.65,above,sloped] {$\matchrate_{i2}=0.3$};
		\draw(-7,0)--(-4,-1.2)[hkugreen,thick] node[pos=0.6,above,sloped] {$\matchrate_{i3}=0.2$};
		\filldraw[fill=white,thick](-7,0)circle(0.3)node{$i$};
		\filldraw[fill=hkured!50,thick](-4,1.2)circle(0.3)node{$1$};
		\filldraw[fill=hkublue!50,thick](-4,0)circle(0.3)node{$2$};
		\filldraw[fill=hkugreen!50,thick](-4,-1.2)circle(0.3)node{$3$};
        \draw[-stealth,thick] (1.8,-0.8)--(1.8,-0.1) node[at start,below] {$\eta=0.3$, $j=1$};
        \draw[-stealth,thick] (4.8,-0.8)--(4.8,-0.1) node[at start,below] {$\eta'=0.8$, $k=3$};
	\end{tikzpicture}
	\caption{Illustration of decomposing an online type $i\in I$ with neighbors $1,2,3$.}
	\label{fig:correlated-sample}
\end{figure*}

We design the algorithm based on two factors.
First, we prefer two-way surrogate types over one-way counterparts, because the former allows us to exploit the power of the two-way SOCS.
The next lemma indicates that a one-way type $i \sim j$ would be realized only when the fractional allocation at time step $t$ allocates more than half of the original online type $i$ to offline agent $j$, i.e., if $\matchrate_{ij}^t > \frac{1}{2}$.
This also means that at most one one-way surrogate type could be realized with a positive probability in each time step.
We omit this lemma's proof because it follows directly from the definition of the algorithm.

\begin{lemma}
	\label{lem:type-decomposition-probability}
	For any online item type $i \in I$, any offline agent $j \in J$, and any time step $t \in [T]$, \textbf{Type Decomposition} draws one-way surrogate type $i \sim j$ with probability $(2\matchrate_{ij}^t - 1)^+$;
	it draws a two-way surrogate type with agent $j$ as one of the two choices with probability $2\cdot \min\{ \matchrate_{ij}^t, 1-\matchrate^t_{ij} \}$.
\end{lemma}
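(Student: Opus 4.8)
The plan is to fix an online type $i \in I$ and a time step $t \in [T]$ and evaluate both probabilities by elementary length computations on the interval $[0,1)$. Writing $\matchrate_j := \matchrate_{ij}^t$ for brevity and $I_j = [\ell_j,\, \ell_j + \matchrate_j)$ for the subinterval assigned to agent $j$, where $\ell_j = \sum_{j' < j} \matchrate_{j'}$, I would first record the structural facts that \textbf{Type Decomposition} gives us: the $I_j$'s together with $I_\perp$ form a partition of $[0,1)$, so each $I_j$ is a single (possibly empty) interval with $\ell_j \ge 0$ and $\ell_j + \matchrate_j \le 1$; and the sampled point $\eta$ is uniform on $[0,\tfrac12)$ while $\eta' = \eta + \tfrac12$ ranges uniformly over $[\tfrac12, 1)$.

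For the one-way probability I would argue as follows. By definition the surrogate type $i \sim j$ is returned exactly when $\eta \in I_j$ and $\eta' \in I_j$, i.e., when $\eta \in I_j \cap (I_j - \tfrac12) = [\ell_j,\, \ell_j + \matchrate_j - \tfrac12)$. This set is nonempty iff $\matchrate_j > \tfrac12$, in which case it has length $\matchrate_j - \tfrac12$ and, using $\ell_j \ge 0$ and $\ell_j + \matchrate_j \le 1$, is contained in the range $[0,\tfrac12)$ of $\eta$. Dividing its length by $\tfrac12$ gives $\Pr{i \sim j} = (2\matchrate_j - 1)^+$; this also recovers the remark that the only agent capable of producing a one-way type is the unique one (if any) with $\matchrate_j > \tfrac12$.

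For the two-way probability I would observe that agent $j$ is one of the two choices of the returned two-way type exactly when precisely one of $\eta, \eta'$ falls in $I_j$: if both fall in $I_j$ the type is one-way, if neither does then $j$ is uninvolved, and otherwise the ``partner'' point lies in a different cell of the partition — possibly $I_\perp$, which the algorithm still outputs as the two-way type $i \sim \{j,\perp\}$ with $j$ a choice. By inclusion–exclusion this probability equals $\Pr{\eta \in I_j} + \Pr{\eta' \in I_j} - 2\,\Pr{\eta \in I_j,\, \eta' \in I_j}$. Since $\eta$ is uniform on $[0,\tfrac12)$ and $\eta'$ on $[\tfrac12,1)$, the first two terms are $2\,\mathrm{len}\big(I_j \cap [0,\tfrac12)\big)$ and $2\,\mathrm{len}\big(I_j \cap [\tfrac12,1)\big)$, whose sum is $2\,\mathrm{len}(I_j) = 2\matchrate_j$; the third term is $2(2\matchrate_j-1)^+$ by the previous paragraph, since $\{\eta \in I_j\} \cap \{\eta' \in I_j\}$ is exactly the one-way event. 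Hence the probability is $2\matchrate_j - 2(2\matchrate_j - 1)^+$, and a one-line case split on $\matchrate_j \le \tfrac12$ versus $\matchrate_j > \tfrac12$ rewrites this as $2\min\{\matchrate_j,\, 1 - \matchrate_j\}$.

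I expect the only step requiring care to be the geometry around the midpoint $\tfrac12$: verifying cleanly that $I_j \cap (I_j - \tfrac12)$ has the stated endpoints and sits inside $[0,\tfrac12)$, and fixing the convention that a partner point landing in $I_\perp$ still counts toward the two-way tally for $j$. Everything else is routine bookkeeping of interval lengths.
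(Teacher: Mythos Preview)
Your proposal is correct. The paper in fact omits the proof of this lemma entirely, stating only that it ``follows directly from the definition of the algorithm''; your argument is precisely the interval-length computation that unpacks this, and every step checks out, including the handling of the dummy agent $\perp$ as a legitimate partner in a two-way type.
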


Further, we maintain an invariant that the expected allocation to an offline agent $j$ of the decomposed surrogate types is the same as with the fractional allocation $\matchrate_{ij}^t$ of the original online type $i$.
To formally state this property, we introduce some notations that will also be useful later in our analyses.
Let $\mass_{i \sim j}^t$ and $\mass_{i \sim \{j,k\}}^t$ denote the probability of realizing surrogate types $i \sim j$ and $i \sim \{j, k\}$ respectively at time step $t \in [T]$, over the random realization of the online item's type $i$ at time $t$ and the randomness in the \textbf{Type Decomposition} algorithm.
By the definition of the algorithm, we have:

\begin{lemma}[Allocation Conservation]
	\label{lem:match-rate-conservation}
    For any online item type $i \in I$, any offline agent $j \in J$, and any time step $t \in [T]$, we have:
    \[
        \mass_i^t \cdot \matchrate_{ij}^t \: = \: \mass_{i \sim j}^t + \frac{1}{2} \sum_{k \ne j} \mass_{i \sim \{j,k\}}^t
        ~.
    \]
\end{lemma}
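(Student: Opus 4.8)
The plan is to fix an online type $i \in I$, an offline agent $j \in J$, and a time step $t \in [T]$, and to condition on the event that the item realized at time $t$ has type $i$. This event has probability $\mass_i^t$; if $\mass_i^t = 0$ then both sides of the claimed identity vanish, so assume $\mass_i^t > 0$. Conditioned on this event, the surrogate type produced by \textbf{Type Decomposition} is a deterministic function of the single uniform sample $\eta \in [0,\tfrac12)$ (with $\eta' = \eta + \tfrac12$), and below I write $\Pr{\cdot}$ for probabilities under this conditioning. Recall that the interval $I_j$ built by the algorithm is left-closed, right-open, and has length $\matchrate_{ij}^t$.

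The device I would use is the multiplicity with which $j$ is hit, $N_j \: := \: \mathbf{1}[\eta \in I_j] + \mathbf{1}[\eta' \in I_j] \in \{0,1,2\}$. By definition of the algorithm, the surrogate type is the one-way type $i \sim j$ exactly when $\eta$ and $\eta'$ both land in $I_j$ (so $N_j = 2$), it is a two-way type $i \sim \{j,k\}$ with $k \ne j$ exactly when precisely one of $\eta, \eta'$ lands in $I_j$ (so $N_j = 1$), and $j$ is absent from the surrogate type exactly when $N_j = 0$. Here the partner $k$ could a priori be the dummy $\perp$ whenever $\eta$ or $\eta'$ falls in the unassigned interval $I_\perp$; this causes no difficulty, since such a type is counted on the right-hand side of the claim just like any other two-way type containing $j$. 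Since the events $\{i \sim \{j,k\}\}$ over distinct $k$ are disjoint, taking expectations gives
\[
	\E[N_j] \: = \: 2\,\Pr{i \sim j} + \sum_{k \ne j} \Pr{i \sim \{j,k\}} \: = \: \frac{1}{\mass_i^t}\Big( 2\,\mass_{i\sim j}^t + \sum_{k\ne j} \mass_{i\sim\{j,k\}}^t \Big)
	~.
\]

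On the other hand, I would compute $\E[N_j]$ directly from the law of $(\eta,\eta')$. Since $\eta$ is uniform on $[0,\tfrac12)$, $\Pr{\eta \in I_j}$ equals twice the length of $I_j \cap [0,\tfrac12)$; and since $\eta' = \eta + \tfrac12$ is then uniform on $[\tfrac12,1)$, $\Pr{\eta' \in I_j}$ equals twice the length of $I_j \cap [\tfrac12,1)$. Adding these, $\E[N_j]$ equals twice the total length of $I_j$, namely $2\matchrate_{ij}^t$. Equating this with the previous display and multiplying through by $\tfrac12 \mass_i^t$ yields exactly $\mass_i^t \matchrate_{ij}^t = \mass_{i\sim j}^t + \tfrac12 \sum_{k\ne j} \mass_{i\sim\{j,k\}}^t$.

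I do not expect a real obstacle: the statement is a bookkeeping identity recording how \textbf{Type Decomposition} redistributes the mass $\mass_i^t\matchrate_{ij}^t$ among the surrogate types touching $j$. The only points needing mild care are the ones flagged above — treating $(\eta,\eta')$ as a coupled pair, so that each endpoint contributes its own density-$2$ factor while together they sweep $[0,1)$ exactly once, and making sure the possible dummy partner $\perp$ is accounted for consistently with how the sum $\sum_{k\ne j}$ is read. An equally short alternative that sidesteps $N_j$ is to expand $\mass_{i\sim j}^t = \mass_i^t\,\Pr{\eta,\eta' \in I_j}$ and $\mass_{i\sim\{j,k\}}^t = \mass_i^t\big(\Pr{\eta \in I_j, \eta' \in I_k} + \Pr{\eta \in I_k, \eta' \in I_j}\big)$, substitute into the right-hand side, and collect terms using that the intervals $I_k$ (for $k \in J$) together with $I_\perp$ partition $[0,1)$; but the multiplicity argument is cleaner.
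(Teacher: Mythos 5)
Your proof is correct. The paper does not actually prove this lemma; it states it after writing "By the definition of the algorithm, we have:", so there is no paper argument to compare against, and yours is a genuine derivation. The multiplicity device $N_j = \mathbf{1}[\eta \in I_j] + \mathbf{1}[\eta' \in I_j]$ packages the bookkeeping cleanly: the events $\{N_j = 2\}$ and $\{N_j = 1\}$ correspond exactly to drawing the one-way type $i \sim j$ and a two-way type containing $j$, so conditioned on type $i$ one has $\E[N_j] = 2\,\Pr{i \sim j} + \sum_{k \ne j}\Pr{i \sim \{j,k\}}$, while linearity of expectation gives $\E[N_j] = \Pr{\eta \in I_j} + \Pr{\eta' \in I_j} = 2\,|I_j| = 2\matchrate_{ij}^t$ because $\eta$ and $\eta'$ are each uniform with density $2$ on their respective half and together sweep $[0,1)$ exactly once; equating the two and multiplying by $\tfrac12 \mass_i^t$ yields the identity. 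Your remark about the dummy $\perp$ is also the right reading: when $\sum_k \matchrate_{ik}^t < 1$, the interval $I_\perp$ is nonempty and a two-way type $i \sim \{j, \perp\}$ can arise, so the sum $\sum_{k \ne j}$ must be understood to include $\perp$ for the mass balance to close; this is consistent with the paper's treatment of $\perp$ as a dummy agent. Note in passing that once you have $\E[N_j] = 2\matchrate_{ij}^t$ and the case analysis of $N_j$, you also get Lemma~\ref{lem:type-decomposition-probability} nearly for free (that $\Pr{N_j = 2} = (2\matchrate_{ij}^t - 1)^+$ follows since the two shifts of $I_j$ by $0$ and $\tfrac12$ overlap in an interval of length $(\matchrate_{ij}^t - \tfrac12)^+$), so the same device proves both omitted lemmas.
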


For unweighted and vertex-weighted matching, the original online item type $i$ does not affect how the objective would change if we match the online vertex to $j$ or $k$.
Therefore, we will omit $i$ and write the surrogate type as $j$ and $\{j, k\}$.
The probability of realizing a one-way type $j$ (respectively, two-way type $\{j, k\}$) will be the sum of the probability of realizing $i \sim j$ (respectively, $i \sim \{j, k\}$) over all online item types $i \in I$.
In other words:
\[
	\mass_j^t = \sum_{i \in I} \mass_{i \sim j}^t
	\quad
	,
	\quad
\mass_{\{j,k\}}^t = \sum_{i \in I} \mass_{i \sim \{j,k\}}^t
\]
are the probabilities of drawing surrogate types $j$ and $\{j, k\}$ respectively.

In this case, the Allocation Conservation property can be written as:
\begin{equation}
	\label{eqn:match-rate-conservation}
    y_j^t = \sum_{i \in I} \mass_i^t \cdot \matchrate_{ij}^t \: = \: \mass_j^t + \frac{1}{2} \sum_{k \ne j} \mass_{\{j,k\}}^t
    ~.	
\end{equation}

\section{Unweighted and Vertex-Weighted Matching}
\label{sec:vertex-weighted}

\subsection{Optimal Two-Way SOCS}
\label{sec:optimal-two-way-socs}

Recall the setting of two-way OCS.
Consider $T$ discrete time steps. 
At each time step $t \in [T]$, an online vertex arrives with its two-way type $\{j, k\}$ drawn independently from a distribution, which we abuse notation and still refer to as $\Distribution^t$ in this subsection.
The fractional allocation for a realized two-way type $\{j, k\}$ at time step $t$ is $\matchrate^t_j = \matchrate^t_k = \frac{1}{2}$.

We will monitor the expectation of the total fractional allocation to each offline vertex $j$, denoted as $y_j^{1:t}$.
Here, by definition we have:
\[
	y_j^t = \frac{1}{2} \sum_{k \ne j} \mass^t_{\{j,k\}}
	\quad,\quad
	y_j^{1:t} = \sum_{t'=1}^t y_j^{t'}
	~.
\]

In particular, recall that the fractional allocation in a two-way instance allocates the online vertex equally between the two offline vertices in the realized two-way type.
Therefore, $2 y_j^{1:t}$ equals the expected number of online vertices from time $1$ to $t$ whose two-way types involve offline vertex $j$ as one of the two choices, over the random realization of online types.

Following the terminology in the OCS literature, an offline vertex is \emph{unselected} if it has not yet been selected by the SOCS thus far.
We will keep track of the subset of unselected offline vertices, i.e., the set of unmatched offline vertices in the matching problem.
Since we consider unweighted and vertex-weighted matching, we only need to consider the unselected offline vertices in each pair, whenever such a vertex exists.
Hence, the only non-trivial decision is how to select an offline vertex from a pair when both offline vertices therein are unselected.

\bigskip

\begin{tcolorbox}
    \textbf{Two-Way SOCS for Unweighted and Vertex-Weighted Matching}\\[2ex]
    For each pair $\{j, k\}$ that arrives at time step $t \in [T]$:
    \begin{itemize}
        \item Select an unselected vertex $j$ or $k$ with probability proportional to $e^{2 y_j^{1:t}}$ and $e^{2 y_k^{1:t}}$.
    \end{itemize}
\end{tcolorbox}

\bigskip

We remark again that this algorithm is different from a seemingly similar (multi-way) Poisson OCS algorithm by \citet{HuangSY:STOC:2022} for the IID model.
Specializing the Poisson OCS to the two-way special case, it selects offline vertices $j$ and $k$ with probability proportional to $e^{y_j^{1:t}}$ and $e^{y_k^{1:t}}$ respectively.
By comparison, our algorithm favors the offline vertex with a larger expected number of appearances in the past more aggressively than the Poisson OCS algorithm, when both offline vertices are still unselected.

This subtle difference fundamentally changes the underlying analysis.
The new analysis in this paper relies on the structure of the two-way special case and the AM-GM inequality.

\begin{theorem}
    \label{thm:stochastic-ocs}
    \textrm{\bf Two-Way SOCS for Unweighted and Vertex-Weighted Matching} achieves convergence rate:
    \[
        g\big( y_j \big) = \big(1+y_j\big) e^{-2y_j}
        ~,
    \]
    i.e., for any offline vertex $j \in J$, the probability that $j \in J$ is unselected at the end is at most $g(y_j)$.
\end{theorem}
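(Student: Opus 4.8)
The plan is to run a discrete induction on the time step $t$. For an offline vertex $j$ write $u_j^t$ for the probability that $j$ is still unselected after step $t$, and for distinct $j,k$ write $u_{jk}^t$ for the probability that both are unselected after step $t$; abbreviate the sampling weight as $w_j^t = e^{2y_j^{1:t}}$. Since the algorithm behaves trivially whenever at most one vertex of the arriving pair is unselected, and since $\sum_{k \ne j}\mass_{\{j,k\}}^t = 2y_j^t$, the definition of the algorithm yields the recurrence
\[
	u_j^t = \big(1-2y_j^t\big)\,u_j^{t-1} + \sum_{k\ne j}\mass_{\{j,k\}}^t\, u_{jk}^{t-1}\,\frac{w_k^t}{w_j^t+w_k^t}
	~,
\]
because when a pair $\{j,k\}$ arrives and $j$ is unselected, $j$ survives exactly when $k$ too is unselected and the algorithm picks $k$, which happens with probability $w_k^t/(w_j^t+w_k^t)$. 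With the normalization $a_j^t := u_j^t\, w_j^t$ and $a_j^0 = 1$, the theorem reduces to proving $a_j^t \le 1 + y_j^{1:t}$ for every $j$ and $t$.

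Two ingredients drive the argument. The first is a bound on the joint survival probability reflecting the negative correlation the algorithm induces: when both endpoints of an arriving pair are unselected, exactly one is killed, so distinct vertices tend not to survive together. I would establish, by a side induction that in the two-way setting only ever needs to track pairs (the influence of any third vertex being absorbed by its own baseline survival bound $e^{-y^{1:t}}$), an estimate of the form $u_{jk}^{t} \le e^{-(y_j^{1:t}+y_k^{1:t})}$, i.e. essentially the product of the per-vertex baseline bounds, which is strictly stronger than the trivial $u_{jk}^t \le \min\{u_j^t,u_k^t\}$ coming from monotonicity alone. The second ingredient is what makes the aggressive weight $w_j^t = e^{2y_j^{1:t}}$ affordable: applying AM--GM to the denominator, $w_j^t + w_k^t \ge 2\sqrt{w_j^t w_k^t}$, so
\[
	\frac{w_k^t}{w_j^t+w_k^t} \le \frac{1}{2}\sqrt{\frac{w_k^t}{w_j^t}} = \frac{e^{y_k^{1:t}}}{2\,e^{y_j^{1:t}}}
	~.
\]
The key point is that the neighbour's weight now enters only through $\sqrt{w_k^t} = e^{y_k^{1:t}}$, which precisely cancels the $e^{-y_k^{1:t}}$ produced by the joint-survival bound; the naive estimate $w_j^t + w_k^t \ge w_j^t$ would instead leave a factor linear in $w_k^t$ and force the weaker choice $w_j^t = e^{y_j^{1:t}}$ as in the Poisson OCS.

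Substituting both estimates into the recurrence, multiplying through by $w_j^t$, and --- crucially, without invoking Jensen's inequality, which is exactly what the simpler two-way structure lets us avoid --- handling the sum over neighbours $k$ term by term (using that a single pair arrives at each step, together with convexity of the exponential to control $\sum_{k\ne j}\mass_{\{j,k\}}^t e^{y_k^t}$), the right-hand side collapses to a self-contained recurrence in $a_j^t$ alone, of the shape $a_j^t \le a_j^{t-1} + y_j^t$ up to lower-order terms. Closing the induction $a_j^t \le 1 + y_j^{1:t}$ then comes down to an elementary one-variable inequality such as $1+y \le e^{y}$. Taking $t = T$ gives $u_j^T = a_j^T e^{-2y_j} \le (1+y_j)e^{-2y_j}$, and the matching lower-bound instance --- where $j$ meets many neighbours each appearing once, so the weight ratios it faces realize equality in AM--GM --- yields tightness.

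The hard part, where most of the work goes, is the coupling of the two ingredients: one must prove a joint-survival bound that is simultaneously strong enough to absorb the $\sqrt{w_k^t}$ growth created by the aggressive weights, yet actually valid, and then control the cross-terms over all neighbours $k$ sharply enough that the per-vertex recurrence closes to \emph{exactly} $(1+y_j)e^{-2y_j}$ rather than to a weaker function. This is precisely where the two-way structure is indispensable: only pairwise correlations, and only one pair per step, need to be reasoned about, so the termwise decoupling that replaces Jensen's inequality is available --- an analogous reduction is not known in the multi-way case.
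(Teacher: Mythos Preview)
Your proposal is correct and matches the paper's proof: the same recurrence, AM--GM on the denominator, the baseline joint bound $u_{jk}^{t-1}\le e^{-(y_j^{1:(t-1)}+y_k^{1:(t-1)})}$, then induction on $t$ to close $u_j^t\le(1+y_j^{1:t})e^{-2y_j^{1:t}}$. Two minor cleanups: the baseline bound is most naturally proved for \emph{all} subsets $S$ simultaneously (the recurrence for $u_{jk}$ already brings in triples $u_{\{j,k,\ell\}}$, so ``only tracking pairs'' is not quite enough), and after substituting both ingredients the $k$-dependence cancels \emph{exactly}---the sum over neighbours collapses to $y_j^t\,e^{-2y_j^{1:(t-1)}}$ with no convexity step needed.
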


For any subset of offline vertices $S \subseteq J$, let $u_S^t$ be the probability that all vertices in $S$ remain unselected after the first $t$ time steps.
For singletons $S = \{j\}$, we write $u_j^t$ for $u_S^t$ for notational simplicity.
We have $u_S^0 = 1$ at the beginning for all subsets $S \subseteq J$.
Further, the above theorem is equivalent to $u_j^T \le g(y_j)$.
We will next characterize these unselected probabilities by a recurrence.

\begin{lemma}
    \label{lem:ocs-recurrence}
    For any time step $t \in [T]$ and any subset of offline vertices $S \subseteq J$, we have:
    \[
    	u_S^t ~= \sum_{\{ j, k \} : j, k \notin S} \mass_{\{j,k\}}^t \cdot u_S^{t-1} + \sum_{\{j, k\} : j \in S, k \notin S} \mass_{\{j,k\}}^t \cdot u_{S+k}^{t-1} \cdot \frac{e^{2 y_k^{1:(t-1)}}}{e^{2 y_j^{1:(t-1)}} + e^{2 y_k^{1:(t-1)}}}
    	~.
    \]
\end{lemma}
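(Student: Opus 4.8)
The plan is to establish this as a one-step identity (no induction is needed) via the law of total probability, conditioning first on the two-way type realized at step $t$ and then on the status of the two vertices in that pair just before step $t$. The structural facts I would rely on are: (i) arrivals at distinct time steps are independent, so the step-$t$ type is independent of everything in steps $1,\dots,t-1$; (ii) the two-way SOCS only ever selects a vertex belonging to the arriving pair; and (iii) when exactly one vertex of the arriving pair is still unselected, that vertex is selected deterministically, whereas when both are unselected the algorithm flips a fresh independent coin with bias proportional to the weights $e^{2 y_j^{1:(t-1)}}$ and $e^{2 y_k^{1:(t-1)}}$. From (ii), the event ``every vertex of $S$ is unselected after step $t$'' is contained in ``every vertex of $S$ is unselected after step $t-1$,'' so I would write $u_S^t = \sum_{\{j,k\}} \mass_{\{j,k\}}^t \cdot p_{\{j,k\}}$, where $p_{\{j,k\}}$ is the conditional probability that all of $S$ is unselected after step $t$ given that $\{j,k\}$ arrives at step $t$; the ``no arrival'' case is subsumed as the pair of dummy agents, which lie outside $S$.

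I would then split this sum by how $\{j,k\}$ meets $S$. If $j,k\notin S$, step $t$ cannot change the status of any vertex of $S$, so $p_{\{j,k\}} = u_S^{t-1}$ by (i); these pairs give the first sum. If $j,k\in S$, then on the prerequisite event that all of $S$ is unselected before step $t$ both $j$ and $k$ are unselected, so by (iii) the algorithm selects one of them and the target event fails; hence $p_{\{j,k\}} = 0$, which is why such pairs are (correctly) absent from the recurrence. For the mixed case $j\in S$, $k\notin S$, the target event is the conjunction of ``all of $S$ unselected before step $t$'' and ``$j$ not selected at step $t$.'' By (iii), given the arriving pair, the unselected vertex $j$ escapes selection only when $k$ is also unselected and the coin favours $k$; since that coin is fresh internal randomness independent of the first $t-1$ steps, this conditional probability factors as $u_{S+k}^{t-1}\cdot \frac{e^{2 y_k^{1:(t-1)}}}{e^{2 y_j^{1:(t-1)}}+e^{2 y_k^{1:(t-1)}}}$, the summand of the second sum. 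Adding the three cases yields the stated identity.

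The one place that needs genuine care is the mixed case: it is tempting but wrong to write $p_{\{j,k\}} = u_S^{t-1}\cdot \frac{e^{2 y_k^{1:(t-1)}}}{e^{2 y_j^{1:(t-1)}}+e^{2 y_k^{1:(t-1)}}}$, because on the sub-event that $k$ has already been selected before step $t$ the unselected vertex $j$ is selected with certainty and contributes nothing, so the correct prefactor is $u_{S+k}^{t-1}$ rather than $u_S^{t-1}$. Everything else is routine bookkeeping with independence and linearity; I would make the two independence facts used above — independence of the step-$t$ arrival from the history, and independence of the step-$t$ tie-breaking coin from the history — explicit so that the two factorizations are fully justified.
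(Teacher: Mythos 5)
Your proof is correct and takes essentially the same route as the paper's: both condition on the pair arriving at step~$t$, use independence of the step-$t$ arrival (and its tie-breaking coin) from the history, and split into the three cases $\{j,k\}\cap S=\varnothing$, $\{j,k\}\subseteq S$, and $|\{j,k\}\cap S|=1$, with the key observation in the mixed case that the correct prefactor is $u_{S+k}^{t-1}$ (not $u_S^{t-1}$) because $j$ escapes selection only if $k$ is also still unselected. The explicit flag of this potential pitfall is a nice touch but not a departure from the paper's argument.
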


\begin{proof}
    Note that the arrival of a pair of offline vertices at time step $t$ is independent to the arrivals in the earlier time steps.
    Consider any time $t$ and any subset of offline vertices $S$. 
    There are three cases depending on the arrival at time $t$.
    
    First, if the realized pair of offline vertices at time step $t$ does not intersect with $S$, then the vertices in $S$ are unselected after time $t$ if and only if they were unselected before time $t$.
    In other words, the vertices in $S$ are unselected with probability $u_S^{t-1}$ in this case.
    This corresponds to the first term on the right-hand-side.

    Next, if the realized pair of offline vertices at time step $t$ are both in $S$, then one of them would be selected at time $t$.
    In other words, the vertices $S$ cannot all be unselected after time $t$.
    Thus, this case does not contribute to the right-hand-side.

    Finally, suppose that a pair $\{j, k\}$ arrives at time step $t$ with exactly one vertex in $S$.
    Without loss of generality, we may assume that $j \in S$ but $k \notin S$.
    Then, for all vertices in $S$ to remain unselected after time $t$, we need two conditions: (1) all elements in $S + k$ are unselected before time $t$, and (2) the algorithm selects $k$ instead of $j$ at time $t$.
    That is, this case contributes:
    \[
        u_{S+k}^{t-1} \cdot \frac{e^{2 y_k^{1:(t-1)}}}{e^{2 y_j^{1:(t-1)}} + e^{2 y_k^{1:(t-1)}}}
        ~,
    \]
    and corresponds to the second term on the right-hand-side.
\end{proof}

The next lemma is a corollary of Lemma~\ref{lem:ocs-recurrence} and the application of the AM-GM inequality on the denominator of the last term on the right-hand-side.

\begin{lemma}
	\label{lem:ocs-recurrence-am-gm}
    For any time $t \in [T]$ and any subset of offline vertices $S \subseteq J$, we have:
    \[
    	u_S^t ~\le \sum_{\{ j, k \} : j, k \notin S} \mass_{\{j,k\}}^t \cdot u_S^{t-1} + \frac{1}{2} \sum_{\{j, k\} : j \in S, k \notin S} \mass_{\{j,k\}}^t \cdot u_{S+k}^{t-1} \cdot e^{y_k^{1:(t-1)} - y_j^{1:(t-1)}}
    	~.
    \]
\end{lemma}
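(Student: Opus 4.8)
The plan is to obtain the claimed inequality directly from the exact recurrence in Lemma~\ref{lem:ocs-recurrence} by bounding each summand of the second sum separately; the first sum is already identical on both sides, so nothing needs to be done there. Concretely, I would fix a time step $t$, a subset $S \subseteq J$, and an ordered pair $(j,k)$ with $j \in S$, $k \notin S$, and aim to show the pointwise bound
\[
	\frac{e^{2 y_k^{1:(t-1)}}}{e^{2 y_j^{1:(t-1)}} + e^{2 y_k^{1:(t-1)}}} \;\le\; \frac{1}{2}\, e^{y_k^{1:(t-1)} - y_j^{1:(t-1)}} .
\]
Granting this, I multiply both sides by the nonnegative quantity $\mass_{\{j,k\}}^t \cdot u_{S+k}^{t-1}$ (nonnegativity of $u_{S+k}^{t-1}$ holds since it is a probability) and sum over all such pairs, which yields exactly the second term in the statement; adding back the common first term finishes the argument.

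For the pointwise bound, the key step is the AM-GM inequality applied to the two terms in the denominator: $e^{2 y_j^{1:(t-1)}} + e^{2 y_k^{1:(t-1)}} \ge 2\sqrt{e^{2 y_j^{1:(t-1)}} \cdot e^{2 y_k^{1:(t-1)}}} = 2\, e^{\,y_j^{1:(t-1)} + y_k^{1:(t-1)}}$. Substituting this lower bound for the denominator and simplifying the resulting exponentials gives precisely $\frac{1}{2}\, e^{y_k^{1:(t-1)} - y_j^{1:(t-1)}}$, as desired. This is the ``application of AM-GM on the denominator'' referred to in the lemma's preamble.

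There is essentially no obstacle here: the statement is a routine corollary, and the only thing to be careful about is that the inequality is applied termwise and that all multiplicative factors ($\mass_{\{j,k\}}^t$ and $u_{S+k}^{t-1}$) are nonnegative so that the direction of the inequality is preserved under multiplication and summation. The conceptual content — why this more aggressive weighting $w_j = e^{2 y_j}$ is feasible — lies not in this lemma but in how the resulting recurrence is subsequently solved; here we are merely recording the relaxed recurrence that the weighting produces.
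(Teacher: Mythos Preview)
Your proposal is correct and matches the paper's approach exactly: the paper states this lemma as an immediate corollary of Lemma~\ref{lem:ocs-recurrence} via AM-GM on the denominator, and your termwise application of $e^{2a}+e^{2b}\ge 2e^{a+b}$ together with the nonnegativity of $\mass_{\{j,k\}}^t$ and $u_{S+k}^{t-1}$ is precisely what is intended.
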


We will next prove a baseline upper bound of $u_S^t$ for all subsets $S$, which is the same bound that the baseline Independent Rounding would guarantee.
We defer the proof to Appendix~\ref{app:ocs-basic}, since it follows the same approach that we will demonstrate in the proof of Theorem~\ref{thm:stochastic-ocs}.

\begin{lemma}
	\label{lem:ocs-basic}
    For any time $t \in [T]$ and any subset of offline vertices $S \subseteq J$, we have:
    \[
        u_S^t \le e^{- \sum_{j \in S} y_j^{1:t}}
        ~.
    \]
\end{lemma}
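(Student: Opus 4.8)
The plan is to prove the bound by induction on $t$, with Lemma~\ref{lem:ocs-recurrence-am-gm} driving the inductive step. The base case $t = 0$ holds with equality, since $u_S^0 = 1$ and $y_j^{1:0} = 0$ for every $j \in S$.

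For the inductive step I would fix $t \ge 1$ and $S \subseteq J$, and assume the claimed bound at time $t-1$ for every subset of offline vertices --- in particular for $S$ itself and for each $S + k$ with $k \notin S$. Substituting these into Lemma~\ref{lem:ocs-recurrence-am-gm}, the first term is at most $\big(\sum_{\{j,k\}\,:\,j,k\notin S} \mass^t_{\{j,k\}}\big)\, e^{-\sum_{\ell\in S} y_\ell^{1:(t-1)}}$, while for each pair $\{j,k\}$ with $j\in S$ and $k\notin S$ in the second term the hypothesis gives $u_{S+k}^{t-1} \le e^{-\sum_{\ell\in S} y_\ell^{1:(t-1)} - y_k^{1:(t-1)}}$, so that after multiplying by the factor $e^{y_k^{1:(t-1)} - y_j^{1:(t-1)}}$ the dependence on $k$ cancels and what is left is at most $e^{-\sum_{\ell\in S} y_\ell^{1:(t-1)}}$ (dropping the $e^{-y_j^{1:(t-1)}}\le 1$ factor). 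Collecting terms, this yields
\[
u_S^t \;\le\; e^{-\sum_{\ell\in S} y_\ell^{1:(t-1)}}\,\Big( A + \tfrac{1}{2} D \Big),
\]
where $A$ and $D$ denote the total probabilities that the pair arriving at time $t$ is disjoint from $S$, respectively meets $S$ in exactly one vertex.

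To finish, I would compare $A + \tfrac{1}{2} D$ with $e^{-\sum_{j\in S} y_j^t}$. Writing $C$ for the probability that the arriving pair lies inside $S$, the three events are disjoint, so $A + D + C \le 1$ and hence $A + \tfrac{1}{2} D \le 1 - \tfrac{1}{2} D - C \le e^{-(\frac{1}{2}D + C)}$ by $1 - x \le e^{-x}$. On the other hand, from $y_j^t = \tfrac{1}{2}\sum_{k\ne j}\mass^t_{\{j,k\}}$, splitting the inner sum according to whether $k\in S$ (pairs inside $S$ being counted twice) gives $\sum_{j\in S} y_j^t = \tfrac{1}{2} D + C$. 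Combining the two, $u_S^t \le e^{-\sum_{\ell\in S} y_\ell^{1:(t-1)}}\,e^{-\sum_{j\in S} y_j^t} = e^{-\sum_{\ell\in S} y_\ell^{1:t}}$, closing the induction.

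The argument is elementary throughout; the only steps needing care are the cancellation of the $e^{\pm y_k^{1:(t-1)}}$ factors in the cross term --- exactly what the AM-GM form of Lemma~\ref{lem:ocs-recurrence-am-gm} was set up to enable --- and the bookkeeping identity $\sum_{j\in S} y_j^t = \tfrac{1}{2}D + C$. I do not anticipate a real obstacle here: this is the deliberately crude ``baseline'' estimate, obtained by the very scheme that the proof of Theorem~\ref{thm:stochastic-ocs} then sharpens by retaining the $e^{-y_j^{1:(t-1)}}$ factors rather than discarding them.
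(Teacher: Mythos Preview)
Your proposal is correct and essentially identical to the paper's own proof: both induct on $t$, apply Lemma~\ref{lem:ocs-recurrence-am-gm} together with the inductive hypothesis for $S$ and $S+k$, let the $e^{\pm y_k^{1:(t-1)}}$ factors cancel, drop the remaining $e^{-y_j^{1:(t-1)}}\le 1$, and finish with the identity $\sum_{j\in S} y_j^t = \tfrac{1}{2}D + C$ and $1-x\le e^{-x}$. The only cosmetic difference is that the paper writes the complement $1 - C - \tfrac{1}{2}D$ directly (using $A+D+C=1$ with dummy pairs) rather than your inequality $A+\tfrac{1}{2}D \le 1 - \tfrac{1}{2}D - C$.
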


\begin{proof}[Proof of Theorem~\ref{thm:stochastic-ocs}]
    Consider any offline vertex $j \in J$.
    By Lemma~\ref{lem:ocs-recurrence-am-gm} with $S = \{j\}$, we have:
	\[
    	u_j^t ~\le~ \sum_{k, \ell \ne j} \mass_{\{k, \ell\}}^t \cdot u_j^{t-1} ~+~ \frac{1}{2} \: \sum_{k \ne j} \mass_{\{j,k\}}^t \cdot u_{\{j, k\}}^{t-1} \cdot e^{y_k^{1:(t-1)} - y_j^{1:(t-1)}}
    \]
    
    Further, apply the bound from Lemma~\ref{lem:ocs-basic} to subsets $\{j, k\}$, we get that:
    \begin{equation}
        \label{eqn:ocs-singleton-recurrence}
    	u_j^t \:\le \sum_{k, \ell \ne j} \mass_{\{k,\ell\}}^t \cdot u_j^{t-1} + \frac{1}{2} \sum_{k \ne j} \mass_{\{j,k\}}^t \cdot e^{- 2 y_j^{1:(t-1)}}
		~.
    \end{equation}
    
    Using the above inequality, we will now prove the following inequality by an induction on the time step $t$ from $0$ to $T$:
    \[
    	u_j^t \le \Big(1 + y_j^{1:t} \Big) \cdot e^{-2 y_j^{1:t}}
    	~.
    \]
    
    The theorem would then follow as the case of $t = T$.

    The base case when $t = 0$ is trivial because both sides are equal to $1$.
    Next, suppose that the bound holds for $t-1$.
    Applying the induction hypothesis to the above Inequality~\eqref{eqn:ocs-singleton-recurrence} gives:
    \[
    	u_j^t \:\le\:\bigg( \sum_{k, \ell \ne j} \mass_{\{k, \ell\}}^t \cdot \Big(1 + y_j^{1:(t-1)} \Big) + \frac{1}{2} \sum_{k \ne j} \mass_{\{j,k\}}^t \bigg) \:\cdot \: e^{- 2 y_j^{1:(t-1)}}
    	~.
    \]
    
    Note that:
    \[
    	\sum_{k \ne j} f^t_{\{j,k\}} \:=\: 2 y_j^t
    	\quad\qquad
    	\sum_{k, \ell \ne j} \mass_{\{k, \ell\}}^t = 1 - \sum_{k \ne j} f^t_{\{j,k\}} = 1 - 2 y_j^t
    	~.
    \]
    
    The above inequality becomes:
    \[
    	u_j^t \:\le\:\bigg( \big(1 - 2 y_j^t\big) \cdot \Big(1 + y_j^{1:(t-1)} \Big) + y_j^t \bigg) \:\cdot \: e^{- 2 y_j^{1:(t-1)}}
    	~.
	\]
	
	Comparing this with the final bound $\big( 1 + y_j^{1:t} \big) e^{-2 y_j^{1:t}}$, it suffices to prove that:
	\begin{align*}
		\bigg( \big(1 - 2 y_j^t\big) \cdot \Big(1 + y_j^{1:(t-1)} \Big) + y_j^t \bigg)
		&
		\le \big( 1 + y_j^{1:t} \big) \cdot e^{- 2 y_j^t} \\
		&
		= \big( 1 + y_j^{1:(t-1)} + y_j^t \big) \cdot e^{- 2 y_j^t} 
		~.
	\end{align*}
	
	This inequality is linear in $y_j^{1:(t-1)}$.
	Hence, we will group terms by those with $y_j^{1:(t-1)}$ and those without, and prove the inequality separately for these two types of terms.
	The inequality for the coefficients of term $y_j^{1:(t-1)}$ is:
	\[
		1 - 2 y_j^t \le  e^{- 2 y_j^t}
		~.
	\]
	
	The inequality for the other terms is:
	\begin{equation}
		\label{eqn:e-2x}
		1 - y_j^t \le \big( 1 + y_j^t \big) \cdot e^{- 2 y_j^t}
		~,
	\end{equation}
	which holds for all $y_j^t \ge 0$ (Appendix~\ref{app:e-2x}).
\end{proof}

We also provide a theorem below formalizing the optimality of our algorithm with respect to the two-way SOCS problem for unweighted and vertex-weighted matching.

\begin{theorem}
For any $y \ge 0$, there is an instance such that for any algorithm, there exists an offline vertex $j$ with $y_j = y$ and it is unselected with probability at least $(1+y) e^{-2y}$.
\end{theorem}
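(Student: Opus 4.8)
The plan is to exhibit a single explicit hard instance and argue that every algorithm is forced into the stated failure probability on at least one offline vertex. The natural candidate is a symmetric "balanced pairs" instance: take a large number $n$ of offline vertices and, over the time horizon, let each time step present a two-way type $\{j,k\}$ where the pair is chosen so that, at the end, the expected cumulative fractional allocation $y_j$ equals the target value $y$ for every $j$. Concretely, I would let the online types be all pairs in a union of disjoint "blocks" (or, for the cleanest bookkeeping, a random-like regular multigraph on the $n$ offline vertices) with the arrival probabilities scaled so that $\sum_t \mass^t_{\{j,k\}} = \frac{2y}{n-1}$ uniformly, giving $y_j = y$ exactly for all $j$. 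Because the instance is symmetric, averaging over $j$ reduces the claim to: the \emph{average} over offline vertices of the unselected probability is at least $(1+y)e^{-2y} - o(1)$, and hence some vertex achieves it; then a limiting argument in $n$ (and a slight perturbation of $y$ if needed) removes the $o(1)$.

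The heart of the argument is a lower bound on $\frac{1}{n}\sum_j u_j^T$ that holds for \emph{any} algorithm. Here I would reverse the direction of the recurrence analysis used for the upper bound. For a fixed vertex $j$, condition on the sequence of time steps in which $j$ appears; since appearances of $j$ are essentially a Poisson-like process with total expected count $2y$ (as noted in the text, $2y_j^{1:t}$ is the expected number of appearances of $j$ up to $t$), in the $n\to\infty$ limit the number $N_j$ of appearances of $j$ converges to a Poisson random variable with mean $2y$, and — crucially — in each such appearance the \emph{partner} vertex is, with probability $\to 1$, one that has never appeared before with $j$ and is "fresh." The algorithm gets to select $j$ only in steps where $j$ appears, and each such step it either selects $j$ or selects the partner; once $j$ is selected it is matched. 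So $j$ stays unselected iff the algorithm declines $j$ in all $N_j$ of its appearances. The key observation is that the algorithm cannot decline $j$ "for free": declining $j$ means committing the partner, and by the symmetry/freshness of partners, summed over all $j$ these commitments are globally constrained. Making this precise is the crux: I expect to set up a potential/counting argument showing $\sum_j \Pr{j \text{ declined in appearance } \ell} $ is bounded in a way that, combined with the Poisson law of $N_j$, forces $\frac1n\sum_j u_j^T \ge \E[\,(\text{something})^{N}\,]$ with $N\sim\mathrm{Poisson}(2y)$ and the "something" optimized adversarially by the algorithm — and that optimum evaluates to exactly $(1+y)e^{-2y}$.

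The cleanest way to pin down the constant is to guess that the extremal algorithm is the greedy/uniform one (select a uniformly random unselected vertex of the pair), compute its per-vertex unselected probability in the Poisson limit, and then prove a matching lower bound. For uniform selection, a vertex appearing $\ell$ times survives with probability that, in the fresh-partner limit, works out to $\E\big[\prod (\text{partner still available})\big]$; carrying out the Poisson computation yields $\sum_{\ell\ge 0} e^{-2y}\frac{(2y)^\ell}{\ell!} \cdot \frac{1}{\ell+1}\cdot(\text{correction}) = (1+y)e^{-2y}$ after the correction term accounting for the chance the partner was itself already committed. I would then argue this is worst-case: any deviation that declines $j$ more aggressively in some appearance must, by the global commitment constraint, decline some other vertex less, and a convexity/exchange argument shows the symmetric uniform strategy minimizes $\frac1n\sum_j u_j^T$. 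I expect the main obstacle to be exactly this last step — making the "global commitment budget" argument rigorous with the right concentration bounds so that the finite-$n$ error is genuinely $o(1)$ and uniform over algorithms — whereas the Poisson-limit calculation itself is routine once the combinatorial model is fixed. An alternative, possibly slicker route is an LP-duality / factor-revealing argument: write the minimum average unselected probability as a linear program over the algorithm's (randomized) decision variables subject to the allocation-conservation constraints, and exhibit the dual certificate $(1+y)e^{-2y}$; this sidesteps the exchange argument but requires identifying the right LP, which I would attempt only if the direct argument gets unwieldy.
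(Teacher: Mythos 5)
There is a genuine gap here, and it stems from missing a much simpler construction. Your proposal builds a large symmetric $n$-vertex instance and then attempts a Poisson-limit plus "global commitment budget" argument, which you yourself flag as the crux you have not made rigorous. The exchange/convexity step and the uniformity-over-algorithms of the $o(1)$ error are real obstacles, and the constant $(1+y)e^{-2y}$ is only guessed, not derived. As written this is a plan, not a proof.

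The paper's proof uses an instance with \emph{only two} offline vertices $\{j,k\}$ and a single online two-way type $\{j,k\}$ arriving as a Poisson process with rate $2y$ (realized as the $T\to\infty$ limit of per-step probability $2y/T$). The decisive observation, which your proposal never isolates, is a counting argument on the \emph{total} number of selections: each of $j$ and $k$ can be selected at most once, and a selection can occur only when the type arrives, so for any algorithm
\[
\Pr\!\left[\,j\text{ selected}\,\right] + \Pr\!\left[\,k\text{ selected}\,\right]
\;\le\;
\E_{n\sim\mathrm{Poisson}(2y)}\bigl[\min\{2,n\}\bigr]
\;=\;
2 - (2+2y)e^{-2y}.
\]
Averaging over the two vertices, some $j\in\{j,k\}$ has selection probability at most $1-(1+y)e^{-2y}$, i.e.\ is unselected with probability at least $(1+y)e^{-2y}$. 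This sidesteps everything your plan is wrestling with: no freshness of partners, no concentration, no exchange argument, no LP dual. The instance is already symmetric and minimal, and the "budget" you were trying to formalize globally is just the observation that two vertices support at most two selections. If you want to salvage your large-$n$ route, note that it is not needed for tightness and that the adversary is strictly stronger with fewer vertices here, since all appearances of $j$ share the same partner $k$ and so the algorithm's decisions for $j$ and $k$ are maximally coupled.
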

\begin{proof}
    For any $y \ge 0$, consider an instance that only contains a pair of offline vertices $\{j,k\}$.
    Let the only online type $\{j, k\}$ arrive by a homogeneous Poisson process with arrival rate $2y$ in a time interval $[0,1]$;
    the Poisson process is achieved at the limit in the discrete-time model by considering arrival rate $\nicefrac{2y}{T}$ in each step and letting the number of time steps $T$ tend to infinity.
    
    The probability of $j$ is selected plus the probability of $k$ is selected equals the expectation of the number of selections, which is at most:
    \begin{align*}
    	\E_{n \sim \text{Poisson$(2y)$}} \min\{2, n\} 
    	&
    	~=~ 1 \cdot \Pr[n \sim \text{Poisson$(2y)$}]{n = 1} + 2 \cdot \Pr[n \sim \text{Poisson$(2y)$}]{n \ge 2} \\[1ex]
    	&
    	~=~ 1 \cdot 2y e^{-2y} + 2 \cdot \left(1 - e^{-2y} - 2y e^{-2y} \right) \\[1ex]
    	&
    	~=~ 2 - (2+2y)e^{-2y}
    	~.
    \end{align*}

    Hence, one (or both) of $j$ and $k$ is selected with probability no more than half of the above quantity, i.e., $1 - (1+y) e^{2y}$.
\end{proof}

\subsection{General SOCS from Two-Way SOCS}
\label{sec:jl-demonstration}

We can obtain a general SOCS for unweighted and vertex-weighted matching by reducing it to two-way SOCS via the \textbf{Type Decomposition} algorithm in Subsection~\ref{sec:socs-reduction}.
This simple approach is good enough to obtain a $0.688$-competitive algorithm for unweighted and vertex-weighted Non-IID Online Stochastic Matching, improving the state-of-the-art ratio $0.666$ by \citet{TangWW:STOC:2022}.
Since this is subsumed by the improved algorithm in the next subsection, we only sketch the analysis under an additional assumption that for any offline vertex $j \in J$ we have:
\begin{equation}
	\label{eqn:jl-constraint}	
	\sum_{t=1}^T \:\underbrace{\sum_{i \in I} \big( 2 x_{ij}^t - \mass_i^t \big)^+}_{\substack{\text{$\mass_j^t$: probability of }\\ \text{one-way surrogate type $j$}}} \le~ 1 - \ln 2
	~. 
\end{equation}

This is a constraint of the Jaillet-Lu LP for IID Online Stochastic Matching~\cite{JailletL:MOR:2014}; 
the inequality was first proved by \citet*{ManshadiOS:MOR:2012}.
It would also hold in the non-IID model if the probability of realizing any online type $i \in I$ in any step $t \in [T]$ is sufficiently small such that the arrival process is approximately equivalent to a non-homogeneous Poisson arrival process.
The stated competitive ratio $0.688$ holds unconditionally even when the above inequality fails.
Nonetheless, it is easier to sketch the argument with the assumption.

Consider any offline vertex $j \in J$.
Let $\mass_j = \mass_j^{1:T}$ denote the expected number of online vertices with one-way surrogate type $j$.
Similarly, let $\mass_{\{j,k\}} = \mass_{\{j,k\}}^{1:T}$ be the expected number of online vertices with two-way surrogate types $\{j, k\}$.
Denote the total fractional allocation to an offline vertex $j$ from two-way surrogate types, for which we will select using the Two-Way SOCS, as:
\[
    z_j = \frac{1}{2} \sum_{k \ne j} \mass_{\{j,k\}}
    ~.
\]

Finally, by the Allocation Conservation property (Lemma~\ref{lem:match-rate-conservation}), we have:
\[
    z_j = y_j - \mass_j
    ~.
\]

Offline vertex $j$ stays unmatched if and only if (1) the one-way surrogate type $j$ never arrives and (2) the Two-Way SOCS never selects $j$.
The probability of the first event is:
\[
	\prod_{t=1}^T \big(1 - \mass^t_j\big) \le e^{- \mass^{1:T}_j} = e^{-\mass_j}
\]

Further, Theorem~\ref{thm:stochastic-ocs} upper bounds the probability of the second event.
Since we handle one-way and two-way types independently, the probability that $j$ stays unmatched at the end is at most:
\[
    e^{-\mass_j} \cdot \big(1 + z_j\big) e^{-2 z_j}
    =
    \big( 1 + y_j - \mass_j \big) e^{-2y_j+\mass_j}
    ~.
\]

This is increasing in $\mass_j$ because $(1+x) e^{-x}$ is decreasing in $x \ge 0$ and we have $\mass_j \le y_j$.
By Equation~\eqref{eqn:jl-constraint}, we also have $\mass_j \le 1 - \ln 2$.
Hence, for any fixed $0 \le y_j \le 1$, the above unmatched probability is maximized when $\mass_j = \min \big\{ y_j, 1 - \ln 2\big\}$, in which case $y_j - \mass_j = \big(y_j - 1 + \ln 2\big)^+$.
In other words, vertex $j$ is matched with probability at least:
\begin{equation}
    \label{eqn:socs-basic-matching-bound}
    1 - \big( 1 + (y_j - 1 + \ln 2)^+ \big) e^{-y_j-(y_j-1+\ln 2)^+}
    ~.
\end{equation}

This is zero when $y_j = 0$ and is a concave function in $y_j$ (Appendix~\ref{app:socs-basic-matching-bound}).
Hence, it is at least:
\[
    \big( 1 - (1+\ln2) e^{-1-\ln2} \big) \cdot y_j > 0.688 \cdot y_j
    ~.
\]

\subsection{Improved General SOCS}
\label{sec:unweighted-socs}

The algorithm is as follows:

\bigskip

\begin{tcolorbox}
    \textbf{SOCS for Unweighted and Vertex-Weighted Matching}\\[2ex]
    When an online vertex with one-way surrogate type $j$ arrives:
    \begin{itemize}
        \item If $j$ is still unmatched, match to $j$.
    \end{itemize}
    When an online vertex with two-way surrogate type $\{j, k\}$ arrives at time $t$:
    \begin{itemize}
        \item Match to an unmatched vertex $\ell \in \{j,k\}$ with probability proportional to $e^{2y_\ell^{1:(t-1)}}$.
    \end{itemize}
\end{tcolorbox}

\bigskip

Compared to the original reduction to Two-Way SOCS, this algorithm has a subtle yet important difference, as we no longer use the Two-Way SOCS as a blackbox, independent to the decisions made regarding the one-way surrogate types.
In the presence of a two-way type $\{j, k\}$, the probability of sampling an unmatched offline vertex $\ell \in \{j, k\}$ is proportional to $e^{2 y_\ell^{1:(t-1)}}$, where the exponent is twice the fractional allocation to offline vertex $\ell$ from time $1$ to $t-1$, \emph{including the contribution from one-way surrogate types}.
By contrast, the Two-Way SOCS as a blackbox would let the exponent be only twice the fractional allocation to $\ell$ due to two-way surrogate types.

\begin{theorem}
    \label{thm:socs-vertex-weighted}
    \textbf{SOCS for Unweighted and Vertex-Weighted Matching} has convergence rate:
    \[
    	g(y_j) = 
    	\begin{cases}
			\frac{1}{4} \left( e^{-2y_j} + 3 - 2y_j \right)
			&
			0 \le y_j \le \frac{1}{2} \:; \\[1.5ex]
			e^{-2y_j} \left( \frac{1+e}{4} + \frac{e}{2} y_j \right)
			&
			\frac{1}{2} < y_j \le 1 \:; \\
		\end{cases}
    \]
    That is, for any offline vertex $j \in J$, it is matched with probability at least:
    \begin{equation}
        \label{eqn:socs-matching}	
        1 - g(y_j)
        ~.
    \end{equation}
\end{theorem}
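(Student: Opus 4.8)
The plan is to follow the template of the proof of Theorem~\ref{thm:stochastic-ocs}, but with a recurrence that tracks the influence of one-way surrogate types explicitly. For a subset $S \subseteq J$ let $u_S^t$ be the probability that every vertex of $S$ is still unmatched after the first $t$ steps, and write $u_j^t = u_{\{j\}}^t$; the claim is equivalent to $u_j^T \le g(y_j)$. First I would derive the analogue of Lemma~\ref{lem:ocs-recurrence} for this algorithm: at time $t$, with probability $\mass_j^t$ a one-way type $j$ arrives and matches $j$ for sure (contributing $0$ to $u_j^t$), while a two-way type $\{j,k\}$ containing $j$ behaves exactly as in Lemma~\ref{lem:ocs-recurrence}, since the selection weights $e^{2 y_\ell^{1:(t-1)}}$ use the full cumulative allocation. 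Along the way I would record the analogue of Lemma~\ref{lem:ocs-basic}, $u_S^t \le e^{-\sum_{j \in S} y_j^{1:t}}$, for the present algorithm; this follows from essentially the same induction.

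Next I would apply the AM-GM step $e^{2 y_j^{1:(t-1)}} + e^{2 y_k^{1:(t-1)}} \ge 2\, e^{y_j^{1:(t-1)} + y_k^{1:(t-1)}}$ and the baseline bound on $u_{\{j,k\}}^{t-1}$ exactly as in the proof of Theorem~\ref{thm:stochastic-ocs}, together with the Allocation Conservation identity $\sum_{k \ne j}\mass_{\{j,k\}}^t = 2(y_j^t - \mass_j^t)$ from Equation~\eqref{eqn:match-rate-conservation}, to collapse everything into the scalar recurrence
\[
u_j^t \;\le\; \big(1 - 2 y_j^t + \mass_j^t\big)\, u_j^{t-1} \;+\; \big(y_j^t - \mass_j^t\big)\, e^{-2 y_j^{1:(t-1)}}
~.
\]
Compared with Equation~\eqref{eqn:ocs-singleton-recurrence}, the only change is the extra $+\mass_j^t$ in the coefficient of $u_j^{t-1}$ and the matching $-\mass_j^t$ in the source term; this $\mass_j^t$ is exactly the ``influence of one-way types'' that must be controlled, and it is also the reason the exponent $y_j^{1:(t-1)}$ — which includes one-way contributions — cannot be shrunk, so that the improved algorithm rather than the blackbox Two-Way SOCS is needed here.

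The crux is to bound the accumulated effect of the $\mass_j^t$ terms via the LP. Writing $p_j^t = \sum_{i : \matchrate_{ij}^t > \frac12} \mass_i^t$ and $r_j^t = \sum_{i : \matchrate_{ij}^t > \frac12} \mass_i^t \matchrate_{ij}^t$, Lemma~\ref{lem:type-decomposition-probability} gives $\mass_j^t = 2 r_j^t - p_j^t$, hence the coefficient of $u_j^{t-1}$ satisfies $1 - 2 y_j^t + \mass_j^t = (1 - p_j^t) - 2 (y_j^t - r_j^t) \le 1 - p_j^t$, i.e.\ it is dominated by the factor appearing on the left-hand side of the LP constraint~\eqref{eqn:intro-matching-constraint}, while its source weight is governed by the $r_j^t$'s. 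I would then unroll the recurrence and run an induction on $t$ with a hypothesis of the form ``the two-way bound $(1+\cdot)e^{-2\cdot}$ evaluated at the two-way allocation, plus a correction term depending on the LP accumulators $\sum_{s \le t} r_j^s$ and $\prod_{s \le t}(1-p_j^s)$'', discharging the accumulated products through $\prod_{t \le t'}(1-p_j^t) \le 1 - \sum_{t \le t'} r_j^t$. After this substitution the problem reduces to a one-dimensional optimization over how much of $y_j$ the LP permits to be one-way versus two-way; carrying out that optimization produces the breakpoint at $y_j = \tfrac12$ and the two stated closed forms, which one checks agree at $y_j = \tfrac12$.

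The main obstacle is this last step: choosing the correction term so that the induction closes at every step (in particular handling steps where $1 - 2 y_j^t + \mass_j^t$ is negative, e.g.\ by first reducing to infinitesimal steps, which leaves $y_j$ unchanged), and then solving the relaxed recurrence tightly enough to recover exactly $g(y_j)$ rather than a weaker envelope. A secondary point that must be verified is that the baseline estimate of Lemma~\ref{lem:ocs-basic} genuinely transfers to the modified algorithm, since its proof was written for the Two-Way SOCS of Subsection~\ref{sec:optimal-two-way-socs}.
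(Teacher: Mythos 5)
Your setup is correct and mirrors the paper's: the analogue of the two recurrence lemmas (the paper's Lemmas~\ref{lem:general-ocs-recurrence} and~\ref{lem:general-ocs-recurrence-am-gm}), the baseline bound $u_S^t \le e^{-\sum_{j\in S} y_j^{1:t}}$ for the modified algorithm (the paper's Lemma~\ref{lem:matching-ocs-basic}, proved in Appendix~\ref{app:matching-ocs-basic}, so your worry there is resolved), and the collapsed scalar recurrence $u_j^t \le (1-2y_j^t+\mass_j^t)\,u_j^{t-1} + (y_j^t-\mass_j^t)\,e^{-2y_j^{1:(t-1)}}$ are all exactly right, as is the identity $\mass_j^t = 2r_j^t - p_j^t$ and the rewriting $1-2y_j^t+\mass_j^t = (1-p_j^t) - 2(y_j^t-r_j^t)$.

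The gap is in how you close the argument. The paper does not look for an induction hypothesis of the stated form, nor does the breakpoint $y_j=\tfrac12$ come from optimizing over the one-way/two-way split; that intuition will not lead you to the stated $g$. Instead, the paper substitutes $Q^t = u_j^t - e^{-2y_j^{1:t}}$, which by Equation~\eqref{eqn:e-2x} satisfies the simpler linear recurrence $Q^t \le \big(1-p_j^t - 2(y_j^t-r_j^t)\big)Q^{t-1} + y_j^t\,e^{-2y_j^{1:t}}$ with $Q^0=0$. Unrolling gives a closed-form sum $Q^T \le \sum_t y_j^t\,e^{-2y_j^{1:t}}\prod_{t'>t}\big(1-p_j^{t'}-2(y_j^{t'}-r_j^{t'})\big)$. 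Each product factor is bounded by $(1-p_j^{t'})\exp\big(-2(y_j^{t'}-r_j^{t'})\big)$; regrouping the exponent and applying Constraint~\eqref{eqn:discrete-time-lp} (your $\prod(1-p)\le 1-\sum r$) yields $Q^T \le e^{-2y_j}\sum_t y_j^t\,e^{2z^t}(1-z^t)$ with $z^t = \sum_{t'>t} r_j^{t'}$. The final move is to replace $e^{2z}(1-z)$ by its nondecreasing envelope $h(z)=\min\{e^{2z}(1-z),\,e/2\}$ (flat beyond $z=\tfrac12$, where $e^{2z}(1-z)$ peaks), use $z^t\le y_j^{(t+1):T}$, and bound the sum by the integral $\int_0^{y_j}h(z)\,\dif z$. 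Evaluating this integral gives exactly the two stated branches, with the $y_j=\tfrac12$ breakpoint coming from the flattening of $h$, not from a worst-case allocation of $y_j$ between one-way and two-way mass. Without the $Q^t$ substitution, the product-to-LP bound, and the integral step, the recurrence does not close to the claimed $g(y_j)$.
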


The matched probability in Equation~\eqref{eqn:socs-matching} is a concave function in $y_j$ (Appendix~\ref{app:ocs-matching-bound}), and equals zero when $y_j = 0$.
Hence, it is at least:
\[
    \big( 1-g(1) \big) \cdot y_j
    =
    \bigg(1-\frac{3}{4e} - \frac{1}{4e^2}\bigg)
    \cdot y_j
    ~.
\]

As a corollary, we get the following competitive ratio for Unweighted and Vertex-Weighted Online Stochastic Matching, improving the best existing $0.666$ competitive algorithm by \citet{TangWW:STOC:2022}.

\begin{corollary}
	\label{cor:vertex-weighted-matching}
	Rounding the solution of \textbf{Stochastic Matching LP} with \textbf{SOCS for Unweighted and Vertex-Weighted Matching} is $1 - \frac{3}{4e} - \frac{1}{4e^2} > 0.69$ competitive for the Unweighted and Vertex-Weighted Online Stochastic Matching (in the general non-IID model).
\end{corollary}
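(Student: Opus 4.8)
This corollary follows from Theorem~\ref{thm:socs-vertex-weighted} by a now-standard LP-rounding argument; the substance is already contained in that theorem and in the per-vertex lower bound recorded immediately after it, so the plan is just to assemble the pieces.

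First I would pin down the fractional allocation handed to the SOCS. Solve the \textbf{Stochastic Matching LP} --- feasible in polynomial time by Lemma~\ref{lem:discrete-time-lp-polytime} --- to obtain an optimal $(x_{ij}^t)$, and set $\matchrate_{ij}^t = x_{ij}^t/\mass_i^t$ whenever $\mass_i^t > 0$ (the case $\mass_i^t = 0$ is vacuous). The LP constraints $\sum_j x_{ij}^t \le \mass_i^t$ and $x_{ij}^t \ge 0$ make this a legitimate SOCS input, any leftover mass being routed to the dummy agent $\perp$ exactly as \textbf{Type Decomposition} already handles; with this choice the quantity $y_j$ of Theorem~\ref{thm:socs-vertex-weighted} equals $\sum_{t}\sum_i x_{ij}^t$. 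Instantiating LP constraint~\eqref{eqn:discrete-time-lp} with $S = \{(t,i) : (i,j)\in E\}$ and using $\sum_i\mass_i^t \le 1$ gives $y_j \le 1$, so each $y_j$ lies in the interval $[0,1]$ on which $g$ is defined.

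Next I would invoke the theorem: vertex $j$ is matched with probability at least $1 - g(y_j)$. Since $1 - g(\cdot)$ is concave on $[0,1]$ with value $0$ at the origin (the elementary verification already earmarked for Appendix~\ref{app:ocs-matching-bound}), it lies above the chord from $(0,0)$ to $(1, 1 - g(1))$, hence $\Pr{j \text{ is matched}} \ge (1 - g(1))\,y_j$ where $1 - g(1) = 1 - \frac{3}{4e} - \frac{1}{4e^2}$. Because the welfare of (vertex-weighted) matching is additive over offline vertices, $\ALG = \sum_j w_j\,\Pr{j \text{ is matched}} \ge (1 - g(1))\sum_j w_j\,y_j$ --- taking $w_j = 1$ recovers the unweighted case. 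The sum $\sum_j w_j\,y_j = \sum_{i,j,t} w_{ij}\,x_{ij}^t$ is exactly the objective of the \textbf{Stochastic Matching LP} (the optimal $x$ is supported on edges, where $w_{ij} = w_j$), which is at least $\OPT$ by Lemma~\ref{lem:discrete-time-lp-optimality}. Therefore $\ALG \ge (1 - g(1))\,\OPT$, and a direct numerical estimate of $\frac{3}{4e} + \frac{1}{4e^2}$ gives $1 - \frac{3}{4e} - \frac{1}{4e^2} > 0.69$.

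There is no real obstacle once Theorem~\ref{thm:socs-vertex-weighted} is in hand; the only two points that need a little care are checking $y_j \le 1$ so the convergence rate applies on its stated interval, and confirming that the LP-derived $\matchrate^t$ (with leftover mass on $\perp$) is an admissible input to the SOCS. The genuinely hard work --- proving the convergence rate $g$ itself, and in particular bounding the contribution of one-way surrogate types via the LP constraint~\eqref{eqn:intro-matching-constraint} in the absence of an asymptotic equivalence to the Poisson arrival model --- is already packaged inside Theorem~\ref{thm:socs-vertex-weighted}, so I would not reopen it here.
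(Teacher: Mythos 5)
Your proposal is correct and follows the same route the paper intends: invoke the per-vertex guarantee of Theorem~\ref{thm:socs-vertex-weighted}, linearize via the concavity of $1 - g$ on $[0,1]$ (with $y_j \le 1$ from Constraint~\eqref{eqn:discrete-time-lp}), sum over offline vertices, and compare against the LP objective using Lemma~\ref{lem:discrete-time-lp-optimality}. The paper leaves this chain implicit in the paragraph following the theorem; your write-up simply makes it explicit, including the two bookkeeping checks (admissibility of $\matchrate^t$ via the dummy agent, and $y_j\in[0,1]$) that are needed but not spelled out there.
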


Recall that $u_S^t$ denotes the probability that every offline vertex $j \in S$ is still unmatched by the end of time step $t$.
Further, for any offline vertex $j \in J$, we write $u_j^t$ for $u_{\{j\}}^t$ for notational simplicity.
We will first show the following recurrence for $u_S^t$, which is a counterpart to Lemma~\ref{lem:ocs-recurrence} for Two-Way SOCS.
The proof, which we include in Appendix~\ref{app:general-ocs-recurrence} for completeness, is essentially the same as that of Lemma~\ref{lem:ocs-recurrence}.

\begin{lemma}
    \label{lem:general-ocs-recurrence}
    At any time step $t \in [T]$ and any subset of offline vertices $S \subseteq J$, we have:
    \begin{align*}
        u_S^t
        &
        ~ = ~
        \bigg(1 - \sum_{j\in S}\mass_j^t - \sum_{\{j,k\} : j, k\in S}\mass_{\{j,k\}}^t - \sum_{\{j,k\} : j\in S,k\notin S}\mass_{\{j,k\}}^t \bigg) \cdot u_S^{t-1} \\
        & \qquad\qquad
        + \sum_{\{j,k\} : j\in S,k\notin S}\mass_{\{j,k\}}^t \cdot u_{S+k}^{t-1} \cdot \frac{e^{2y_k^{1:(t-1)}}}{e^{2y_k^{1:(t-1)}} + e^{2y_j^{1:(t-1)}}}
        ~.
    \end{align*}
\end{lemma}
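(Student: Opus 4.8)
The plan is to mimic the proof of Lemma~\ref{lem:ocs-recurrence}, conditioning on the surrogate type realized at time step $t$. First I would note that because the online type at time $t$ is drawn from $\Distribution^t$ independently of everything before, and the \textbf{Type Decomposition} at step $t$ uses fresh randomness, the surrogate type realized at step $t$ is independent of the algorithm's state (the set of matched vertices) after step $t-1$. This lets me expand $u_S^t$ as a sum over the possible surrogate types at step $t$, each weighted by its realization probability $\mass_j^t$ or $\mass_{\{j,k\}}^t$ and by the conditional probability that all of $S$ remains unmatched after step $t$.

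Next I would go through the cases. If the surrogate type involves no vertex of $S$ — a one-way type $j \notin S$, or a two-way type $\{j,k\}$ disjoint from $S$ — then the matched status of every vertex in $S$ is unchanged at step $t$, so this conditional probability is $u_S^{t-1}$; collecting these cases produces the coefficient $1 - \sum_{j\in S}\mass_j^t - \sum_{\{j,k\}:j,k\in S}\mass_{\{j,k\}}^t - \sum_{\{j,k\}:j\in S, k\notin S}\mass_{\{j,k\}}^t$ multiplying $u_S^{t-1}$. If the surrogate type is a one-way type $j \in S$, or a two-way type $\{j,k\}$ with both endpoints in $S$, then some vertex of $S$ is either already matched or gets matched at step $t$, so all of $S$ cannot remain unmatched afterwards; these cases contribute zero and hence do not appear on the right-hand side.

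The only remaining case is a two-way type $\{j,k\}$ with $j \in S$ and $k \notin S$. Here all of $S$ stays unmatched after step $t$ exactly when (i) all vertices of $S+k$ were unmatched before step $t$ — in particular $k$ must be unmatched, since otherwise the algorithm would be forced to match the lone unmatched vertex $j$ — and (ii) the algorithm picks $k$ rather than $j$, which happens with probability $\frac{e^{2 y_k^{1:(t-1)}}}{e^{2 y_j^{1:(t-1)}} + e^{2 y_k^{1:(t-1)}}}$. Since $y_j^{1:(t-1)}$ and $y_k^{1:(t-1)}$ are deterministic quantities and the algorithm's tie-break uses independent randomness, this conditional probability is exactly the stated ratio and multiplies $u_{S+k}^{t-1}$; summing over all such pairs $\{j,k\}$ gives the second term and completes the recurrence.

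I do not anticipate a genuine obstacle, as the argument is a routine conditioning essentially identical to that of Lemma~\ref{lem:ocs-recurrence}. The one point needing care is the bookkeeping in the last case: recognizing that $j \in S$ remains unmatched only if $k$ too was unmatched before step $t$ (so the relevant quantity is $u_{S+k}^{t-1}$, not $u_S^{t-1}$), and that the sampling bias depends only on the fixed cumulative expected allocations $y_\ell^{1:(t-1)}$ — which now include contributions from one-way surrogate types — so it still factors out cleanly and is independent of the realized history.
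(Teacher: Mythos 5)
Your proof is correct and follows essentially the same route as the paper's Appendix argument: condition on the (independently drawn) surrogate type at step $t$, observe that types disjoint from $S$ preserve $u_S^{t-1}$, types contained in $S$ (one-way or two-way) contribute zero, and for a mixed two-way pair $\{j,k\}$ with $j\in S$, $k\notin S$ the event requires all of $S+k$ unmatched before $t$ and the algorithm picking $k$ with the stated (deterministic, history-independent) probability. Your extra remark that $k$ must be unmatched because otherwise the algorithm is forced onto $j$ is exactly the bookkeeping point the paper's proof also relies on.
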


Applying the AM-GM inequality to the denominator of the last term, we get the next lemma as a corollary of Lemma~\ref{lem:general-ocs-recurrence}.

\begin{lemma}
    \label{lem:general-ocs-recurrence-am-gm}
    At any time step $t \in [T]$ and any subset of offline vertices $S \subseteq J$, we have:
    \begin{align*}
        u_S^t
        &
        ~ \le ~
        \bigg(1 - \sum_{j\in S}\mass_j^t - \sum_{\{j,k\} : j, k\in S}\mass_{\{j,k\}}^t - \sum_{\{j,k\} : j\in S,k\notin S}\mass_{\{j,k\}}^t \bigg) \cdot u_S^{t-1} \\
        & \qquad\qquad
        + ~ \frac{1}{2} \sum_{\{j,k\} : j\in S,k\notin S}\mass_{\{j,k\}}^t \cdot u_{S+k}^{t-1} \cdot e^{y_k^{1:(t-1)} - y_j^{1:(t-1)}}
        ~.
    \end{align*}
\end{lemma}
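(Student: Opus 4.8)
The plan is to obtain this inequality as an immediate corollary of Lemma~\ref{lem:general-ocs-recurrence} by bounding, term by term, the coefficient $\tfrac{e^{2y_k^{1:(t-1)}}}{e^{2y_k^{1:(t-1)}} + e^{2y_j^{1:(t-1)}}}$ that appears in the second summand. The first summand of Lemma~\ref{lem:general-ocs-recurrence} is carried over verbatim, so no work is needed there; the entire content is a one-line application of AM-GM to the exponentials in the denominator, exactly mirroring the step from Lemma~\ref{lem:ocs-recurrence} to Lemma~\ref{lem:ocs-recurrence-am-gm} in the two-way case.

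Concretely, I would fix a pair $\{j,k\}$ with $j \in S$ and $k \notin S$, and apply AM-GM to the two positive reals $e^{2y_j^{1:(t-1)}}$ and $e^{2y_k^{1:(t-1)}}$:
\[
    e^{2y_j^{1:(t-1)}} + e^{2y_k^{1:(t-1)}} ~\ge~ 2\sqrt{e^{2y_j^{1:(t-1)}} \cdot e^{2y_k^{1:(t-1)}}} ~=~ 2\,e^{y_j^{1:(t-1)} + y_k^{1:(t-1)}}~,
\]
so that
\[
    \frac{e^{2y_k^{1:(t-1)}}}{e^{2y_k^{1:(t-1)}} + e^{2y_j^{1:(t-1)}}} ~\le~ \frac{e^{2y_k^{1:(t-1)}}}{2\,e^{y_j^{1:(t-1)} + y_k^{1:(t-1)}}} ~=~ \tfrac{1}{2}\,e^{y_k^{1:(t-1)} - y_j^{1:(t-1)}}~.
\]

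Since $\mass_{\{j,k\}}^t \ge 0$ and $u_{S+k}^{t-1} \ge 0$ (being a probability), multiplying the last display by the non-negative factor $\mass_{\{j,k\}}^t \cdot u_{S+k}^{t-1}$ preserves the inequality; summing over all pairs $\{j,k\}$ with $j \in S$, $k \notin S$ then shows that the second summand in Lemma~\ref{lem:general-ocs-recurrence} is at most $\tfrac{1}{2}\sum_{\{j,k\}:\, j\in S,\, k\notin S}\mass_{\{j,k\}}^t \cdot u_{S+k}^{t-1}\cdot e^{y_k^{1:(t-1)} - y_j^{1:(t-1)}}$. Adding the (unchanged) first summand to both sides yields the stated bound on $u_S^t$.

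There is essentially no obstacle here; the only point worth flagging is that the bound is substituted inside a sum whose remaining factors are all non-negative, which is what licenses replacing the coefficient by a larger quantity without flipping the direction of the inequality. I would state the lemma's proof in two or three lines and, if desired, simply note that it is identical in form to the two-way argument already given.
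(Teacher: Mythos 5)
Your proof is correct and takes exactly the route the paper intends: the paper simply states that the lemma follows "by applying the AM-GM inequality to the denominator of the last term" in Lemma~\ref{lem:general-ocs-recurrence}, and your two-line derivation makes that step explicit, including the observation that multiplying through by the non-negative factors $\mass_{\{j,k\}}^t \cdot u_{S+k}^{t-1}$ preserves the inequality.
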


We next show a baseline upper bound of $u_S^t$ for all subsets $S$, whose proof we include in Appendix~\ref{app:matching-ocs-basic} for completeness.
This is a counterpart to Lemma~\ref{lem:ocs-basic} for Two-Way SOCS.

\begin{lemma}
    \label{lem:matching-ocs-basic}
    At the end of any time step $t \in [T]$ and for any subset of offline vertices $S \subseteq J$:
    \[
        u_S^t \le e^{-\sum_{j \in S} y_j^{1:t}}
        ~.
    \]
\end{lemma}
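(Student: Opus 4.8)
The plan is to prove the bound by induction on the time step $t$, in direct analogy with Lemma~\ref{lem:ocs-basic} and the proof of Theorem~\ref{thm:stochastic-ocs}, now carried out for an arbitrary subset $S \subseteq J$ rather than a singleton. The base case $t = 0$ is immediate: $u_S^0 = 1$ and $\sum_{j \in S} y_j^{1:0} = 0$. For the inductive step I would start from the AM-GM form of the recurrence, Lemma~\ref{lem:general-ocs-recurrence-am-gm}, which writes $u_S^t$ as a ``stay'' contribution proportional to $u_S^{t-1}$ plus, for each pair $\{j,k\}$ with $j \in S$ and $k \notin S$, a term $\frac{1}{2}\, \mass_{\{j,k\}}^t \, u_{S+k}^{t-1}\, e^{y_k^{1:(t-1)} - y_j^{1:(t-1)}}$.

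The first key step is to apply the induction hypothesis to the \emph{larger} set $S+k$ in the straddling-pair terms. Since $k \notin S$, the hypothesis gives $u_{S+k}^{t-1} \le e^{-\sum_{\ell \in S} y_\ell^{1:(t-1)}}\, e^{-y_k^{1:(t-1)}}$, so the factor $e^{y_k^{1:(t-1)}}$ cancels exactly and the term is at most $\frac{1}{2}\, \mass_{\{j,k\}}^t\, e^{-\sum_{\ell \in S} y_\ell^{1:(t-1)}}\, e^{-y_j^{1:(t-1)}}$; I would then drop the harmless factor $e^{-y_j^{1:(t-1)}} \le 1$, which is legitimate because $j \in S$. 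Using $u_S^{t-1} \le e^{-\sum_{\ell \in S} y_\ell^{1:(t-1)}}$ for the stay term and factoring out the common $e^{-\sum_{\ell \in S} y_\ell^{1:(t-1)}}$, it remains to show that the leftover coefficient
\[
1 - \sum_{j \in S} \mass_j^t - \sum_{\{j,k\} : j,k \in S} \mass_{\{j,k\}}^t - \frac{1}{2} \sum_{\{j,k\} : j \in S,\, k \notin S} \mass_{\{j,k\}}^t
\]
is at most $e^{-\sum_{\ell \in S} y_\ell^t}$.

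The second key step is purely combinatorial. By the Allocation Conservation identity~\eqref{eqn:match-rate-conservation}, $y_\ell^t = \mass_\ell^t + \frac{1}{2}\sum_{k \ne \ell} \mass_{\{\ell,k\}}^t$; summing over $\ell \in S$, every pair inside $S$ is counted twice and every straddling pair once, so $\sum_{\ell \in S} y_\ell^t$ equals precisely the sum of the three subtracted quantities above. Hence the coefficient is $1 - \sum_{\ell \in S} y_\ell^t \le e^{-\sum_{\ell \in S} y_\ell^t}$, and multiplying the common factor back in yields $u_S^t \le e^{-\sum_{\ell \in S} y_\ell^{1:t}}$, closing the induction.

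I expect the only mild obstacle to be the pair bookkeeping: one must keep the straddling-pair terms aligned with the induction hypothesis on $S+k$ so that the exponentials telescope cleanly, and count pairs inside $S$ with the correct multiplicity of two in the Allocation Conservation sum. After that, the claim collapses to the elementary inequality $1-x \le e^{-x}$; unlike the proof of Theorem~\ref{thm:stochastic-ocs}, no finer linear-in-$y$ case analysis is needed here, since the target is exactly the bound produced by that relaxation.
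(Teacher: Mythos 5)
Your proof is correct and follows the paper's argument essentially step for step: both start from Lemma~\ref{lem:general-ocs-recurrence-am-gm}, apply the induction hypothesis to $S$ and $S+k$ so that $e^{y_k^{1:(t-1)}}$ cancels, drop the residual $e^{-y_j^{1:(t-1)}}\le 1$, identify the remaining coefficient as $1-\sum_{\ell\in S} y_\ell^t$ via Allocation Conservation, and finish with $1-x\le e^{-x}$. The pair-counting you flag as the ``mild obstacle'' is indeed the only bookkeeping needed, and you handle it exactly as the paper does.
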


\begin{proof}[Proof of Theorem~\ref{thm:socs-vertex-weighted}]

    Consider any offline vertex $j \in J$.
    By Lemma~\ref{lem:general-ocs-recurrence-am-gm} with $S = \{j\}$, and applying Lemma~\ref{lem:matching-ocs-basic} to subsets $S = \{j, k\}$, we have:
    \[
        u_j^t \le \bigg( 1 - \mass_j^t - \sum_{k \neq j}\mass_{\{j,k\}}^t \bigg) \cdot u_j^{t-1} + \sum_{k \neq j}\mass_{\{j,k\}}^t \cdot e^{-2y_j^{1:(t-1)}}
        ~.
    \]

    Up to this point, the analysis is almost verbatim to the counterpart for Two-Way SOCS.
    
    The rest of the argument will be different.
    By opening the blackbox of the Two-Way SOCS, we can use the relation between the distribution over the surrogate types and the distribution over the original online types.
    Concretely, for any offline vertex $j \in J$ and any time step $t \in [T]$, let:
    \[
        I_j^t = \Big\{ i : \matchrate_{ij}^t > \frac{1}{2} \Big\}
    \]
    be the subset of original online types more than half of which is allocated to $j$ at time step $t$.
    
    By the definition of \textbf{Type Decomposition} and specifically by Lemma~\ref{lem:type-decomposition-probability}, if an online vertex at time step $t$ has type $i \in I_j^t$, then it would draw a two-way type involving $j$ with probability $2 (1 - \matchrate^t_{ij})$ (and one-way type $j$ with the remaining $2 \matchrate^t_{ij} - 1$ probability).
    Otherwise, i.e., if the online vertex has type $i \notin I_j^t$, then it would draw a two-way type involving $j$ with probability $2 \matchrate^t_{ij}$ (and one-way type $j$ with zero probability).
    Putting together, the total probability of realizing a two-way type involving $j$ at time step $t$ can be written as:
    \begin{align*}
        \sum_{k\neq j} \mass_{\{j,k\}}^t
        &
        = \sum_{i \in I_j^t} \mass_i^t \cdot 2 \big(1 - \matchrate^t_{ij} \big) + \sum_{i\notin I_j^t} \mass^t_j \cdot 2 \matchrate_{ij}^t \\
        &
        = 2 \cdot \sum_{i\in I_j^t} \mass_i^t + 4 \cdot \sum_{i\notin I_j^t} x_{ij}^t - 2 \cdot y_j^t
        ~,
    \end{align*}
    where recall that $x_{ij}^t = \mass^t_i \matchrate^t_{ij}$ and $y^t_j = \sum_{i \in I} \mass^t_i \matchrate^t_{ij}$.

    Next, we combine it with the Allocation Conservation property (Equation~\eqref{eqn:match-rate-conservation}):
    \[
    	y_j^t = \mass_j^t + \frac{1}{2} \sum_{k \neq j} \mass_{\{j,k\}}^t
    	~,
    \]
    
    We get that:
    \[
    	\sum_{k\neq j} \mass_{\{j,k\}}^t = \sum_{i\in I_j^t} \mass_i^t + 2 \cdot \sum_{i\notin I_j^t} x_{ij}^t - \mass_j^t
    	~.
    \]
    
    Hence, we can rewrite the above upper bound of $u_j^t$ as:
    \begin{align*}
        u_j^t
        &
        ~\le~
        \bigg(1 - \sum_{i\in I_j^t} \mass_i^t - 2 \sum_{i\notin I_j^t} x_{ij}^t \bigg) u_j^{t-1} + \bigg(\sum_{i\in I_j^t} \mass_i^t + \sum_{i\notin I_j^t} 2x_{ij}^t - y_j^t\bigg) e^{-2y_j^{1:(t-1)}} \\
        & %
        ~=~ 
        \bigg(1 - \sum_{i\in I_j^t} \mass_i^t - 2 \sum_{i\notin I_j^t} x_{ij}^t \bigg) \bigg(u_j^{t-1} - e^{-2y_j^{1:(t-1)}}\bigg) + \Big(1 - y_j^t\Big) e^{-2y_j^{1:(t-1)}}
        ~.
    \end{align*}

    Consider an auxiliary array $Q^t = u_j^t - e^{-2 y_j^{1:t}}$.
    Then, it is sufficient to upper bound this auxiliary array.
    The above upper bound of $u_j^t$ is equivalent to:
    \[
        Q^t \le \bigg( 1-\sum_{i\in I_j^t} \mass_i^t - 2 \sum_{i\notin I_j^t} x_{ij}^t \bigg) \cdot Q^{t-1} + \Big(\big(1-y_j^t\big) \cdot e^{2y_j^t} - 1 \Big) e^{-2y_j^{1:t}}
        ~.
    \]

    Further by $(1-y) \cdot e^{2y} \le 1+y$ (Appendix~\ref{app:e-2x}) for any $0 \le y \le 1$, we have:
    \[
        Q^t \le \bigg(1-\sum_{i\in I_j^t} \mass_i^t - 2 \sum_{i \notin I_j^t} x_{ij}^t\bigg) \cdot Q^{t-1} + y_j^t \cdot e^{-2y_j^{1:t}}
        ~.
    \]
    
    Combining the above inequalities for $1 \le t \le T$ with the base case $Q^0 = 0$, we get that:
    \begin{equation}
        \label{eqn:Q-first-bound}
        Q^T ~ \le ~ \sum_{t \in [T]} y_j^t \cdot e^{-2y_j^{1:t}}\cdot \prod_{t' > t} \bigg(1-\sum_{i\in I_j^{t'}}\mass_i^{t'} - 2 \sum_{i\notin I_j^{t'}}x_{ij}^{t'}\bigg)
        ~.
    \end{equation}

    We next upper bound the product above, i.e., the second term.
    First, we have:
    \begin{align*}
        1-\sum_{i\in I_j^{t'}} \mass_i^{t'} - 2 \sum_{i\notin I_j^{t'}} x_{ij}^{t'}
        &
        ~\le~ 
        \bigg( 1-\sum_{i \in I_j^{t'}} \mass_i^{t'} \bigg) \bigg( 1 - 2 \sum_{i \notin I_j^{t'}} x_{ij}^{t'} \bigg) \\
        &
        ~\le~ 
        \bigg( 1-\sum_{i \in I_j^{t'}} \mass_i^{t'} \bigg) \exp\bigg( - 2 \sum_{i \notin I_j^{t'}} x_{ij}^{t'} \bigg)
        ~.
    \end{align*}

    Putting this into Eqn.~\eqref{eqn:Q-first-bound}, we have:
    \[
        Q^T
        ~\le~ 
        \sum_{t \in [T]} y_j^t \cdot \exp \bigg( - 2y_j^{1:t} - 2 \sum_{t' > t} \sum_{i \notin I_j^{t'}} x_{ij}^{t'} \bigg) \cdot \prod_{t' > t} \bigg( 1 - \sum_{i \in I_j^{t'}} \mass_i^{t'} \bigg)
        ~.
    \]
    
    Further, note that:
    \[
    	y_j = y_j^{1:t} + y_j^{(t+1):T} = y_j^{1:t} + \sum_{t' > t} \sum_{i \in I_j^{t'}} x_{ij}^{t'} + \sum_{t' > t} \sum_{i \notin I_j^{t'}} x_{ij}^{t'} 
    	~.
    \]
    
    We get that:
    \begin{equation}
        \label{eqn:for-random-order}    
    	Q^T
        ~\le~ 
        e^{-2y_j} \sum_{t \in [T]} y_j^t \cdot \exp \bigg( 2 \sum_{t' > t} \sum_{i\in I_j^{t'}} x_{ij}^{t'} \bigg) \cdot \prod_{t' > t} \bigg( 1 - \sum_{i \in I_j^{t'}} \mass_i^{t'} \bigg)
        ~.
    \end{equation}

    We further relax the product above, i.e., the last term, by Constraint~\eqref{eqn:discrete-time-lp} of the \textbf{Stochastic Matching LP}.
    We get that:
   	\[
		Q^T 
		~\le~ 
		e^{-2y_j} \sum_{t \in [T]} y_j^t \cdot \exp \bigg(2 \sum_{t' > t} \sum_{i\in I_j^{t'}} x_{ij}^{t'} \bigg) \cdot \bigg( 1 - \sum_{t' > t} \sum_{i\in I_j^{t'}}x_{ij}^{t'} \bigg)
        ~.
    \]

    For notational simplicity, let $z^t = \sum_{t' > t} \sum_{i\in I_j^{t'}} x_{ij}^{t'}$.
    The above upper bound becomes:
    \[
        Q^T ~ \le ~ e^{-2y_j} \sum_{t \in [T]} y_j^t \cdot e^{2 z^t}\big(1-z^t\big) 
        ~.
    \]

    Function $e^{2z}(1-z)$ achieves it maximum $\frac{e}{2}$ at $z=\frac12$.
    We flatten it beyond $\frac{1}{2}$ and get a non-decreasing function  $h$ with $h(z) = e^{2z}(1-z)$ when $z\le \frac12$, and $h(z) = \frac{e}{2}$ otherwise.
    We have:
    \[
        Q^T 
        \le 
        e^{-2y_j} \sum_{t \in [T]} y_j^t \, h\big(z^t\big) 
        ~.
    \]
    
    By the definition of $z^t$, we have $z^t \le y_j^{(t+1):T}$. 
    Since $h$ is non-decreasing, we further get that:
    \[
    	Q^T
        \le e^{-2y_j} \sum_{t \in [T]} y_j^t \, h\big(y_j^{(t+1):T}\big) 
        \le e^{-2y_j} \int_0^{y_j} h(z)\:\dif z
        ~.
    \]

    Recall that $u_j^T = Q^T + e^{-2y_j}$ is the probability that offline vertex $j$ is unmatched in the end.
    The probability that vertex $j$ stays unmatched is at most:
	\[
        e^{-2y_j}\left(1 + \int_0^{y_j} h(z) \:\dif z\right)
        ~.
    \]
    
    By the definition of $h$ and basic calculus, this is:
    \[
    	\frac{1}{4} \left( e^{-2y_j} + 3 - 2y_j \right)
    \]
    for $0 \le y_j \le \frac{1}{2}$, and:
    \[
    	e^{-2y_j} \left( \frac{1+e}{4} + \frac{e}{2} y_j \right)
    \]
    for $\frac{1}{2} < y_j \le 1$. 
\end{proof}

\subsection{Random-Order and Query-Commit Models}
\label{sec:unweighted-ro}

Intuitively, an adversarially chosen instance for the SOCS algorithm would leave as many one-way types to the end as possible, forcing the algorithm to select an offline vertex from two-way surrogate types before knowing which agent will reappear later as a one-way type.
The adversary's ability to do so is limited in the Random-Order model.
As a result, we can prove an improved convergence rate and correspondingly a better competitive ratio for the same algorithm.

\begin{theorem}
	\label{thm:socs-random-order}
    In the Random-Order model, \textrm{\bf SOCS for Unweighted and Vertex-Weighted Matching} has convergence rate:
    \[
    	g(y_j)
        ~=~ 
        \begin{cases}
        	\big(1 + \frac{y_j}{2}\big) e^{-2y_j} + \frac{1}{2} y_j \big(1 - y_j \big)
        	&
        	0 \le y_j \le \frac{1}{2} ~; \\[1ex]
            e^{-2y_j} \big( 1 + \big(\frac{1}{2} +  \frac{e}{4}\big)  y_j \big)
            &
            \frac{1}{2} < y_j \le 1 ~.	
        \end{cases}
    \]
    That is, for any offline vertex $j \in J$, it is matched with probability at least:
    \begin{equation}
        \label{eqn:socs-matching-random-order}	
        1 - g(y_j)
        ~.
    \end{equation}
\end{theorem}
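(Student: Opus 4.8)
The plan is to re-run the proof of Theorem~\ref{thm:socs-vertex-weighted}, first conditioning on the realized permutation and then averaging over it. Fix an offline vertex $j$ and condition on $\pi$; in the random-order model the selection weight $e^{2 y_\ell^{1:(t-1)}}$ uses the realized prefix allocation $y_\ell^{1:(t-1)} = \sum_{t' < t} y_\ell^{\pi(t')}$, which is observable to the algorithm at processing step $t$. With the time axis relabeled by $\pi$, the whole chain of estimates in Subsection~\ref{sec:unweighted-socs} carries over verbatim --- Lemma~\ref{lem:general-ocs-recurrence}, Lemma~\ref{lem:general-ocs-recurrence-am-gm}, Lemma~\ref{lem:matching-ocs-basic}, the rewriting via \textbf{Type Decomposition}, and the manipulations up to Equation~\eqref{eqn:for-random-order}. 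Hence, writing $Q^T(\pi) = u_j^T(\pi) - e^{-2 y_j}$ for the conditional slack, for \emph{every} fixed $\pi$:
\[
    Q^T(\pi) ~\le~ e^{-2y_j} \sum_{t \in [T]} y_j^{\pi(t)} \cdot \exp\bigg( 2 \sum_{t' > t} \sum_{i \in I_j^{\pi(t')}} x_{ij}^{\pi(t')} \bigg) \cdot \prod_{t' > t} \bigg( 1 - \sum_{i \in I_j^{\pi(t')}} \mass_i^{\pi(t')} \bigg)
    ~,
\]
where $I_j^s = \{ i : \matchrate_{ij}^s > \tfrac12 \}$ as in the proof of Theorem~\ref{thm:socs-vertex-weighted}.

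Next I would take the expectation over $\pi$. Re-indexing by the original time steps and writing $p_s = \sum_{i \in I_j^s} \mass_i^s$, $a_s = \sum_{i \in I_j^s} x_{ij}^s$, and $B_s(\pi)$ for the set of steps that $\pi$ processes after step $s$, the bound above reads $Q^T(\pi) \le e^{-2 y_j} \sum_s y_j^s \prod_{s' \in B_s(\pi)} (1 - p_{s'}) e^{2 a_{s'}}$. Model the random order by i.i.d.\ uniform arrival times in $[0,1]$; conditioned on step $s$ arriving at time $\tau$, each other step lies in $B_s$ independently with probability $1 - \tau$, so
\[
    \E_\pi\big[ Q^T \big] ~\le~ e^{-2y_j} \sum_{s \in [T]} y_j^s \int_0^1 \prod_{s' \ne s} \Big( \tau + (1-\tau)(1-p_{s'}) e^{2 a_{s'}} \Big) \dif \tau
    ~.
\]

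It then remains to bound the integral purely in terms of $y_j$ and identify the result with the stated $g$. Two ingredients should drive this: the two-sided bound $\tfrac12 p_s < a_s \le p_s$, immediate from the definition of $I_j^s$ (so in particular $(1 - p_{s'}) e^{2 a_{s'}} \le (1 - p_{s'}) e^{2 p_{s'}} \le \tfrac{e}{2}$); and the Stochastic Matching LP constraint~\eqref{eqn:discrete-time-lp} applied to sets of the form $\{ (s', i) : i \in I_j^{s'} \}$, which yields $\prod_{s' \in S} (1 - p_{s'}) \le 1 - \sum_{s' \in S} a_{s'}$ for any set of steps $S$, and hence $\sum_s a_s \le \min\{ y_j, 1 \}$. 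Using the \emph{product} form of this LP bound inside the $\tau$-integral (rather than first passing to an exponential) together with the correlation between $p_{s'}$ and $a_{s'}$, one reduces the estimate to a single-variable optimization and matches it to $g(y_j) - e^{-2 y_j}$; as in Theorem~\ref{thm:socs-vertex-weighted}, the threshold $y_j = \tfrac12$ surfaces from flattening $e^{2 z}(1 - z)$ at its maximum. Finally $u_j^T = Q^T + e^{-2 y_j}$, and averaging over $\pi$ gives $\Pr{\text{$j$ unmatched}} = \E_\pi[ u_j^T ] \le g(y_j)$, which is \eqref{eqn:socs-matching-random-order}.

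I expect the last paragraph to be the main obstacle. In the fixed-order proof the suffix product collapses cleanly through the LP constraint to a single scalar $1 - z^t$, after which a monotone-flattening and Riemann-sum step finishes the argument; under a random order the suffix becomes a \emph{random} subset, so the LP's product-form bound has to be exploited inside the integral over the arrival time $\tau$ without linearizing, and the improvement over the fixed-order rate must be extracted from the averaging itself. Crude relaxations such as $\prod_{s'} \big( 1 + (1-\tau) q_{s'} \big) \le \exp\big( (1-\tau) \sum_{s'} q_{s'} \big)$ lose too much to reach the stated $g$, so a more careful analysis of the worst-case configuration of the pairs $(p_{s'}, a_{s'})$ --- presumably reducing it by convexity to a few extreme cases --- seems to be what is needed.
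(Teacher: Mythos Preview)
Your setup is exactly the paper's: condition on the permutation, follow the fixed-order argument to Equation~\eqref{eqn:for-random-order}, model the order by i.i.d.\ uniform arrival times, and arrive at
\[
    \E_\pi\big[ Q^T \big] ~\le~ e^{-2y_j} \sum_{s} y_j^s \int_0^1 \prod_{s' \ne s} \Big( \tau + (1-\tau)\,h^{s'} \Big)\,\dif\tau,
    \qquad h^{s'} = (1-p_{s'})\,e^{2a_{s'}} .
\]
The gap is precisely the step you flag as the obstacle. The paper does \emph{not} analyze worst-case configurations of the pairs $(p_{s'},a_{s'})$; instead it uses an elementary product-collapsing inequality that you are missing. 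For any $\tau\in[0,1]$ and any $a,b\ge 1$,
\[
    \big(\tau+(1-\tau)a\big)\big(\tau+(1-\tau)b\big) \;\le\; \tau+(1-\tau)\,ab,
\]
which is just $\tau(1-\tau)(a-1)(b-1)\ge 0$. Drop the factors with $h^{s'}<1$ (bound them by $1$), and repeatedly apply this inequality to the remaining factors; the entire integrand collapses to the affine expression $\tau+(1-\tau)\,H$ with $H=\prod_{s':\,h^{s'}\ge 1} h^{s'}$. Now $H$ has exactly the form $\exp(2z)(1-\cdot)$ to which the LP constraint~\eqref{eqn:discrete-time-lp} applies \emph{after} the product has been collapsed, yielding $H\le e^{2z}(1-z)$ for $z=\sum_{s':\,h^{s'}\ge 1} a_{s'}\le y_j$; bound this by $e^{2y_j}(1-y_j)$ when $y_j\le\tfrac12$ and by its maximum $\tfrac{e}{2}$ otherwise. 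The $\tau$-integral of $\tau+(1-\tau)H$ is then just $\tfrac12(1+H)$, and plugging in gives the two branches of $g(y_j)$.

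So the route you propose---pushing the LP bound inside the $\tau$-integral and then optimizing over the $(p_{s'},a_{s'})$---is both harder and unnecessary; the missing idea is to first collapse the $\tau$-dependent product to a single affine-in-$\tau$ term, which decouples the averaging from the LP step and reduces everything to the same one-variable calculation as in the fixed-order proof.
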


Equation~\eqref{eqn:socs-matching-random-order} is a concave function in $y_j$ (Appendix~\ref{app:socs-matching-random-order}). Therefore, it is at least:
\[
	\big(1 - g(1) \big) y_j = \left(1 - \frac{1}{4e} - \frac{3}{2e^2}\right) y_j
	~.
\]

As corollaries, we get the competitive ratios for unweighted and vertex-weighted matching in the Random-Order and Query-Commit models.

\begin{corollary}
	\label{cor:vertex-weighted-matching-random-order}
	Rounding the solution of \textbf{Stochastic Matching LP} with \textbf{SOCS for Unweighted and Vertex-Weighted Matching} is $1 - \frac{1}{4e} - \frac{3}{2e^2} > 0.705$-competitive for unweighted and vertex-weighted matching in the Random-Order model of (Non-IID) Online Stochastic Matching.
\end{corollary}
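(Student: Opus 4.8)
The plan is to obtain the corollary as a short packaging step on top of Theorem~\ref{thm:socs-random-order} and the optimality of the Stochastic Matching LP (Lemma~\ref{lem:discrete-time-lp-optimality}). First I would fix an optimal solution $(x_{ij}^t)$ of the Stochastic Matching LP and use the induced fractional allocation $\matchrate_{ij}^t = x_{ij}^t / \mass_i^t$ as the input to \textbf{SOCS for Unweighted and Vertex-Weighted Matching}. Since this allocation satisfies Constraint~\eqref{eqn:discrete-time-lp} (it is a feasible LP point), Theorem~\ref{thm:socs-random-order} applies in the Random-Order model and says that every offline vertex $j$ is matched with probability at least $1 - g(y_j)$ for the stated piecewise $g$, where $y_j = \sum_t\sum_i \mass_i^t \matchrate_{ij}^t = \sum_t\sum_i x_{ij}^t$.

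Next I would check $y_j \in [0,1]$, so that the piecewise bound is used on its full domain: taking $S$ to be the set of all pairs $(t,i)$ with $(i,j)\in E$ in Constraint~\eqref{eqn:discrete-time-lp} gives $y_j = \sum_{(t,i)\in S} x_{ij}^t \le 1 - \prod_t\bigl(1 - \sum_{i:(t,i)\in S}\mass_i^t\bigr) \le 1$. Then, using that the matched probability $\varphi(y) := 1 - g(y)$ is concave on $[0,1]$ (Appendix~\ref{app:socs-matching-random-order}) and satisfies $\varphi(0) = 1 - g(0) = 0$, the chord inequality $\varphi(y) = \varphi\bigl(y\cdot 1 + (1-y)\cdot 0\bigr) \ge y\,\varphi(1)$ shows that vertex $j$ is matched with probability at least $(1-g(1))\,y_j$. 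A direct evaluation of the $\tfrac12 < y \le 1$ branch gives $g(1) = e^{-2}\bigl(1 + \tfrac12 + \tfrac{e}{4}\bigr) = \tfrac{3}{2e^2} + \tfrac{1}{4e}$, hence $1 - g(1) = 1 - \tfrac{1}{4e} - \tfrac{3}{2e^2} > 0.705$.

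Finally I would sum over offline vertices. Because $w_{ij} = w_j$ on edges and $0$ otherwise, and the fractional allocation is supported on edges (off-edge terms contribute zero anyway), the algorithm's expected welfare is $\ALG = \sum_j w_j \Pr{\text{$j$ matched}} \ge (1-g(1))\sum_j w_j y_j = (1-g(1))\sum_{i,j,t} w_{ij}\, x_{ij}^t$, i.e.\ $(1-g(1))$ times the LP objective. By Lemma~\ref{lem:discrete-time-lp-optimality} the LP objective is at least $\OPT$, so $\ALG \ge (1-g(1))\,\OPT > 0.705\,\OPT$. The unweighted case is the special case $w_j \equiv 1$, so it is covered by the same argument.

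I do not expect a serious obstacle here: all the analytic content lives in Theorem~\ref{thm:socs-random-order}, and what remains is bookkeeping. The two points that need care are (i) confirming $y_j \le 1$ from the LP constraint, so the piecewise convergence rate is invoked on its entire stated range $[0,1]$, and (ii) the concavity-plus-origin linearization that turns a per-vertex guarantee $1-g(y_j)$ into a uniform multiplicative factor $1-g(1)$; the concavity itself is the delicate part and is delegated to Appendix~\ref{app:socs-matching-random-order}.
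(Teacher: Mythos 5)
Your proposal is correct and follows the same route as the paper: the corollary is obtained from Theorem~\ref{thm:socs-random-order} together with the concavity of $1-g(y_j)$ (delegated to Appendix~\ref{app:socs-matching-random-order}) and the LP-optimality Lemma~\ref{lem:discrete-time-lp-optimality}, exactly as the paper does in the sentence preceding the corollary. The bookkeeping you spell out explicitly---feeding $\matchrate_{ij}^t = x_{ij}^t/\mass_i^t$ into the SOCS, verifying $y_j\le 1$ via Constraint~\eqref{eqn:discrete-time-lp}, the chord linearization $1-g(y_j)\ge (1-g(1))y_j$, and the final summation against $\OPT$---is precisely what the paper leaves implicit.
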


\begin{corollary}
	\label{cor:vertex-weighted-matching-query-commit}
	There is a $1-\frac{1}{4e}-\frac{3}{2e^2} > 0.705$-competitive algorithm for unweighted and vertex-weighted matching in the Query-Commit model.
\end{corollary}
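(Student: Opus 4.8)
The plan is to obtain this corollary by composing Corollary~\ref{cor:vertex-weighted-matching-random-order} with a lossless reduction from the Query-Commit model to the Random-Order Non-IID model, whose details are deferred to Appendix~\ref{app:lossless-simulation} and which follows the ideas of \citet{CostelloTT:ICALP:2012} and \citet{GamlathKS:SODA:2019}. Starting from a Query-Commit instance on a bipartite graph $G = (I, J, E)$ with edge-existence probabilities $p_{ij}$ and vertex weights $(w_j)_{j \in J}$ (set $w_j \equiv 1$ in the unweighted case), I would designate $J$ as the offline agents and build a Random-Order Non-IID instance with one time step per vertex $i \in I$, whose distribution realizes $i$'s random neighborhood: the online type ``$(i,S)$'' for $S \subseteq N(i)$ has neighbor set $S$ and arrives with probability $\prod_{j \in S} p_{ij} \prod_{j \in N(i) \setminus S}(1 - p_{ij})$. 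Shuffling the time steps uniformly at random is exactly the act of processing the vertices of $I$ in uniformly random order, a choice available to the Query-Commit algorithm, and the induced distribution over realized subgraphs matches the Query-Commit edge process; hence the hindsight optimum of the constructed instance equals the Query-Commit benchmark $\OPT$.

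First I would record that the \textbf{Stochastic Matching LP} written for the constructed instance is a common relaxation: by Lemma~\ref{lem:discrete-time-lp-optimality} its optimum upper bounds that hindsight optimum, i.e.\ $\OPT$, and the constraints~\eqref{eqn:discrete-time-lp} are exactly the statements that an offline vertex $j$ is matched to an online vertex drawn from a prescribed set with probability at most the probability that at least one of the corresponding edges realizes. So it suffices to exhibit a Query-Commit algorithm that matches each offline vertex $j$ with probability at least $1 - g(y_j)$, where $g$ is the convergence rate of Theorem~\ref{thm:socs-random-order} and $y_j = \sum_{t} \sum_{i} \mass_i^t \matchrate_{ij}^t$ is read off from an optimal LP solution. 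The heart of the argument is then the simulation step: I would run the algorithm of Corollary~\ref{cor:vertex-weighted-matching-random-order} --- round the LP solution with \textbf{SOCS for Unweighted and Vertex-Weighted Matching} --- on the constructed instance, but implement each of its decisions for the current online vertex $i$ through a valid sequence of edge queries. Concretely, when the algorithm processes $i$ it wants to commit $i$ to a neighbor $j$ with a probability determined by \textbf{Type Decomposition} and the two-way SOCS conditioned on the set of still-unmatched agents; I would realize this by querying $i$'s incident edges one at a time in a carefully chosen order, committing to a realized edge $(i,j)$ when the algorithm selects $j$ and otherwise declining it via a private coin flip, the ``decline'' option legitimized in Section~\ref{sec:prelim}. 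The lemma underlying Appendix~\ref{app:lossless-simulation} is that such an order and such declining probabilities always exist and reproduce the exact distribution of which agent $i$ is matched to, as prescribed by the SOCS on the constructed instance; this matching of marginals is feasible precisely because the rounded fractional solution satisfies the Stochastic Matching LP constraints~\eqref{eqn:discrete-time-lp}, which guarantee that the cumulative match probabilities never outrun the available edge-realization probabilities.

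Given such a lossless simulation, the distribution over final matchings --- in particular the probability that each offline vertex $j$ ends up matched --- coincides with that of the Random-Order Non-IID execution, so Corollary~\ref{cor:vertex-weighted-matching-random-order} applied against the LP bound yields $\ALG \ge \big(1 - \tfrac{1}{4e} - \tfrac{3}{2e^2}\big)\OPT > 0.705\,\OPT$, covering the unweighted and vertex-weighted cases uniformly. The main obstacle is the losslessness of the simulation: one must verify that the commitment constraint never forces a deviation from the adaptive per-vertex match distribution the SOCS prescribes, across all realizations of the edges queried so far. I would dispatch this exactly as in \citet{CostelloTT:ICALP:2012, GamlathKS:SODA:2019}, reducing it to the observation that after probing a prefix of $i$'s incident edges, the conditional probability that the next queried edge $(i,j)$ realizes (and has not already been declined) dominates the residual mass the SOCS still wishes to place on $j$ --- a domination that is exactly Constraint~\eqref{eqn:discrete-time-lp} of the \textbf{Stochastic Matching LP}. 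All remaining points --- that the resulting ratio strictly exceeds $0.705$ and that no objective value is lost in translating between the two models --- are immediate from Corollary~\ref{cor:vertex-weighted-matching-random-order} and its surrounding discussion.
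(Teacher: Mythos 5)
Your plan matches the paper's: compose Corollary~\ref{cor:vertex-weighted-matching-random-order} with the lossless simulation of Appendix~\ref{app:lossless-simulation}, and your surrogate Random-Order Non-IID instance (one time step per $i \in I$, types indexed by $i$'s realized neighborhood, a uniformly random vertex order chosen by the Query-Commit algorithm itself) is exactly the right one. The one substantive error is in what you claim certifies losslessness. Constraint~\eqref{eqn:discrete-time-lp} of the Stochastic Matching LP is a per-\emph{offline}-vertex constraint on the LP solution $(x^t_{ij})$ over the whole horizon; it is used to upper-bound $\OPT$ and in the proof of Theorem~\ref{thm:socs-random-order}, but it plays no role in the Query-Commit simulation. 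What the reduction in Appendix~\ref{app:lossless-simulation} actually relies on is the \emph{per-online-vertex} polymatroid constraint in Equations~\eqref{eqn:lossless-simulation-polytope-1}--\eqref{eqn:lossless-simulation-polytope-2}: conditioned on the history, the SOCS's probability of matching the current online vertex $i$ to some $j$ in a subset $S \subseteq J$ is at most $1 - \prod_{j \in S}(1 - p_{ij})$, and this holds automatically because any algorithm on the constructed instance can only match $i$ to one of its realized neighbors; it does not need to be deduced from~\eqref{eqn:discrete-time-lp}.

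Once the conditional match vector is inside that polymatroid, Lemma~\ref{lem:lossless-simulation-polytope} decomposes it into a convex combination of vertices, each corresponding to a probing permutation; the simulator samples a vertex and probes edges in that order, committing to the first realized edge and discarding the rest. Your description of ``committing when the algorithm selects $j$ and otherwise declining via a private coin flip'' conflates this probing mechanism with the model equivalence noted in Section~\ref{sec:prelim} (where a coin flip simulates the option of not including a queried edge); it is not the device that hits the SOCS's marginals. Fixing this attribution, the rest of your argument goes through and yields the stated $1 - \frac{1}{4e} - \frac{3}{2e^2} > 0.705$ ratio.
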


\begin{proof}[Proof of Theorem~\ref{thm:socs-random-order}]
    It is more convenient to work with an alternative way of shuffling the time steps as follows.
    For each time step $t \in [T]$, independently draw $\theta^t\in [0, 1]$ uniformly at random.
    Then, let the steps arrive in ascending order of $\theta^t$.
    We follow the analysis of Theorem~\ref{thm:socs-vertex-weighted} up to Equation~\eqref{eqn:for-random-order}, replacing the condition $t' > t$ with $\theta^{t'} > \theta^t$ due to the above shuffling:
    \begin{align*}
        Q^t
        &
        \le e^{-2y_j} \sum_{t \in [T]} y_j^t \cdot  \exp \bigg( \sum_{t' :\: \theta^{t'} > \theta^t} \sum_{i \in I_j^{t'}} x_{ij}^{t'} \bigg) \prod_{t' :\: \theta^{t'} > \theta^t} \bigg(1 - \sum_{i \in I_j^{t'}} \mass_i^{t'} \bigg) \\
        &
        = e^{-2y_j} \sum_{t \in [T]} y_j^t \prod_{t' :\: \theta^{t'} > \theta^t} \exp \bigg( \sum_{i \in I_j^{t'}} x_{ij}^{t'} \bigg) \bigg(1 - \sum_{i \in I_j^{t'}} \mass_i^{t'} \bigg)
    \end{align*}

    For notation simplicity, for any $t \in [T]$ we let:
    \[
        h^t = \exp\bigg(2\sum_{i\in I_j^t}x_{ij}^t\bigg)\bigg(1 - \sum_{i\in I_j^t}\mass_i^t\bigg)
        ~.
    \]

    The above bound can be rewritten as:
    \[
        Q^T \le e^{-2y_j} \sum_{t \in [T]} y_j^t \prod_{t' : \theta^{t'} > \theta^t} h^{t'}
        ~.
    \]

    Next, we will upper bound the expectation of the above product, i.e., the last term, over the random shuffling of time steps.
    First, we fix $\theta^t$ and take expectation over the randomness of $\theta^{t'}$ for $t' \ne t$.
    We get that:
    \[
        \E \, \Big[ \prod_{t' : \theta^{t'} > \theta^t} h^{t'} \,\mid\, \theta^t \, \Big] = \prod_{t' \ne t} \Big(\theta^t + \big(1-\theta^t\big) h^{t'} \Big) 
        ~.
    \]

    For those $t'$ for which $h^{t'} < 1$, we relax the corresponding term to $1$.
    For those $t'$ for which $h^{t'} \ge 1$, we repeatedly apply inequality:
    \[
        \Big( \theta^t + \big(1-\theta^t\big) a \Big) \Big( \theta^t + \big(1-\theta^t\big) b\Big) \le \theta^t + \big(1-\theta^t\big) ab
    \]
    for any $a, b \ge 1$, which is equivalent to $\big(\theta^t-(\theta^t)^2\big)(a-1)(b-1) \ge 0$.
    We have:
    \[
        \E \, \Big[ \prod_{t' : \theta^{t'} > \theta^t} h^{t'} \,\mid\, \theta^t \, \Big] ~\le~ \theta^t + \big(1-\theta^t\big) \prod_{t' \ne t : h^{t'} \ge 1} h^{t'}
        ~.
    \]

    Expanding the definition of $h^{t'}$, we get:
    \[
    	\prod_{t' \ne t : h^{t'} \ge 1} h^{t'}
        =
        \prod_{t' \ne t : h^{t'} \ge 1} \exp\bigg(2\sum_{i\in I_j^{t'}}x_{ij}^{t'}\bigg)\bigg(1 - \sum_{i\in I_j^{t'}}\mass_i^{t'}\bigg)
        ~.
    \]
    
    We next relax the last term using Constraint~\eqref{eqn:discrete-time-lp} of the \textbf{Stochastic Matching LP}:
    \[
    	\prod_{t' \ne t : h^{t'} \ge 1} h^{t'}
    	~\le~
    	\exp\bigg(2 \sum_{t' \ne t : h^{t'} \ge 1} \sum_{i\in I_j^{t'}}x_{ij}^{t'} \bigg) \bigg(1 - \sum_{t' \ne t : h^{t'} \ge 1} \sum_{i\in I_j^{t'}} x_{ij}^{t'} \bigg)
    	~.
    \]
    
	We view right-hand-side as a function $e^{2z}(1-z)$, which is increasing in $0 \le z \le \frac{1}{2}$ and decreasing in $\frac{1}{2} \le z \le 1$.
    If $y_j \le \frac{1}{2}$, we use $\sum_{i\in I_j^{t'}}x_{ij}^{t'} \le y_j$ to conclude that:
    \[
    	\prod_{t' \ne t : h^{t'} \ge 1} h^{t'} ~\le~ e^{2y_j} (1-y_j)
    	~.
    \]
    
    Otherwise, i.e., if $y_j > \frac{1}{2}$, we relax the right-hand-side to be the maximum value $\frac{e}{2}$ of function $e^{2z}(1-z)$, which is achieved at $z = \frac{1}{2}$.
    That is, we have:
    \[
    	\prod_{t' \ne t : h^{t'} \ge 1} h^{t'} ~\le~ \frac{e}{2}
    	~.
    \]
    
    Putting together, we conclude that:
    \[
        \E \, \Big[ \prod_{t' : \theta^{t'} > \theta^t} h^{t'} \,\mid\, \theta^t \, \Big] \le 
        \begin{cases}
			\theta^t + \big(1-\theta^t\big) \cdot e^{2y_j} (1-y_j) 
			&
			0 \le y_j \le \frac{1}{2} ~; \\[1ex]
			\theta^t + \big(1-\theta^t\big) \cdot \frac{e}{2}
			&
			\frac{1}{2} < y_j \le 1 ~.	
        \end{cases}
    \]

    Finally, we take expectation over the randomness of $\theta^t$ and get that:
    \begin{equation}
        \label{eqn:random-order-main-inequality}
        \E \prod_{t' : \theta^{t'} > \theta^t} h^{t'}
        ~ \le ~ 
        \begin{cases}
        	\frac{1}{2} \big( 1 + e^{2y_j}(1-y_j) \big)
        	&
        	0 \le y_j \le \frac{1}{2} ~; \\[1ex]
        	\frac{1}{2} + \frac{e}{4}
        	&
        	\frac{1}{2} < y_j \le 1 ~.
        \end{cases}
    \end{equation}

    Now, by linearity of expectation and Equation~\eqref{eqn:random-order-main-inequality}, we have:
    \[
        \E \, Q^T
        ~=~
        \begin{cases}
        	\frac{y_j}{2} \big( e^{-2y_j}  + 1 - y_j \big)
        	&
        	0 \le y_j \le \frac{1}{2} ~; \\[1ex]
            \big(\frac{1}{2} +  \frac{e}{4}\big) e^{-2y_j}  y_j
            &
            \frac{1}{2} < y_j \le 1 ~.	
        \end{cases}
    \]
        
    Recall that offline vertex $j$ stays unmatched in the end with probability $\E\,u_j^T = \E\,Q^T + e^{-2y_j}$, which is at most: 
    \[
        \E\,u_j^T
        ~\le~
        g(y_j)
        ~=~ 
        \begin{cases}
        	\big(1 + \frac{y_j}{2}\big) e^{-2y_j} + \frac{1}{2} y_j \big(1 - y_j \big)
        	&
        	0 \le y_j \le \frac{1}{2} ~; \\[1ex]
            e^{-2y_j} \big( 1 + \big(\frac{1}{2} +  \frac{e}{4}\big)  y_j \big)
            &
            \frac{1}{2} < y_j \le 1 ~.	
        \end{cases}
    \]
\end{proof}

We conclude the section by presenting in Figure~\ref{fig:socs-matching} a comparison of the convergence rates by the baseline \textbf{Independent Rounding} algorithm, and by our \textbf{SOCS for Unweighted and Vertex-Weighted Matching} in the original and the random-order models.

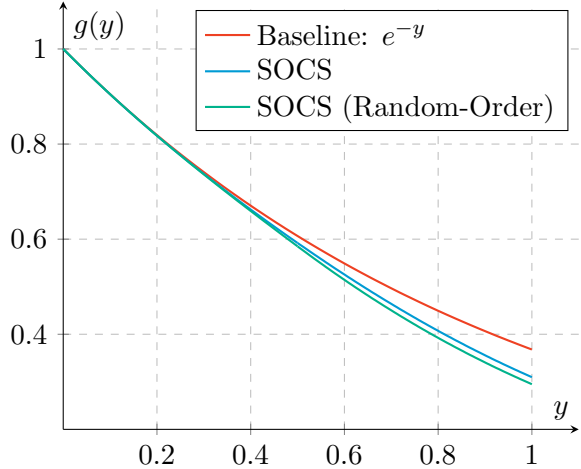
\begin{figure}
	\centering
	\begin{tikzpicture}
	\begin{axis}[
	  axis lines=middle,
	  xlabel={$y$},
	  ylabel={$g(y)$},
	  xmin=0, xmax=1.1,
	  ymin=0.2, ymax=1.1,
	  domain=0:1,
	  samples=100,
	  grid=major,
	  grid style={dashed, gray!50},
	  no markers,
	  clip=false,
	  legend cell align={left},
	]
	\addplot[thick, hkured] {exp(-x)};
	\addlegendentry{Baseline: $e^{-y}$}
	
	\addplot[thick, hkublue, domain=0:0.5] {0.25*(exp(-2*x) + 3 - 2*x)};
	\addlegendentry{SOCS}

	\addplot[thick, hkugreen, domain=0:0.5] {(1+x/2) * exp(-2*x) + (x/2) * (1-x)};
	\addlegendentry{SOCS (Random-Order)}	
	
	\addplot[thick, hkublue, domain=0.5:1] {exp(-2*x)*((1+exp(1))/4 + (exp(1)/2)*x)};
	
	\addplot[thick, hkugreen, domain=0.5:1] {exp(-2*x) * (1 + (1/2+e/4)*x)};	

	\end{axis}
	\end{tikzpicture}
	\caption{A comparison of the convergence rates of (1) the baseline algorithm that independently samples according to the fractional allocation $\matchrate^t$, (2) the SOCS for unweighted/vertex-weighted matching from this paper, and (3) the same SOCS algorithm in the random-order model.}
	\label{fig:socs-matching}
\end{figure}

\section{AdWords}
\label{sec:adwords}

\subsection{Preliminaries on Negative Association}

The results in this section rely on the concept of negatively associated random variables and their properties. 
We outline the necessary background for our analyses below and refer readers to the expository article by \citet{Wajc:NA:2017} for further information.

Given a set of random variables $X_1, X_2, \dots, X_n$ and a subset of indices $S \subseteq [n]$, we will write $X_S = (X_i)_{i \in S}$ for the subset of random variables with indices in $S$.

\begin{definition}[Negative Association]
	\label{def:na}
	Random variables $X_1, X_2, \dots, X_n$ are \emph{negatively associated} if for any disjoint subsets of indices $S, T \subset [n]$, and any functions $f$ and $g$ defined on $X_S$ and $X_T$ respectively that are both non-decreasing or both non-increasing, we have:
	\[
		\E \big[ f(X_S) \cdot g(X_T) \big] ~\le~ \E \big[ f(X_S) \big] \cdot \E \big[ g(X_T) \big]
		~. 
	\]
\end{definition}

The above inequality captures a notion of negative dependence that has been proven useful in many applications.
The next lemma considers two forms of intuitively negatively dependent random variables that satisfy negative association.

\begin{lemma}[Basic Forms of Negative Association]
	\label{lem:na-basic}
	A set of random variables $X_1, X_2, \dots, X_n$ is negatively associated if:
	\begin{enumerate}
	\item They are independent; or
	\item They are binary and satisfy $\sum_{i=1}^n X_i \le 1$.
	\end{enumerate}
\end{lemma}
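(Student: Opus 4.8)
The plan is to treat the two cases of the lemma separately, since they have genuinely different flavors. For the first case, independence is already enough for a one-line argument: if $X_1,\dots,X_n$ are independent, then for any disjoint index sets $S,T\subseteq[n]$ the vectors $X_S$ and $X_T$ are independent, hence the random variables $f(X_S)$ and $g(X_T)$ are independent for any $f,g$, so $\E\big[f(X_S)\,g(X_T)\big]=\E\big[f(X_S)\big]\cdot\E\big[g(X_T)\big]$ and the defining inequality of Definition~\ref{def:na} holds with equality, without even using monotonicity.

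For the second case — binary variables with $\sum_i X_i\le 1$ — I would first reduce to the situation where $f$ and $g$ are both non-decreasing: if instead both are non-increasing, replace them by $-f$ and $-g$, which changes neither side of the target inequality. Then I would exploit translation invariance: adding a constant $c$ to $f$ shifts both $\E\big[f(X_S)\,g(X_T)\big]$ and $\E\big[f(X_S)\big]\cdot\E\big[g(X_T)\big]$ by the same amount $c\,\E\big[g(X_T)\big]$, and likewise for $g$. So without loss of generality I may assume $f(\mathbf 0)=0$ and $g(\mathbf 0)=0$, where $\mathbf 0$ denotes the all-zeros assignment on the relevant coordinates. Since $f$ is non-decreasing on $\{0,1\}^S$ and $\mathbf 0$ is its coordinatewise-minimal element, this forces $f\ge 0$ on the support of $X_S$, and similarly $g\ge 0$ on the support of $X_T$.

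The one-hot structure then finishes the argument: the product $f(X_S)\,g(X_T)$ is nonzero only if $f(X_S)>0$ and $g(X_T)>0$, which (because $f(\mathbf 0)=g(\mathbf 0)=0$) forces both $X_S\neq\mathbf 0$ and $X_T\neq\mathbf 0$, i.e.\ some $X_i=1$ with $i\in S$ and some $X_{i'}=1$ with $i'\in T$; as $S\cap T=\varnothing$ this gives $\sum_i X_i\ge 2$, contradicting the hypothesis. Hence $f(X_S)\,g(X_T)=0$ almost surely, so $\E\big[f(X_S)\,g(X_T)\big]=0\le\E\big[f(X_S)\big]\cdot\E\big[g(X_T)\big]$, the last inequality holding since both factors are nonnegative. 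The step I would be most careful about is the chain of reductions in the second case — verifying that negating and shifting $f$ and $g$ really preserve the target inequality, and that after shifting the non-decreasing functions are nonnegative on the supports of $X_S$ and $X_T$; once those are in place the rest is immediate from the fact that at most one $X_i$ can be nonzero.
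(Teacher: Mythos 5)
The paper does not prove this lemma itself; it states it as a black box, citing the expository article of \citet{Wajc:NA:2017}, so there is no in-paper proof to compare against. On its own merits, your argument is correct and self-contained. Case (1) is indeed immediate since independence gives equality without using monotonicity. For case (2), the negation reduction (to non-decreasing $f,g$) and the translation reduction (to $f(\mathbf 0)=g(\mathbf 0)=0$) are both legitimate: adding a constant $c$ to $f$ shifts $\E[f(X_S)g(X_T)]$ and $\E[f(X_S)]\,\E[g(X_T)]$ by the identical quantity $c\,\E[g(X_T)]$, and the same holds when you then shift $g$, so the inequality's truth value is unchanged. After normalization, monotonicity and coordinatewise minimality of $\mathbf 0$ give $f,g\ge 0$, and the one-hot constraint $\sum_i X_i\le 1$ applied to disjoint $S,T$ forces $f(X_S)g(X_T)=0$ almost surely, yielding $\E[f(X_S)g(X_T)]=0\le\E[f(X_S)]\,\E[g(X_T)]$. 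This is a clean direct proof of the one-hot special case, somewhat more elementary than the route most sources take (which typically derive it as a corollary of the Joag-Dev and Proschan result that permutation distributions are negatively associated); your argument buys a short self-contained proof at the cost of not covering those more general NA families.
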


Another appealing feature of negative association is its closure property under several natural operations.
Here, we only list one of them that will be relevant in our analysis. 

\begin{lemma}[Closure Property]
	\label{lem:na-closure}
	If (1) $X_1, X_2, \dots, X_n$ are negatively associated, (2) $Y_1, Y_2, \dots, Y_m$ are negatively associated, and (3) $(X_i)_{i \in [n]}$ is independent to $(Y_j)_{j \in [m]}$, then the joint distribution of $X_1, \dots, X_n, Y_1, \dots, Y_m$ satisfies negative association.
\end{lemma}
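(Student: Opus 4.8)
The plan is to verify the defining inequality of Definition~\ref{def:na} directly for the combined vector $Z = (X_1, \dots, X_n, Y_1, \dots, Y_m)$, by a conditioning argument that splits the joint statement into the two hypotheses. Fix disjoint index sets $S, T$ of the coordinates of $Z$ and functions $f$ on $Z_S$, $g$ on $Z_T$ that are both non-decreasing; the non-increasing case follows by replacing $f, g$ with $-f, -g$. Partition $S = S_X \cup S_Y$ according to whether an index lies in the $X$-block or the $Y$-block, and likewise $T = T_X \cup T_Y$; disjointness of $S$ and $T$ gives $S_X \cap T_X = \emptyset$ and $S_Y \cap T_Y = \emptyset$, which is what lets me invoke negative association within each block.

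First I would condition on the entire $X$-vector. For a fixed value $x$ of $X$, the random variable $f(Z_S)$ becomes a function of $Y_{S_Y}$ alone that is still non-decreasing in those coordinates, and similarly $g(Z_T)$ becomes a non-decreasing function of $Y_{T_Y}$. Since $(Y_j)_{j\in[m]}$ is independent of $(X_i)_{i\in[n]}$, the conditional law of the $Y$'s given $X = x$ equals their unconditional law, so hypothesis (2) applied via Definition~\ref{def:na} to the disjoint sets $S_Y, T_Y$ gives $\E[f(Z_S)\,g(Z_T)\mid X = x] \le \E[f(Z_S)\mid X=x]\cdot\E[g(Z_T)\mid X=x]$. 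By the same independence, $F(X) := \E[f(Z_S)\mid X]$ is in fact $\sigma(X_{S_X})$-measurable (integrating out the $Y$'s that appear in $f$ leaves a function of $X_{S_X}$ only), and likewise $G(X) := \E[g(Z_T)\mid X]$ depends only on $X_{T_X}$; and each is non-decreasing in its arguments, being an average over the independent $Y$'s of non-decreasing functions.

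Then I would finish with a second application of negative association, now to the $X$-block. Taking expectations over $X$ in the conditional inequality and using the tower property twice,
\[
	\E\big[ f(Z_S)\, g(Z_T) \big] \le \E\big[ F(X)\, G(X) \big] \le \E\big[ F(X) \big]\cdot \E\big[ G(X) \big] = \E\big[ f(Z_S) \big]\cdot \E\big[ g(Z_T) \big],
\]
where the middle inequality is Definition~\ref{def:na} for $X_1, \dots, X_n$ (hypothesis (1)) applied to the disjoint sets $S_X, T_X$ with the non-decreasing functions $F, G$. This is exactly the inequality required of $Z$.

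There is no genuine obstacle here — this is a standard closure property — so the work is bookkeeping: checking that fixing some coordinates of a coordinatewise non-decreasing function leaves it non-decreasing in the rest (immediate from the monotonicity notion used in Definition~\ref{def:na}), verifying that $\E[f(Z_S)\mid X]$ is $\sigma(X_{S_X})$-measurable (the one place the independence of the two blocks is essential), and noting the degenerate cases where $S$ or $T$ touches only one block, in which case one of $F, G$ is constant and the corresponding step is trivial. Integrability is not a concern since negative association is stated for functions with finite expectations, so $f, g$, and hence $F, G$, may be assumed integrable.
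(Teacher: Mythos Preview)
Your argument is correct and is the standard conditioning proof of this closure property. Note, however, that the paper does not actually prove Lemma~\ref{lem:na-closure}: it is stated as background material in the preliminaries on negative association, with a reference to the expository article by \citet{Wajc:NA:2017}, so there is no paper proof to compare against.
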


\subsection{Two-Way SOCS}
\label{sec:two-way-socs-adwords}

Recall that the allocation of larger bids is the main challenge of the AdWords problem.
We say that an offline agent $j$ makes a \emph{large bid} for an online item of type $i$, or that the item gets a large bid from agent $j$, if the bid is at least two-thirds of the agent's budget, i.e., $b_{ij} \ge \frac{2}{3} B_j$.
Otherwise, we say agent $j$ makes a \emph{small bid} for the item.
We will use an idea similar to the first OCS by \citet{FahrbachHTZ:FOCS:2020} to put a mild negative correlation into the allocation of large bids, and select independently in the other steps.
The algorithm is as follows:

\bigskip

\begin{tcolorbox}
    \textbf{Two-Way SOCS for AdWords}\\[2ex]
    When an online vertex with two-way surrogate type $i \sim \{j, k\}$ arrives at time step $t$:
    \begin{enumerate}
        \item Pick $m \in \{j, k\}$ uniformly at random.
        \item If type $i$ gets a large bid from $m$, i.e., $b_{i m} \ge \frac{2}{3} B_m$, mark this online vertex with $m$.
        \item If this is the second online vertex marked with $m$, then make the opposite selection to the first one (w.r.t.\ $m$).
        \item Otherwise, select $j$ or $k$ uniformly at random.
    \end{enumerate}
\end{tcolorbox}

\bigskip

Conditioned on the realization of online types, we may interpret the algorithm by considering a graph whose nodes are the time steps.
For each offline agent $j$, the graph has a clique of arcs among the time steps that receive large bids from agent $j$, to indicate that every pair of these steps could be the first two steps marked with $j$, in which case the algorithm would make opposite selections w.r.t.\ agent $j$.
This graph corresponds to the \emph{ex-ante dependence graph} in the OCS literature for Display Ads~\cite{FahrbachHTZ:FOCS:2020, GaoHHNYZ:FOCS:2021}, although the definition is different from the counterparts therein because of the different structures in these two problems.

Every time step that receives a large bid from offline agent $j$ is marked with $j$ with probability half.
The algorithm introduces a negative correlation between the first two time steps marked with $j$ by making the opposite selections (w.r.t.\ agent $j$) in these two time steps.
Describing it in the language of dependence graphs, the algorithm constructs a randomized matching in the \emph{ex-ante} dependence graph by matching the first two time steps marked with $j$ for every offline agent $j$.
Then, the algorithm makes random but opposite selections in each pair of matched time steps, and independent random selections in the unmatched time steps.
This matching corresponds to the \emph{ex-post dependence graph} in the OCS literature.

\begin{theorem}
	\label{thm:two-way-socs-adwords}
	\textbf{Two-Way SOCS for AdWords} has convergence rate:
	\[
		g(y_j)
		\le
		e^{-y_j}
		\cdot
		\bigg( \frac{3}{4} + \frac{1}{4} \cdot \frac{1+\nicefrac{y_j}{4}}{e^{\nicefrac{y_j}{4}}}\bigg)^2
		~.
	\]
\end{theorem}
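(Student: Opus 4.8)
The plan is to fix an offline agent $j$ and bound the expected \emph{unused} fraction of its budget. Writing $X^t := b^t_j\cdot\mathbf 1[\text{step }t\text{ is allocated to }j]/B_j$ for the normalized bid that $j$ wins at step $t$ — random over both the realization of online types and the algorithm's internal coins — the goal is to show $\E\bigl[(1-\sum_t X^t)^+\bigr]\le e^{-y_j}\bigl(\tfrac34+\tfrac14(1+\tfrac{y_j}{4})e^{-y_j/4}\bigr)^2$, which is the claimed $g(y_j)$. I would begin from the two elementary facts that drive the analysis: \emph{(i)} since every large bid is at least $\tfrac23 B_j$, any two large wins saturate $j$'s budget, so the unused fraction is $0$ once $j$ has won two large bids, is at most $\tfrac13\bigl(1-\sum_{t:\,b^t_j<\frac23 B_j}X^t\bigr)^+$ when it has won exactly one (using $(\tfrac13-x)^+\le\tfrac13(1-x)^+$), and is $\bigl(1-\sum_{t:\,b^t_j<\frac23 B_j}X^t\bigr)^+$ when it has won none; and \emph{(ii)} the selection at every step carrying a small bid is a fresh independent fair coin, a step carrying a large bid for $j$ is marked with $j$ by an independent fair coin, and among the steps marked with $j$ the first two receive opposite selections. (Super-large bids $b^t_j\ge B_j$, each of which saturates $j$ on its own, only help and can be set aside, so we may take $X^t\le1$.)

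The two-way SOCS improves on independent rounding for two complementary reasons, both of which must be retained. First, a convexity-slack effect shared by all bids: because $(1-\sum_t a_t)^+\le\prod_t(1-a_t)$ for $a_t\in[0,1]$, one has $\E\bigl[(1-\sum_t X^t)^+\bigr]\le\E\bigl[\prod_t(1-X^t)\bigr]$, and since the $X^t$ are independent across small-bid steps this already beats $e^{-y_j}$, with the gap growing with the individual bid sizes. Second, the negative-correlation effect confined to large bids: by fact (ii), once at least two steps are marked with $j$ the agent is \emph{guaranteed} a large win, and a single marked step yields one with probability $\tfrac12$, so $\Pr{j\text{ wins }\le1\text{ large bid}}$ is much smaller than under independent rounding. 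I would fold both effects into a single bound that is a product of per-step factors, the large-bid factors sharpened by the marking analysis and the small-bid factors kept in product form (rather than relaxed to $e^{-\sum\beta_t/2}$) so as not to discard the convexity slack.

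To obtain the explicit envelope I would condition on the realization of online types; then the number $|M_j|$ of steps marked with $j$ is a sum of independent fair coins over the large-bid steps, and passing to mean-preserving Bernoulli surrogates (exactly the Bernoulli-ization used for the baseline in the preliminaries) reduces everything to a single scalar, the relevant mass expressed through $y_j$. The extremal instance is the continuum/Poisson limit, in which $|M_j|$ is Poisson; there $\Pr{|M_j|=0}$ and $\Pr{|M_j|=1}$ produce the characteristic $(1+\lambda)e^{-\lambda}$ shape, and optimizing over the distribution of bid sizes subject to the constraint that the contributions sum to $y_j$ collapses the estimate to the stated closed form — the weights $\tfrac34,\tfrac14$ tracing back to the marking coin (a large bid falls into the ``$m=j$'' half with probability $\tfrac12$ and is then lost by $j$ with probability $\tfrac12$), and the square emerging from this optimization. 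Throughout, negative association — Definition~\ref{def:na} with the basic forms in Lemma~\ref{lem:na-basic} and the closure property Lemma~\ref{lem:na-closure} — certifies that the cross-step dependencies created by the ``first two marked steps'' coupling only decrease the relevant probabilities, since the functions in question are coordinate-wise monotone.

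The main obstacle is this last step: carrying out the Poisson-limit computation and the optimization over bid-size distributions so that exactly the constants $\tfrac34,\tfrac14,\tfrac14$ and the square appear, rather than some weaker bound — in particular, combining the all-bid convexity slack with the large-bid negative correlation so that the resulting improvement is a function of the \emph{total} $y_j$ and not merely of its large-bid part. A secondary but genuinely delicate point is the negative-association bookkeeping: the large-bid selection indicators across different steps are only negatively associated, not independent, because of the first-two-marked-steps coupling, so one must verify coordinate-wise monotonicity of every function to which Definition~\ref{def:na} is applied, and check that the small-bid and large-bid parts can be decoupled (e.g.\ by the closure property applied to the per-step vectors, which are independent across steps).
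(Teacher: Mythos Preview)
Your high-level plan matches the paper's approach closely: split bids into large ($\ge\tfrac23 B_j$) and small, bound the small-bid contribution via a Bernoulli-to-$\tfrac23$ surrogate plus a Poisson tail, bound the large-bid contribution via the marking analysis (at least two marks $\Rightarrow$ one guaranteed win, whence the $\tfrac34$ factor), and decouple the two through negative association. These are precisely the ingredients the paper assembles in Lemmas~\ref{lem:adwords-decomposition}--\ref{lem:adwords-L1}.

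The gap is exactly where you place your ``main obstacle''. After decoupling, the paper arrives at
\[
e^{-y_j}\cdot\frac{1+y_S/2}{e^{\,y_S/2}}\cdot\Bigl(\tfrac34+\tfrac14\cdot\frac{1+y_{L2}/2}{e^{\,y_{L2}/2}}\Bigr),
\qquad y_S+y_{L2}=y_j,
\]
and the two factors have \emph{different} shapes --- no Poisson-limit computation or optimization over bid sizes produces a square directly. The paper's resolution is a deliberate \emph{relaxation} to symmetrize: since $(1+x)e^{-x}\le 1$, one has $\frac{1+y_S/2}{e^{\,y_S/2}}\le \tfrac34+\tfrac14\cdot\frac{1+y_S/2}{e^{\,y_S/2}}$, which makes the two factors formally identical. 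One then checks (Appendix~\ref{app:adwords-convergence-function}) that $\log\bigl(\tfrac34+\tfrac14(1+x)e^{-x}\bigr)$ is concave in $x$, so by Jensen the symmetrized product is maximized at $y_S=y_{L2}=y_j/2$, and the square drops out. This symmetrize-then-Jensen step is the specific idea your proposal is missing; without it you will get a bound depending separately on $y_S$ and $y_{L2}$ but not the stated closed form in $y_j$.

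One smaller point: you assert that the $X^t$ are independent across small-bid steps, but in a general two-way instance the coupling can propagate through the \emph{other} agent's marks (a step with a small bid for $j$ may carry a large bid for $k$ and hence be marked with $k$). The paper first reduces to a worst-case instance (Lemma~\ref{lem:adwords-two-way-worst-case}) in which every agent $k\ne j$ appears in at most one step, so the only remaining correlation is that induced by $j$'s own marks; you would need this reduction before your independence and negative-association bookkeeping goes through.
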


This is better than the baseline convergence rate $e^{-y_j}$ because $1 + x \le e^x$ for any $x \ge 0$, with strict inequality for any $x \ne 0$.

The following analysis will directly consider an instance with both one-way and two-way surrogate types, so that the resulting bound will be more useful in the analysis of general SOCS and OCS for AdWords.
For an online item with one-way type $i \sim j$, the algorithm has no choice but to allocate it to agent $j$.
By contrast, the algorithm allocates items with two-way types according to the selections of the two-way SOCS.

Recall the definition of $y_j$ for the AdWords problem and that $\mass^t_{i \sim j}$ and $\mass^t_{i \sim \{j, k\}}$ are the probabilities of realizing one-way type $i \sim j$ and two-way type $i \sim \{j, k\}$ respectively at time step $t$.
For an instance with one-way and two-way surrogate types, the expression of $y_j$ simplifies to:
\[
	y_j = \sum_{t=1}^T \sum_{i \in I} \bigg( \mass^t_{i \sim j} + \frac{1}{2} \sum_{k \ne j} \mass^t_{i \sim \{j, k\}} \bigg) \cdot \frac{b_{ij}}{B_j} 
	~.
\]

The rest of the subsection will always focus on a fixed offline agent $j$.
Hence, we will suppress subscript $j$ in the following notations for simplicity.
We will further normalize its budget to be $B_j = 1$ without loss of generality.

Let $L$ and $S$ be the subsets of online types that receive large and small bids respectively from agent $j$.
We will refer to them as the large and small surrogate types respectively from now on.
\[
	L = \Big\{ i \in I : b_{ij} \ge \frac{2}{3} \Big\}
	\quad,\qquad
	S = \Big\{ i \in I : b_{ij} < \frac{2}{3} \Big\}
	~.
\]

Consider the following three kinds of contributions to $y_j$:
\begin{align*}
	y_{L1}^t
	&
	= \sum_{i \in L} \mass^t_{i \sim j} \cdot b_{ij}
	~, \\
	y_{L2}^t
	&
	= \frac{1}{2} \sum_{i \in L} \sum_{k \ne j} \, \mass^t_{i \sim \{j,k\}} \cdot b_{ij}
	~, \\
	y_{S}^t
	&
	= \sum_{i \in S} \bigg( \mass^t_{i \sim j} + \frac{1}{2} \sum_{k \ne j} \, \mass^t_{i \sim \{j,k\}} \bigg) \cdot b_{ij}
	~.
\end{align*}

That is, the large one-way surrogate types $i \sim j$ contribute $y^t_{L1}$ at time $t$.
Similarly, the large two-way surrogate types $i \sim \{j,k\}$ contribute $y^t_{L2}$.
Finally, the small surrogate types, including both one-way and two-way types, contribute $y^t_{S}$.
For the latter two, we further define:
\[
	y_{L2} = \sum_{t=1}^T y^t_{L2}
	\quad,\quad
	y_{S} = \sum_{t=1}^T y^t_{S}
	~.
\]

By definition, we have:
\[
	y_j = \sum_{t=1}^T y_{L1}^t + y_{L2} + y_{S}
	~.
\]

\medskip

The main result of this subsection is the next lemma, which will be useful in the proof of Theorem~\ref{thm:two-way-socs-adwords} and analysis of general SOCS for AdWords.

\begin{lemma}
	\label{lem:two-way-socs-adwords}
	The expected unspent fraction of agent $j$'s budget for the allocation selected by the \textbf{Two-Way SOCS for AdWords} is upper bounded by both:
	\[
		\prod_{t \in [T]} \big( 1 - y^t_{L1} - y^t_{L2} - y^t_S \big)
		~,
	\]
	and:
	\[
		e^{-y_j}
		\cdot
		\bigg( \frac{3}{4} + \frac{1}{4} \cdot \frac{1+\nicefrac{y_{L2}}{2}}{e^{\nicefrac{y_{L2}}{2}}}\bigg)
		\cdot
		\frac{1 + \nicefrac{y_S}{2}}{e^{\nicefrac{y_S}{2}}}
		\cdot
		\prod_{t=1}^T e^{y_{L1}^t} \big( 1 - y_{L1}^t \big)
		~.
	\]
\end{lemma}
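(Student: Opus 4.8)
The plan is to fix the offline agent $j$, normalize $B_j = 1$, and analyze the random amount of budget consumed, split along the three categories already defined in the statement: write $B_{L1}$, $B_{L2}$, $B_S$ for the (random) total bid allocated to $j$ from large one-way surrogate types, large two-way surrogate types, and small surrogate types respectively, so that the unspent budget is $(1 - B_{L1} - B_{L2} - B_S)^+$ and we must bound its expectation. I will also use the standard AdWords normalization that $b_{ij}\le B_j$ (capping a bid at the budget changes nothing), so that the bid allocated to $j$ at any single step always lies in $[0,1]$.

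\emph{The first bound} follows the baseline second-order stochastic dominance argument, now applied to the surrogate instance rather than the original one. Replacing the bid allocated at each step by a Bernoulli variable $Y^t$ of the same mean only increases the expected unspent budget, and $(1-\sum_t Y^t)^+ = \prod_t (1-Y^t)$. Here the $Y^t$ are not independent — the two-way SOCS correlates the first two time steps marked with any fixed agent — but their only dependence is the ``exactly one of the first two marked-with-$j$ steps selects $j$'' pairing, so after conditioning on the realized online types and the markings the selection indicators for $j$ decompose into independent blocks that are each negatively associated by Lemma~\ref{lem:na-basic} (singletons, or a pair of binary variables summing to $1$), hence the whole family is negatively associated by Lemma~\ref{lem:na-closure}. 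Negative association then gives $\mathbf{E}\big[\prod_t(1-Y^t)\big]\le\prod_t\big(1-\mathbf{E}[Y^t]\big)=\prod_t\big(1-y^t_{L1}-y^t_{L2}-y^t_S\big)$, where the last equality uses Lemma~\ref{lem:type-decomposition-probability} to identify $\mathbf{E}[Y^t]$ with $y^t_{L1}+y^t_{L2}+y^t_S$; one has to check that the outer expectation over types and markings does not spoil the inequality, which is routine but should be spelled out.

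\emph{The second (refined) bound} is obtained by bounding the unspent budget by a product of three per-category ``survival'' factors and then decoupling them via negative association. Using $(1-a-b-c)^+\le (1-a)^+(1-b)^+(1-c)^+$ for $a,b,c\ge 0$ and conditioning on the realized online types, the conditional expected unspent budget is at most the product of the conditional expectations of $(1-B_{L1})^+$, $(1-B_{L2})^+$, $(1-B_S)^+$, since given the types the three budgets are functions of disjoint independent coin sets, each family negatively associated by Lemmas~\ref{lem:na-basic}--\ref{lem:na-closure}. I would then bound the three factors as follows. For the large one-way types, the baseline Jensen bound gives $\mathbf{E}\big[(1-B_{L1})^+\big]\le\prod_t(1-y^t_{L1})$, which after extracting the $e^{-\sum_t y^t_{L1}}$ part of $e^{-y_j}$ contributes exactly the factor $\prod_t e^{y^t_{L1}}(1-y^t_{L1})$ in the statement. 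For the small types, every small bid is strictly below $\tfrac23$, so two small allocations already exhaust the budget; by an extreme-point argument the worst case at fixed mean is the limit in which all small bids equal $\tfrac23$ and arrive by a Poisson-like process of total rate $\tfrac32 y_S$, for which $\mathbf{E}\big[(1-\tfrac23 N)^+\big]=\Pr[N=0]+\tfrac13\Pr[N=1]=e^{-3y_S/2}(1+y_S/2)=e^{-y_S}\cdot\frac{1+y_S/2}{e^{y_S/2}}$ (with the $\le$ direction for genuine, not-quite-Poisson arrivals supplied again by negative association). For the large two-way types — the heart of the proof — each such type involving $j$ is marked with $j$ with probability half, the first two marked-with-$j$ steps make opposite selections so exactly one of them consumes a large bid for $j$, and once more two large selections exhaust the budget; conditioning on the number of marked-with-$j$ large two-way steps and analysing the induced randomized matching on the \emph{ex-ante} dependence graph (as in the OCS of \citet{FahrbachHTZ:FOCS:2020}) yields the familiar split $\tfrac34 + \tfrac14\cdot\frac{1+y_{L2}/2}{e^{y_{L2}/2}}$ between the ``at most one marked step'' regime and the ``at least two marked steps, so the budget is almost surely consumed'' regime, giving $\mathbf{E}\big[(1-B_{L2})^+\big]\le e^{-y_{L2}}\big(\tfrac34+\tfrac14\cdot\frac{1+y_{L2}/2}{e^{y_{L2}/2}}\big)$. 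Multiplying the three factors and using $e^{-y_j}=e^{-\sum_t y^t_{L1}}e^{-y_{L2}}e^{-y_S}$ reassembles the claimed expression.

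The main obstacle is the large-two-way factor: one must carry out the dependence-graph analysis of the two-way SOCS carefully and simultaneously keep track of budget consumption across all three categories, which is delicate precisely because the large-bid threshold $\tfrac23$ is chosen so that \emph{any} second large selection — whether from a one-way, a two-way, or a small allocation — pushes $j$ over budget; this is why the categories have to be decoupled exactly through negative association rather than bounded by a crude union-type argument, and why one must verify that a small bid for $j$ that happens to sit inside some other agent $k$'s large-bid pairing still fits into a negatively associated block from $j$'s viewpoint. A secondary technical point is justifying the ``worst case $=$ all bids at the threshold, arriving Poisson-like'' reduction for the small category, an extreme-point argument for $\mathbf{E}\big[(1-\sum_t\beta_t\,\mathrm{Bernoulli}(p_t))^+\big]$ at fixed mean with bids bounded by $\tfrac23$.
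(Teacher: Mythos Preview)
Your high-level structure matches the paper's: decouple the three categories via negative association, then bound each separately, with the $L1$ and $S$ factors handled exactly as you describe. Two genuine gaps sit inside the plan.

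First, the claim that conditional on types the three budgets depend on disjoint coin sets is false as stated. A small-for-$j$ two-way type $i\sim\{j,k\}$ with $b_{ik}\ge\tfrac23 B_k$ can be marked with $k$ and land in $k$'s pairing, so its selection for $j$ is determined by coins tied to another step that may itself be large-for-$j$; the three categories then entangle. The paper resolves this not by a finer negative-association argument but by a \emph{worst-case reduction} (Lemma~\ref{lem:adwords-two-way-worst-case}): replace every $k\ne j$ by $T$ per-step copies, which deletes all pairing arcs due to other agents while, by Lemma~\ref{lem:adwords-removing-arc}, only weakly increasing agent $j$'s expected unspent budget. After this reduction your disjoint-coin claim becomes literally true (Lemma~\ref{lem:adwords-disjoint-randomness}) and the decoupling goes through cleanly.

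Second, your $L2$ mechanism is wrong. ``At least two steps marked with $j$ $\Rightarrow$ budget almost surely consumed'' is false: the pairing guarantees exactly \emph{one} large selection from the first two marked steps, spending at least $\tfrac23$ but leaving up to $\tfrac13$. The constant $\tfrac34$ is not a consumption probability but the worst-case ratio
\[
\frac{1-\tfrac12 b_1-\tfrac12 b_2}{(1-\tfrac12 b_1)(1-\tfrac12 b_2)}\le\frac34\qquad\text{for }b_1,b_2\ge\tfrac23,
\]
i.e.\ the paired survival factor is at most $\tfrac34$ of the unpaired product. This yields the conditional bound $\bigl(\tfrac34+\tfrac14\tfrac{|R_{L2}|+1}{2^{|R_{L2}|}}\bigr)\prod_{(t,i)\in R_{L2}}(1-\tfrac12 b_{ij})$ of Lemma~\ref{lem:decoupling-L2}. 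Turning this into the unconditional $e^{-y_{L2}}\bigl(\tfrac34+\tfrac14\tfrac{1+y_{L2}/2}{e^{y_{L2}/2}}\bigr)$ is the real work (Lemma~\ref{lem:adwords-L2}): the paper writes the expectation over $R_{L2}$ via a generating function $h(x)=\prod_t\bigl(1-\mass^t_{L2}+x(\mass^t_{L2}-y^t_{L2})\bigr)$, expresses the target as $\tfrac34 h(1)+\tfrac14\bigl(h(\tfrac12)+\tfrac12 h'(\tfrac12)\bigr)$, and then runs a step-by-step hybrid together with $2y_{L2}\le \mass_{L2}\le 3y_{L2}$ to reach the closed form. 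Your sketch does not contain this computation, and the ``budget consumed'' heuristic would, if followed literally, give a bound strictly stronger than what actually holds.
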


The first bound is weakly better than the baseline convergence rate $e^{-y_j}$ due to $1-y \le e^{-y}$.
The second bound is also weakly better than the baseline because the three terms following $e^{y_j}$, related to $y_{L2}$, $y_S$, and $y_{L1}^t$'s respectively, are all at most $1$.
Further, if either $y_{L2}$ or $y_S$ was bounded away from $0$, then the above bound would be strictly better than the baseline convergence rate.
For example, this would be the case if we consider an instance with only two-way types and $y_j > 0$.

However, the above bounds would degenerate to the baseline bound $e^{-y_j}$ if $y_{L2} = y_S = 0$ and $y_{L1}^t$ was infinitesimally small at all time steps $t \in [T]$.
Ruling out such scenarios will be the main challenge in the subsequent applications of the theorem.

\paragraph{Roadmap.}
We devote the rest of this subsection to proving Lemma~\ref{lem:two-way-socs-adwords} and Theorem~\ref{thm:two-way-socs-adwords}.
We decompose this long proof into several parts and present them in separate subsubsections as follows.

The first part defines how we represent different sources of randomness, including the realization of online types and the internal randomness of the SOCS algorithm.
It further characterizes the worst-case scenario from an offline agent $j$'s point of view, simplifying the representation of randomness.
In essence, this part prepares the mathematical notations that we will use to prove Lemma~\ref{lem:two-way-socs-adwords} and Theorem~\ref{thm:two-way-socs-adwords}.

The second part is a decomposition lemma that separates the contributions from $y_S$, $y_{L2}$, and $y_{L1}^t$'s and expresses them as three expectations.
The proof of the decomposition crucially uses how we represent the sources of randomness in the first part.

The third part derives the upper bounds of these expectations.
These bounds correspond to the three terms in Lemma~\ref{lem:two-way-socs-adwords}.

Finally, we explain how to combine the first three parts to prove Theorem~\ref{thm:two-way-socs-adwords} and Lemma~\ref{lem:two-way-socs-adwords}.

\subsubsection{Representation of Randomness}

By definition, \textbf{Two-Way SOCS for AdWords} relies on two sets of randomness in its first and fourth steps.
Further, the realization of online types is also stochastic. 
We represent these sources of randomness by three sets of random variables.

\begin{enumerate}
	\item \emph{Realization of Online Types:~}
		Let $R^t_{i \sim j}$ and $R^t_{i \sim \{j, k\}}$ be the indicators for realizing an online vertex of surrogate types $i \sim j$ and $i \sim \{j, k\}$ respectively at time step $t$.
	\item \emph{Marks of Time Steps:~} 
		Let $M^t_{i \sim \{j,k\}}$ be independent and uniform over $\{0, 1\}$.
		It determines the choice of $m$ in the first step, \emph{at time $t$ and when the online type is $i \sim \{j,k\}$}.
	\item \emph{Choices of Agents:~}
		Let $C^t_{i \sim \{j,k\}}$ be independent and uniform over $\{0, 1\}$.
		It determines the choice of $j$ or $k$ in the fourth step, \emph{at time $t$ and when the online type is $i \sim \{j,k\}$}.
\end{enumerate}

Importantly, we use different random bits for different surrogate types to decide the marks in the first step, and the choices in the fourth step.
This treatment will be useful in the proof of the decomposition lemma in the next subsubsection.

\bigskip

Next, we establish the relevant properties of this representation of randomness.

\begin{lemma}
	\label{lem:adwords-negative-association}
	Random variables $R^t_{i \sim j}$ and $R^t_{i \sim \{j, k\}}$ are negatively associated.
\end{lemma}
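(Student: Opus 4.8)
The plan is to identify the random variables $R^t_{i \sim j}$ and $R^t_{i \sim \{j,k\}}$ as the indicators of a random process that assigns exactly one surrogate type to each time step, and then invoke the closure property of negative association across independent time steps. First I would fix a time step $t \in [T]$ and look at the family of indicators $\{R^t_{i \sim j}\}$ and $\{R^t_{i \sim \{j,k\}}\}$ ranging over all surrogate types. By the definition of \textbf{Type Decomposition}, at time $t$ the realized online type $i$ is drawn from $\Distribution^t$, and then the sampled offset $\eta$ deterministically selects which surrogate type (one-way $i \sim j$ or two-way $i \sim \{j,k\}$) is produced; exactly one surrogate type is realized at time $t$. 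Hence within time step $t$ these indicators are binary random variables summing to at most $1$ (at most, rather than exactly, to accommodate the dummy agent $\perp$ and the convention that ``no pair arrives'' in some step). By the second basic form in Lemma~\ref{lem:na-basic}, the indicators at a single time step $t$ are negatively associated.

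Second, I would observe that the randomness driving the realizations at different time steps is independent: the item at each time $t$ is drawn independently from $\Distribution^t$, and the offset $\eta$ in \textbf{Type Decomposition} is sampled afresh and independently at each step. Therefore the blocks of indicators for different $t$ are mutually independent, and each block is negatively associated by the previous paragraph. Applying Lemma~\ref{lem:na-closure} (the closure of negative association under taking the union of an independent family of negatively associated blocks), iterated over $t = 1, \dots, T$, yields that the full collection $\{R^t_{i \sim j}, R^t_{i \sim \{j,k\}}\}_{t \in [T]}$ is negatively associated, which is exactly the claim.

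The one point requiring a little care — and the main obstacle, though it is minor — is the within-step argument: one must be precise that, as a function of the shared randomness $(i, \eta)$ at time $t$, the surrogate-type indicators really are binary variables with sum at most $1$, so that Lemma~\ref{lem:na-basic}(2) applies cleanly rather than needing an ad hoc permutation-distribution argument. This follows because \textbf{Type Decomposition} outputs a single surrogate type: for a fixed realized online type $i$ and fixed $\eta$, exactly one of ``$i \sim j$'' (when $j = k$ in the notation of the box) or ``$i \sim \{j,k\}$'' (when $j \ne k$) occurs, and all the other indicators for that step are $0$; summing over the realization of $i$ and $\eta$ preserves the ``binary, sum $\le 1$'' structure at the level of the joint distribution at step $t$. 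With that in hand, the rest is a direct citation of Lemmas~\ref{lem:na-basic} and~\ref{lem:na-closure}, so the proof is short.
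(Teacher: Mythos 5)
Your proof is correct and follows exactly the same two-step route as the paper: within each time step the surrogate-type indicators are binary and sum to one (or at most one, with the dummy), so Lemma~\ref{lem:na-basic}(2) applies, and then Lemma~\ref{lem:na-closure} combines the independent time-step blocks. The extra care you take to justify the ``sum $\le 1$'' structure is a harmless elaboration of the paper's one-line observation.
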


\begin{proof}
	For any time step $t$, the random variables are binary and sum to $1$.
	Hence, they are of the second basic form of negative association (Lemma~\ref{lem:na-basic}).
	Further, the random variables for different time steps are independent by definition.
	Hence, the joint distribution also satisfies negative association (Lemma~\ref{lem:na-closure}).
\end{proof}

Let random variables $Y^t_S$, $Y^t_{L2}$, and $Y^t_{L1}$ denote the contributions of small surrogate types, large two-way surrogate types, and large one-way surrogate types respectively, at time step $t$.
In other words, $Y^t_S = b_{ij}$ if (1) $i \in S$, (2) time step $t$ realizes a surrogate type $i \sim j$ or $i \sim \{j,k\}$, and in the latter case, if (3) the two-way SOCS selects $j$;
$Y^t_S = 0$ otherwise.
Similarly, $Y^t_{L2} = b_{ij}$ if (1) $i \in L$, (2) time step $t$ realizes a surrogate type $i \sim \{j, k\}$, and (3) the two-way SOCS selects $j$;
$Y^t_{L2} = 0$ otherwise.
Finally, $Y^t_{L1} = b_{ij}$ if (1) $i \in L$ and (2) time step $t$ realizes a surrogate type $i \sim j$; $Y^t_{L1} = 0$. otherwise.
We do not need the third condition for large one-way types because the SOCS algorithm must select $j$.

Then, offline agent $j$'s expected unused budget can be written as:
\begin{equation}
	\label{eqn:adwords-unused-budget}
	\E \bigg[\: \bigg( 1 - \sum_{t = 1}^T \big( Y^t_S + Y^t_{L2} + Y^t_{L1} \big) \bigg)^+ \:\bigg]
	~.
\end{equation}

The next lemma characterizes the worst-case scenario for offline agent $j$'s convergence rate.

\begin{lemma}
	\label{lem:adwords-removing-arc}
	Conditioned on any realization of online types, Equation~\eqref{eqn:adwords-unused-budget} would weakly increase if we remove an arc in the \emph{ex-post} dependence graph.
\end{lemma}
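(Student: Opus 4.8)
The plan is to fix the realization of online types and, for convenience, all the mark variables $M^t_{i\sim\{j,k\}}$; then the \emph{ex-post} dependence graph is a deterministic matching among the time steps, and since the statement is an inequality between conditional expectations, averaging over the marks afterward preserves it. Pick one arc $e=(t_1,t_2)$ of this matching, say matched via agent $\ell$ (so $t_1<t_2$ are the first two steps marked with $\ell$). A marked step is always a two-way step, so the pairs realized at $t_1$ and $t_2$ are $\{\ell,a_1\}$ and $\{\ell,a_2\}$ for some agents $a_1,a_2$; write $b_1,b_2\ge 0$ for the amounts of budget that $j$ would collect if it were the selected agent at $t_1$, $t_2$ respectively (so $b_s=0$ unless $j\in\{\ell,a_s\}$). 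Because each step carries at most one mark, $t_1$ and $t_2$ each lie in at most one arc; hence if we additionally condition on all choice variables $C^{t'}_{\cdot}$ with $t'\notin\{t_1,t_2\}$, every selection outside $\{t_1,t_2\}$ is determined, and so is agent $j$'s total contribution $B:=\sum_{t\notin\{t_1,t_2\}}\big(Y^t_S+Y^t_{L2}+Y^t_{L1}\big)$ from the other steps. Removing $e$ changes nothing except the joint law of the two selections at $t_1$ and $t_2$: with the arc they are opposite with respect to $\ell$ (the first being the fair coin $C^{t_1}$, the second its mirror image, and $C^{t_2}$ unused), whereas without the arc both are independent fair coins. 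It therefore suffices to show that the conditional expectation $\E\big[(1-B-Y^{t_1}-Y^{t_2})^+\big]$ weakly increases under this change of law.

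There are two cases. If agent $j$ can be selected in at most one of $t_1,t_2$, then in that step $j$ is selected with probability exactly $\tfrac12$ with or without the arc, and the other step contributes $0$ either way; so the law of $(Y^{t_1},Y^{t_2})$, and hence the conditional expectation, is unchanged. Otherwise $j$ can be selected in both steps, which happens precisely when $\ell=j$ (the pairs being $\{j,a_1\},\{j,a_2\}$) or when both realized pairs equal $\{\ell,j\}$; in either situation ``opposite with respect to $\ell$'' is the same as ``exactly one of $t_1,t_2$ selects $j$''. Thus with the arc, $Y^{t_1}+Y^{t_2}$ equals $b_1$ or $b_2$, each with probability $\tfrac12$; without the arc, $Y^{t_1}\in\{0,b_1\}$ and $Y^{t_2}\in\{0,b_2\}$ are independent fair coins, so $Y^{t_1}+Y^{t_2}$ is supported on $\{0,b_1,b_2,b_1+b_2\}$, each with probability $\tfrac14$.

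To finish, use that $f(x):=(1-B-x)^+$ is convex, being the maximum of the two affine functions $1-B-x$ and $0$. In the second case the ``without arc'' law of $Y^{t_1}+Y^{t_2}$ is a mean-preserving spread of the ``with arc'' law: both have mean $\tfrac12(b_1+b_2)$, and they agree except that probability $\tfrac14$ is moved off each of $b_1,b_2$ onto $0$ and $b_1+b_2$; since $0\le b_1,b_2\le b_1+b_2$, convexity gives $f(0)+f(b_1+b_2)\ge f(b_1)+f(b_2)$, hence $\E\big[(1-B-Y^{t_1}-Y^{t_2})^+\big]$ weakly increases. Averaging this inequality back over the conditioned choice variables, and then over the marks, proves Lemma~\ref{lem:adwords-removing-arc}. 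The part that requires care is not any individual estimate but the bookkeeping: one must verify that deleting a single arc leaves the contributions of every other time step — and every other arc — untouched, which is exactly why it is convenient to condition on the marks and on all choices outside $\{t_1,t_2\}$ before making the comparison.
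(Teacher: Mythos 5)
Your proposal is correct and follows essentially the same route as the paper's proof: condition on everything except the two selections in the arced steps, observe that the with‑arc law of the pair's contribution to agent $j$ is uniform on $\{b_1,b_2\}$ while the without‑arc law is uniform on $\{0,b_1,b_2,b_1+b_2\}$ (a mean‑preserving spread), and conclude by convexity of $(1-x)^+$ via Jensen. The only difference is cosmetic: you spell out the casework distinguishing (i) $j$ selectable in at most one of the two steps (law unchanged) from (ii) $j$ selectable in both, which happens either because the marking agent is $j$ or because both pairs are $\{\ell,j\}$; the paper compresses this into the single WLOG sentence ``both two-way types involve agent $j$.'' Your extra bookkeeping — fixing marks and out-of-pair choices so that the comparison is cleanly between two fixed two-point vs.\ four-point laws — is exactly the intent of the paper's terser write-up.
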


\begin{proof}
	Consider removing an arc, say, between time steps $t$ and $t'$.
	Without loss of generality, we may consider the case when the online types at time steps $t$ and $t'$ are both two-way surrogate types that involve agent $j$; 
	otherwise, removing the arc does not affect the conditional expectation.
	Let $i \sim \{j, k\}$ and $i' \sim \{j, k'\}$ be the surrogate types at time steps $t$ and $t'$.
	With the arc, the contribution of these two time steps to the summation in \eqref{eqn:adwords-unused-budget} is uniform over support $\{ b_{ij}, b_{i'j} \}$.
	With the arc removed, the contribution becomes uniform over support $\{ 0, b_{ij}, b_{i'j}, b_{ij} + b_{i'j} \}$.
	The claim now follows by the Jensen's inequality and the convexity of $(1-x)^+$.	
\end{proof}

\begin{lemma}
	\label{lem:adwords-two-way-worst-case}
	Given any instance and any offline agent $j$, there is another instance with the same value of $y_j$ for agent $j$, but \textbf{Two-Way SOCS for AdWords} spends a weakly smaller portion of agent $j$'s budget in expectation.
	Further, in the new instance, each offline agent $k \ne j$ can be in the realized two-way surrogate type in at most one time step.
\end{lemma}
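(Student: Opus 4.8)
The plan is to produce the new instance by ``splitting'' every neighbour of $j$ that participates in a two-way surrogate type into fresh, single-use copies. Concretely, for every time step $t \in [T]$, every online type $i$, and every agent $k \ne j$ with $\mass^t_{i \sim \{j,k\}} > 0$, I would introduce a brand new offline agent $k_{t,i}$ and replace the surrogate type $i \sim \{j,k\}$ at time $t$ by $i \sim \{j, k_{t,i}\}$, keeping the same realization probability and the same bid $b_{ij}$ of agent $j$, and giving $k_{t,i}$ an arbitrary budget-additive value function (e.g.\ a copy of $k$'s budget and bids, or all-zero). The one-way surrogate types $i \sim j$ and all surrogate types not involving $j$ are left untouched (the latter are irrelevant to agent $j$ and may even be discarded, which only increases the ``no arrival'' mass). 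Since at each time step exactly one surrogate type is realized and each fresh agent $k_{t,i}$ can appear only in the single slot $(t,i)$, in every realization of the new instance each agent other than $j$ lies in a realized two-way surrogate type at most once, as required. The value $y_j$ is unchanged: in $y_j = \sum_t \sum_i \big( \mass^t_{i\sim j} + \frac{1}{2} \sum_{k} \mass^t_{i \sim \{j,k\}} \big) b_{ij}$, the bids $b_{ij}$ and the probabilities $\mass^t_{i \sim j}, \mass^t_{i \sim \{j,k\}}$ are merely relabelled, never modified.

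It then remains to show that \textbf{Two-Way SOCS for AdWords} spends the same (hence, weakly smaller) fraction of agent $j$'s budget on the new instance. Using the representation of randomness from the previous subsubsection, I would couple the two runs so that whenever the original realizes $i \sim \{j,k\}$ at time $t$, the new one realizes $i \sim \{j, k_{t,i}\}$ with the same mark bit $M^t$ and choice bit $C^t$ for that slot. Under this coupling, every quantity that governs agent $j$'s realized spending is identical in the two runs: whether a step carrying a large two-way type involving $j$ is marked with $j$ depends only on the slot's mark bit and on whether $i \in L$, i.e.\ on $b_{ij}$; which two such steps form agent $j$'s matched pair in the \emph{ex-post} dependence graph is then determined; and the opposite-selection coin on that pair is the slot's choice bit. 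Consequently the per-step contributions $Y^t_S$, $Y^t_{L2}$, $Y^t_{L1}$ to agent $j$ have the same joint distribution in both instances, so the expected unused budget \eqref{eqn:adwords-unused-budget} is equal; in particular it does not increase. If one wants to further simplify the ex-post dependence graph, one may additionally invoke Lemma~\ref{lem:adwords-removing-arc} to delete arcs, which only weakly increases \eqref{eqn:adwords-unused-budget} and accounts for the ``weakly smaller'' (rather than ``equal'') wording.

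The main obstacle is the verification in the second paragraph that nothing about agent $j$'s realized spending depends on the identity of the partner $k$ in a two-way type. This is intuitively clear — agent $k$ enters the algorithm only through step 3 applied with $m = k$, which affects $k$'s own allocation and $k$'s ex-post edge, both invisible to $j$, since $j$'s ex-post edge inspects only the steps marked with $j$ — but making it rigorous requires carefully exploiting that the marks $M^t_{i \sim \{j,k\}}$ and the choices $C^t_{i \sim \{j,k\}}$ are \emph{per-slot} random bits (as set up in the Representation of Randomness) rather than per-agent bits, and that the large-bid test in step 2 for the choice $m = j$ reads off $b_{ij}$, which the construction preserves. Once the coupling is set up with this level of care, equality of the two expected unused budgets is a routine term-by-term comparison.
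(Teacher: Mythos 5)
Your construction (replacing every $k \neq j$ appearing in a two-way type with a fresh single-use copy) matches the paper's, but your main argument — that the coupling makes agent $j$'s per-step contributions $Y^t_S, Y^t_{L2}, Y^t_{L1}$ identically distributed in both instances — is false, and this is a real gap, not just a matter of rigor. The claim that ``agent $k$ enters the algorithm only through step 3 applied with $m = k$, which affects $k$'s own allocation and $k$'s ex-post edge, both invisible to $j$'' overlooks the case where a step $t$ with two-way type $\{j,k\}$ is marked with $k$ \emph{and} is the second step marked with $k$: the selection at $t$ is then forced to be opposite to the first $k$-marked step, and since the pair at $t$ is $\{j,k\}$, ``not $k$'' means ``$j$''. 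Thus $k$'s ex-post edge \emph{does} govern whether $j$ receives the item at $t$. Concretely, take two steps both of type $\{j,k\}$ with large bids from both $j$ and $k$: in the original instance, with probability $\frac{1}{2}$ (both marked $j$ or both marked $k$) exactly one of them selects $j$, so $\Pr[\text{$j$ gets exactly one}] = \frac{3}{4}$; after splitting $k$ into $k_{t_1}$ and $k_{t_2}$, only the both-marked-$j$ event ($\frac{1}{4}$) forces anti-correlation, giving $\Pr[\text{$j$ gets exactly one}] = \frac{5}{8}$. The distributions are not the same — the split instance has strictly less of the negative correlation that helps $j$.

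Consequently, the modification does \emph{not} preserve Equation~\eqref{eqn:adwords-unused-budget}; it weakly increases it, and the mechanism by which it does so is precisely Lemma~\ref{lem:adwords-removing-arc}. Your proposal treats that lemma as an optional cleanup step ``if one wants to further simplify the ex-post dependence graph,'' whereas it is in fact the entire content of the proof: splitting $k$ into per-step copies deletes (under the natural coupling of marks and choice bits) exactly the ex-post arcs owed to agents other than $j$, and Lemma~\ref{lem:adwords-removing-arc}, applied to each such deleted arc, shows the unused-budget expression weakly increases. That is the paper's argument. To repair your write-up, drop the claim of equality, make the coupling of marks/choices explicit, observe that under this coupling the new ex-post dependence graph is always a subgraph of the original (all arcs due to $k \neq j$ disappear; those due to $j$ are unchanged), and then invoke Lemma~\ref{lem:adwords-removing-arc} arc by arc. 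Note also that your ``all-zero'' variant for $k_{t,i}$'s budget is not allowed as stated (the model requires $B_k > 0$), and would in any case change the marking in step 2; the clean choice is to carry over $k$'s budget and bids.
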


\begin{proof}
	We will modify the instance by making $T$ copies of each offline agent $k \ne j$, one for each time step, and changing the realization of the online types $i \sim \{j, k\}$ at each time step $t \in [T]$ to a new online type with the same $i$ and $j$ but changing $k$ to be the corresponding copy for that time step.
	By doing so, we remove all arcs in the dependence graphs except those due to agent $j$.
	By Lemma~\ref{lem:adwords-removing-arc}, Equation~\eqref{eqn:adwords-unused-budget} weakly increases after we remove these arcs.
\end{proof}

The rest of the argument will assume the conclusion of Lemma~\ref{lem:adwords-two-way-worst-case}.
In that case, the offline vertices other than $j$ do not affect the argument at all.
Hence, we will abuse notation and only refer to the relevant two-way surrogate types as $i \sim \{j, *\}$.
As a result, we can merge the two-way surrogate types with the same original type $i$, and the corresponding random variables will also be simplified as $R^t_{i \sim \{j, *\}}$, $M^t_{i \sim \{j, *\}}$, and $C^t_{i \sim \{j, *\}}$.

\smallskip
We will write:
\[
	R_S = \Big\{ R^t_{i \sim j}, R^t_{i \sim \{j, *\}} : i \in S \Big\}
	\quad,\quad 
	R_{L2} = \Big\{ R^t_{i \sim \{j, *\}} : i \in L \Big\}
	\quad,\quad 	
	R_{L1} = \Big\{ R^t_{i \sim j} : i \in L \Big\}	
\]
for the random variables for the realization of small types, large two-way types, and large one-way types respectively.
Define $M_{L2}, C_S, C_{L2}$ similarly.

We remark that defining $M_{L1}$ and $M_S$ would be redundant because the algorithm would not mark a time step that realizes a small or one-way surrogate type.
Similarly, defining $C_{L1}$ would be redundant because the algorithm has no choice but to allocate to the only agent for one-way surrogate types.

With this representation of randomness, $Y^t_S$, $Y^t_{L2}$, and $Y^t_{L1}$ depend on disjoint subsets of randomness, which will be useful in the proof of the decoupling lemma in the next part.
This is why we let $M^t$ and $C^t$ depend on the realization of surrogate types.
The next lemma summarizes the subsets of random variables that they depend on. 

\begin{lemma}
	\label{lem:adwords-disjoint-randomness}
	We have the following relation between the random variables $Y^t_S$, $Y^t_{L2}$, $Y^t_{L1}$ and the sources of randomness:
	\begin{itemize}
	\item $Y^t_S$ only depends on $R_S$ and $C_S$;
	\item $Y^t_{L2}$ only depends on $R_{L2}$, $M_{L2}$, and $C_{L2}$; and
	\item $Y^t_{L1}$ only depends on $R_{L1}$.
	\end{itemize}	
\end{lemma}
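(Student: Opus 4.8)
The plan is to verify the three assertions one at a time, by tracing through the four steps of \textbf{Two-Way SOCS for AdWords} together with the rule that a one-way surrogate type $i \sim j$ bypasses the two-way SOCS entirely (there is no arrival in the two-way instance, so the item is simply handed to its unique agent $j$). Throughout I would keep in mind the two structural facts that make the bookkeeping go through: one-way types are never marked, since the two-way SOCS never sees them; and, under the reduction of Lemma~\ref{lem:adwords-two-way-worst-case}, every agent $k \ne j$ occurs in at most one realized two-way surrogate type.

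The easiest case is $Y^t_{L1}$. By definition $Y^t_{L1} = b_{ij}$ exactly when some large one-way type $i \sim j$ with $i \in L$ is realized at step $t$, that is $R^t_{i \sim j} = 1$, and $Y^t_{L1} = 0$ otherwise. Since the algorithm is forced to allocate such an item to $j$, no mark bit or choice bit is ever consulted, so $Y^t_{L1}$ is a deterministic function of $\{ R^t_{i \sim j} : i \in L \} \subseteq R_{L1}$.

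Next I would handle $Y^t_S$. A nonzero value requires a small surrogate type at step $t$, which is read off from $R_S$. If it is a small one-way type, the value is $b_{ij}$ by force. If it is a small two-way type $i \sim \{j, *\}$ with $i \in S$, then the bid $b_{ij}$ is small, so step~2 never marks this vertex \emph{with} $j$; and if step~2 happens to mark it with the other agent, that agent occurs in only this one two-way type, so the vertex is never a \emph{second} mark and step~3 does not fire. Hence step~4 applies and the selection is the fresh uniform bit $C^t_{i \sim \{j, *\}} \in C_S$, giving $Y^t_S = b_{ij}\,\mathbf{1}\big[C^t_{i \sim \{j, *\}}\text{ selects }j\big]$, a function of $R_S$ and $C_S$ alone.

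The case $Y^t_{L2}$ is the one I expect to be the main obstacle, because it is the only one where step~3 is genuinely in play. A nonzero value requires a large two-way type $i \sim \{j, *\}$, $i \in L$, at step $t$ (an event determined by $R_{L2}$) and the two-way SOCS to select $j$; so the task is to show that this last event is measurable with respect to $R_{L2}, M_{L2}, C_{L2}$. Step~1 marks the vertex with $j$ precisely when $M^t_{i \sim \{j, *\}}$ picks $j$, using that the bid from $j$ is large. Whether this vertex is the first or the second vertex marked with $j$ is decided by the realizations and mark bits of the \emph{earlier} large two-way types only — earlier one-way types never reach the two-way SOCS, and earlier small two-way types are never marked with $j$ — so this count is a function of $R_{L2}$ and $M_{L2}$. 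If the vertex is not marked with $j$, or is the first vertex marked with $j$, step~4 selects $j$ via the uniform bit $C^t_{i \sim \{j, *\}} \in C_{L2}$; if it is the second vertex marked with $j$, the selection is the opposite of the first such vertex, whose own selection was a uniform bit $C^{t'}_{i' \sim \{j, *\}} \in C_{L2}$ with $i' \in L$. In every case $Y^t_{L2}$ is a function of $R_{L2}, M_{L2}, C_{L2}$, which completes the verification.
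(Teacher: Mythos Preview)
Your proposal is correct and takes essentially the same approach as the paper, which treats the lemma as immediate from the construction (the paper provides no explicit proof, merely remarking that this is why $M^t$ and $C^t$ are indexed by surrogate type). Your verification is more detailed than the paper's treatment, correctly identifying the two structural facts---Lemma~\ref{lem:adwords-two-way-worst-case} ensuring each $k \ne j$ appears at most once, and small types never being marked with $j$---that make the case analysis go through.
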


\subsubsection{Decoupling Three Types of Contributions}

Let us first introduce some notations to simplify the exposition of the decoupling lemma and our subsequent analysis. 
We will abuse notation and write $(t,i) \in R_{L1}$ (respectively, $(t,i) \in R_{L2}$) if $i \in L$ and $R^t_{i \sim j} = 1$ (respectively, $R^t_{i \sim \{j, *\}} = 1$), i.e., if the online vertex at time step $t$ has a large one-way type $i \sim j$ (respectively, a large two-way type $i \sim \{j, *\}$).
We further write the number of time steps that realize large two-way surrogate types as:
\[
	\big| R_{L2} \big| = \sum_{t=1}^T \sum_{i \in L} R^t_{i \sim \{j,*\}}
	~.
\]

\begin{lemma}[Decoupling Lemma]
	\label{lem:adwords-decomposition}
	We have:
	\begin{align*}
		\eqref{eqn:adwords-unused-budget}
		&
		~\le~
		\E \bigg[ \bigg( 1 - \sum_{t=1}^T Y^t_S \bigg)^+ \bigg]
		\cdot
		\E \bigg[ \Big( \frac{3}{4} + \frac{1}{4} \frac{|R_{L2}|+1}{2^{|R_{L2}|}} \Big) \prod_{(t,i) \in R_{L2}} \Big( 1 - \frac{1}{2} b_{ij} \Big) \bigg]
		\cdot
		\E \bigg[ \bigg( 1 - \sum_{t=1}^T Y^t_{L1} \bigg)^+ \bigg]
		~.
	\end{align*}
\end{lemma}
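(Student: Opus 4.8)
The plan is to condition on the realization $R=(R_S,R_{L1},R_{L2})$ of all surrogate types, split the three contributions apart using an elementary inequality together with the conditional independence recorded in Lemma~\ref{lem:adwords-disjoint-randomness}, bound the large‑two‑way factor by a deterministic monotone function of $R_{L2}$, and finally recombine the three factors using the negative association of the realization indicators from Lemma~\ref{lem:adwords-negative-association}. In the first step, write $a=\sum_t Y^t_S$, $b=\sum_t Y^t_{L2}$, $c=\sum_t Y^t_{L1}$, all nonnegative. The elementary bound $(1-u-v)^+\le(1-u)^+(1-v)^+$ for $u,v\ge0$ (check the cases $1-u-v\ge0$ and $1-u-v<0$), applied twice, gives $(1-a-b-c)^+\le(1-a)^+(1-b)^+(1-c)^+$. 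Once $R$ is fixed, Lemma~\ref{lem:adwords-disjoint-randomness} says $(1-a)^+$ depends only on $C_S$, $(1-b)^+$ depends only on $(M_{L2},C_{L2})$, and $c$ is a constant; since $C_S$ is independent of $(M_{L2},C_{L2})$, the first two factors are conditionally independent given $R$, so
\[
\E\!\left[(1-a-b-c)^+\mid R\right]\le\E\!\left[(1-a)^+\mid R\right]\cdot\E\!\left[(1-b)^+\mid R\right]\cdot(1-c)^+.
\]

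The second step handles the large‑two‑way factor, and this is where the threshold $\tfrac23$ is used. Fix a realization of $R_{L2}$, let $n=|R_{L2}|$, and let $b_1,\dots,b_n$ be the corresponding bids, so $b_p\in[\tfrac23,1]$ (normalizing $B_j=1$). First relax $(1-b)^+\le\prod_t(1-Y^t_{L2})=\prod_t(1-b_tS_t)$, where $S_t\in\{0,1\}$ indicates that the two‑way SOCS allocates step $t$ to $j$ (valid by the same case split $\sum_t Y^t_{L2}\le1$ versus $>1$). After the worst‑case reduction of Lemma~\ref{lem:adwords-two-way-worst-case}, a large two‑way step is marked with $j$ exactly when the algorithm's first coin picks $j$, so the marked set $\mathcal M$ is a uniformly random subset of $[n]$, and conditioning on $\mathcal M$ leaves the choice variables untouched. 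Conditioned on $\mathcal M$: if $|\mathcal M|\le1$ the $S_t$ are independent $\mathrm{Bernoulli}(\tfrac12)$; if $|\mathcal M|\ge2$ the two earliest marked steps $m_1<m_2$ satisfy $S_{m_2}=1-S_{m_1}$ with $S_{m_1}\sim\mathrm{Bernoulli}(\tfrac12)$, while all other $S_t$ are independent $\mathrm{Bernoulli}(\tfrac12)$ and independent of $(S_{m_1},S_{m_2})$. Hence $\E[\prod_t(1-b_tS_t)\mid\mathcal M]$ equals $\prod_p(1-\tfrac{b_p}2)$ when $|\mathcal M|\le1$ and $\prod_p(1-\tfrac{b_p}2)-\tfrac14\,b_{m_1}b_{m_2}\prod_{p\ne m_1,m_2}(1-\tfrac{b_p}2)$ when $|\mathcal M|\ge2$. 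Averaging over $\mathcal M$, using $b_p\ge1-\tfrac{b_p}2$ (which holds precisely because $b_p\ge\tfrac23$) so that $b_{m_1}b_{m_2}\prod_{p\ne m_1,m_2}(1-\tfrac{b_p}2)\ge\prod_p(1-\tfrac{b_p}2)$ for each such $\mathcal M$, and noting there are $2^n-n-1$ subsets with $|\mathcal M|\ge2$, gives
\[
\E\!\left[(1-b)^+\mid R\right]\le\Big(\tfrac34+\tfrac14\,\tfrac{n+1}{2^n}\Big)\prod_{p=1}^n\Big(1-\tfrac{b_p}2\Big)=:\tilde g(R_{L2}),
\]
a deterministic and coordinate‑wise non‑increasing function of $R_{L2}$.

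In the third step, substituting this into the display above and taking expectation over $R$ gives
\[
\eqref{eqn:adwords-unused-budget}\le\E_R\!\left[\,f(R_S)\cdot\tilde g(R_{L2})\cdot h(R_{L1})\,\right],\qquad f(R_S):=\E\!\left[(1-a)^+\mid R\right],\ \ h(R_{L1}):=(1-c)^+,
\]
where $f$ is a function of $R_S$ only and $h$ of $R_{L1}$ only. Each of $f,\tilde g,h$ is nonnegative and coordinate‑wise non‑increasing ($f$ and $h$ because realizing one more surrogate type only increases the corresponding budget usage), and they are functions of the pairwise disjoint groups $R_S,R_{L2},R_{L1}$. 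Since the whole family of realization indicators is negatively associated (Lemma~\ref{lem:adwords-negative-association}), two applications of Definition~\ref{def:na} yield $\E[f\tilde g h]\le\E[f]\,\E[\tilde g]\,\E[h]$, and the three factors are exactly $\E[(1-\sum_t Y^t_S)^+]$, $\E\big[\big(\tfrac34+\tfrac14\tfrac{|R_{L2}|+1}{2^{|R_{L2}|}}\big)\prod_{(t,i)\in R_{L2}}(1-\tfrac12 b_{ij})\big]$, and $\E[(1-\sum_t Y^t_{L1})^+]$, which is the claim.

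I expect Step~2 to be the main obstacle. The two crucial observations there are that one should relax $(1-b)^+$ to the product $\prod_t(1-Y^t_{L2})$ before taking any expectation — otherwise computing $\E[(1-b)^+\mid R]$ as a sum over the events "at most one large bid is allocated to $j$'' becomes considerably messier and breaks the symmetry one wants — and that the inequality $b\ge1-b/2$, which is exactly $b\ge\tfrac23$, is precisely what makes the negative correlation injected by the opposite selections pay for the factor $\tfrac34+\tfrac14\tfrac{n+1}{2^n}$. The rest is routine bookkeeping plus the standard decoupling and negative‑association machinery already set up in the excerpt.
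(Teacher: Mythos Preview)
Your proposal is correct and follows essentially the same route as the paper's proof: relax $(1-\sum)^+$ via the product inequality, bound the $L2$ factor conditional on the realization using the $b\ge\tfrac23$ threshold (your Step~2 is exactly the content of the paper's Lemma~\ref{lem:decoupling-L2}), and then decouple the three groups via the negative association of the realization indicators. The only cosmetic differences are that the paper splits off $Y_{L2}$ first and then $Y_S$ from $Y_{L1}$ in two stages (rather than all three at once), and phrases the key inequality in Step~2 as $1-\tfrac{b_{m_1}+b_{m_2}}{2}\le\tfrac34(1-\tfrac{b_{m_1}}{2})(1-\tfrac{b_{m_2}}{2})$ rather than your equivalent $b_p\ge 1-\tfrac{b_p}{2}$.
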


\begin{proof}
	First, we relax Equation~\eqref{eqn:adwords-unused-budget} to be:
	\begin{equation}
		\label{eqn:adwords-decompose-large-small-bids}	
		\E \bigg[ \bigg( 1 - \sum_{t = 1}^T \big( Y^t_S + Y^t_{L1} \big) \bigg)^+ \cdot \bigg( 1 - \sum_{t = 1}^T Y^t_{L2} \bigg)^+  \:\bigg]
	\end{equation}
	
	By Lemma~\ref{lem:adwords-disjoint-randomness}, the first part is independent to $M_{L2}$ and $C_{L2}$.
	
	We will next bound the expectation of the second part over the realization of $M_{L2}$ and $C_{L2}$, presented below as a standalone lemma, and deferring its proof to the end.

    \begin{lemma}
        \label{lem:decoupling-L2}
        Fix any realization of the sources of randomness other than $M_{L2}, C_{L2}$.
        We have:
        \[
            \E_{M_{L2}, C_{L2}} \bigg[ \bigg( 1 - \sum_{t = 1}^T Y^t_{L2} \bigg)^+ \:\bigg]
            ~\le~
            \Big( \frac{3}{4} + \frac{1}{4} \frac{|R_{L2}|+1}{2^{|R_{L2}|}} \Big) \prod_{(t,i) \in R_{L2}} \Big( 1 - \frac{1}{2} b_{ij} \Big)
            ~.
        \]
    \end{lemma}

	Putting it back to Equation~\eqref{eqn:adwords-decompose-large-small-bids}, it is at most:
	\[
		\E \bigg[ \bigg( 1 - \sum_{t = 1}^T \big( Y^t_S + Y^t_{L1} \big) \bigg)^+  \cdot \Big( \frac{3}{4} + \frac{1}{4} \frac{|R_{L2}|+1}{2^{|R_{L2}|}} \Big) \prod_{(t,i) \in R_{L2}} \Big( 1 - \frac{1}{2} b_{ij} \Big) \:\bigg]
		~,
	\]
	with the expectation taken over the random realization of $R_{L1}$, $R_{L2}$, $R_S$, and $C_S$.
	
	Next, we consider the following two functions:
	\begin{align*}
		f(R_S, R_{L1})
		&
		~=~ \E_{C_S} \bigg[ \bigg( 1 - \sum_{t = 1}^T \big( Y^t_S + Y^t_{L1} \big) \bigg)^+ \:\bigg] 
		~,
		\\
		g(R_L)
		&
		~=~ \Big( \frac{3}{4} + \frac{1}{4} \frac{|R_{L2}|+1}{2^{|R_{L2}|}} \Big) \prod_{(t,i) \in R_{L2}} \Big( 1 - \frac{1}{2} b_{ij} \Big)
		~.
	\end{align*}

	These two functions depend on disjoint subsets of variables, and are both non-increasing.
	By the definition of negative association (Definition~\ref{def:na}) and that $R_S, R_{L1}, R_{L2}$ are negatively associated (Lemma~\ref{lem:adwords-negative-association}), we have:
	\[
		\E \big[ f(R_S, R_{L1}) \cdot g(R_{L2}) \big] ~\le~ \E \big[ f(R_S, R_{L1}) \big] \cdot \E \big[ g(R_{L2}) \big]
		~.
	\]
	
	In other words, we have successfully decoupled the contribution from the large two-way surrogate types from the rest, bounding the probability by:
	\[
		\E \bigg[ \bigg( 1 - \sum_{t = 1}^T \big( Y^t_S + Y^t_{L1} \big) \bigg)^+ \:\bigg] 
		~\cdot~
		\E \bigg[ \Big( \frac{3}{4} + \frac{1}{4} \frac{|R_{L2}|+1}{2^{|R_{L2}|}} \Big) \prod_{(t,i) \in R_{L2}} \Big( 1 - \frac{1}{2} b_{ij} \Big) \bigg]
		~.
	\]
	
	Finally, we use the same method to decouple the contributions from the small surrogate types and large one-way surrogate types.
	We bound the first part above by:
	\[
		\E \bigg[ \bigg( 1 - \sum_{t = 1}^T Y^t_S \bigg)^+ \cdot \bigg( 1 - \sum_{t=1}^T Y^t_{L1} \bigg)^+ \:\bigg]
		~.
	\]
	
	Consider two functions:
	\begin{align*}
		f(R_S) & ~=~ \E_{C_S} \bigg[\: \bigg( 1 - \sum_{t = 1}^T Y^t_S \bigg)^+ \:\bigg] ~, \\
		g(R_{L1}) & ~=~ 1 - \sum_{t=1}^T Y^t_{L1} = 1 - \sum_{(t,i) \in R_{L1}} b_{ij} ~.
	\end{align*}
	
	These two functions depend on disjoint subsets of variables, and are both non-increasing.
	By the definition of negative association and that $R_S, R_{L1}$ are negatively associated, we have:
	\begin{align*}
		\E \big[ f(R_S) \cdot g(R_{L1}) \big]
		&
		~\le~ \E \big[ f(R_S) \big] \cdot \E \big[ g(R_{L1}) \big] \\[1ex]
		&
		~=~ \E \bigg[ \bigg( 1 - \sum_{t = 1}^T Y^t_S \bigg)^+ \:\bigg] \cdot \E \bigg[ \bigg( 1 - \sum_{t = 1}^T Y^t_{L1} \bigg)^+ \:\bigg] 
		~.
	\end{align*}
	
	This finishes the proof of the lemma.
\end{proof}

\begin{proof}[Proof of Lemma~\ref{lem:decoupling-L2}]
	By definition, the SOCS algorithm marks each time step in $R_{L2}$ with agent $j$ independently with probability half.
	
    With probability $\frac{|R_{L2}|+1}{2^{|R_{L2}|}}$, only zero or one time step in $R_{L2}$ is marked with $j$.
	In this case, the realization of $Y^t_{L2}$ is as follows.
	If the surrogate type at time step $t$ is \emph{not} a large two-way surrogate type, $Y^t_{L2} = 0$ with certainty.
	Otherwise, i.e., if there exists an online type $i \in L$ such that $(t,i) \in R_{L2}$, $Y^t_{L2}$ distributes independently and uniformly over $\{0, b_{ij}\}$.
	Without loss of generality, we may assume that the SOCS algorithm selects agent $j$ in such a time step $t$ if $C^t_{i \sim \{j,*\}} = 1$, and selects the other agent if $C^t_{i \sim \{j,*\}} = 0$.
	Then, we can bound the expectation by:
	\begin{align*}
		\E_{C_{L2}} \bigg[ \bigg( 1 - \sum_{t = 1}^T Y^t_{L2} \bigg)^+ \:\bigg]
		&
		~\le~ 
		\E_{C_{L2}} \bigg[ \prod_{t=1}^T\big( 1 - Y^t_{L2} \big) \:\bigg] \\
		&
		~=~
		\E_{C_{L2}} \bigg[ \prod_{(t,i) \in R_{L2}} \big( 1 -C^t_{i \sim \{j,*\}} \cdot b_{ij} \big) \:\bigg] \\
		&
		~=~ 
		\prod_{(t,i) \in R_{L2}} \Big( 1 - \frac{1}{2} b_{ij} \Big)
		~.
	\end{align*}
	
	With probability $1 - \frac{|R_{L2}|+1}{2^{|R_{L2}|}}$, the algorithm marks at least two time steps in $R_{L2}$ with $j$.
	By definition, the SOCS algorithm selects oppositely w.r.t.\ agent $j$ in the first two of these steps.
	Let these two steps and the online types therein be $(t_1,i_1)$ and $(t_2,i_2)$.
	Then, either $Y^{t_1}_{L2} = b_{i_1j}$ and $Y^{t_2}_{L2} = 0$, or $Y^{t_1}_{L2} = 0$ and $Y^{t_2}_{L2} = b_{i_2j}$, each with probability a half.
	Hence, we will bound the expectation by:
	\begin{align*}
		\E_{C_{L2}} \bigg[ \bigg( 1 - \sum_{t = 1}^T Y^t_{L2} \bigg)^+ \:\bigg]
		&
		~\le~ \E_{C_{L2}} \bigg[ \big( 1-  Y^{t_1}_{L2} - Y^{t_2}_{L2} \big) \prod_{t \ne t_1, t_2}^T\big( 1 - Y^t_{L2} \big) \:\bigg]\\
		&
		~=~ \E_{C_{L2}} \Big[ 1-  Y^{t_1}_{L2} - Y^{t_2}_{L2} \Big] \cdot \E_{C_{L2}} \bigg[ \prod_{t \ne t_1, t_2}^T\big( 1 - Y^t_{L2} \big) \:\bigg]
		~.
	\end{align*}
	
	The first part is at most:
	\begin{align*}
		\E_{C_{L2}} \Big[ 1-  Y^{t_1}_{L2} - Y^{t_2}_{L2} \Big]
		&
		= 1 - \frac{1}{2} b_{i_1j} - \frac{1}{2} b_{i_2j} \\[1ex]
		&
		\le \frac{3}{4} \Big( 1 - \frac{1}{2} b_{i_1j} \Big) \Big( 1 - \frac{1}{2} b_{i_2j} \Big)
		~,
	\end{align*}
	where the inequality follows by the assumption that these online types get large bids from offline agent $j$, i.e., $b_{i_1j}, b_{i_2j} \ge \frac{2}{3}$.
	
	The second part is at most:
	\[
		\prod_{(t, i) \in R_{L2} : t \ne t_1, t_2} \Big( 1 - \frac{1}{2} b_{ij} \Big)
		~.
	\]
	
	Combining the two parts yields an upper bound that is smaller than the first bound by a $\frac{3}{4}$ factor, i.e.:
	\[
		\frac{3}{4} \prod_{(t,i) \in R_{L2}} \Big( 1 - \frac{1}{2} b_{ij} \Big)
		~.
	\]
	
	Combining the two cases, we get the inequality of the lemma:
	\[
		\E_{M_{L2}, C_{L2}} \bigg[ \bigg( 1 - \sum_{t=1}^T Y^t_{L2} \bigg)^+ \:\bigg]
		~\le~ 
		\E \bigg[ \Big( \frac{3}{4} + \frac{1}{4} \frac{|R_{L2}|+1}{2^{|R_{L2}|}} \Big) \prod_{(t,i) \in R_{L2}} \Big( 1 - \frac{1}{2} b_{ij} \Big) \bigg]
		~.
	\]   
\end{proof}

\subsubsection{Upper Bounds for Three Types of Contribution}

Given the decomposition in Lemma~\ref{lem:adwords-decomposition}, we will next bound these three kinds of contributions in the following three lemmas.

\begin{lemma}
	\label{lem:adwords-S}
	The contribution from the small surrogate types is upper bounded as follows:
	\[
		\E \bigg[ \bigg( 1 - \sum_{t=1}^T Y^t_S \bigg)^+ \: \bigg]
		~\le~ 
		e^{-\frac{3}{2} y_{S}} \Big( 1 + \frac{y_{S}}{2} \Big)
		~.
	\]
\end{lemma}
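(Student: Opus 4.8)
The plan is to bound the expected unused budget restricted to small surrogate types by a two-step argument. First I would pass from the truncated positive part $\bigl(1-\sum_t Y^t_S\bigr)^+$ to the auxiliary Bernoulli random variables described in the baseline analysis for AdWords: for each time step $t$, let $Z^t$ be a $\{0,1\}$-valued variable that equals $1$ with probability $Y^t_S$ (using the fact that $Y^t_S$ is a nonnegative random variable supported on $[0,\tfrac{2}{3}]$ — it comes from a small bid). By Jensen's inequality applied to the convex function $(1-x)^+$, conditioning on the realization of the online types, $\E\bigl[(1-\sum_t Y^t_S)^+\bigr]\le\E\bigl[(1-\sum_t Z^t)^+\bigr]=\Pr[\sum_t Z^t\le 1]$, where the last step uses that the $Z^t$'s are Bernoulli so the positive part only takes values $0$ or $1$. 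This converts the question into estimating the probability that at most one "small success" occurs.

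Second, I would separate the two contributions hidden inside $Y^t_S$: the small one-way surrogate types and the small two-way surrogate types. For the small one-way types the allocation is deterministic (the item must go to $j$), while for the small two-way types the two-way SOCS selects $j$ with probability exactly $\tfrac12$, independently across the relevant two-way types (after the reduction in Lemma~\ref{lem:adwords-two-way-worst-case}, distinct two-way types use independent choice bits $C^t_{i\sim\{j,*\}}$, and by Lemma~\ref{lem:adwords-disjoint-randomness} $Y^t_S$ depends only on $R_S$ and $C_S$). Using negative association of the realization indicators $R_S$ (Lemma~\ref{lem:adwords-negative-association}) together with the independence of the $C_S$ bits, the collection of contributions across time steps is negatively associated, so the product form $\prod_t(1-Y^t_S)$ dominates $\bigl(1-\sum_t Y^t_S\bigr)^+$ in expectation, and we may factorize:
\[
	\E\bigg[\prod_{t=1}^T \bigl(1-Y^t_S\bigr)\bigg]
	~\le~ \prod_{t=1}^T \E\bigl[1-Y^t_S\bigr]
	~=~ \prod_{t=1}^T \bigl(1-y^t_S\bigr)\,,
\]
wait — this only recovers the baseline $e^{-y_S}$. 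To get the stronger bound with the extra factor $(1+\tfrac{y_S}{2})$ I instead keep the "$\le 1$ successes" structure. The key numeric fact is: for any nonnegative values $p_1,\dots,p_n$ with $\sum p_t = y_S$ and each $p_t\le\tfrac23$, writing the probability of at most one success of independent(ish) Bernoulli$(p_t)$'s,
\[
	\Pr\Bigl[\textstyle\sum_t Z^t\le 1\Bigr]
	~=~ \prod_t (1-p_t) \;+\; \sum_t p_t\prod_{t'\ne t}(1-p_{t'})
	~\le~ e^{-y_S}\Bigl(1+\sum_t \tfrac{p_t}{1-p_t}\Bigr)\,,
\]
and since $p_t\le\tfrac23$ gives $\tfrac{p_t}{1-p_t}\le 3p_t$... this overshoots. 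The right inequality to use is the one tailored to the $b_{ij}<\tfrac23$ threshold that makes $e^{-\frac32 y_S}(1+\tfrac{y_S}{2})$ come out: bound each factor $1-Y^t_S$ (over the randomness of $C_S$ and $R_S$) via $\E[1-Y^t_S]\le e^{-\frac32 y^t_S}$ is false, so instead I would argue at the level of the whole product-plus-one-term expression, replacing each $(1-p_t)$ by $e^{-\frac32 p_t}$ and each "one success" term by $\tfrac12 p_t \cdot e^{-\frac32 (y_S-p_t)}$ using the small-bid bound, then summing. Concretely: $\Pr[\sum Z^t\le 1]\le e^{-\frac32 y_S} + \sum_t \tfrac12\cdot(\tfrac32 p_t)\cdot e^{-\frac32 y_S}\cdot(\text{correction})$, which collapses to $e^{-\frac32 y_S}(1+\tfrac{y_S}{2})$ after checking the per-time-step inequality $1-p \le e^{-\frac32 p}$ fails for small $p$; so the genuine mechanism must be the concavity/flattening trick used elsewhere in the paper (as in the $h(z)$ argument of Theorem~\ref{thm:socs-vertex-weighted}), integrating a flattened single-success density.

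The main obstacle will therefore be establishing the right per-step elementary inequality and the bookkeeping that turns "$\le 1$ small success" into the closed form $e^{-\frac32 y_S}(1+\tfrac{y_S}{2})$ — in particular correctly exploiting the threshold $b_{ij}<\tfrac23$ (equivalently that each per-step contribution is at most $\tfrac23$, so that even a single realized small bid consumes at most two-thirds of the budget and the "at most one success" event is the only one that matters). I expect the cleanest route is: dominate $\bigl(1-\sum_t Y^t_S\bigr)^+$ by the expectation of $\prod_t(1-Y^t_S) + \sum_t Y^t_S\prod_{t'\ne t}(1-Y^{t'}_S)$ (valid since truncation at $0$ only loses mass when two or more successes occur, and with small bids two successes already drive the sum past $1$ only when... this needs care), then factorize using negative association, then apply $1-x\le e^{-x}$ in the product and $x\le x$ trivially in the linear term, and finally bound $\sum_t y^t_S e^{\frac32 y^t_S}$-type quantities by $\tfrac{y_S}{2}$ via the flattened-function argument already used in Section~\ref{sec:unweighted-socs}. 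The remaining steps are routine calculus.
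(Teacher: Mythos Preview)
Your proposal circles around the right intuition (``at most one small-bid success still leaves budget unused'') but never lands on the mechanism that produces the $\frac{3}{2}$ in the exponent, and contains a concrete error along the way.

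First, the error: with your $Z^t\in\{0,1\}$ Bernoullis, $\bigl(1-\sum_t Z^t\bigr)^+$ equals $1$ when $\sum_t Z^t=0$ and $0$ otherwise, so its expectation is $\Pr[\sum_t Z^t=0]$, not $\Pr[\sum_t Z^t\le 1]$. This collapses your first step to the baseline $\prod_t(1-y^t_S)\le e^{-y_S}$, which you correctly note is not enough. Your subsequent attempts (negative association factorization, the $\prod+\sum$ upper bound, the ``flattened-function'' remark) all work at the scale $e^{-y_S}$ and give no route to $e^{-\frac{3}{2}y_S}$; the ``routine calculus'' you defer to would have to conjure an extra $e^{-\frac{1}{2}y_S}$ factor out of nothing.

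The missing idea is to exploit $Y^t_S\le\frac{2}{3}$ \emph{in the value}, not just as a side constraint. The paper replaces each $Y^t_S$ by a variable $\bar Y^t_S$ taking value $\frac{2}{3}$ with probability $\frac{3}{2}Y^t_S$ (and $0$ otherwise); this is a mean-preserving spread supported on $\{0,\frac{2}{3}\}$, so Jensen for $(1-x)^+$ goes the right way. Now a single success leaves residual $\frac{1}{3}$ rather than $0$, and the success probability is $\frac{3}{2}y^t_S$ rather than $y^t_S$---this is exactly where both the $\frac{3}{2}$ in the exponent and the $\frac{y_S}{2}$ linear term come from. A further Poisson domination (couple $\frac{3}{2}\bar Y^t_S$ with $\mathrm{Poisson}(\frac{3}{2}y^t_S)$) makes $\sum_t$ into a single $\mathrm{Poisson}(\frac{3}{2}y_S)$, and then $\E\bigl(1-\frac{2}{3}N\bigr)^+=e^{-\frac{3}{2}y_S}\cdot 1+\frac{3}{2}y_S\,e^{-\frac{3}{2}y_S}\cdot\frac{1}{3}=e^{-\frac{3}{2}y_S}(1+\frac{y_S}{2})$ drops out immediately. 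Your $\{0,1\}$ Bernoullis throw away the crucial $\frac{1}{3}$ residual, which is why none of your downstream manipulations can recover the bound.
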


\begin{proof}
	By the definition of $Y^t_S$, we have:
	\[
		\E \, Y^t_S = y^t_S
		~.
	\]
	
	Next, consider auxiliary Bernoulli random variables $\bar{Y}^t_S$, $t \in [T]$, obtained by first drawing $Y^t_S$ and then letting $\bar{Y}^t_S = \frac{2}{3}$ with probability $\frac{3}{2} Y^t_S$ and $\bar{Y}^t_S = 0$ otherwise.
	Here we use the fact that $0 \le Y^t_S \le \frac{2}{3}$, as it is either $0$ of the contribution from a small surrogate type.
	By definition:
	\[
		\E\,\bar{Y}_S^t = \E\,Y_S^t = y_S^t
		~.
	\]
	
	Further, by the convexity of function $(1-x)^+$ and Jensen's inequality, we have:
	\[
		\mathbf{E}\,\bigg( 1 - \sum_{t=1}^T Y_S^t \bigg)^+ ~\le~ \mathbf{E}\,\bigg( 1 - \sum_{t=1}^T \bar{Y}_S^t \bigg)^+
		~.
	\]
	
	Finally, consider independent Poisson random variables $Z^t$ with parameter $\frac{3}{2} \cdot \mathbf{E}\,Y_S^t = \frac{3}{2} y^t_S$ for every time step $t \in [T]$, coupled with $\bar{Y}_S^t$ such that $Z^t = 0$ whenever $\bar{Y}_S^t = 0$.
	By definition:
	\[
		\frac{2}{3} \cdot \E\,Z^t = \E\,\bar{Y}_S^t = \E\, Y_S^t = y_S^t
		~.
	\]
	
	By the convexity of function $(1-x)^+$, we have:
	\begin{equation}
		\label{eqn:adwords-small-bids-poisson}
		\mathbf{E}\,\bigg( 1 - \sum_{t=1}^T \bar{Y}_S^t \bigg)^+ ~\le~ \mathbf{E}\,\bigg( 1 - \frac{2}{3} \sum_{t=1}^T Z^t \bigg)^+
		~.
	\end{equation}
	
	This expectation on the right is easy to compute because $\sum_{t=1}^T Z^t$ is a Poisson random variable with parameter:
	\[
		\frac{3}{2} \sum_{t=1}^T y_S^t ~=~ \frac{3}{2} y_S
		~.
	\]
	
	By the definition of Poisson random variables, the right-hand-side of Equation~\eqref{eqn:adwords-small-bids-poisson} equals $1$ with probability $e^{-\frac{3}{2} y_{S}}$, and $\frac{1}{3}$ with probability $\frac{3}{2} y_{S} \cdot e^{-\frac{3}{2}y_{S}}$.
	Thus, the expectation equals:
	\[
		e^{-\frac{3}{2} y_{S}} \cdot 1 + \frac{3}{2} y_{S} \cdot e^{-\frac{3}{2} y_{S}} \cdot \frac{1}{3} = e^{-\frac{3}{2} y_{S}} \big( 1 + \frac{y_{S}}{2} \big)
		~.
	\]
\end{proof}

\begin{lemma}
	\label{lem:adwords-L2}
	The contribution of large two-way types is upper bounded by:
	\[
		\E \bigg[ \Big( \frac{3}{4} + \frac{1}{4} \frac{|R_{L2}|+1}{2^{|R_{L2}|}} \Big) \prod_{(t,i) \in R_{L2}} \Big( 1 - \frac{1}{2} b_{ij} \Big) \bigg]
		\le e^{-y_{L2}} \bigg( \frac{3}{4} + \frac{1}{4} \frac{1+\nicefrac{y_{L2}}{2}}{e^{\nicefrac{y_{L2}}{2}}}\bigg)
		~.
	\]
\end{lemma}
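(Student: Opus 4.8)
The plan is to split the bracketed factor as $\tfrac34+\tfrac14(|R_{L2}|+1)2^{-|R_{L2}|}$ and to bound the two resulting expectations separately against $\tfrac34 e^{-y_{L2}}$ and $\tfrac14\big(1+\tfrac{y_{L2}}{2}\big)e^{-3y_{L2}/2}$, whose sum is exactly the claimed upper bound. Throughout I would work under the structure guaranteed by Lemma~\ref{lem:adwords-two-way-worst-case}: in each time step $t$ at most one large two-way surrogate type is realized, so, writing $p^t=\sum_{i\in L}\mass^t_{i\sim\{j,*\}}$ for the probability that some large two-way type arrives at step $t$, the contributions of distinct time steps are independent, and within a step every relevant expectation becomes explicit by conditioning on which (if any) type $i\in L$ arrives. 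I would also assume, as is standard for AdWords, that $b_{ij}\le B_j=1$, so that $i\in L$ means $b_{ij}\in(\tfrac23,1]$.

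The first piece is easy: $\prod_{(t,i)\in R_{L2}}\big(1-\tfrac12 b_{ij}\big)=\prod_t W_t$ with $W_t=\prod_{i\in L}\big(1-\tfrac12 b_{ij}\big)^{R^t_{i\sim\{j,*\}}}$, and independence across $t$ together with the one-step identity $\E[W_t]=1-y^t_{L2}$ gives $\E\big[\prod_{(t,i)\in R_{L2}}(1-\tfrac12 b_{ij})\big]=\prod_t(1-y^t_{L2})\le e^{-\sum_t y^t_{L2}}=e^{-y_{L2}}$.

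The core of the argument is the second piece. Put $V_t=\prod_{i\in L}\big(\tfrac12(1-\tfrac12 b_{ij})\big)^{R^t_{i\sim\{j,*\}}}$ and $N^t=\sum_{i\in L}R^t_{i\sim\{j,*\}}\in\{0,1\}$, so that $(|R_{L2}|+1)2^{-|R_{L2}|}\prod_{(t,i)\in R_{L2}}(1-\tfrac12 b_{ij})=\big(1+\sum_t N^t\big)\prod_t V_t$. Independence across time steps then makes its expectation exactly $\prod_t(1-\beta_t)+\sum_t\gamma_t\prod_{s\ne t}(1-\beta_s)$, where $\beta_t:=1-\E[V_t]=\tfrac12(p^t+y^t_{L2})$ and $\gamma_t:=\E[N^t V_t]=\tfrac12(p^t-y^t_{L2})$. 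I would next prove, by induction on the number of time steps, that this discrete quantity is at most its ``Poissonized'' counterpart $\big(1+\sum_t\gamma_t\big)e^{-\sum_t\beta_t}$: peeling off the last step and invoking the induction hypothesis together with $\prod_{t<T}(1-\beta_t)\le e^{-\sum_{t<T}\beta_t}$ reduces the inductive step to $(1-\beta_T)(1+V')+\gamma_T\le(1+V'+\gamma_T)e^{-\beta_T}$, where $V'=\sum_{t<T}\gamma_t\ge 0$; since $b_{ij}>\tfrac23$ for $i\in L$ forces $\gamma_T<\tfrac12\beta_T$, this follows from $V'\ge 0$ and the elementary inequality $e^{-x}\big(1+\tfrac x2\big)\ge 1-\tfrac x2$ for $x\ge 0$ (whose derivative form reduces to the standard $(1+x)e^{-x}\le 1$). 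Finally, with $m:=\sum_t p^t=\E|R_{L2}|$ we have $\sum_t\beta_t=\tfrac12(m+y_{L2})$ and $\sum_t\gamma_t=\tfrac12(m-y_{L2})$, and $b_{ij}\le 1$ gives $m\ge 2y_{L2}$; the Poissonized value $\big(1+\tfrac12(m-y_{L2})\big)e^{-\frac12(m+y_{L2})}$ is non-increasing in $m$ on $\{m\ge 2y_{L2}\}$, hence at most $\big(1+\tfrac{y_{L2}}{2}\big)e^{-3y_{L2}/2}$, which is the second target.

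The step I expect to be the main obstacle is exactly this discrete-to-Poisson comparison. One cannot Poissonize naively: the scalar function $n\mapsto\tfrac34+\tfrac14(n+1)2^{-n}$ is not convex (it equals $1$ at $n=0$ and $n=1$ but is $<1$ at $n=2$), so Jensen's inequality does not apply to $|R_{L2}|$, and refining time steps into infinitesimal pieces is not transparently monotone in the right direction either. Isolating the two additive terms, computing the expectation of $\big(1+\sum_t N^t\big)\prod_t V_t$ exactly via independence, and then running the induction above --- whose inductive step is made to work precisely by the constraint $\gamma_t<\tfrac12\beta_t$ coming from the large-bid threshold $\tfrac23$ --- is what yields a bound strong enough to reach $\big(1+\tfrac{y_{L2}}{2}\big)e^{-3y_{L2}/2}$; a cruder product estimate only gives $e^{-y_{L2}}$ for the second expectation, which is too weak since $\big(1+\tfrac{y_{L2}}{2}\big)e^{-3y_{L2}/2}<e^{-y_{L2}}$.
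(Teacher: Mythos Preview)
Your proposal is correct and follows essentially the same route as the paper: split off the $\tfrac34$-piece (which is the paper's $h(1)\le e^{-y_{L2}}$), express the $\tfrac14$-piece exactly as $\prod_t(1-\beta_t)+\sum_t\gamma_t\prod_{s\ne t}(1-\beta_s)$ (the paper writes this as $h(\tfrac12)+\tfrac12 h'(\tfrac12)$ via a generating function), and then pass to the Poissonized form $(1+\sum_t\gamma_t)e^{-\sum_t\beta_t}$ by a step-by-step argument whose key reduction is the elementary inequality $1-\tfrac{x}{2}\le(1+\tfrac{x}{2})e^{-x}$, invoked precisely via the large-bid constraint $\gamma_t\le\tfrac12\beta_t$ (equivalently $\mass^t_{L2}\le 3y^t_{L2}$); the paper packages this last step as a hybrid $A^{t-1}\le A^t$ rather than an induction on $T$, but the content is the same, and both finish by minimizing over $m\ge 2y_{L2}$. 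One tiny correction: the paper defines $i\in L$ by $b_{ij}\ge\tfrac23$, not strict inequality, so you only get $\gamma_t\le\tfrac12\beta_t$ rather than $<$, but your elementary inequality already covers equality.
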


\begin{proof}
	We introduce the following notations to denote the sum of probabilities for realizing a large two-way online type, at a time step $t$ and throughout the process:
	\[
		\mass^t_{L2} = \sum_{i \in L} \, \mass^t_{i \sim \{j, *\}}
		\quad,\quad
		\mass_{L2} = \sum_{t=1}^T \mass^t_{L2}
		~.
	\]
	
	By $\frac{2}{3} \le b_{ij} \le 1$, we have:
	\begin{equation}
	    \label{eqn:adwords-L2-mass-y-bound}
	    \begin{aligned}
			\frac{1}{2} \mass^t_{L2}
			&
			~\ge~ \sum_{i \in L} \, \mass^t_{i \sim \{j, *\}} \cdot \frac{1}{2} \cdot b_{ij} ~=~ y^t_{L2}
			~, \\
			\frac{1}{3} \mass^t_{L2}
			&
			~\le~ \sum_{i \in L} \, \mass^t_{i \sim \{j, *\}} \cdot \frac{1}{2} \cdot b_{ij} ~=~ y^t_{L2}
			~.
		\end{aligned}
	\end{equation}
	
	Summing over all time steps $t \in [T]$ further gives:
	\[
		\frac{1}{2} \mass_{L2} ~\ge~ y_{L2} ~\ge~ \frac{1}{3} \mass_{L2}
		~.
	\]
	
	We now explicitly write down the expectation in the lemma by summing over all possible $R_{L2}$, the probability of realizing it multiplied by the expression in the expectation.
	Recall that $(t, i) \in R_{L2}$ stands for the event that the online vertex at time $t$ has a large two-way type $i \sim \{j, *\}$.
	Further, let $t \notin R_{L2}$ denote the event that the online type realized at step $t$ is \emph{not} a large two-way type, i.e., if $R^t_{i \sim \{j, *\}} = 0$ for all $i \in L$.
	The expectation in the lemma equals:
	\begin{align}
		&
		\sum_{R_{L2}} \underbrace{\prod_{t \notin R_{L2}} \Big(1 - \mass^t_{L2} \Big) \prod_{(t,i) \in R_{L2}} \mass^t_{i \sim \{j, *\}}}_{\text{probability of realizing $R_{L2}$}} \,\cdot\, \Big( \frac{3}{4} + \frac{1}{4} \frac{|R_{L2}|+1}{2^{|R_{L2}|}} \Big) \prod_{(t,i) \in R_{L2}} \Big( 1 - \frac{1}{2} b_{ij} \Big)
		\notag \\[1ex]
		& \qquad
		= 
		\sum_{R_{L2}} \Big( \frac{3}{4} + \frac{1}{4} \frac{|R_{L2}|+1}{2^{|R_{L2}|}} \Big) \prod_{t \notin R_{L2}} \Big(1 - \mass^t_{L2} \Big) \prod_{(t,i) \in R_{L2}} \mass^t_{i \sim \{j, *\}} \Big( 1 - \frac{1}{2} b_{ij} \Big)
		~.
		\label{eqn:adwords-socs-L2}
	\end{align}

	To further simplify it, we consider the following generating function:
	\begin{align*}
		h(x)
		&
		= \prod_{t=1}^T \bigg( 1 - \mass^t_{L2} + x \cdot \sum_{i \in L} \mass^t_{i \sim \{j,*\}} \Big( 1 - \frac{1}{2} b_{ij} \Big) \bigg) \\
		&
		= \prod_{t=1}^T \Big( 1 - \mass^t_{L2} + x \cdot \big( \mass^t_{L2} - y^t_{L2} \big) \Big)
		~.
	\end{align*}
	
	Then, the above Equation~\eqref{eqn:adwords-socs-L2} can be written as:
	\begin{equation}
		\label{eqn:adwords-socs-L2-gen-func}
		\frac{3}{4} \cdot h(1) + \frac{1}{4} \cdot \bigg( \frac{1}{2} h' \Big(\frac{1}{2}\Big) + h \Big(\frac{1}{2}\Big) \bigg)
		~.
	\end{equation}

	In particular:
	\begin{align}
		h(1)
		&
		~=~ \prod_{t=1}^T \big( 1 - y^t_{L2} \big)
		\notag \\
		&
		~\le~ \prod_{t=1}^T e^{- y^t_{L2}} = e^{-y_{L2}}
		~.
		\label{eqn:adwords-socs-L2-g-one}
	\end{align}
	
	We also have:
	\begin{align*}
		h \Big( \frac{1}{2} \Big) 
		&
		~=~ \prod_{t=1}^T \Big( 1 - \frac{1}{2} \mass^t_{L2} - \frac{1}{2} y^t_{L2} \Big) 
		~, \\
		h' \Big( \frac{1}{2} \Big)
		&
		~=~ \sum_{t=1}^T \big( \mass^t_{L2} - y^t_{L2} \big) \prod_{t' \ne t} \Big( 1 - \frac{1}{2} \mass^t_{L2} - \frac{1}{2} y^t_{L2} \Big) 
		~.
	\end{align*}
	
	Hence, the sum of $h(\frac{1}{2})$ and $\frac{1}{2} h'(\frac{1}{2})$ can be written as:
	\begin{equation}
		\label{eqn:adwords-socs-L2-g-half}
		\prod_{t=1}^T \Big( 1 - \frac{1}{2} \mass^t_{L2} - \frac{1}{2} y^t_{L2} \Big) \bigg( 1 + \sum_{t=1}^T \frac{\mass^t_{L2} - y^t_{L2}}{2 - \mass^t_{L2} - y^t_{L2}} \bigg)
		~.
	\end{equation}
	
	If all $\mass^t_{L2}$ and $y^t_{L2}$ are infinitesimally small, the above would further simplify to:
	\[
		e^{-\frac{1}{2} \mass_{L2} - \frac{1}{2} y_{L2}} \cdot \Big( 1 + \frac{1}{2} \mass_{L2} - \frac{1}{2} y_{L2} \Big) 
		~.
	\]
	
	We next prove that it is an upper bound for Equation~\eqref{eqn:adwords-socs-L2-g-half} even when the variables are not infinitesimal, through a hybrid argument.
	For any $0 \le t \le T$, define $A^t$ to be:
	\[
		\exp \bigg( - \frac{1}{2} \sum_{t' \le t} \big( \mass^{t'}_{L2} + y^{t'}_{L2} \big) \bigg) \prod_{t < t' \le T} \Big( 1 - \frac{1}{2} \mass^{t'}_{L2} - \frac{1}{2} y^{t'}_{L2} \Big) \bigg( 1 + \frac{1}{2} \sum_{t' \le t} \big( \mass^{t'}_{L2} - y^{t'}_{L2} \big)  + \sum_{t < t' \le T} \frac{\mass^{t'}_{L2} - y^{t'}_{L2}}{2 - \mass^{t'}_{L2} - y^{t'}_{L2}} \bigg) 
		~.
	\]
	
	Observe that $A^0$ is the original Equation~\eqref{eqn:adwords-socs-L2-g-half} and $A^T$ is the claimed upper bound.
	Hence, we just need to show for any $t \in [T]$ that $A^{t-1} \le A^t$.
	Comparing the two sides of this inequality, it is sufficient to prove that:
	\begin{align*}
		&	
		\Big( 1 - \frac{1}{2} \mass^t_{L2} - \frac{1}{2} y^t_{L2} \Big) \bigg( 1 + \frac{1}{2} \sum_{t' < t} \big( \mass^{t'}_{L2} - y^{t'}_{L2} \big)  + \sum_{t \le t' \le T} \frac{\mass^{t'}_{L2} - y^{t'}_{L2}}{2 - \mass^{t'}_{L2} - y^{t'}_{L2}} \bigg) \\
		& \qquad\qquad
		\le~ 
		e^{- \frac{1}{2} \mass^t_{L2} - \frac{1}{2} y^t_{L2}} \bigg( 1 + \frac{1}{2}  \sum_{t' \le t} \big( \mass^{t'}_{L2} - y^{t'}_{L2} \big)  + \sum_{t < t' \le T} \frac{\mass^{t'}_{L2} - y^{t'}_{L2}}{2 - \mass^{t'}_{L2} - y^{t'}_{L2}} \bigg)
		~.
	\end{align*}	
	
	For notational simplicity, denote the common terms in the second parts of both sides as:
	\[
		\Delta ~=~ \frac{1}{2} \sum_{t' < t} \big( \mass^{t'}_{L2} - y^{t'}_{L2} \big)  + \sum_{t < t' \le T} \frac{\mass^{t'}_{L2} - y^{t'}_{L2}}{2 - \mass^{t'}_{L2} - y^{t'}_{L2}}
		~.
	\]
	
	The inequality simplifies to:
	\[
		\Big( 1 - \frac{1}{2} \mass^t_{L2} - \frac{1}{2} y^t_{L2} \Big) \bigg( 1 + \frac{\mass^t_{L2} - y^t_{L2}}{2 - \mass^t_{L2} - y^t_{L2}} + \Delta \bigg) 
		~\le~ 
		e^{- \frac{1}{2} \mass^t_{L2} - \frac{1}{2} y^t_{L2}} \cdot \Big( 1 + \frac{1}{2} \mass^t_{L2} - \frac{1}{2}  y^t_{L2} + \Delta \Big)
		~.
	\]
	
	Since $\Delta \ge 0$ and its coefficients satisfy $1 - \frac{1}{2} \mass^t_{L2} - \frac{1}{2} y^t_{L2} \le e^{- \frac{1}{2} \mass^t_{L2} - \frac{1}{2} y^t_{L2}}$, we only need to prove the inequality for the remaining terms.
	After merging terms on the left, this is:
	\[
		1 - y^t_{L2} \le e^{- \frac{1}{2} \mass^t_{L2} - \frac{1}{2} y^t_{L2}} \cdot \Big( 1 + \frac{1}{2} \mass^t_{L2} - \frac{1}{2} y^t_{L2} \Big)
		~.
	\]
	
	The right-hand-side is decreasing in $\mass^t_{L2}$ because $e^{-x} (1+x)$ is decreasing in $x \ge 0$.
	Further, recall that $\mass^t_{L2} \le 3 y^t_{L2}$ (Equation~\eqref{eqn:adwords-L2-mass-y-bound}).
	Hence, the above inequality reduces to:
	\[
		1 - y^t_{L2} \le e^{-2 y^t_{L2}} \big(1 + y^t_{L2} \big)
	\]
	which holds for all $y^t_{L2} \ge 0$ (Appendix~\ref{app:e-2x}).
		
	In sum, we have:
	\[
		\frac{1}{2} h' \Big( \frac{1}{2} \Big) + h \Big( \frac{1}{2} \Big) \le e^{-\frac{1}{2} \mass_{L2} - \frac{1}{2} y_{L2}} \Big( 1 + \frac{1}{2} \mass_{L2} - \frac{1}{2} y_{L2} \Big)
		~.
	\]
	
	Putting this and Equation~\eqref{eqn:adwords-socs-L2-g-one} back to Equation~\eqref{eqn:adwords-socs-L2-gen-func}, the expectation in the lemma is upper bounded by:
	\[
		\frac{3}{4} e^{-y_{L2}} + \frac{1}{4} e^{-\frac{1}{2} \mass_{L2} - \frac{1}{2} y_{L2}} \Big( 1 + \frac{1}{2} \mass_{L2} - \frac{1}{2} y_{L2} \Big)
	\]
	
	This is decreasing in $\mass_{L2}$ because $e^{-x} (1+x)$ is decreasing in $x \ge 0$.
	Further, recall that we have $\mass_{L2} \ge 2 y_{L2}$ (Equation~\eqref{eqn:adwords-L2-mass-y-bound}).
    Hence, this is at most:
	\[
		\frac{3}{4} e^{-y_{L2}} + \frac{1}{4} e^{-\frac{3}{2} y_{L2}} \Big( 1 + \frac{1}{2} y_{L2} \Big)
		~.
	\]
\end{proof}

\begin{lemma}
	\label{lem:adwords-L1}
	The contribution of large one-way types is bounded by:
	\[
		\E \bigg[ \bigg( 1 - \sum_{t=1}^T Y^t_{L1} \bigg)^+ \bigg]
		~\le~
		\prod_{t=1}^T \big(1 - y^t_{L1} \big)
		~.
	\]
\end{lemma}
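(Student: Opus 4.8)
The plan is to reuse the auxiliary-Bernoulli-and-Jensen argument that already establishes the baseline convergence rate $e^{-y_j}$ for AdWords. First I would record two structural facts about the variables $Y^t_{L1}$. By Lemma~\ref{lem:adwords-disjoint-randomness} each $Y^t_{L1}$ depends only on $R_{L1}$, and in fact only on the realization of online types at time step $t$ (which large one-way type, if any, arrives at $t$); since these realizations are mutually independent across time steps, $Y^1_{L1}, \dots, Y^T_{L1}$ are mutually independent. Second, $Y^t_{L1}$ takes values in $[0,1]$: it is either $0$, or a large bid $b_{ij}$, and $b_{ij} \le B_j = 1$ (a bid exceeding the budget is equivalent, for the value function $\min\{\sum b,B\}$, to one equal to the budget, so this is without loss of generality; this is the same normalization used in Lemma~\ref{lem:adwords-L2}). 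By definition, $\E[Y^t_{L1}] = \sum_{i \in L} \mass^t_{i \sim j}\, b_{ij} = y_{L1}^t$.

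Next I would introduce, for each $t \in [T]$, an auxiliary Bernoulli variable $\bar{Y}^t_{L1}$ obtained by first drawing $Y^t_{L1}$ and then setting $\bar{Y}^t_{L1} = 1$ with probability $Y^t_{L1}$ and $\bar{Y}^t_{L1} = 0$ otherwise, with the extra randomization independent across $t$; then $\bar{Y}^1_{L1}, \dots, \bar{Y}^T_{L1}$ are independent and $\E[\bar{Y}^t_{L1}] = \E[Y^t_{L1}] = y_{L1}^t$. Conditioning on $(Y^{t'}_{L1})_{t' \in [T]}$ and using the convexity of $z \mapsto (1-z)^+$ together with Jensen's inequality gives $\big(1 - \sum_t Y^t_{L1}\big)^+ = \big(1 - \E[\sum_t \bar{Y}^t_{L1} \mid (Y^{t'}_{L1})_{t'}]\big)^+ \le \E\big[(1 - \sum_t \bar{Y}^t_{L1})^+ \mid (Y^{t'}_{L1})_{t'}\big]$, and taking full expectations yields $\E[(1 - \sum_t Y^t_{L1})^+] \le \E[(1 - \sum_t \bar{Y}^t_{L1})^+]$.

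Finally, since $\sum_t \bar{Y}^t_{L1}$ is a sum of $\{0,1\}$-valued variables, the quantity $\big(1 - \sum_t \bar{Y}^t_{L1}\big)^+$ equals $1$ exactly when every $\bar{Y}^t_{L1} = 0$ and $0$ otherwise; by independence its expectation is $\prod_t \bigl(1 - \E[\bar{Y}^t_{L1}]\bigr) = \prod_t (1 - y_{L1}^t)$, which is the claimed bound. I do not expect a real obstacle here: the statement is essentially the independent-rounding calculation restricted to the large one-way types, and the only small points to be careful about are the $b_{ij}\le B_j$ normalization (so that $Y^t_{L1}\le 1$ and the auxiliary Bernoulli is well defined) and the cross-time independence of the $Y^t_{L1}$'s, which is immediate from the model; alternatively, one could run the last step through the negative association of the realization indicators (Lemma~\ref{lem:adwords-negative-association}) in place of independence.
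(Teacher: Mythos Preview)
Your proof is correct, but it takes a slightly longer route than the paper. The paper avoids the auxiliary Bernoulli layer entirely: it uses the pointwise Weierstrass-type inequality
\[
\bigg(1 - \sum_{t=1}^T Y^t_{L1}\bigg)^+ \le \prod_{t=1}^T \big(1 - Y^t_{L1}\big)
\]
(valid since each $Y^t_{L1} \in [0,1]$), then takes expectations and uses independence across time steps together with $\E[Y^t_{L1}] = y^t_{L1}$ to get $\prod_t(1 - y^t_{L1})$ directly. Your Jensen-plus-Bernoulli argument is the same machinery deployed in Lemma~\ref{lem:adwords-S} for the small types, and it works fine here too; it just introduces an extra randomization step that the pointwise product bound makes unnecessary. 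Both approaches rely on the same two structural facts you identified (independence of the $Y^t_{L1}$'s and the normalization $b_{ij}\le B_j=1$).
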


\begin{proof}
	First, observe that:
	\[
		\bigg( 1 - \sum_{t=1}^T Y^t_{L1} \bigg)^+ \le~ \prod_{t=1}^T \big( 1 - Y^t_{L1} \big)
		~.
	\]
	
	The lemma follows by the independence of $Y^t_{L1}$ at different time $t \in [T]$, and $\E\,Y^t_{L1} = y^t_{L1}$.
\end{proof}

\subsubsection{Proof of Lemma~\ref{lem:two-way-socs-adwords}}

The first bound in Lemma~\ref{lem:two-way-socs-adwords} follows from relaxing Equation~\eqref{eqn:adwords-unused-budget} by removing all arcs in the \emph{ex-post graph}
(Lemma~\ref{lem:adwords-removing-arc}).
Then, we have:
\begin{align*}
	\eqref{eqn:adwords-unused-budget} 
	&
	~\le~ 
	\E \bigg[\, \prod_{t \in [T]} \big( 1 - Y^t_{L1} - Y^t_{L2} - Y^t_S \big) \,\bigg] \\
	&
	~=~ \prod_{t \in [T]} \E \,\big( 1 - Y^t_{L1} - Y^t_{L2} - Y^t_S \big) \\
	&
	~=~ \prod_{t \in [T]} \big( 1 - y^t_{L1} - y^t_{L2} - y^t_S \big)
	~.
\end{align*}

The second bound in Lemma~\ref{lem:two-way-socs-adwords} follows by applying the bounds in Lemmas~\ref{lem:adwords-S}, \ref{lem:adwords-L2}, and \ref{lem:adwords-L1} to the three terms on the right-hand-side of Lemma~\ref{lem:adwords-decomposition}.

\subsubsection{Proof of Theorem~\ref{thm:two-way-socs-adwords}}

We apply the second bound of Lemma~\ref{lem:two-way-socs-adwords} with $y_{L1} = 0$ because there is no one-way type in this setting.
The expected unspent fraction of agent $j$'s budget is at most:
\[
	e^{-y_j}
	\cdot
	\bigg( \frac{3}{4} + \frac{1}{4} \cdot \frac{1+\nicefrac{y_{L2}}{2}}{e^{\nicefrac{y_{L2}}{2}}}\bigg)
	\cdot
	\frac{1 + \nicefrac{y_S}{2}}{e^{\nicefrac{y_S}{2}}}
\]
where $y_{L2} + y_S = y_j$.
Hence, the convergence rate can be written as:
\[
	e^{-y_j}
	\cdot
	\max_{\substack{\text{$y_S, y_{L2} \ge 0 :$}\\ \text{$y_{L2} + y_S = y_j$}}} 
	\bigg( \frac{3}{4} + \frac{1}{4} \cdot \frac{1+\nicefrac{y_{L2}}{2}}{e^{\nicefrac{y_{L2}}{2}}}\bigg)
	\cdot
	\frac{1 + \nicefrac{y_S}{2}}{e^{\nicefrac{y_S}{2}}}
	~.
\]

Symmetrize the two terms by relaxing the second term related to $y_S$ in the above maximization, it is at most:
\[
	\bigg( \frac{3}{4} + \frac{1}{4} \cdot \frac{1+\nicefrac{y_{L2}}{2}}{e^{\nicefrac{y_{L2}}{2}}}\bigg)
	\cdot
	\bigg( \frac{3}{4} + \frac{1}{4} \cdot \frac{1+\nicefrac{y_S}{2}}{e^{\nicefrac{y_S}{2}}}\bigg)
	~.
\]

By the concavity of (Appendix~\ref{app:adwords-convergence-function}):
\begin{equation}
	\label{eqn:adwords-convergence-function}
	\log \left( \frac{3}{4} + \frac{1}{4} \cdot \frac{1+\nicefrac{y}{2}}{e^{\nicefrac{y}{2}}} \right)
	~,
\end{equation}
this is maximized when $y_S = y_{L2} = \nicefrac{y_j}{2}$.
Hence, we get the convergence rate in Theorem~\ref{thm:two-way-socs-adwords}:
\[
	g(y_j) = e^{-y_j} \cdot \bigg( \frac{3}{4} + \frac{1}{4} \cdot \frac{1+\nicefrac{y_j}{4}}{e^{\nicefrac{y_j}{4}}}\bigg)^2
	~.
\]

\subsection{Stochastic AdWords Linear Program}
\label{sec:adwords-lp}

For any subset $S \subseteq T \times I$ of pairs of time step and online type, we define an auxiliary function $\bar{v}_j(S)$ to denote the value we would get from allocating them to $j$, i.e., if we allocate to agent $j$ all online item whose type $i$ and arrival time $t$ satisfy $(t,i) \in S$.
Recall that $R^t_i$ is the indicator of whether the online item at time step $t$ has type $i \in I$.
By definition, this auxiliary function is:
\[
	\bar{v}_j(S) = \E \bigg[ \min \bigg\{\: \sum_{(t,i) \in S} R^t_i \cdot b_{ij} \,,\, B_j \:\bigg\} \:\bigg]
	~. 
\]

For $x^t_{ij} = \mass^t_i \cdot \matchrate^t_{ij}$, we will consider the following \emph{Stochastic AdWords LP}:
\begin{align}
    \text{maximize} \quad & \sum_{i \in I} \sum_{j \in J} \sum_{t \in [T]} b_{ij} \cdot x_{ij}^t \notag \\
    \text{subject to} \quad & \sum_{j \in J} x_{ij}^t \le \mass_i^t && \forall i \in I, \forall t \in [T] \notag \\
    & \sum_{(t,i) \in S} b_{ij} \cdot x_{ij}^t \le \bar{v}_j(S) && \forall j \in J, \forall S \subseteq T \times I \label{eqn:adwords-lp} \\
    & x_{ij}^t \ge 0 && \forall i \in I, \forall j \in J, \forall t \in [T] \notag 
\end{align}

We will next show that it is an LP relaxation of the Stochastic AdWords problem, and discuss whether it can be solved in polynomial-time.

\begin{lemma}[Optimality]
    \label{lem:adwords-lp-optimality}
    The optimal objective value of the Stochastic AdWords LP is greater than or equal to the expected objective of the optimal allocation in hindsight.
\end{lemma}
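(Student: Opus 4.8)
The plan is to exhibit a feasible solution of the \textbf{Stochastic AdWords LP} whose objective equals $\OPT$; optimality then follows at once. Fix the optimal offline allocation in hindsight. For each realization $\omega$ of the online item types, encoded by the indicators $R^t_i$, let $A_j = A_j(\omega)$ be the set of pairs $(t,i)$ of items it allocates to agent $j$; thus $A_j \subseteq \{(t,i) : R^t_i = 1\}$ and each arriving item lies in at most one $A_j$. Write $T_j = T_j(\omega) = \sum_{(t,i) \in A_j} b_{ij}$, so that $\OPT = \sum_{j \in J} \E\big[\min\{T_j, B_j\}\big]$.

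The obvious candidate $x_{ij}^t = \E\big[\mathbf{1}[(t,i) \in A_j]\big]$ is \emph{not} feasible in general: if the offline optimum over-allocates to an agent --- e.g.\ it assigns two items with bids $\frac{2}{3} B_j$ each --- then already $S = T \times I$ violates Constraint~\eqref{eqn:adwords-lp}. The remedy is to shrink each agent's bundle down to its budget. Introduce the random scaling factor $\lambda_j = \lambda_j(\omega) = \min\{1, B_j/T_j\}$ (with $\lambda_j = 1$ when $T_j = 0$) and set
\[
    x_{ij}^t ~:=~ \E\Big[\, \mathbf{1}\big[(t,i) \in A_j\big] \cdot \lambda_j \,\Big]
    ~.
\]
Then $\sum_{i,j,t} b_{ij} x_{ij}^t = \sum_j \E\big[\lambda_j T_j\big] = \sum_j \E\big[\min\{T_j, B_j\}\big] = \OPT$, so the objective is exactly right, and $x_{ij}^t \ge 0$ is immediate. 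For the first constraint, $\sum_j \mathbf{1}[(t,i) \in A_j] \le R^t_i$ together with $\lambda_j \le 1$ gives $\sum_j x_{ij}^t \le \E[R^t_i] = \mass_i^t$.

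The remaining work is Constraint~\eqref{eqn:adwords-lp}, which I would verify pointwise in $\omega$ and then integrate. Fix $j$ and $S \subseteq T \times I$. Because $A_j \cap S \subseteq \{(t,i) \in S : R^t_i = 1\}$ and bids are nonnegative, $\sum_{(t,i) \in A_j \cap S} b_{ij} \le \sum_{(t,i) \in S} R^t_i b_{ij}$, and trivially $\sum_{(t,i) \in A_j \cap S} b_{ij} \le T_j$. If $T_j \le B_j$, then $\lambda_j = 1$ and $\lambda_j \sum_{(t,i) \in A_j \cap S} b_{ij}$ is at most both $\sum_{(t,i) \in S} R^t_i b_{ij}$ and $T_j \le B_j$, hence at most their minimum. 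If $T_j > B_j$, then $\lambda_j = B_j/T_j < 1$, so $\lambda_j \sum_{(t,i) \in A_j \cap S} b_{ij} \le \lambda_j T_j = B_j$ and also $\le \lambda_j \sum_{(t,i) \in S} R^t_i b_{ij} \le \sum_{(t,i) \in S} R^t_i b_{ij}$, again at most the minimum. Taking expectations over $\omega$ yields $\sum_{(t,i) \in S} b_{ij} x_{ij}^t \le \bar{v}_j(S)$. Thus $x$ would be feasible with objective $\OPT$, which proves the lemma. The only genuine idea is the scaling step --- it is precisely what repairs the over-allocation failure of the naive solution --- and the one thing to get right is the clean split into the cases $T_j \le B_j$ and $T_j > B_j$; the rest is bookkeeping.
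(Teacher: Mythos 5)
Your proof is correct and takes essentially the same approach as the paper: you scale each agent's realized bundle by $\lambda_j = \min\{1, B_j/T_j\}$ and take expectations, which gives exactly the LP solution $x^t_{ij} = \sum_{S \ni (t,i)} \eta_{Sj} \cdot v_j(S)/b_j(S)$ used in the paper (there $\eta_{Sj}$ is the probability that agent $j$'s realized bundle is $S$). The pointwise case split on $T_j \lessgtr B_j$ is the same budget-scaling argument the paper describes more tersely when verifying Constraint~\eqref{eqn:adwords-lp}.
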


\begin{proof}
	Consider the offline optimal allocation.
	Let $\eta_{Sj}$	be the probability that agent $j$ gets a subset of items $S \subseteq T \times I$,
	where $(t,i) \in S$ means that agent $j$ gets the online item at time $t$ and the item's type is $i$.
	We will abuse notation and write $v_j(S)$ for agent $j$'s value for receiving this subset of items, effectively dropping the time step from each $(t, i) \in S$ before feeding the subset to function $v_j$.
	Further, let:
	\[
		b_j(S) = \sum_{(i,t) \in S} b_{ij}
	\]
	be the sum of agent $j$'s bids for these items.
	
	We now define a feasible LP solution as follows.
	For any $i \in I$, $j \in J$, and $t \in [T]$, let
	\begin{equation}
		\label{eqn:adwords-lp-solution}
		x^t_{ij} = \sum_{S \subseteq T \times I \::\: (t,i) \in S} \eta_{Sj} \cdot \frac{v_j(S)}{b_j(S)}
		~.
	\end{equation}
	
	We will next verify that this LP solution's objective value is equal to the optimal objective of the AdWords instance, and that it satisfies all LP constraints.
	
	\paragraph{Objective Value.}
	By definition, we have:
	\begin{align*}
		\sum_{i \in I} \sum_{j \in J} \sum_{t \in [T]} b_{ij} x^t_{ij}
		&
		~=~ 
		\sum_{i \in I} \sum_{j \in J} \sum_{t \in [T]} b_{ij} \sum_{S \subseteq T \times I \::\: (t,i) \in S} \eta_{Sj} \cdot \frac{v_j(S)}{b_j(S)} \\
		&
		~=~
		\sum_{j \in J} \sum_{S \subseteq T \times I} \eta_{Sj} \cdot \sum_{(t,i) \in S} b_{ij} \cdot\frac{v_j(S)}{b_j(S)} \\[1ex]
		&
		~=~
		\sum_{j \in J} \sum_{S \subseteq T \times I} \eta_{Sj} \cdot v_j(S) 
		~=~ \OPT
		~.
	\end{align*}
	
	\paragraph{First Set of Constraints.}
	For any $i \in I$ and any $t \in [T]$, we have:
	\[
		\sum_{j \in J} x^t_{ij}
		~\le~ 
		\sum_{j \in J} \sum_{S \subseteq T \times I \::\: (t,i) \in S} \eta_{Sj}
		~.
		\tag{definition of $x^t_{ij}$ and $\bar{v}_j(S) \le b_j(S)$}
	\] 
	
	The right-hand-side is the probability that the offline optimal solution allocates an online item of type $i$ to some agent $j$ at time $t$. 
	This is at most $\mass^t_j$, the probability that such an item arrives at time $t$.
	
	\paragraph{Second Set of Constraints.}
	The left-hand-side of Constraint~\eqref{eqn:adwords-lp} is equal to the actual value that we get from the subset of items in $S$ allocated to agent $j$ by the offline optimal solution.
	Note that when agent $j$'s sum of bids for its allocated items exceeds its budget $B_j$, the definition of $x^t_{ij}$ effectively scales that contribution of each item proportionally.
	The right-hand-side of Constraint~\eqref{eqn:adwords-lp} by definition is the value we could get by allocating all items in $S$ to $j$.
	Hence, the inequality holds.
\end{proof}

\begin{lemma}[Computational Efficiency]
	\label{lem:adwords-lp-poly-time}
	The Stochastic AdWords LP is solvable in polynomial time if we could compute $\bar{v}_j(S)$ in polynomial time.
\end{lemma}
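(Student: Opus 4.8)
The plan is to mimic the proof of Lemma~\ref{lem:discrete-time-lp-polytime}: show that for each fixed agent $j$ the second set of constraints~\eqref{eqn:adwords-lp} describes a polymatroid, so that the whole feasible region is the intersection of polynomially many explicit linear inequalities (the first set of constraints) with $|J|$ polymatroids, hence admits a polynomial-time separation oracle; the ellipsoid method and the equivalence of separation and optimization then solve the LP in polynomial time. First I would perform the change of variables $z_{ij}^t = b_{ij} \cdot x_{ij}^t$ on the coordinates with $b_{ij} > 0$ (the coordinates with $b_{ij} = 0$ appear neither in~\eqref{eqn:adwords-lp} nor in the objective and are constrained only by the first, explicit, set of constraints, so they are handled trivially). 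In the new variables, Constraint~\eqref{eqn:adwords-lp} for a fixed $j$ becomes $\sum_{(t,i) \in S} z_{ij}^t \le \bar v_j(S)$ for all $S \subseteq T \times I$ together with $z_{ij}^t \ge 0$, which is exactly the independence polytope of the set function $\bar v_j$ on the ground set $T \times I$.

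Next I would verify that $\bar v_j$ is normalized, monotone, and submodular, so that this polytope is a genuine polymatroid. Normalization $\bar v_j(\emptyset) = 0$ and monotonicity are immediate from the definition. For submodularity, fix any realization of the indicators $R_i^t$; then $S \mapsto \sum_{(t,i) \in S} R_i^t b_{ij}$ is modular, and composing the nondecreasing concave function $z \mapsto \min\{z, B_j\}$ with a modular function yields a submodular function. Since $\bar v_j$ is the expectation, i.e.\ a nonnegative combination, of such functions over the realization of the $R_i^t$'s, it is submodular. It then remains to exhibit the separation oracle: for a fixed $j$ and a candidate point $z^*$, checking $z^* \ge 0$ is trivial, and otherwise a violated inequality exists if and only if $\min_{S \subseteq T \times I} \big( \bar v_j(S) - z^*(S) \big) < 0$. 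The function $S \mapsto \bar v_j(S) - z^*(S)$ is submodular (a submodular function minus a modular one), so this is an instance of submodular function minimization, solvable in polynomial time given a value oracle; the value oracle evaluates $\bar v_j(S) - z^*(S)$, which is polynomial-time by the hypothesis that $\bar v_j$ is polynomial-time computable. Running this for each of the $|J|$ polymatroids and separately checking the polynomially many first-type constraints gives a polynomial-time separation oracle for the entire feasible region, and since the objective is linear, the LP is solvable in polynomial time.

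The main obstacle — really the only nontrivial point — is the submodularity of $\bar v_j$ together with the accompanying observation that a submodular-minus-modular function is again submodular, which is what lets the separation problem reduce to submodular minimization rather than to an intractable non-submodular minimization. Everything else (the change of variables, the appeal to the separation-equals-optimization paradigm, and the usual bookkeeping for the ellipsoid method — a bounding box, supplied by the first set of constraints, and polynomial bit complexity of the values of $\bar v_j$, which we fold into the ``polynomial-time computable'' hypothesis) is routine and parallels the short proof of Lemma~\ref{lem:discrete-time-lp-polytime}.
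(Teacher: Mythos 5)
Your proof is correct and follows essentially the same route as the paper: observe that $\bar v_j$ is submodular as an expectation (nonnegative combination) of submodular functions, so that for each fixed $j$ Constraint~\eqref{eqn:adwords-lp} carves out a polymatroid, and then invoke separation-equals-optimization. The only differences are that you make explicit two points the paper leaves implicit — the change of variables $z_{ij}^t = b_{ij} x_{ij}^t$ needed to put the constraints literally in polymatroid form, and the reduction of the polymatroid separation problem to submodular function minimization with a value oracle supplied by the hypothesis — which are useful clarifications but not a different argument.
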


\begin{proof}
    For any realization of $R^t_i$, the function:
    \[
    	\min \bigg\{\: \sum_{(t,i) \in S} R^t_i \cdot b_{ij} \,,\, B_j \:\bigg\}
    \]
    is submodular.
    Hence, $\bar{v}_j$ is also submodular as it is a linear combination of submodular functions. 
    This means that the second set of constraints for any fixed $j \in J$ forms a polymatroid because the right-hand-side is a submodular set function over $T \times I$.

    Therefore, the Stochastic AdWords LP's polytope is the intersection of polynomially many linear constraints (the first set of constraints) and $|J|$ polymatroids (the second set of constraints).
    If we could compute $\bar{v}_j$ in polynomial time, then we had a polynomial-time separation oracle for the LP and could solve it in polynomial time, e.g., using the ellipsoid method.
\end{proof}

Unlike solving the Stochastic Matching LP, whose computational efficiency is unconditional (Lemma~\ref{lem:discrete-time-lp-polytime}), solving the Stochastic AdWords LP with the same approach requires an oracle for evaluating an expectation over a potentially exponential-size support.
We discuss below three possible ways to circumvent this obstacle.

First, we can approximate the value of $\bar{v}_j(S)$ up to an inverse-polynomially small additive error via the Monte Carlo method.

Second, we may interpret the $x^t_{ij}$ as allocation statistics of the optimal allocation of past data, and the LP constraints as properties that the statistics shall satisfy in expectation.
In other words, we do not solve the LP but treat it as a characterization of such allocation statistics.
See e.g., \citet{TangWW:STOC:2022} and \citet{AouadM:EC:2023} for some previous works that follow this approach.

Last but not least, we can relax Constraint~\eqref{eqn:adwords-lp} so that for any offline agent $j$, we only consider subsets $S$ of $(t, i)$ for large surrogate types $i$.
When we only have large bids, we can evaluate the expectation in polynomial-time, because the value $v_j$ would simply be $B_j$ whenever we have two or more large bids.
In other words, there are at most $|S| + 2$ possible realizations of $v_j$ instead of exponentially many.
We remark that in this case, we need to bring back Constraint~\eqref{eqn:adwords-fluid-lp} from the fluid LP to bound the contribution of small surrogate types.

\subsection{General SOCS from Two-Way SOCS}
\label{sec:adwords-general-socs}

This subsection considers the \textbf{General SOCS for AdWords} obtained by combining the \textbf{Two-way SOCS for AdWords} in Subsection~\ref{sec:two-way-socs-adwords} and the \textbf{Type Decomposition} in Subsection~\ref{sec:socs-reduction}.
We will next analyze its convergence rate.

\begin{theorem}
	\label{thm:adwords-general-socs}
    \textrm{\bf General SOCS for Adwords} has convergence rate:
    \[
    	g(y_j) ~=~ e^{-y_j} \cdot \left( \frac{3}{4} + \frac{1}{4} \cdot \frac{1+(y_j-0.417)^+/4}{e^{(y_j-0.417)^+/4}} \,\right)^2
    	~.
    \]
    That is, for any offline vertex $j \in J$, it is matched with probability at least:
    \begin{equation}
        \label{eqn:adwords-general-socs-convergence}	
        1 - g(y_j)
        ~.
    \end{equation}
\end{theorem}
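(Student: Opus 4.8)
The plan is to leverage Lemma~\ref{lem:two-way-socs-adwords} for the two-way engine and add one new ingredient, an approximate Converse Jensen Inequality in the Non-IID model, to control the one-way large-bid types created by \textbf{Type Decomposition}. Fix an offline agent $j$ and normalize $B_j=1$. Since \textbf{General SOCS for AdWords} is precisely \textbf{Two-Way SOCS for AdWords} run on the surrogate stream of \textbf{Type Decomposition} (with one-way types forced), Lemma~\ref{lem:two-way-socs-adwords} applies verbatim, so $j$'s expected unspent budget is at most
\[
	e^{-y_j}\cdot\Bigl(\tfrac34+\tfrac14\cdot\tfrac{1+\nicefrac{y_{L2}}{2}}{e^{\nicefrac{y_{L2}}{2}}}\Bigr)\cdot\tfrac{1+\nicefrac{y_S}{2}}{e^{\nicefrac{y_S}{2}}}\cdot\prod_{t=1}^T e^{y_{L1}^t}\bigl(1-y_{L1}^t\bigr),
\]
with $y_j=y_{L1}+y_{L2}+y_S$, where $y_{L1}=\sum_t y_{L1}^t$ is the contribution of large one-way surrogate types, $y_{L2}$ that of large two-way types, and $y_S$ that of small types. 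Since $e^{x}(1-x)\le 1$ for $x\ge0$, I bound $\prod_t e^{y_{L1}^t}(1-y_{L1}^t)\le1$: in the worst case the one-way large bids supply no benefit and merely shrink the budget $y_j-y_{L1}$ left to the helpful two-way and small types. Everything then reduces to a maximization over $y_{L1},y_{L2},y_S\ge0$.

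The crux is bounding $y_{L1}$. By Lemma~\ref{lem:type-decomposition-probability}, $y_{L1}^t=\sum_{i\in L}\mass_i^t(2\matchrate_{ij}^t-1)^+\,b_{ij}$, so $y_{L1}$ collects only the \emph{overloaded} part of the fractional allocation (where $\matchrate_{ij}^t>\tfrac12$) over the large types. I would establish an approximate Converse Jensen Inequality: for any fractional allocation obeying the \textbf{Stochastic AdWords LP}, in particular Constraint~\eqref{eqn:adwords-lp} taken with $S=\{(t,i):i\in L\}$ (so $\sum_{t,i\in L}x_{ij}^t b_{ij}\le\bar{v}_j(S)\le1$), one has $y_{L1}\le0.417$. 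The intuition is that the $\min\{\cdot,B_j\}$ inside $\bar{v}_j$ saturates: a large amount of overloaded mass would force the realized bids of types in $L$ to overshoot the budget so frequently that $\bar{v}_j$ cannot keep pace with the raw mass, which caps how much overload the LP constraint tolerates; making this quantitative yields the constant $0.417$. This is the main obstacle, and the argument is an inequality rather than a recurrence, so it is genuinely different from the matching analysis of Subsection~\ref{sec:unweighted-socs}; its Poisson-arrival analogue gives the sharper $0.36$, and the loss is the price of lacking an asymptotic Poisson equivalence in the Non-IID model.

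Given $y_{L1}\le\min\{y_j,0.417\}$ (the $\le y_j$ bound being immediate), the rest is the calculus already carried out in the proof of Theorem~\ref{thm:two-way-socs-adwords}. Fixing $y_{L1}$ and writing $y_{L2}+y_S=y_j-y_{L1}$, I relax $\tfrac{1+\nicefrac{y_S}{2}}{e^{\nicefrac{y_S}{2}}}\le\tfrac34+\tfrac14\cdot\tfrac{1+\nicefrac{y_S}{2}}{e^{\nicefrac{y_S}{2}}}$ to symmetrize the two factors, then use the concavity of \eqref{eqn:adwords-convergence-function} to see the product is maximized at $y_{L2}=y_S=(y_j-y_{L1})/2$, giving $e^{-y_j}\bigl(\tfrac34+\tfrac14\tfrac{1+(y_j-y_{L1})/4}{e^{(y_j-y_{L1})/4}}\bigr)^2$. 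Finally $u\mapsto\tfrac34+\tfrac14\tfrac{1+u/4}{e^{u/4}}$ is decreasing on $u\ge0$, so the maximum over admissible $y_{L1}$ is attained at $y_{L1}=\min\{y_j,0.417\}$, i.e. $y_j-y_{L1}=(y_j-0.417)^+$, which is exactly the claimed $g(y_j)$. As a sanity check, at $y_{L1}=0$ this recovers the stronger two-way bound of Theorem~\ref{thm:two-way-socs-adwords}, and $g(y_j)<e^{-y_j}$ throughout since the squared factor is below $1$.
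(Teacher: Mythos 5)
You correctly identify the overall architecture — apply the second bound of Lemma~\ref{lem:two-way-socs-adwords} to the surrogate stream, control the contribution $y_{L1}$ of large one-way types using Constraint~\eqref{eqn:adwords-lp}, and finish with the symmetrization and concavity argument from the proof of Theorem~\ref{thm:two-way-socs-adwords}. This is the same skeleton the paper uses. However, there is a genuine gap in the middle step.

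You claim an unconditional bound $y_{L1} \le 0.417$ from the LP constraint, discarding the term $\prod_t e^{y_{L1}^t}(1 - y_{L1}^t)$ by bounding it by $1$. This unconditional bound is false in the Non-IID discrete model. For example, a single time step at which the online item is deterministic and $\matchrate_{ij}^t = 1$ with $b_{ij} = 1$ satisfies every constraint of the Stochastic AdWords LP (since $\bar v_j(\{(t,i)\}) = 1$), yet gives $y_{L1} = 1$. What the paper actually proves, using the constraint applied to the critical set $C$ and the relaxations $1-x \ge e^{-x-x^2}$ and $\nicefrac{1}{1-x} \ge 1+x$, is a \emph{conditional} bound of the form $y_{L1} \le 0.358 + 6Q$, where $Q = \sum_t (y_L^t)^2$ measures how concentrated the per-step large-bid mass is. The $6Q$ error term is not negligible and is exactly what your degenerate example exploits.

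The paper then performs the case analysis that your proposal is missing. When $Q \le 0.0098$, the conditional bound degrades to $y_{L1} \le 0.417$, and the argument proceeds as you outline. When $Q > 0.0098$, or when some $q^t > 0.4$, the paper instead invokes the \emph{first} bound of Lemma~\ref{lem:two-way-socs-adwords}, $\prod_t(1-y^t_L-y^t_S) \le e^{-y_j}\prod_t e^{y^t_L}(1-y^t_L) \le e^{-y_j} e^{-Q/2}$, and verifies numerically that this already beats the target $g(y_j)$. In other words, the product term you drop as "at most $1$" is not slack but carries the proof precisely in the regime where your claimed Converse Jensen Inequality fails. Without this case split, the argument does not close.
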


Since Equation~\eqref{eqn:adwords-general-socs-convergence} (Appendix~\ref{app:adwords-general-socs-convergence}) is concave and equals zero when $y_j = 0$, we have:
\[
	1 - g(y_j) \ge \big( 1 - g(1) \big) \cdot y_j > 0.6338 \cdot y_j
	~.
\]

As a corollary, we get the following competitive ratio for Stochastic AdWords.

\begin{corollary}
	\label{cor:stochastic-adwords}
	Rounding the solution of \textbf{Stochastic AdWords LP} using \textbf{General SOCS for AdWords} is $0.634$-competitive for Stochastic AdWords.
\end{corollary}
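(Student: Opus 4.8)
The plan is to derive the corollary by combining the per-agent convergence-rate guarantee of Theorem~\ref{thm:adwords-general-socs} with the LP-relaxation bound of Lemma~\ref{lem:adwords-lp-optimality}, via a simple agent-by-agent accounting together with a concavity estimate. First I would rewrite the objective of the \textbf{Stochastic AdWords LP}. Since the fractional allocation handed to the \textbf{General SOCS for AdWords} is the optimal LP solution, with $x_{ij}^t = \mass_i^t \cdot \matchrate_{ij}^t$, the LP objective is
\[
	\sum_{i \in I}\sum_{j \in J}\sum_{t \in [T]} b_{ij}\, x_{ij}^t ~=~ \sum_{j \in J} B_j \sum_{t \in [T]}\sum_{i \in I} \mass_i^t \cdot \matchrate_{ij}^t \cdot \frac{b_{ij}}{B_j} ~=~ \sum_{j \in J} B_j\, y_j ,
\]
where $y_j$ is exactly the quantity in the AdWords convergence-rate definition. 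By Lemma~\ref{lem:adwords-lp-optimality} this equals an upper bound on $\OPT$, so it suffices to prove $\ALG \ge 0.633 \sum_{j} B_j\, y_j$; since the social welfare is additive over agents, this reduces to showing, for each offline agent $j$, that its expected value for the rounded allocation is at least $0.633\, B_j\, y_j$.

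For the per-agent bound, the fractional allocation satisfies the LP constraints by construction, so the hypothesis of Theorem~\ref{thm:adwords-general-socs} applies and agent $j$'s expected value is at least $\big(1 - g(y_j)\big) B_j$ with $g$ the stated convergence function. It then remains to verify the scalar inequality $1 - g(y_j) \ge 0.633\, y_j$. First, $y_j \le 1$: instantiating Constraint~\eqref{eqn:adwords-lp} with $S = T \times I$ gives $\sum_{t,i} b_{ij}\, x_{ij}^t \le \bar{v}_j(T \times I) \le B_j$, i.e.\ $y_j \le 1$, so $y_j$ ranges over $[0,1]$. Second, $1 - g(\cdot)$ — Equation~\eqref{eqn:adwords-general-socs-convergence} — is concave on $[0,1]$ (verified in Appendix~\ref{app:adwords-general-socs-convergence}) and vanishes at $0$ since $g(0)=1$; hence for $y_j \in [0,1]$,
\[
	1 - g(y_j) ~=~ \big(1-g\big)\big(y_j\cdot 1 + (1-y_j)\cdot 0\big) ~\ge~ y_j\big(1-g(1)\big) + (1-y_j)\big(1-g(0)\big) ~=~ \big(1-g(1)\big)\, y_j .
\]
Finally I would evaluate $g(1) = e^{-1}\big(\tfrac34 + \tfrac14 \cdot \tfrac{1 + 0.583/4}{e^{0.583/4}}\big)^2$ — here the threshold $0.417$ of Theorem~\ref{thm:adwords-general-socs}, inherited from the bound on the contribution of large one-way surrogate types, enters numerically — and check $1 - g(1) > 0.633$ (keeping one more digit, $1-g(1) > 0.6338$), which matches the $0.634$ figure in the corollary statement up to rounding. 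Chaining the three inequalities yields $\ALG \ge \sum_j\big(1-g(1)\big) y_j B_j \ge 0.633 \sum_j B_j\, y_j \ge 0.633\,\OPT$.

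Once Theorem~\ref{thm:adwords-general-socs} is in hand this is pure bookkeeping, so I do not expect a genuine obstacle; the points needing care are only (i) checking that the fractional allocation passed to the SOCS truly satisfies the hypotheses of Theorem~\ref{thm:adwords-general-socs} (it does, being the optimal solution of the \textbf{Stochastic AdWords LP}), (ii) the clean decomposition of both the welfare and the LP objective as sums of per-agent terms $B_j\, y_j$, and (iii) the elementary concavity and numerical verifications. No running-time claim is required for the competitive-ratio statement itself, though one may additionally invoke Lemma~\ref{lem:adwords-lp-poly-time} and the ensuing discussion (Monte-Carlo estimation of $\bar{v}_j$, or treating the LP as a characterization of allocation statistics of past data) if a polynomial-time implementation is desired.
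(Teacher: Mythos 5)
Your proposal is correct and follows essentially the same path the paper takes: it plugs the convergence rate of Theorem~\ref{thm:adwords-general-socs} into the agent-by-agent decomposition $\sum_j B_j\, y_j$, uses concavity of $1-g$ together with $g(0)=1$ to linearize to $(1-g(1))\, y_j$, and appeals to Lemma~\ref{lem:adwords-lp-optimality} to compare the LP objective with $\OPT$. The only things you make explicit beyond the paper's one-line justification are the observation $y_j\le 1$ (from Constraint~\eqref{eqn:adwords-lp} with $S=T\times I$) and the welfare/LP per-agent bookkeeping, both of which are implicit in the paper's argument.
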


The rest of the subsection will be devoted to proving Theorem~\ref{thm:adwords-general-socs}.
Consider any fixed offline agent $j$.
By the definition of \textbf{Type Decomposition}, we have:
\begin{align*}
    & y_{S}^t = \sum_{i\in S} x_{ij}^t b_{ij} ~, \\
    & y_{L1}^t = \sum_{i\in L} \left(2x_{ij}^t - f_i^t\right)^+ \cdot b_{ij} ~, \\
    & y_{L2}^t = \sum_{i\in L} \left(x_{ij}^t - \left(2x_{ij}^t - f_i^t\right)^+\right) \cdot b_{ij} ~.
\end{align*}

Further define the set of \emph{critical} time-type pairs as:
\[
	C = \Big\{\, (t, i) : i \in L, x_{ij}^t \ge \frac{1}{2} f_i^t \,\Big\}
	~.
\]

Note that for any time step $t$ there can be at most one online type $i \in L$ such that $(t, i) \in C$.
Moreover, for any critical time-item pair $(t, i) \in C$, let $q^t$ denote the expected consumption agent $j$'s budget if we always allocate an item of type $i$ at time step $t$.
That is:
\[
	q^t = f_i^t \cdot b_{ij}
	~.
\]

Then, we have:
\begin{equation}
	\label{eqn:adwords-L1-decompose}
    y_{L1}^t = \sum_{i : (t,i) \in C} \left(2x_{ij}^t b_{ij} - f_i^t b_{ij}\right) = \sum_{i : (t,i) \in C} \big( 2x_{ij}^t b_{ij} - q^t \big)
    ~. 
\end{equation}

We first consider the case when $q^t\le 0.4$ for all time steps $t \in [T]$.
By Constraint \eqref{eqn:adwords-lp} of the Stochastic AdWords LP:
\begin{align*}
	\sum_{(t, i) \in C} x_{ij}^t b_{ij}
	&
	~\le~ 
	1 - \prod_{(t, i) \in C} \big(1 - \mass^t_i \big) - \sum_{(t, i) \in C} \mass^t_i \prod_{(t', i') \in C \,:\, (t',i') \ne (t,i)} \big(1 - \mass^t_i \big) \cdot \big( 1 - b_{ij} \big) \\
	&
	~=~
	1 - \prod_{(t, i) \in C} \big(1 - \mass^t_i \big) - \sum_{(t, i) \in C} \big( \mass^t_i - q^t \big) \prod_{(t', i') \in C \,:\, (t',i') \ne (t,i)} \big(1 - \mass^t_i \big)
	~.
\end{align*}

This is increasing in $f^t_i$ for any $(t, i) \in C$.
By $f^t_i = \nicefrac{q^t}{b_{ij}} \le \nicefrac{3 q^t}{2}$, we get that:
\begin{align*}
    \sum_{(t, i) \in C} x_{ij}^t b_{ij}
    & 
    ~\le~ 
    1 - \prod_{(t,i) \in C} \left(1 - \frac{3}{2} q^t \right)  - \frac{1}{2} \sum_{(t,i) \in C} q^t \cdot \prod_{(t', i') \in C \,:\, (t',i') \ne (t,i)} \left(1-\frac32 q^{t'} \right) \\
    & 
    ~=~ 
    1 - \bigg(1 + \frac{1}{2} \sum_{(t,i) \in C} \frac{q^t}{1-\nicefrac{3q^t}{2}} \bigg) \prod_{(t,i) \in C} \left(1 - \frac{3}{2} q^t \right)
    ~.
\end{align*}

By $1 - x \ge e^{-x-x^2}$ for $0 \le x \le 0.6$, we have:
\[
    1 - \frac32 q^t \ge \exp \left(-\frac32 q_t - \frac94 q_t^2 \right)
    ~.
\]

Further by $\nicefrac{1}{(1-x)} \ge 1 + x$ we have:
\[
	\frac{q^t}{1-\nicefrac{3q^t}{2}} \ge q^t + \frac{3}{2} \big(q^t\big)^2
	~.
\]

Let $S = \sum_{(t,i) \in C} (q_t)^2$, we get that:
\[ 
    \sum_{(t, i) \in C} x_{ij}^t b_{ij} \le 1 -  \bigg(1 + \frac12 \sum_{t\in [T]} q_t + \frac34 S \bigg) \cdot \exp\bigg(-\frac32 \sum_{t\in [T]} q_t - \frac94 S \bigg).
\]

Combining this bound with Equation~\eqref{eqn:adwords-L1-decompose}, we derive the following upper bound on the contribution of large one-way types:
\begin{align*}
    y_{L1}
    &
    ~=~ \sum_{t\in [T]} y_{L1}^t \\
    &
    ~=~ 2 \sum_{(t,i) \in C} x^t_{ij} b_{ij} - \sum_{(t,i) \in C} q^t \\
    &
    ~\le~ 2 - 2\bigg(1 + \frac12 \sum_{(t,i) \in C} q_t + \frac34 S \bigg)\cdot \exp\bigg(-\frac32 \sum_{(t,i) \in C} q_t - \frac94 S \bigg) - \sum_{(t,i) \in C} q_t
    ~.
\end{align*}

If we were in the non-homogeneous Poisson arrival model, all $q^t$ would be infinitesimally small, and thus $S = 0$.
In general, the above bound minus $\frac32 S$ is at most (with $x = \frac{1}{2} \sum_{(t,i) \in C} q^t + \frac32 S$):
\[
	\max_{x \ge 0} ~ 2 - 2 \big(1 + x \big) e^{-3 x} - 2x < 0.358
	~.
\]

Therefore, we have:
\[
	y_{L1} \le 0.358 + \frac32 S
	~.	
\]

Let $Q = \sum_{t\in [T]} (y_{L}^t)^2$. 
Observe that $q_t \le \sum_{i\in L}2x_{ij}^t b_{ij} = 2y_L^t$.
Hence, we have $S\le 4Q$ and:
\begin{equation}
	\label{eqn:adwords-L1-bound}	
	y_{L1} \le 0.358 + 6Q
	~.	
\end{equation}

By Lemma~\ref{lem:two-way-socs-adwords}, the expected fraction of agent $j$'s unspent budget is at most:
\[
	e^{-y_j}
	\cdot
	\bigg( \frac{3}{4} + \frac{1}{4} \cdot \frac{1+\nicefrac{y_{L2}}{2}}{e^{\nicefrac{y_{L2}}{2}}}\bigg)
	\cdot
	\frac{1 + \nicefrac{y_S}{2}}{e^{\nicefrac{y_S}{2}}}
	\cdot
	\prod_{t=1}^T e^{y_{L1}^t} \big( 1 - y_{L1}^t \big)
	~.
\]

Dropping the third term as it is at most $1$, this is at most:
\[
    e^{-y_j} \cdot \left(\frac34 + \frac14 \cdot \frac{1+y_{L2}/2}{e^{y_{L2}/2}} \right) \cdot \frac{1+y_S/2}{e^{y_S/2}}
    ~.
\]

Symmetrizing the last two terms by relaxing the term related to $y_S$, we can bound it by:
\[
	e^{-y_j} \cdot \left(\frac34 + \frac14 \cdot \frac{1+y_{L2}/2}{e^{y_{L2}/2}}\right) \cdot \left(\frac34 + \frac14 \cdot \frac{1+y_S/2}{e^{y_S/2}}\right) 
	~.
\]

By the concavity of $\log \left(\frac34 + \frac14 \cdot \frac{1+y_{L2}/2}{e^{y_{L2}/2}}\right)$ (Appendix~\eqref{app:adwords-convergence-function}), this is at most:
\[
	e^{-y_j} \cdot \left(\frac34 + \frac14 \cdot \frac{1+(y_{L_2}+y_S)/4}{e^{(y_{L_2}+y_S)/4}}\right)^2
\]

Finally, by $y_j = y_S + y_{L2} + y_{L1}$ and $y_{L1} \le 0.358 + 6Q$ (Equation~\eqref{eqn:adwords-L1-bound}), the expected unspent budget is at most:
\begin{align*}
    e^{-y_j} \cdot \left(\frac34 + \frac14 \cdot \frac{1+(y_j-0.358-6Q)^+/4}{e^{(y-0.358-6Q)^+/4}}\right)^2.
\end{align*}

This would imply the stated bound if $Q \le 0.0098 < \nicefrac{(0.417 - 0.358)}{6}$.

On the other hand, we have the following simple upper bound for the expected unspent budget as a corollary of the first bound of Lemma~\ref{lem:two-way-socs-adwords}:
\[
	\prod_{t \in T} \big( 1 - y^t_L - y^t_S \big) \le \prod_{t \in T} \big( 1 - y^t_L\big) \big(1 - y^t_S \big) \le e^{-y_S} \cdot \prod_{t=1}^T \big( 1 - y^t_L \big)
	~.
\]

We can relax it as follows:
\begin{align*}
    e^{-y_j} \cdot \prod_t e^{y_L^t} (1-y_L^t)
    & \le e^{-y_j} \cdot \prod_t e^{-(y_L^t)^2/2} \\
    & = e^{-y_j} \cdot e^{-Q/2} \le e^{-y_j} \cdot e^{-\nicefrac{0.0098}{2}}
    ~.
\end{align*}

This is at most the stated bound in the theorem because:
\[
	\bigg( \frac34 + \frac14 \cdot \frac{1+(1-0.417)/4}{e^{(1-0.417)/4}} \bigg)^2 > 0.99518 > e^{-\nicefrac{0.0098}{2}}
	~.
\]

Finally, we consider the case when $q_t > 0.4$ for some critical time-type pair $(t, i) \in C$.
Then, we have $y_L^t > 0.2$. 
By the above bound, the expected fraction of the unspent budget is at most:
\[
    e^{-y_j} \cdot e^{y_L^t} \big(1 - y_L^t\big) \le e^{-y_j} \cdot e^{0.2} \cdot 0.8 < 0.97 \cdot e^{-y_j}
    ~,
\]
which is at most the stated bound in the theorem.

\subsection{(Multi-Way) OCS from Uninformed Two-Way SOCS}
\label{sec:adwords-ocs}

\paragraph{OCS for AdWords.}
Consider $T$ discrete time steps. 
At each step $t \in [T]$, consider an online item with online type $i \in I$ and a fractional allocation $\matchrate^t_i = \big( \matchrate^t_{ij} \big)_{j \in J}$ such that $\sum_{j \in J} \matchrate^t_{ij} = 1$.
The type and fractional allocation are chosen by an adversary at the beginning, and revealed to the algorithm at time step $t$.
After observing the online item's type and the fractional allocation at each time step $t$, the OCS must allocate the item to an offline agent immediately.
Since each time step $t \in [T]$ is associated with only one online type, we will abuse notation and refer to this type as $t$.
Hence, the notation for the fractional allocation simplifies to $\matchrate^t = \big( \matchrate^t_j \big)_{j \in J}$, and the bids are denoted as $b^t_j$'s.

Similar to the definition of SOCS in Section~\ref{sec:socs}, we consider each offline agent $j$'s fraction of budget used by the fractional allocation, and denote it as:
\[
	y_j = \sum_{t=1}^T \matchrate^t_j \frac{b^t_j}{B_j}
	~.
\]

We measure the quality of an OCS for AdWords also by its convergence rate $g(y_j)$.
For any offline agent $j \in J$, the OCS needs to guarantee that the expected unused fraction of agent $j$'s budget is at most $g(y_j)$.
Equivalently, offline agent $j$'s expected contribution to the objective of the AdWords problem is at least:
\[
	\big( 1 - g(y_j) \big) B_j
	~.
\]

The next lemma explains the relation between the convergence rate of OCS and the resulting competitive ratio for the AdWords problem in the adversarial model.
Its proof is almost identical to a corresponding result for unweighted and vertex-weighted matching proved by \citet{GaoHHNYZ:FOCS:2021}.
Hence, we defer it to Appendix~\ref{app:adwords-competitive}.

\begin{lemma}
	\label{lem:adwords-competitive}
	Suppose that we have an OCS for AdWords with convergence rate $g(y_j)$.
	Then, rounding a Balance algorithm with this OCS achieves competitive ratio:
	\[
		1 - \int_0^\infty g(z) e^{-z} \,\dif z
		~.
	\]	
\end{lemma}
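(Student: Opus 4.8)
The plan is to adapt the primal--dual / tradeoff-revealing LP analysis of the Balance algorithm for AdWords due to \citet{MehtaSVV:JACM:2007}, with the only change being that an offline agent's fractional contribution is replaced by the value the OCS is guaranteed to extract from it. Fix the deterministic fractional Balance algorithm (parametrized by the same function $g$): when item $t$ arrives online with bids $b^t_j$, it pours the whole unit of the item into the agents maximizing the $g$-discounted bang-per-buck $b^t_j\, g(\ell_j)$, where $\ell_j = \sum_{t'\le t}\matchrate^{t'}_j b^{t'}_j / B_j$ is the fraction of agent $j$'s budget committed so far (allowed to exceed $1$). First I would record the standard water-filling monotonicity: letting $\pi_t := \max_j b^t_j\, g(\ell_j)$ be the ``clearing price'' of item $t$ measured at the end of step $t$, every agent $j$ with $b^t_j>0$ satisfies $b^t_j\, g(\ell_j) \le \pi_t$ after step $t$; since levels only increase and $g$ is non-increasing, its final level $y_j = \sum_t \matchrate^t_j b^t_j / B_j$ obeys $b^t_j\, g(y_j) \le \pi_t$. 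In particular this holds for the agent to which the offline optimum assigns item $t$.

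Next I would use the standard LP relaxation $\max \sum_{t,j} b^t_j x^t_j$ subject to $\sum_j x^t_j \le 1$ and $\sum_t b^t_j x^t_j \le B_j$, whose optimum upper-bounds $\OPT$, and exhibit a feasible dual solution driven by the Balance execution: put $\gamma_j = 1 - g(y_j)$ for offline agents (equivalently $\gamma_j = \int_0^{y_j}(-g'(\ell))\,\dif\ell$, accrued incrementally as the level rises) and $\beta_t = \pi_t$ for online items. The dual constraint $\beta_t + b^t_j \gamma_j \ge b^t_j$, i.e. $\beta_t \ge b^t_j\, g(y_j)$, is exactly the inequality recorded above, so weak duality gives $\OPT \le \sum_t \pi_t + \sum_j B_j(1 - g(y_j))$.

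The technical heart is the bound $\sum_t \pi_t \le \big(\tfrac{1}{\Gamma} - 1\big)\sum_j B_j(1 - g(y_j))$ with $\Gamma = 1 - \int_0^\infty g(z)\,e^{-z}\,\dif z$. I would prove it in the MSVV style: charge each price $\pi_t$ against the increments $-g'(\ell)\,B_j\,\dif\ell$ of $\sum_j B_j\gamma_j$ generated while pouring item $t$, and observe that the worst case over all ``fill profiles'' reduces, in the continuous / infinitesimal-bid limit, to a one-dimensional extremal problem whose value is exactly $\int_0^\infty g(z)\,e^{-z}\,\dif z$ (with $g(z) = e^{-z}$ recovering the greedy ratio $\tfrac12$, a useful sanity check). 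This is the step I expect to be the main obstacle: because $g$ need not be the ``ideal'' exponential penalty, no item-by-item inequality suffices and one must run the global tradeoff argument. However, once the input is re-expressed in terms of budgets and bids, it is essentially identical to the unweighted and vertex-weighted matching analysis of \citet{GaoHHNYZ:FOCS:2021}, so I would cite that and only adapt the bookkeeping.

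Finally I would invoke the OCS convergence rate directly: feeding Balance's fractional allocation $(\matchrate^t)$ to the OCS yields, for every offline agent $j$, expected value at least $(1 - g(y_j))B_j$, hence $\ALG \ge \sum_j B_j(1 - g(y_j))$. Chaining the inequalities,
\[
\OPT \le \sum_t \pi_t + \sum_j B_j(1 - g(y_j)) \le \tfrac{1}{\Gamma}\sum_j B_j(1 - g(y_j)) \le \tfrac{1}{\Gamma}\,\ALG,
\]
i.e. $\ALG \ge \Gamma\cdot\OPT$ with $\Gamma = 1 - \int_0^\infty g(z)\,e^{-z}\,\dif z$, which is the claimed competitive ratio.
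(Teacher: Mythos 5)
Your proposal has a genuine gap at exactly the step you flag as the technical heart: the claimed bound $\sum_t \pi_t \le \big(\tfrac{1}{\Gamma} - 1\big)\sum_j B_j(1 - g(y_j))$ is false for essentially every non-trivial $g$, so the chain of inequalities does not close. Take a single agent with budget $1$ and $n$ items each bidding $\tfrac{Y}{n}$ (there is nothing else to pour into, so the allocation rule is irrelevant here). In the $n\to\infty$ limit, $\sum_t \pi_t \to \int_0^Y g(\ell)\,\dif\ell$ while $\sum_j B_j(1-g(y_j)) = 1-g(Y)$; letting $Y\to 0$, your bound would force $g(0)\le\big(\tfrac{1}{\Gamma}-1\big)\cdot(-g'(0))$. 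One checks this holds with equality exactly when $g(y)=e^{-cy}$, i.e.\ the baseline; for any OCS with $g(0)=1$, $-g'(0)=1$, and $\Gamma>\tfrac12$ (in particular the paper's Theorem~\ref{thm:two-way-socs-adwords}, where $g(y)=e^{-y}\big(\tfrac34+\tfrac14(1+y/4)e^{-y/4}\big)^2$ has $-g'(0)=1$) the requirement becomes $1\le\tfrac{1}{\Gamma}-1<1$, a contradiction. Put differently, the dual $\gamma_j=1-g(y_j)$, $\beta_t=\pi_t$ is just too loose: it recovers the greedy ratio $\tfrac12$ but throws away precisely the improvement the OCS is supposed to buy. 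This is not a bookkeeping issue one can offload to \citet{GaoHHNYZ:FOCS:2021}.

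The paper's Appendix~\ref{app:adwords-competitive} avoids this by choosing the \emph{right} dual via gain splitting. It defines $\beta(y)=-e^y\int_y^\infty g'(z)e^{-z}\,\dif z$ and $\alpha(y)=-g'(y)-\beta(y)$, runs Balance with marginal return $b^t_j\,\beta(y_j)$ (not $b^t_j\,g(y_j)$; these differ unless $g$ is exponential), and splits each increment $-g'(y)\,\dif y\cdot B_j$ of the OCS-guaranteed value into an offline share $\alpha(y)\,\dif y\cdot B_j$ and an online share $\beta(y)\,\dif y\cdot B_j$. The ODE $\beta'(y)=\beta(y)+g'(y)$ makes $\int_0^{y}\alpha(z)\,\dif z+\beta(y)\equiv\Gamma$ an exact pointwise invariant, so the competitive ratio follows from a local dual-fitting argument with no residual extremal problem. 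That invariant, which you would have to reconstruct anyway to get the constant $\Gamma$ out of a global MSVV tradeoff, is precisely what your $\gamma_j=1-g(y_j)$ fails to encode.
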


Recall that the baseline algorithm that independently samples an offline agent $j$ according to $\matchrate^t$ yields convergence rate $g(y_j) = e^{-y_j}$.
This convergence rate leads to a competitive ratio of $\frac{1}{2}$ according to Lemma~\ref{lem:adwords-competitive}, which is also the baseline competitive ratio given by the greedy algorithm.
Hence, our goal is to design an OCS for AdWords with a faster convergence rate, and to obtain an improved competitive ratio for the AdWords problem.

\paragraph{OCS for AdWords from Uninformed Two-Way SOCS.}
Given the fractional allocation $\mu^t$ of each time step $t \in [T]$, we can run the \textbf{Type Decomposition} algorithm from Section~\ref{sec:socs-reduction} to get a distribution over one-way and two-way surrogate types. 
By doing so, we get an instance of SOCS with one-way and two-way types. 
Can we then select using a two-way SOCS?
The caveat is that the two-way SOCS algorithm must be \emph{uninformed} in the sense that at any time step $t \in [T]$ it only knows the distributions of time steps $1$ to $t$ but not those in the future.
It is easy to see that our \textbf{Two-Way SOCS for AdWords} is uninformed. 
We will next analyze the convergence rate of the resulting OCS.

\begin{theorem}
	\label{thm:ocs-adwords}
	Combining \textbf{Type Decomposition} and \textbf{Two-Way SOCS for AdWords} gives an OCS for the AdWords problem with convergence rate:
	\[
		g(y_j) = e^{-y_j} 
		\cdot 
		\max_{\substack{\vphantom{|} y_S,\, y_{L1},\, y_{L2}  \, \ge 0 \,: \\ \vphantom{|} y_S + y_{L1} + y_{L2} = y_j}} 
		\frac{1+\nicefrac{y_S}{2}}{e^{\nicefrac{y_S}{2}}}
		\cdot
		\bigg( \frac{3}{4} + \frac{1}{4} \frac{1+\nicefrac{y_{L2}}{2}}{e^{\nicefrac{y_{L2}}{2}}} \bigg) 
		\cdot 
		\exp \Big( - \frac{y_{L1}^2}{3 y_{L1} + 6 y_{L2}} \Big)
		~.
	\]
\end{theorem}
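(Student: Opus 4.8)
The plan is to reduce everything to Lemma~\ref{lem:two-way-socs-adwords}, which already upper bounds the expected unspent budget of an offline agent $j$ for \textbf{Two-Way SOCS for AdWords} applied to an instance with one-way and two-way surrogate types. In the adversarial model the fixed sequence of (online type, fractional allocation) pairs is a degenerate distribution that is a point mass at each step; running \textbf{Type Decomposition} on it yields precisely such a surrogate-type instance, where the only randomness now comes from the samples $\eta$ inside Type Decomposition and the internal coins of the two-way SOCS. Since \textbf{Two-Way SOCS for AdWords} is uninformed, Lemma~\ref{lem:two-way-socs-adwords} applies verbatim; the surrogate-type indicators $R^t_{i\sim j}, R^t_{i\sim\{j,k\}}$ at each step still partition the probability space and sum to $1$, so the negative association used inside that lemma (Lemma~\ref{lem:adwords-negative-association}) is preserved. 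First I would invoke the second bound of Lemma~\ref{lem:two-way-socs-adwords}: after normalizing $B_j=1$, the expected unspent budget of agent $j$ is at most
\[
    e^{-y_j}\cdot\Bigl(\tfrac34+\tfrac14\cdot\tfrac{1+y_{L2}/2}{e^{y_{L2}/2}}\Bigr)\cdot\tfrac{1+y_S/2}{e^{y_S/2}}\cdot\prod_{t=1}^T e^{y^t_{L1}}\bigl(1-y^t_{L1}\bigr),
\]
where $y_S+y_{L2}+y_{L1}=y_j$ and $y_{L1}=\sum_t y^t_{L1}$.

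The only factor not yet in the desired closed form is the product over $t$. Using $e^x(1-x)\le e^{-x^2/2}$ for $x\in[0,1)$, it is at most $\exp\bigl(-\tfrac12\sum_t (y^t_{L1})^2\bigr)$, so it suffices to establish the lower bound $\sum_t (y^t_{L1})^2\ge \tfrac{2y_{L1}^2}{3y_{L1}+6y_{L2}}$; this is the heart of the proof and the step that replaces the LP-based argument used in the stochastic case (Subsection~\ref{sec:adwords-general-socs}). The key structural fact, from \textbf{Type Decomposition}, is that a step $t$ contributes to $y^t_{L1}$ only when $\matchrate^t_j>\tfrac12$ and the bid is large, $b^t_j>\tfrac23$, in which case $y^t_{L1}=(2\matchrate^t_j-1)b^t_j$ and $y^t_{L2}=(1-\matchrate^t_j)b^t_j$; setting $r_t=y^t_{L2}/y^t_{L1}$, a short computation gives $2\matchrate^t_j-1=\tfrac1{1+2r_t}$, hence $y^t_{L1}=\tfrac{b^t_j}{1+2r_t}>\tfrac{2}{3(1+2r_t)}$. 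I would then write $\sum_t (y^t_{L1})^2=y_{L1}\cdot\mathbf{E}_\lambda[y^t_{L1}]$ with $\lambda_t=y^t_{L1}/y_{L1}$ a distribution over the relevant steps, bound $\mathbf{E}_\lambda[y^t_{L1}]\ge\tfrac23\,\mathbf{E}_\lambda\bigl[\tfrac1{1+2r_t}\bigr]\ge\tfrac23\cdot\tfrac1{1+2\mathbf{E}_\lambda[r_t]}$ by convexity of $x\mapsto\tfrac1{1+2x}$ and Jensen's inequality, and finally use $\mathbf{E}_\lambda[r_t]=\tfrac1{y_{L1}}\sum_{t:\,y^t_{L1}>0} y^t_{L2}\le \tfrac{y_{L2}}{y_{L1}}$ together with monotonicity of $\tfrac1{1+2x}$ to conclude $\mathbf{E}_\lambda[y^t_{L1}]\ge\tfrac{2y_{L1}}{3y_{L1}+6y_{L2}}$, which gives the claim and hence $\prod_t e^{y^t_{L1}}(1-y^t_{L1})\le\exp\bigl(-\tfrac{y_{L1}^2}{3y_{L1}+6y_{L2}}\bigr)$.

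Combining the two estimates, the expected unspent budget of agent $j$ is at most
\[
    e^{-y_j}\cdot\tfrac{1+y_S/2}{e^{y_S/2}}\cdot\Bigl(\tfrac34+\tfrac14\cdot\tfrac{1+y_{L2}/2}{e^{y_{L2}/2}}\Bigr)\cdot\exp\Bigl(-\tfrac{y_{L1}^2}{3y_{L1}+6y_{L2}}\Bigr).
\]
Since the adversary controls the instance, and therefore the split of $y_j$ into $y_S,y_{L1},y_{L2}$, I would take the supremum over all nonnegative triples summing to $y_j$, which is exactly the stated convergence rate $g(y_j)$. I expect the main obstacle to be the lower bound $\sum_t (y^t_{L1})^2\ge \tfrac{2y_{L1}^2}{3y_{L1}+6y_{L2}}$: one must be careful that steps with $y^t_{L1}=0$ but $y^t_{L2}>0$ only weaken (never strengthen) the conclusion, that normalizing $B_j=1$ and assuming $b^t_j\le B_j$ are without loss of generality, and that the bound $\mathbf{E}_\lambda[r_t]\le y_{L2}/y_{L1}$ is applied in the correct direction. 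A minor secondary point is to confirm that the negative-association and decoupling arguments inside Lemma~\ref{lem:two-way-socs-adwords} go through unchanged when the surrogate types arise from Type Decomposition of a point-mass instance.
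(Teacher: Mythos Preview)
Your proposal is correct and follows essentially the same approach as the paper: apply the second bound of Lemma~\ref{lem:two-way-socs-adwords}, use the structural identity $y^t_{L1}+2y^t_{L2}=b^t_j\ge\tfrac23$ at steps with $y^t_{L1}>0$, and turn this into the estimate $\prod_t e^{y^t_{L1}}(1-y^t_{L1})\le\exp\bigl(-\tfrac{y_{L1}^2}{3y_{L1}+6y_{L2}}\bigr)$ before taking the supremum over $(y_S,y_{L1},y_{L2})$. The only difference is cosmetic: the paper first applies Jensen to the concave function $x+\ln(1-x)$ to replace each $y^t_{L1}$ by the arithmetic average $\Delta$ and then bounds $\Delta\ge\tfrac{2y_{L1}}{3y_{L1}+6y_{L2}}$ by summing $y^t_{L1}+2y^t_{L2}\ge\tfrac23$, whereas you apply the pointwise bound $e^x(1-x)\le e^{-x^2/2}$ first and then lower-bound $\sum_t(y^t_{L1})^2$ via Jensen on $x\mapsto\tfrac{1}{1+2x}$; both routes use the same ingredients and yield the identical final inequality.
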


This is strictly better than the baseline convergence rate $e^{-y_j}$ for any $y_j > 0$.
To see this, if $y_S \ge \frac{y_j}{3}$, the first term in the above maximization is strictly smaller than $1$.
If $y_{L2} \ge \frac{y_j}{3}$, the second term is the above maximization is strictly smaller than $1$.
Finally, if $y_{L1} \ge \frac{y_j}{3}$ but $y_{L2} < \frac{y_j}{3}$, the third term is the above maximization is strictly smaller than $1$.

To compute the convergence rate for each $y_j$, we solve the maximization problem numerically using a standard solver.
By solving the value of $g(y_j)$ and then the integration in Lemma~\ref{lem:adwords-competitive} numerically, we get the following corollary.

\begin{corollary}
	\label{cor:adwords}
	Rounding the \textbf{Balance-OCS for AdWords} using the \textbf{OCS for AdWords} is $0.504$-competitive for the AdWords problem in the adversarial model.
\end{corollary}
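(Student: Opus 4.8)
The plan is to combine the two facts established immediately above. Theorem~\ref{thm:ocs-adwords} provides the convergence rate of the rounding scheme, and Lemma~\ref{lem:adwords-competitive} converts any such convergence rate into a competitive ratio for a Balance algorithm rounded by an OCS. The algorithm in the corollary is exactly such a pairing: the \textbf{Balance-OCS for AdWords} supplies the fractional allocations $\matchrate^t$, and these are rounded by the \textbf{OCS for AdWords}, i.e.\ the composition of \textbf{Type Decomposition} with \textbf{Two-Way SOCS for AdWords}. By Theorem~\ref{thm:ocs-adwords} this OCS has convergence rate
\[
	g(z) = e^{-z} \cdot \max_{\substack{y_S,\, y_{L1},\, y_{L2} \ge 0 \\ y_S + y_{L1} + y_{L2} = z}} \frac{1+\nicefrac{y_S}{2}}{e^{\nicefrac{y_S}{2}}} \cdot \Bigl( \frac{3}{4} + \frac{1}{4}\,\frac{1+\nicefrac{y_{L2}}{2}}{e^{\nicefrac{y_{L2}}{2}}}\Bigr) \cdot \exp\Bigl(-\frac{y_{L1}^2}{3y_{L1}+6y_{L2}}\Bigr),
\]
and Lemma~\ref{lem:adwords-competitive} then yields competitive ratio $1 - \int_0^\infty g(z)\,e^{-z}\,\dif z$. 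Hence it suffices to prove $\int_0^\infty g(z)\,e^{-z}\,\dif z < 0.496$.

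The remaining work is a careful numerical estimate of this integral, organized in two nested layers. In the inner layer, for a fixed $z$ one maximizes the product of the three factors over the two-dimensional simplex $\{y_S + y_{L1} + y_{L2} = z,\ y_S,y_{L1},y_{L2}\ge0\}$. The objective is continuous on a compact set, and since the logarithm of each of the first two factors is concave in its own variable the maximization is well behaved; I would either solve the stationarity (KKT) conditions or perform a grid search with a Lipschitz bound on the objective's gradient to control the discretization error. In the outer layer, one integrates $z\mapsto g(z)e^{-z}$ over $[0,\infty)$. The tail is harmless: each of the three factors inside the maximum lies in $[0,1]$, so $g(z)\le e^{-z}$ and thus $g(z)e^{-z}\le e^{-2z}$; the contribution of $[Z,\infty)$ is therefore at most $\tfrac12 e^{-2Z}$, negligible already for $Z=10$. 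On $[0,Z]$ I would apply a fine quadrature with explicit error control (bounding $|(g(z)e^{-z})'|$, which is uniformly bounded since $g$ is Lipschitz there), aiming for an absolute error well under the $0.004$ gap between the baseline $\tfrac12$ and the target $0.496$; the desired inequality then follows, and with it the competitive ratio $0.504$.

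I expect the main obstacle to be making the nested computation genuinely rigorous rather than heuristic: the error of the inner maximization must be propagated uniformly in $z$ through the outer quadrature, and since the improvement over the $\tfrac12$ baseline is only about $0.004$ the combined error budget is tight. A way to sidestep part of this is to replace the inner $\max$ by an explicit closed-form upper bound — separately majorizing the contribution in each of the regimes where $y_S$, $y_{L2}$, or $y_{L1}$ respectively dominates, as in the discussion following Theorem~\ref{thm:ocs-adwords} — and integrating that majorant analytically, trading a possibly weaker constant for a fully analytic argument. Since $0.504$ already improves on the previous $0.501$, I would first verify that the direct numerical route clears $0.504$ comfortably before investing in such an analytic majorant.
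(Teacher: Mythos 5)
Your plan is exactly what the paper does: plug the convergence rate from Theorem~\ref{thm:ocs-adwords} into the formula $1 - \int_0^\infty g(z)e^{-z}\,\dif z$ from Lemma~\ref{lem:adwords-competitive}, and verify numerically (by solving the inner maximization and the outer integral) that the result clears $0.504$. The paper in fact only says it solves both layers with a standard numerical solver and does not dwell on the error-propagation concerns you raise, so your proposal matches the paper's argument while being, if anything, more explicit about what a rigorous verification would require.
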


\begin{proof}[Proof of Theorem~\ref{thm:ocs-adwords}]
Fix any offline agent $j \in J$.
Let $\mass^t_j$, $\mass^t_{\{j, *\}}$ denote the probability of realizing one-way and two-way surrogate types involving offline agent $j$.
Recall that $S$ and $L$ denote the subsets of time steps (which uniquely determine the types) that receive small and large bids respectively from agent $j$.

We now adopt the notations of $y_S$, $y_{L1}$, and $y_{L2}$ into the OCS setting as follows, normalizing agent $j$'s budget to be $B_j = 1$ for ease of notations:
\begin{align*}
	y_S^t & = \mathbf{1}_{t \in S} \cdot \matchrate^t_j \cdot b^t_j 
    ~, \\[2ex]
	y_{L1}^t & = \mathbf{1}_{t \in L} \cdot \underbrace{(2\matchrate^t_j - 1)^+}_{\text{$\mass^t_j$, prob.\ of one-way type $j$}} \cdot~ b^t_j
    ~, \\
	y_{L2}^t & = \mathbf{1}_{t \in L} \cdot \underbrace{\min \big\{ \matchrate^t_j, 1 - \matchrate^t_j \big\}}_{\text{$\mass^t_{\{j,*\}}$, prob.\ of two-way type $\{j, *\}$}} \cdot~ \frac{1}{2} \cdot b^t_j
    ~.
\end{align*}

Further, let:
\[
	y_S = \sum_{t=1}^T y_S^t
	\quad,\qquad
	y_{L2} = \sum_{t=1}^T y_{L2}^t
    ~.
\]

The expected unused budget is bounded by (Theorem~\ref{thm:two-way-socs-adwords}):
\[
	e^{-y_j}
	\cdot
	\frac{1 + \nicefrac{y_S}{2}}{e^{\nicefrac{y_S}{2}}}
	\cdot
	\bigg( \frac{3}{4} + \frac{1}{4} \cdot \frac{1 + \nicefrac{y_{L2}}{2}}{e^{\nicefrac{y_{L2}}{2}}} \bigg)
	\cdot 
		\underbrace{\prod_{t=1}^T e^{y_{L1}^t} \big( 1- y_{L1}^t \big)}_{\text{$(\star)$}} 
    ~.
\]

For any time step whose $y_{L1}^t > 0$, the contributions from large one-way and two-way surrogate types satisfy:
\begin{equation}
    \label{eqn:adwords-OCS-L1-L2}
	2 y_{L2}^t + y_{L1}^t = b^t_j \ge \frac{2}{3}
	~.
\end{equation}
because we have:
\[
	y_{L2}^t = \big(1 - \matchrate^t_{j} \big) b_{ij}
	\quad,\qquad
	y_{L1}^t = \big( 2 \matchrate^t_{j} - 1 \big) b_{ij}
	~.
\]

Let $\Delta$ denote the average value of $y^t_{L1}$ over the time steps with a positive $y^t_{L1}$.
The number of such time steps is then $\nicefrac{y_{L1}}{\Delta}$.
By the concavity of $\log \big( e^y (1-y) \big) = y + \ln(1-y)$ for $0 \le y < 1$, the contribution from $(\star)$  above is bounded by:
\[
	(\star) \le \big( e^\Delta ( 1 - \Delta ) \big)^{\nicefrac{y_{L1}}{\Delta}} = \exp \Big( \frac{y_{L1} \big( \Delta + \ln (1-\Delta) \big)}{\Delta} \Big) \le \exp \Big( - \frac{\Delta}{2} y_{L1} \Big)
	~,
\]
where the inequality follows by $\ln(1-x) \le - x - \nicefrac{x^2}{2}$.

Further, by summing Equation~\eqref{eqn:adwords-OCS-L1-L2} over time steps with positive $y^t_{L1}$, we have:
\[
	2y_{L2} + y_{L1} ~\ge~ \sum_{t \,:\, y_{L1}^t > 0} \big( 2y_{L2}^t + y_{L1}^t \big) ~\ge~ \frac{y_{L1}}{\Delta} \cdot \frac{2}{3} 
	~.
\]

Rearranging terms, we get that:
\[
	\Delta \ge \frac{2y_{L1}}{3y_{L1} + 6y_{L 2}}
	~.
\]

Combining the above two inequalities proves the theorem.
\end{proof}

\section{Display Ads}
\label{sec:display-ads}

This section presents our results for Stochastic Display Ads.
Recall that we will account for each offline agent $j$'s contribution to the objective by each weight-level.
For any weight-level $w > 0$, we consider the total fractional allocation given to offline agent $j$ from online items with edge-weights at least $w$, denoted as $y_j(w)$.
A SOCS with convergence rate $g(\cdot)$ needs to ensure that it allocates an online item with edge-weight at least $w$ to offline agent $j$ with probability at least $1 - g\big(y_j(w)\big)$.

\subsection{Two-Way SOCS for Display Ads}

Consider the following algorithm similar to the two-way SOCS for AdWords.

\bigskip

\begin{tcolorbox}
    \textbf{Two-Way SOCS for Display Ads}\\[2ex]
    When an online vertex with two-way surrogate type $i \sim \{j, k\}$ arrives at time step $t$:
    \begin{enumerate}
        \item Pick $m \in \{j, k\}$ uniformly at random and mark this online vertex with $m$.
        \item If this is the second online vertex marked with $m$, then make the opposite selection to the first one (w.r.t.\ $m$).
        \item Otherwise, select $j$ or $k$ uniformly at random.
    \end{enumerate}
\end{tcolorbox}

\medskip

\begin{theorem}
	\label{thm:display-ads-two-way}
	\textbf{Two-Way SOCS for Display Ads} has convergence rate:
	\[
		g(y) =
		\min \Big\{ \Big(1+\frac{1}{2}y\Big) e^{-\frac{3}{2}y} + \frac{1-y}{15}, e^{-y} \Big\}
		~.
	\]
\end{theorem}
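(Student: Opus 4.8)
The plan is to follow the same decoupling architecture as in the proof of Theorem~\ref{thm:two-way-socs-adwords}, but specialized to the Display Ads setting where a fixed weight-level $w > 0$ plays the role of the ``large bid'' threshold. Fix an offline agent $j$ and a weight-level $w$; I will show that agent $j$ fails to receive an online item with edge-weight at least $w$ with probability at most $g(y_j(w))$. The first, easy half of the minimum is the baseline: since the event that $j$ is never allocated a high-weight ($\ge w$) item only requires those high-weight appearances to miss $j$, independent rounding on that sub-instance already gives probability at most $e^{-y_j(w)}$, which (exactly as in the baseline SOCS argument in the excerpt) bounds this failure probability by $e^{-y_j(w)}$. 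The work is entirely in the other branch, $(1+\tfrac12 y)e^{-\tfrac32 y} + \tfrac{1-y}{15}$.

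For that branch, the key structural issue — flagged in the introduction — is that the two-way SOCS for Display Ads marks time steps \emph{independently of the edge-weights}, so the two time steps in which the algorithm makes opposite selections w.r.t.\ $j$ need not both carry weight $\ge w$. I would split the surrogate-type appearances involving $j$ into three groups analogous to $y_S, y_{L1}, y_{L2}$: write $y = y_j(w)$ for the total high-weight fractional allocation, let $y_{L2}$ be the part of $y$ coming from two-way surrogate types $i\sim\{j,*\}$ with $w_{ij}\ge w$, let $y_{L1}$ be the part coming from one-way surrogate types $i\sim j$ with $w_{ij}\ge w$, and let $y_S = y - y_{L1} - y_{L2}$; additionally track $\bar y_{\mathrm{lo}}$, the expected number of two-way appearances involving $j$ with weight \emph{below} $w$, which is at most $1 - y$ because the offline optimum matches $j$ at most once (so $\sum$ of all fractional allocations to $j$ is at most $1$). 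Following Lemma~\ref{lem:adwords-removing-arc} and Lemma~\ref{lem:adwords-two-way-worst-case}, I would reduce to the worst case where the only arcs in the ex-post dependence graph are among $j$'s own appearances, and then run a decoupling lemma in the spirit of Lemma~\ref{lem:adwords-decomposition}: the failure probability factors (via negative association of the realization indicators, Lemma~\ref{lem:adwords-negative-association}, Lemma~\ref{lem:na-basic}, Lemma~\ref{lem:na-closure}) into a product of the one-way contribution (a clean $\prod_t(1-y_{L1}^t)$ by independence, as in Lemma~\ref{lem:adwords-L1}) and a term for the high-weight two-way appearances where the mild negative correlation helps.

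The main obstacle — and the part that differs genuinely from AdWords — is bounding that two-way term in the presence of the low-weight ``noise'' appearances. For AdWords the analogous bound (Lemma~\ref{lem:adwords-L2}) got a factor $\tfrac34 + \tfrac14\cdot\frac{|R_{L2}|+1}{2^{|R_{L2}|}}$ because the first two marked steps were guaranteed to be the relevant ones; here a high-weight two-way appearance only helps if it is among the first two steps marked with $j$, and the low-weight appearances can ``waste'' those first two slots. So I would condition on the number $N$ of marked appearances that are high-weight and the number $N'$ that are low-weight, and compute the probability that at least two of the $N+N'$ marked steps fall on the high-weight side \emph{and} that those two get opposite selections; using $\bar y_{\mathrm{lo}} \le 1 - y$ I would bound the expected number of low-weight marks and hence show the ``wasting'' effect costs at most a controlled additive slack, yielding the $\tfrac{1-y}{15}$ correction term. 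Combining this with the one-way factor $\prod_t(1-y_{L1}^t)\le e^{-y_{L1}}$ and the small-weight factor $e^{-\tfrac32 y_S}(1+\tfrac12 y_S)$ (via the Poisson-domination argument of Lemma~\ref{lem:adwords-S}, which applies here since edge-weights $<w$ behave like small bids after the same stochastic-domination coupling), and then optimizing over the split $y = y_S+y_{L1}+y_{L2}$ — pushing everything into the worst single bucket and using $e^{-x}(1+x)$ monotonicity as in the AdWords proof — should produce the stated $(1+\tfrac12 y)e^{-\tfrac32 y} + \tfrac{1-y}{15}$ bound; the final min with $e^{-y}$ is then immediate. I expect the delicate step to be getting the constant $\tfrac1{15}$ exactly right, which will come down to a careful but routine optimization of the conditional failure probability over $N, N'$ and the distribution of low-weight marks subject to $\bar y_{\mathrm{lo}}\le 1-y$.
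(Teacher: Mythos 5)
The paper does not use an AdWords-style negative-association decoupling here; it derives a closed-form expression for the failure probability (Lemma~\ref{lem:display-ads-two-way-closed-form-basic}), argues the worst case concentrates the low-weight mass $p^t_-$ in a single step (Lemma~\ref{lem:display-ads-merge-minus}) and splits the remaining mass into infinitesimal pieces (Lemma~\ref{lem:display-ads-split}), and then simplifies by direct algebra. Your route is genuinely different and, more importantly, would not produce the stated bound.

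There are two concrete gaps. First, your decomposition is internally inconsistent and rests on an AdWords-specific structure that does not exist in Display Ads at a fixed weight-level. You define $y_S = y - y_{L1} - y_{L2}$, which is identically zero since $y = y_j(w)$ consists entirely of high-weight one-way and two-way contributions; yet later you invoke a ``small-weight'' factor $e^{-\frac32 y_S}(1+\frac12 y_S)$ for edge-weights below $w$. In AdWords, small bids still consume budget, so they enter the objective $(1-\sum X)^+$ and the Poisson stochastic-domination of Lemma~\ref{lem:adwords-S} applies. In Display Ads at weight-level $w$, edges with $w_{ij}<w$ contribute \emph{zero} to the weight-level objective—they are not ``small,'' they are irrelevant except as noise that consumes marked slots. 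The budget-additive Jensen argument has no analogue here, and the event of interest is a binary ``did $j$ ever get a high-weight item,'' not a sum.

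Second, even after repairing the decomposition, an AdWords-style decoupling cannot recover the term $\big(1+\frac12 y\big)e^{-\frac32 y}$. The AdWords analysis ties together only the first two marked steps and otherwise selects independently, yielding factors of the form $\frac34 + \frac14 \cdot(\cdot)$. The Display Ads bound is stronger because Lemma~\ref{lem:display-ads-two-way-basic} multiplies two sources of improvement when $S$ is \emph{all} two-way appearances involving $j$: the probability that at most one of the $|S|$ steps is marked with $j$ (which is $(1+|S|)2^{-|S|}$), times the probability $2^{-|S|}$ that all the independent selections miss $j$. Their product $(1+|S|)2^{-2|S|}$, taken in the Poisson limit $|S|\sim\mathrm{Poisson}(2y)$, gives exactly $\E[(1+n)2^{-2n}]=\big(1+\frac{y}{2}\big)e^{-\frac32 y}$; the AdWords decoupling would instead give $e^{-y}\big(\frac34 + \frac14\frac{1+y/2}{e^{y/2}}\big)$, a strictly weaker bound. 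Likewise, the $\frac{1-y}{15}$ term in the paper comes from bounding the coefficient of $2y_-$ in the generating-function expansion and showing it is at most $\frac{1}{30}$, not from the conditional count over marked slots you propose. Your high-level intuition that the low-weight noise ``wastes'' marked slots and should be charged against $y_-\le 1-y$ is correct, but the mechanism by which this yields $\frac{1}{15}$ is the coefficient bound in the closed-form expression, which your framework never produces.
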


\medskip

We remark that as $y$ increases and even approaches $1$, which will be the most important regime for our competitive analysis, the above bound behaves like $e^{-y} \cdot \big(1+\frac{1}{2}y\big) e^{-\frac{1}{2}y}$ and is strictly better than the baseline $e^{-y}$.

We first present a basic property of this two-way algorithm about the probability of selecting an agent $j$ within a subset of time steps.

\begin{lemma}
	\label{lem:display-ads-two-way-basic}
	For any subset of time steps $S$ that realize two-way types involving agent $j$, the probability that agent $j$ is never selected in these time steps is at most:
	\[
		2^{-|S|}
		~.
	\]
	If $S$ is the subset of \emph{all} time steps with two-way types involving agent $j$, the bound improves to:
	\[
		\big(1+|S|\big) 2^{-2|S|}
		~.
	\]
\end{lemma}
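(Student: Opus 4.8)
The plan is to analyze the random process that marks time steps and makes selections, restricted to the time steps in $S$ that realize two-way surrogate types involving agent $j$. Fix the realization of online types, so that $S$ is a fixed set of time steps. Each such time step is marked with agent $j$ independently with probability $\tfrac12$ (in step 1 of the algorithm, $m$ is chosen uniformly between the two agents of the two-way type, so the probability that $m = j$ is exactly $\tfrac12$, independently across time steps). Let $k = |S|$ and condition on the number $\ell$ of time steps in $S$ that get marked with $j$; this is $\mathrm{Binomial}(k, \tfrac12)$.

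For the first (weaker) bound, I would argue that agent $j$ is never selected in $S$ only if, among the time steps marked with $j$, the algorithm always selects the other agent — but the first time step marked with $j$ (if any) selects $j$ or the other agent uniformly at random, and the second such time step makes the opposite selection. So if $\ell \ge 2$, agent $j$ is selected in at least one of the first two time steps marked with $j$ with certainty, hence the only way $j$ is never selected within $S$ is to have $\ell \le 1$, and even when $\ell = 1$ the single marked step selects $j$ with probability $\tfrac12$. Actually, for the \emph{weak} bound it suffices to observe that each time step in $S$ independently fails to select $j$ with probability at most $\tfrac12$: conditioned on everything else, a given step in $S$ selects $j$ with probability $\tfrac12$ in every case (whether it is unmarked, the first step marked with $j$, or the second step marked with $j$ where the opposite selection is still a fair coin from the marginal point of view once we integrate over the first step's coin). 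Hence $\Pr{j \text{ never selected in } S} \le 2^{-|S|}$. The cleanest way to make this rigorous is to expose the selection coins one step at a time in chronological order and check that at each step in $S$, conditioned on the past, the probability of not selecting $j$ is exactly $\tfrac12$.

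For the improved bound when $S$ is \emph{all} two-way time steps involving $j$: here the ``opposite selection'' mechanism can actually help, because the second step marked with $j$ is now guaranteed to be in $S$. Conditioned on $\ell = 0$ or $\ell = 1$ marked steps — which happens with probability $(k+1)2^{-k}$ — agent $j$ is never selected with probability $2^{-\ell}$, giving a contribution of $2^{-k}\binom{k}{0} \cdot 1 + 2^{-k}\binom{k}{1}\cdot \tfrac12 = 2^{-k}(1 + \tfrac{k}{2})$ to the failure probability from these two cases, but I should recompute carefully: the probability of $\ell$ marked steps is $\binom{k}{\ell}2^{-k}$, and given $\ell \le 1$ the conditional failure probability is $2^{-\ell}$ times the probability the unmarked steps all miss $j$, which is $2^{-(k-\ell)}$; multiplying through gives $\binom{k}{\ell}2^{-2k+\ell}$ for $\ell \in \{0,1\}$, i.e.\ $2^{-2k} + k\,2^{-2k+1}\cdot\tfrac{?}{}$ — I will track the bookkeeping precisely in the final writeup. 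Conditioned on $\ell \ge 2$, the first two marked steps are both in $S$ and carry opposite selections, so at least one of them selects $j$: failure probability $0$. Summing, $\Pr{j \text{ never selected}} = 2^{-2k}\sum_{\ell=0}^{1}\binom{k}{\ell}2^{\ell} = 2^{-2k}(1 + 2k)$; to match the stated $(1+|S|)2^{-2|S|}$ I will reexamine whether the unmarked steps contribute an extra factor — since every step of $S$ is two-way and each independently selects $j$ with probability $\tfrac12$ even when unmarked, the count is as above, and I expect the stated bound to follow, possibly after noting $1 + 2k \le$ something, or after correcting the constant; the essential point is exponential decay at rate $2^{-2|S|}$ coming from the fact that both ``missing $j$ while unmarked'' and ``being in the matched pair'' are squeezed.

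The main obstacle is the second bound's bookkeeping: one must correctly separate the contribution of marked versus unmarked steps and handle the conditioning on the marking pattern, making sure the ``opposite selection in the first two marked steps'' is exploited only when those steps are guaranteed to lie in $S$ — which is precisely why the hypothesis ``$S$ is \emph{all} two-way time steps involving $j$'' is needed. I would carry out the steps in the order: (i) set up the marking coins and selection coins as independent fair bits; (ii) prove the weak bound by a step-by-step exposure argument; (iii) for the strong bound, condition on the number of marked steps and split into $\ell \le 1$ and $\ell \ge 2$; (iv) evaluate the resulting sum and simplify to $(1+|S|)\,2^{-2|S|}$.
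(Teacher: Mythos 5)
Your plan for the strong bound — condition on the number $\ell$ of time steps in $S$ marked with $j$, note that $\ell \ge 2$ forces $j$ to be selected when $S$ contains \emph{all} two-way steps involving $j$, and bound the contribution of $\ell \le 1$ — is the paper's approach, and the hesitation about the final constant is just an arithmetic slip: for $\ell \le 1$, the term is $\binom{k}{\ell}2^{-k}\cdot 2^{-\ell}\cdot 2^{-(k-\ell)} = \binom{k}{\ell}2^{-2k}$, not $\binom{k}{\ell}2^{-2k+\ell}$, since the exponents $-k$, $-\ell$, $-(k-\ell)$ add to $-2k$; summing over $\ell\in\{0,1\}$ gives $(1+k)2^{-2k}$ exactly as stated.

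The more substantive gap is the justification of the failure probability given the marking pattern. The claim that ``conditioned on everything else, a given step in $S$ selects $j$ with probability $\tfrac12$ in every case'' is false: if you condition on the markings and on the selection of the \emph{first} step marked with the same agent, the selection of the second such step is deterministic ($0$ or $1$), not $\tfrac12$. Similarly, the unmarked steps do not in general miss $j$ independently (anti-correlated pairs reduce the probability), so $2^{-(k-\ell)}$ is an upper bound, not the equality your bookkeeping uses. The paper's fix, which is what your sketch implicitly needs, is a single clean dichotomy on the event $E$ that some pair of time steps in $S$ are the first two steps marked with a common agent $m$ — not just $m=j$. On $E$, both steps have type $\{j,m\}$, so the opposite selection w.r.t.\ $m$ forces one of them to pick $j$ and the failure probability is $0$. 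On $\neg E$, each step of $S$ is either the first (or third or later) step marked with its agent, governed by a fresh fair coin, or the second step marked with its agent whose matching first step lies \emph{outside} $S$ and is itself a fresh coin; in either case the $|S|$ selections w.r.t.\ $j$ are mutually independent fair coins, so the conditional failure probability is exactly $2^{-|S|}$. The weak bound follows since the probability of $\neg E$ is trivially at most $1$; the strong bound follows since the probability of $\neg E$ is at most the probability that at most one step is marked with $j$, namely $(1+|S|)2^{-|S|}$ — and this last step is precisely where the hypothesis that $S$ comprises \emph{all} two-way steps involving $j$ is used, because only then are the first two steps marked with $j$ guaranteed to lie in $S$.
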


\begin{proof}
	If any pair of these time steps are the first two steps marked with some offline vertex $m$ (which may or may not be $j$), then $j$ would be selected in one of the two steps because the algorithm would make opposite selections by definition.
	If $S$ is further the set of \emph{all} time steps involving $j$, then we can bound this probability by just considering the possibility of having at least two of these steps marked with $j$, which happens with probability at least $1 - (|S|+1) 2^{-|S|}$.
	Otherwise, we just use the trivial bound $0$.
	
	Next, suppose the above event does not happen.
	By definition, the algorithm selects independently and uniformly at random in these $|S|$ time steps.
	Hence, the probability of never selecting agent $j$ is at most $2^{-|S|}$.
	
	Combining the two parts proves the lemma.
\end{proof}

\paragraph{Additional Notations.}
The rest of the subsection will fix an offline agent $j$ and a weight-level $w > 0$.
Let $I_1$ and $I_2$ denote the sets of one-way types $i \sim j$ and two-way types $i \sim \{j, k\}$ involving agent $j$ with edge-weights $w_{ij} \ge w$.
Let $I_-$ denote the two-way types $i \sim \{j, k\}$ involving agent $j$ with edge-weights $w_{ij} < w$.
We remark that the one-way types with edge-weights less than $w$ do not affect our analysis.

Correspondingly, let $p^t_1$, $p^t_2$, and $p^t_-$ be the probabilities of realizing a type in $I_1$, $I_2$, and $I_-$ respectively at time step $t$.
Denote the expected fractional allocations to $j$ from these three kinds of online types as:
\[
	y_1 = \sum_{t \in [T]} p^t_1
	\quad,\quad
	y_2 = \frac{1}{2} \sum_{t \in [T]} p^t_2
	\quad,\quad
	y_- = \frac{1}{2} \sum_{t \in [T]} p^t_-
	~.
\]

\begin{lemma}
	\label{lem:display-ads-two-way-general}
	The probability that agent $j$ does not get any online item with edge-weight at least $w$ is upper bounded by both:
	\[
		\prod_{t \in T} \Big(1 - p^t_1 - \frac{1}{2} p^t_2 \Big) ~\le~ e^{-y_1-y_2}
		~,
	\]
	and:
	\[
		e^{-y_1-\frac{3}{2}y_2} \Big(1 + \frac{1}{2}y_2\Big) + \frac{1}{15} \big( 1-y_1-y_2 \big)
		~.
	\]
\end{lemma}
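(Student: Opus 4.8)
The plan is to condition on the realization of the online (surrogate) types and to analyze the internal randomness of the \textbf{Two-Way SOCS for Display Ads} afterwards. Fix the offline agent $j$ and the weight-level $w$. For a realized sequence of surrogate types, let $n_1$, $n_2$, and $n_-$ be the numbers of time steps realizing a type in $I_1$, $I_2$, and $I_-$ respectively; note that the $I_2$- and $I_-$-steps together are exactly the time steps whose realized surrogate type is a two-way type involving $j$, and that these counts are independent across steps. Agent $j$ receives an item with edge-weight at least $w$ if and only if it is selected at some $I_1$-step (which always happens, as those are one-way surrogate types) or at some $I_2$-step, so the conditional failure probability is $0$ whenever $n_1 \ge 1$ and otherwise equals the probability that $j$ is never selected at any of the $n_2$ steps in $I_2$. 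For the first bound, Lemma~\ref{lem:display-ads-two-way-basic} with $S = I_2$ shows this is at most $\mathbf{1}_{n_1 = 0}\,2^{-n_2}$; taking expectation over the per-step realizations (the step-$t$ factor being $0$ on an $I_1$-type, $\tfrac12$ on an $I_2$-type, and $1$ otherwise) yields $\prod_{t=1}^{T}\big(1 - p_1^t - \tfrac12 p_2^t\big) \le e^{-y_1 - y_2}$, using $1 - x \le e^{-x}$ and $y_1 = \sum_t p_1^t$, $y_2 = \tfrac12 \sum_t p_2^t$.

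For the second bound I would sharpen the estimate $2^{-n_2}$ when $n_2$ is large. As in the proof of Lemma~\ref{lem:adwords-removing-arc}, first delete from the \emph{ex-post} dependence graph every arc anchored at an agent $m \ne j$: conditioned on a realization this can only weakly increase the failure probability, since such an arc between steps $t, t'$ either forces $j$ to be selected at one of them (if both lie in $I_2$) or leaves the distribution of the selections at the $I_2$-steps unchanged. After this reduction the only surviving arc joins the first two steps marked with $j$, and a short case analysis of the marks shows that, conditioned on a realization with $n_1 = 0$, the probability that $j$ is never selected in $I_2$ equals $2^{-n_2}(1 - P)$, where $P$ is the probability that at least two steps are marked with $j$ and the earliest two such steps both lie in $I_2$. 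Lower-bounding $P$ by the probability of the event that every $I_-$-step is marked with the other agent (probability $2^{-n_-}$) while at least two $I_2$-steps are marked with $j$ (probability $1 - (1+n_2)2^{-n_2}$) — which forces the first two marked-$j$ steps to be $I_2$-steps — gives $P \ge 2^{-n_-}\big(1 - (1+n_2)2^{-n_2}\big)$, so the conditional failure probability is at most $2^{-n_2}\big(1 - 2^{-n_-}(1 - (1+n_2)2^{-n_2})\big)$.

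It remains to take expectations. Rewriting this quantity as $(1+n_2)2^{-2n_2} + \big(2^{-n_2} - (1+n_2)2^{-2n_2}\big)\big(1 - 2^{-n_-}\big)$, I would bound the expectation of the first term by $e^{-y_1 - \frac32 y_2}\big(1 + \tfrac12 y_2\big)$ through a Poisson-domination argument in the spirit of Lemma~\ref{lem:adwords-S}: conditioned on $n_1 = 0$, the variable $n_2$ is a sum of independent Bernoullis of mean at least $2y_2$, the map $n \mapsto (1+n)4^{-n}$ is decreasing and discretely convex so it is dominated by its value at a Poisson variable of the same mean, $\Pr[n_1 = 0] \le e^{-y_1}$, and a Poisson variable of mean $\mu$ has $\mathbf{E}\big[(1+n)4^{-n}\big] = e^{-3\mu/4}(1 + \mu/4)$. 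For the second term I would use the elementary inequalities $2^{-n} - (1+n)2^{-2n} \le \tfrac{1}{16}$ (the maximum, attained at $n = 2$ and $n = 3$) and $1 - 2^{-n} \le \tfrac12 n$ for integer $n \ge 0$, together with $\mathbf{E}[n_-] = \sum_t p_-^t$ and $y_- = \tfrac12 \sum_t p_-^t \le 1 - y_1 - y_2$ (the total fractional allocation to $j$ being at most $1$), to bound it by $\tfrac{1}{16}\, y_- \le \tfrac{1}{15}(1 - y_1 - y_2)$. Adding the two terms gives the claimed bound, with the slightly better constant $\tfrac{1}{16}$.

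The step I expect to be the main obstacle is the combinatorial core of the second paragraph: establishing the exact form $2^{-n_2}(1 - P)$ for the conditional failure probability and, above all, controlling how the $I_-$-steps (two-way surrogate types whose edge-weight for $j$ is below $w$) degrade the improved estimate, which is precisely the subtlety flagged in the introduction about marking being oblivious to edge-weights. Once this per-realization bound is in place, the remaining work — the Poisson-domination for $(1+n)4^{-n}$ and the small calculus facts — is routine.
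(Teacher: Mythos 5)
Your proof is correct and establishes the lemma by a genuinely different route from the paper. The paper expands the conditional failure probability (using the two separate bounds from Lemma~\ref{lem:display-ads-two-way-basic}, the sharper one applied only when $T_- = \varnothing$) into a closed algebraic form via generating functions, and then proves two structural lemmas — that concentrating $p^t_-$ at a single step and infinitesimally splitting the remaining steps each weakly increase the bound — to reduce to a continuous limit where the final estimate is elementary calculus. You instead work entirely at the level of the conditional law: after stripping off arcs anchored at agents $m\neq j$ (which, as in Lemma~\ref{lem:adwords-removing-arc}, only helps the adversary), the conditional failure probability given $n_1 = 0$ is \emph{exactly} $2^{-n_2}(1-P)$, and your lower bound $P \ge 2^{-n_-}\bigl(1 - (1+n_2)2^{-n_2}\bigr)$ is strictly stronger than the paper's implicit $P \ge \mathbf{1}_{n_-=0}\bigl(1 - (1+n_2)2^{-n_2}\bigr)$; this interpolates between the two bounds of Lemma~\ref{lem:display-ads-two-way-basic} and is what buys you the marginally better constant $\tfrac{1}{16}$ in place of $\tfrac{1}{15}$. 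The verification that $n\mapsto(1+n)4^{-n}$ is discretely convex on $\{0,1,2,\dots\}$ (forward differences $-\tfrac12, -\tfrac5{16}, -\tfrac18, \dots$ are nondecreasing) and that a Bernoulli sum is convex-order dominated by the Poisson of the same mean is exactly what makes the domination step go through, playing the role that the split-and-take-limit argument plays in the paper; the paper's Lemma~\ref{lem:display-ads-merge-minus} is replaced by the per-realization pointwise inequalities $2^{-n}-(1+n)4^{-n}\le\tfrac1{16}$ and $1-2^{-n}\le n/2$. Net effect: your argument is shorter, avoids the generating-function bookkeeping and the two structural lemmas, and is arguably more transparent about where the gain over $e^{-y_1-y_2}$ comes from, at the cost of having to pin down the exact form $2^{-n_2}(1-P)$ after the arc-removal reduction — the step you correctly flag as the crux. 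Both paths are valid; the paper's reduction to infinitesimal steps is more reusable across its other lemmas, while your conditional-probability argument is leaner here.
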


\medskip

Note that Theorem~\ref{thm:display-ads-two-way} follows as a corollary of Lemma~\ref{lem:display-ads-two-way-general} when $y_1 = 0$ and with $y = y_2$.
The rest of the subsection will be devoted to proving this main lemma.

\begin{lemma}
	\label{lem:display-ads-two-way-closed-form-basic}
	The probability that the maximum edge-weight allocated to offline agent $j$ is strictly smaller than $w$ is at most:
	\begin{equation}
		\label{eqn:display-ads-two-way-closed-form-basic}
		\begin{aligned}
			&
			\prod_{t \in [T]} \Big( 1 - p^t_1 - \frac{1}{2} p^t_2 \Big)
			- 
			\prod_{t \in [T]} \Big( 1 - p^t_1 - \frac{1}{2} p^t_2 - p^t_-\Big) \\
			& \qquad
			+ 
			\prod_{t \in [T]} \Big( 1 - p^t_1 - \frac{3}{4} p^t_2 - p^t_-\Big)
			+
			\frac{1}{4} \sum_{t \in [T]} p^t_2 \prod_{t' \ne t} \Big( 1 - p^{t'}_1 - \frac{3}{4} p^{t'}_2 - p^{t'}_-\Big)
			~.
		\end{aligned}
	\end{equation}
	Further, we remark that the last three terms combined are non-positive.
\end{lemma}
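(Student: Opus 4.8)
The plan is to characterize the bad event --- offline agent $j$ receives no online item with edge-weight at least $w$ --- and to bound its probability by conditioning first on the realization of online types and then on the random marks. The bad event happens if and only if (a) no one-way type in $I_1$ is ever realized (such a type is allocated to $j$ with certainty), and (b) the algorithm never selects $j$ in a time step realizing a type in $I_2$. Light two-way types $I_-$ never deliver a heavy item to $j$, but they can ``waste'' the marking mechanism, so they must be tracked. Since a two-way type creates a dependence arc between two steps only through their shared agents, I would first apply a worst-case reduction in the spirit of Lemma~\ref{lem:adwords-removing-arc}: replace each offline agent $k\ne j$ by $T$ copies, one per time step. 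Removing an arc not due to $j$ only weakly increases the bad-event probability --- if both endpoints are heavy two-way steps for $j$ the arc forces $j$ to receive one of them (which we would rather forbid); if only one endpoint is heavy the arc leaves the marginal unchanged; if neither is heavy it is irrelevant. After this reduction the only correlation is through the marks assigned to $j$.

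Next, condition on a realization with $I_1$ empty (otherwise the conditional probability is $0$), let $S_2$ and $S_-$ be the realized time steps of types in $I_2$ and $I_-$, and condition further on which steps of $S_2\cup S_-$ are marked with $j$. The key claim is a dichotomy: the conditional probability that $j$ is never selected among $S_2$ is $0$ if the first two steps of $S_2\cup S_-$ marked with $j$ both lie in $S_2$ (the algorithm makes opposite selections there, so $j$ is forced into one of them), and is exactly $2^{-|S_2|}$ otherwise --- each step of $S_2$ contributes an independent factor $\tfrac12$, whether it is marked with another agent, is a third-or-later mark of $j$, or is the single one of the two paired steps that lies in $S_2$. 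This dichotomy has two consequences. First, the right-hand side is always at most $2^{-|S_2|}$, which is precisely the independent-rounding value, so conditioned on the realization the mechanism is never worse than independent rounding; hence the portion of the bad event on which some light two-way step is realized is at most its independent-rounding analogue, namely $\prod_t\big(1-p^t_1-\tfrac12 p^t_2\big)-\prod_t\big(1-p^t_1-\tfrac12 p^t_2-p^t_-\big)$, the first two terms of the claimed bound. Second, on the complementary portion no $I_-$ type is realized, so $S_2$ is exactly the set of \emph{all} two-way steps involving $j$, and the stronger bound $\big(1+|S_2|\big)2^{-2|S_2|}$ from Lemma~\ref{lem:display-ads-two-way-basic} applies.

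Finally I would take expectations over the realization, factoring over time steps. The independent-rounding quantities above are the expectations of $\mathbf{1}[I_1\text{ empty}]\,2^{-|S_2|}$, respectively the same with $I_-$ also forbidden, which factor into $\prod_t\big(1-p^t_1-\tfrac12 p^t_2\big)$ and $\prod_t\big(1-p^t_1-\tfrac12 p^t_2-p^t_-\big)$. Writing $\big(1+|S_2|\big)2^{-2|S_2|}=2^{-2|S_2|}+|S_2|\,2^{-2|S_2|}$ and factoring, the first summand becomes $\prod_t\big(1-p^t_1-\tfrac34 p^t_2-p^t_-\big)$ (a step realizing $I_2$ contributes $2^{-2}=\tfrac14$ instead of $1$) and the second becomes $\tfrac14\sum_t p^t_2\prod_{t'\ne t}\big(1-p^{t'}_1-\tfrac34 p^{t'}_2-p^{t'}_-\big)$; adding the two portions yields the stated expression. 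The remark that the last three terms combined are non-positive then follows since their sum equals $\mathbf{E}\big[\mathbf{1}[I_1,I_-\text{ empty}]\,(1+|S_2|)2^{-2|S_2|}\big]-\mathbf{E}\big[\mathbf{1}[I_1,I_-\text{ empty}]\,2^{-|S_2|}\big]$ and $(1+n)2^{-2n}\le 2^{-n}$ for every integer $n\ge 0$.

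The main obstacle is the conditional selection analysis: proving the $0$-versus-$2^{-|S_2|}$ dichotomy requires a careful case split over how many, and which, of the first two $j$-marked two-way steps are heavy, and one must verify that non-paired $j$-marks, steps marked with other agents, and the ``leftover'' member of the paired pair each contribute exactly the independent factor $\tfrac12$. Setting up the worst-case reduction correctly for the ``no heavy item'' objective --- the Display Ads analogue of Lemma~\ref{lem:adwords-removing-arc}, which is stated for unspent budget --- is the second point that needs care.
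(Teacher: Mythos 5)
Your proposal follows the same overall decomposition as the paper's proof: condition on the realization of online types (parameterized by the sets $T_2$ and $T_-$ of steps realizing $I_2$- and $I_-$-types), bound the conditional probability that $j$ is never selected in $T_2$ by $2^{-|T_2|}$ in general and by the stronger $(1+|T_2|)2^{-2|T_2|}$ when $T_- = \varnothing$, and then take expectation to obtain the three parts $(a)-(b)+(c)$, with $(c)$ factored via the generating function exactly as you describe. The non-positivity remark also matches: it reduces to $(1+n)2^{-n}\le 1$.

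Where you diverge is the justification of the two conditional bounds. The paper proves them directly as Lemma~\ref{lem:display-ads-two-way-basic}, with no instance modification: any pair of steps in $S$ that are the first two marks of the \emph{same} agent (whether $j$ or not) forces $j$ to be selected, and conditioned on no such pair existing, the ex-post dependence graph (a matching) places each step of $S$ on a distinct coin, so the selections in $S$ are conditionally independent uniforms. Cross-agent arcs only help $j$ and are simply dropped in an upper bound, so no reduction is needed. You instead first apply a worst-case reduction (copying each $k\ne j$ per step), then prove an exact $0$-vs-$2^{-|S_2|}$ dichotomy over mark configurations. This is sound --- the Display-Ads analogue of Lemma~\ref{lem:adwords-removing-arc} does hold, by the same Jensen argument applied to $\prod_t(1-y^t)$ in place of $(1-\sum_t y^t)^+$ --- but it is extra machinery: the dichotomy's exactness requires the reduction, yet only the $\le 2^{-|S_2|}$ half is used, which holds unconditionally by the paper's shorter argument. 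Net: correct, same decomposition, with a valid but unnecessary detour in establishing the inner conditional bounds.
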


\begin{proof}
	Note that if we have an online type from $I_1$ in any time step, then agent $j$ will get the item whose edge-weight is at least $w$ by definition.
	Hence, we will focus on the realization of online types where none is from $T_1$.
	
	Consider any disjoint subsets $T_2, T_-$ of the time steps.
	The probability of realizing online types from $I_2$ in time steps $T_2$, types from $I_-$ in time steps $I_-$, and types not in $I_1, I_2, I_-$ in the other time steps, is equal to:
	\[
		\prod_{t \in T_2} p^t_2 \prod_{t \in T_-} p^t_- \prod_{t \notin T_2 \cup T_-} (1 - p^t_1 - p^t_2 - p^t_-)
		~.
	\]
	
	By Lemma~\ref{lem:display-ads-two-way-basic}, for any realization of $T_2$ and $T_-$, the probability of not allocating an item with edge-weight at least $w$ to agent $j$ is at most $2^{-|T_2|}$.	
	Further, if $T_- = \varnothing$, the probability of not allocating such an item to agent $j$ further decreases to $(1+|T_2|) 2^{-2|T_2|}$.
	The latter event happens with probability:
	\[
		\prod_{t \in T_2} p^t_2 \prod_{t \notin T_2} (1 - p^t_1 - p^t_2 - p^t_-)
		~.
	\]

	Putting together, the probability of concern is at most:
	\begin{align*}
		&
		\underbrace{\sum_{T_2, T_-} \prod_{t \in T_2} p^t_2 \prod_{t \in T_-} p^t_- \prod_{t \notin T_2 \cup T_-} (1 - p^t_1 - p^t_2 - p^t_-) \cdot 2^{-|T_2|}}_{(a)} ~-~ \underbrace{\sum_{T_2} \prod_{t \in T_2} p^t_2 \prod_{t \notin T_2} (1 - p^t_1 - p^t_2 - p^t_-) \cdot 2^{-|T_2|}}_{(b)} \\
		& \qquad
		+~ \underbrace{\sum_{T_2} \prod_{t \in T_2} p^t_2 \prod_{t \notin T_2} (1 - p^t_1 - p^t_2 - p^t_-) \cdot (1+|T_2|) \,2^{-2|T_2|}}_{(c)}
		~.
	\end{align*}
	
	We will next simplify the three terms.
	First, consider $(a)$:
	\begin{align*}
		(a)
		&
		= \sum_{T_2, T_-} \prod_{t \in T_2} \frac{1}{2} p^t_2 \prod_{t \in T_-} p^t_- \prod_{t \notin T_2 \cup T_-} (1 - p^t_1 - p^t_2-p^t_-) \\
		& 
		= \prod_{t \in [T]} \Big( \frac{1}{2} p^t_2 + p^t_- + 1 - p^t_1 - p^t_2-p^t_- \Big) \\
		&
		= \prod_{t \in [T]} \Big( 1- p^t_1 - \frac{1}{2} p^t_2 \Big) 
		~.
	\end{align*}
	
	The second part can be simplified similarly as:
	\begin{align*}	
		(b)
		&
	 	= \sum_{T_2} \prod_{t \in T_2} \frac{1}{2} p^t_2 \prod_{t \notin T_2} (1 - p^t_1 - p^t_2 - p^t_-) \\
	 	&
	 	= \prod_{t \in [T]} \Big( 1 - p^t_1- \frac{1}{2} p^t_2 - p^t_- \Big) 
	 	~.
	\end{align*}
	
	Next, we consider part $(c)$ by introducing a generating function:
	\[
		h(x) = \prod_{t \in [T]} \big( 1 - p^t_1 - p^t_2 - p^t_- + p^t_2 \cdot x \big)
		~.
	\]
	
	We have:
	\begin{align*}
		(c)
		&
		= h\Big(\frac{1}{4}\Big) + \frac{1}{4} h'\Big(\frac{1}{4}\Big) \\
		&
		= \prod_{t \in [T]} \Big( 1 - p^t_1 - \frac{3}{4} p^t_2 - p^t_-\Big)
		+
		\frac{1}{4} \sum_{t \in [T]} p^t_2 \prod_{t' \ne t} \Big( 1 - p^{t'}_1 - \frac{3}{4} p^{t'}_2 - p^{t'}_-\Big)
		~.
	\end{align*}

	Finally, the remark holds because the last three terms in the lemma come from parts $(b)$ and $(c)$, and we have $(b) \ge (c)$ by $(1+|T_2|) 2^{-|T_2|} \le 1$.	
\end{proof}

\begin{lemma}
	\label{lem:display-ads-merge-minus}
	Subject to $\sum_t p^t_- = 2 y_-$, Equation~\eqref{eqn:display-ads-two-way-closed-form-basic} is maximized when $p^t_-$ equals $2 y_-$ in one time step and $0$ in the other time steps.
\end{lemma}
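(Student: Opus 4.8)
The plan is to treat Equation~\eqref{eqn:display-ads-two-way-closed-form-basic} as a function of the vector $(p^t_-)_{t\in[T]}$, holding all the $p^t_1$ and $p^t_2$ fixed and subject to the single linear constraint $\sum_t p^t_- = 2y_-$, and to argue that it is a convex function of this vector, so that its maximum over the feasible simplex is attained at an extreme point, i.e.\ with the entire mass $2y_-$ concentrated in a single coordinate. The first step is to isolate the dependence on the $p^t_-$'s: the first term $\prod_t(1-p^t_1-\tfrac12 p^t_2)$ does not involve $p^t_-$ at all, and can be ignored. The remaining three terms each have the structure of a product (or a product times a linear factor) over $t$ of affine functions of $p^t_-$. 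Concretely, write $a^t = 1 - p^t_1 - \tfrac12 p^t_2$ and $c^t = 1 - p^t_1 - \tfrac34 p^t_2$, so that the second term is $-\prod_t(a^t - p^t_-)$, the third is $\prod_t(c^t - p^t_-)$, and the fourth is $\tfrac14\sum_t p^t_2\prod_{t'\ne t}(c^{t'} - p^{t'}_-)$. Each $\prod_t(\alpha^t - p^t_-)$ with the $\alpha^t$'s held fixed is a multilinear function of $(p^t_-)$, and a product of affine terms of this form, when restricted to the line segment between two feasible points, is a polynomial in the line parameter; I would show this restriction is convex.

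The cleanest route for the convexity claim is the standard move for such ``concentrate-the-mass'' lemmas: fix all coordinates except two, say $p^{t_1}_-$ and $p^{t_2}_-$, and write $p^{t_1}_- = s+\delta$, $p^{t_2}_- = s-\delta$ for fixed $s$ and variable $\delta$ (this is exactly the operation of moving mass between two coordinates while preserving the sum). Then each of the three relevant terms becomes, as a function of $\delta$, a quadratic of the form $A - B\delta^2$ or $A + B\delta^2$ with $B\ge 0$ times a nonnegative background product — I would check that in each of the three terms the coefficient of $\delta^2$ has the sign that makes the whole expression \eqref{eqn:display-ads-two-way-closed-form-basic} convex in $\delta$. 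For the second term, $-\prod_t(a^t-p^t_-)$, the relevant two-factor piece is $-(a^{t_1}-s-\delta)(a^{t_2}-s+\delta) = -\big[(a^{t_1}-s)(a^{t_2}-s) - (a^{t_1}-a^{t_2})\delta + \delta^2\big]$, whose $\delta^2$ coefficient (times the nonnegative product over the other $t$) is $-1$ times a nonnegative quantity; but with the outer minus sign already present this contributes $+$ convexity only after care with signs, so I would track these sign bookkeeping details explicitly. The third and fourth terms are handled the same way. Since a sum of convex functions is convex, \eqref{eqn:display-ads-two-way-closed-form-basic} is convex along every such segment, hence convex on the simplex $\{p^t_-\ge 0,\ \sum_t p^t_- = 2y_-\}$, and therefore maximized at a vertex, which is precisely a configuration where one $p^t_-$ equals $2y_-$ and the rest are $0$.

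I expect the main obstacle to be the sign analysis in the second term. Because it enters \eqref{eqn:display-ads-two-way-closed-form-basic} with a leading minus sign, the naive statement ``each product of affine terms is convex'' goes the wrong way, and one has to be careful: $-\prod_t(a^t-p^t_-)$ is \emph{concave} on regions where all factors are positive. The resolution should come from pairing it with the third term $+\prod_t(c^t-p^t_-)$ — since $c^t \le a^t$, the difference $\prod_t(c^t-p^t_-) - \prod_t(a^t-p^t_-)$ can be rewritten, via the telescoping identity $\prod x_t - \prod y_t = \sum_t (x_t-y_t)\prod_{t'<t}x_{t'}\prod_{t'>t}y_{t'}$, as a sum of terms each involving the nonnegative factor $a^t - c^t = \tfrac14 p^t_2$ times products of affine functions, and each such term \emph{is} convex in the $p^t_-$'s. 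So the right grouping is to combine terms two and three before checking convexity, rather than treating them separately; I would organize the proof around that telescoping identity. One should also verify the edge condition that along the relevant segments all the affine factors stay nonnegative (which holds because $p^t_1 + \tfrac12 p^t_2 + p^t_- \le 1$ for any realizable probabilities), so that no spurious sign flips occur.
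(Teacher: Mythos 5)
Your overall strategy --- hold $\sum_t p^t_- = 2y_-$ fixed, transfer mass between two coordinates $t_1,t_2$, and argue the objective moves favorably so that an extreme point of the simplex is optimal --- is essentially the same plan the paper follows. The paper carries it out directly via multilinearity: it writes \eqref{eqn:display-ads-two-way-closed-form-basic} as $A + B\,p^{t_1}_- + C\,p^{t_2}_- + D\,p^{t_1}_- p^{t_2}_-$, shows $D\le 0$, and (with $B\ge C$ by symmetry) observes that sending $p^{t_2}_-\to 0$ weakly increases the expression. Your $\delta$-parametrization is the same thing in different clothes, since the coefficient of $\delta^2$ is exactly $-D$.

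The gap is in the sign bookkeeping and in the claim that grouping terms~2 and~3 closes the argument. Along the segment $p^{t_1}_- = s+\delta,\ p^{t_2}_- = s-\delta$, the \emph{unsigned} product $\prod_{t}(\alpha^t - p^t_-)$ is concave in $\delta$, because the cross factor $(\alpha^{t_1}-s-\delta)(\alpha^{t_2}-s+\delta)$ contributes $-\delta^2$. Hence the second term $-\prod_t(a^t-p^t_-)$ is \emph{convex}, not concave as you assert, while the third term $+\prod_t(c^t-p^t_-)$ and the fourth term $\tfrac14\sum_t p^t_2\prod_{t'\ne t}(c^{t'}-p^{t'}_-)$ are both \emph{concave}. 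Your proposed fix --- telescoping terms~2 and~3 --- therefore addresses a nonexistent problem with term~2 and leaves the genuinely concave term~4 unhandled; once you acknowledge that some pieces are concave, ``a sum of convex functions is convex'' no longer finishes the proof. The combination that actually works is all of terms~2+3+4: their joint contribution to the cross coefficient is
\[
D \;=\; -\!\!\prod_{t\ne t_1,t_2}\!\!\Big(a^t-p^t_-\Big) \,+\!\! \prod_{t\ne t_1,t_2}\!\!\Big(c^t-p^t_-\Big) \,+\, \frac14\!\!\sum_{t\ne t_1,t_2}\!\!p^t_2 \!\!\prod_{t'\ne t,t_1,t_2}\!\!\Big(c^{t'}-p^{t'}_-\Big),
\]
which is exactly the ``last three terms'' of \eqref{eqn:display-ads-two-way-closed-form-basic} on the reduced set of time steps, and the remark in Lemma~\ref{lem:display-ads-two-way-closed-form-basic} already guarantees this quantity is nonpositive --- that is how the paper finishes. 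Alternatively, your telescoping idea can be salvaged by pairing each summand of the telescoped $2+3$, namely $-\tfrac14 p^t_2\prod_{t'<t}(a^{t'}-p^{t'}_-)\prod_{t'>t}(c^{t'}-p^{t'}_-)$, with the corresponding summand $\tfrac14 p^t_2\prod_{t'\ne t}(c^{t'}-p^{t'}_-)$ of term~4 and using $a^{t'}\ge c^{t'}$ termwise; but either way term~4 cannot be skipped.
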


\begin{proof}
	Suppose that there are two different time steps $t_1 \ne t_2$ such that $p^{t_1}_-$ and $p^{t_2}_-$ are both positive.
	We will next modify them such that one becomes $p^{t_1}_- + p^{t_2}_-$ and the other becomes $0$ and prove that Equation~\eqref{eqn:display-ads-two-way-closed-form-basic} weakly increases.
	Note that Equation~\eqref{eqn:display-ads-two-way-closed-form-basic} is multi-linear in $p^{t_1}_-$ and $p^{t_2}_-$.
	Hence, we can write it as:
	\[
		A + B \cdot p^{t_1}_- + C \cdot p^{t_2}_- + D \cdot p^{t_1}_- \, p^{t_2}_-
		~.
	\]
	
	The coefficient $D$ is equal to:
	\begin{align*}
		&
		- 
		\prod_{t \ne t_1, t_2} \Big( 1 - p^t_1 - \frac{1}{2} p^t_2 - p^t_-\Big) \\
		& \qquad
		+ 
		\prod_{t \ne t_1, t_2} \Big( 1 - p^t_1 - \frac{3}{4} p^t_2 - p^t_-\Big)
		+
		\frac{1}{4} \sum_{t \ne t_1, t_2} p^t_2 \prod_{t' \ne t, t_1, t_2} \Big( 1 - p^{t'}_1 - \frac{3}{4} p^{t'}_2 - p^{t'}_-\Big)
		~.
	\end{align*}
	
	Note that it has the same form as the last three terms of Equation~\eqref{eqn:display-ads-two-way-closed-form-basic}, and corresponds to the improvement upon the baseline probability with time steps $t_1, t_2$ removed.
	Hence, we have $D \le 0$ by the remark in Lemma~\ref{lem:display-ads-two-way-closed-form-basic}.

	By symmetry, we may assume without loss of generality that $B \ge C$.
	Then, changing $p^{t_1}_-$ to $p^{t_1}_- + p^{t_2}_-$ and $p^{t_2}_-$ to $0$ weakly increases this multi-linear function.

	The lemma follows by repeating this process until $p^t_-$ is nonzero for only one time step $t$.
\end{proof}

\begin{lemma}
\label{lem:display-ads-split}
	For any time step $\hat{t}$ with $p^{\hat{t}}_- = 0$, Equation~\eqref{eqn:display-ads-two-way-closed-form-basic} would weakly increase if we split it into two time steps $t_1, t_2$ such that $p^{t_1}_1 = p^{t_2}_1 = \frac{1}{2} p^{\hat{t}}_1$ and $p^{t_1}_2 = p^{t_2}_2 = \frac{1}{2} p^{\hat{t}}_2$.
\end{lemma}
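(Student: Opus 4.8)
The plan is to freeze every time step other than $\hat t$ and track exactly how the contribution of $\hat t$ to Equation~\eqref{eqn:display-ads-two-way-closed-form-basic} changes under the split. Write $a = p^{\hat t}_1$, $b = p^{\hat t}_2$, and recall $p^{\hat t}_- = 0$. Set $\alpha_0 = 1 - a - \tfrac12 b$ and $\beta_0 = 1 - a - \tfrac34 b$ — these are the two kinds of factors that step $\hat t$ contributes to the four terms — and let $P_1 = \prod_{t \ne \hat t}(1 - p^t_1 - \tfrac12 p^t_2)$, $P_2 = \prod_{t \ne \hat t}(1 - p^t_1 - \tfrac12 p^t_2 - p^t_-)$, $P_3 = \prod_{t \ne \hat t}(1 - p^t_1 - \tfrac34 p^t_2 - p^t_-)$, and $Q = \tfrac14 \sum_{t \ne \hat t} p^t_2 \prod_{t' \ne t, \hat t}(1 - p^{t'}_1 - \tfrac34 p^{t'}_2 - p^{t'}_-)$. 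Pulling the factor of $\hat t$ out of each of the four terms, the one nontrivial bookkeeping point is that the $t=\hat t$ summand of the fourth term equals $\tfrac14 b\, P_3$ and combines with the third term $\beta_0 P_3$ to give $(\beta_0 + \tfrac14 b)P_3 = \alpha_0 P_3$. Hence Equation~\eqref{eqn:display-ads-two-way-closed-form-basic} equals $\alpha_0(P_1 - P_2 + P_3) + \beta_0 Q$.

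Next I would write the same expression after splitting $\hat t$ into $t_1, t_2$ with $p^{t_i}_1 = a/2$, $p^{t_i}_2 = b/2$, $p^{t_i}_- = 0$; since all other steps are untouched, the quantities $P_1, P_2, P_3, Q$ are unchanged. Now the step-$\hat t$ factors become squares: $t_1, t_2$ jointly contribute $\alpha_1 := (1 - \tfrac a2 - \tfrac b4)^2$ to the first two products, and $\beta_1 := (1 - \tfrac a2 - \tfrac38 b)^2$ to the third term and to every summand of the fourth term not indexed by $t_1$ or $t_2$; the $t=t_1$ and $t=t_2$ summands of the fourth term together contribute $\tfrac b4(1 - \tfrac a2 - \tfrac38 b)\,P_3$. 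Collecting terms, the post-split value is $\alpha_1(P_1 - P_2) + \delta_1 P_3 + \beta_1 Q$, where $\delta_1 = (1 - \tfrac a2 - \tfrac38 b)(1 - \tfrac a2 - \tfrac b8)$, obtained by factoring $(1 - \tfrac a2 - \tfrac38 b)$ out of $\beta_1 + \tfrac b4(1 - \tfrac a2 - \tfrac38 b)$.

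It then remains to show the difference $(\alpha_1 - \alpha_0)(P_1 - P_2) + (\delta_1 - \alpha_0)P_3 + (\beta_1 - \beta_0)Q$ is nonnegative. Here $\alpha_1 - \alpha_0 \ge 0$ and $\beta_1 - \beta_0 \ge 0$ both follow from the elementary $(1-c)^2 \ge 1 - 2c$ (applied with $c = \tfrac a2 + \tfrac b4$ and $c = \tfrac a2 + \tfrac38 b$ respectively), and a one-line expansion gives the identity $\delta_1 - \alpha_0 = \big(\tfrac a2 + \tfrac b8\big)\big(\tfrac a2 + \tfrac{3b}{8}\big) \ge 0$. Finally $P_1 \ge P_2 \ge 0$, $P_3 \ge 0$, and $Q \ge 0$: the online types in $I_1$, $I_2$, $I_-$ at a given step are disjoint events, so $p^t_1 + p^t_2 + p^t_- \le 1$, hence each factor $1 - p^t_1 - \tfrac12 p^t_2 - p^t_-$ and $1 - p^t_1 - \tfrac34 p^t_2 - p^t_-$ lies in $[0,1]$, and $P_2$ arises from $P_1$ by subtracting the nonnegative $p^t_-$ inside each factor. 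Thus the difference is a sum of products of nonnegative quantities, proving the lemma.

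The main obstacle is the first step: recognizing that the $t = \hat t$ piece of the fourth sum is precisely what upgrades the coefficient of $P_3$ from $\beta_0$ to $\alpha_0$, so that the pre-split expression collapses to $\alpha_0(P_1-P_2+P_3) + \beta_0 Q$ and the before/after comparison lines up coefficient by coefficient. Once that structure is in place, what is left is routine: the $(1-c)^2 \ge 1-2c$ bound, the single polynomial identity for $\delta_1 - \alpha_0$, and the observation that every factor involved is a complement of a probability and hence in $[0,1]$.
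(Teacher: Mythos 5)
Your proof is correct and takes essentially the same approach as the paper: both compute the change in Equation~\eqref{eqn:display-ads-two-way-closed-form-basic} under the split and show it is a sum of non-negative terms. Your version is a bit more explicit in the bookkeeping (factoring into $\alpha_0(P_1-P_2+P_3)+\beta_0 Q$ pre-split and lining up coefficients) and in justifying $P_1\ge P_2\ge 0$, $P_3, Q\ge 0$, but the identities $\alpha_1-\alpha_0 = \tfrac14(a+\tfrac b2)^2$, $\delta_1-\alpha_0 = \tfrac14(a+\tfrac b4)(a+\tfrac{3b}4)$, and $\beta_1-\beta_0 = \tfrac14(a+\tfrac{3b}4)^2$ are exactly the three non-negative lines the paper displays.
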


\begin{proof}
	The change is:
	\begin{align*}
	    &
	    \frac{1}{4} \Big(p^{\hat{t}}_1 + \frac{1}{2} p^{\hat{t}}_2 \Big)^2 \prod_{t \neq \hat{t}} \Big( 1 - p^t_1 - \frac{1}{2} p^t_2 \Big)
	    - 
	    \frac{1}{4} \Big(p^{\hat{t}}_1 + \frac{1}{2} p^{\hat{t}}_2 \Big)^2 \prod_{t \neq \hat{t}} \Big( 1 - p^t_1 - \frac{1}{2} p^t_2 - p^t_-\Big)
	    \\
	    & \qquad
	    + 
	    \frac{1}{4} \Big(p^{\hat{t}}_1 + \frac{3}{4} p^{\hat{t}}_2 \Big)^2 \prod_{t \neq \hat{t}} \Big( 1 - p^t_1 - \frac{3}{4} p^t_2 - p^t_-\Big)
	    -
	    \frac{1}{8} p^{\hat{t}}_2 \Big(p^{\hat{t}}_1 + \frac{3}{4} p^{\hat{t}}_2 \Big) \prod_{t \neq \hat{t}} \Big( 1 - p^t_1 - \frac{3}{4} p^t_2 - p^t_-\Big) \\
	    & \qquad
	    + \frac{1}{16} \Big(p^{\hat{t}}_1 + \frac{3}{4} p^{\hat{t}}_2 \Big)^2 \sum_{t \neq \hat{t}} p^t_2 \prod_{t' \ne t, \hat{t}} \Big( 1 - p^{t'}_1 - \frac{3}{4} p^{t'}_2 - p^{t'}_-\Big)
		~.
	\end{align*}
	
	This is non-negative because all three lines are non-negative.
\end{proof}

\begin{proof}[Proof of Lemma~\ref{lem:display-ads-two-way-general}]
	Recall that we want to upper bound the probability that agent $j$ does not get any online item with edge-weight at least $w$ by:
	\[
		\prod_{t \in T} \Big(1 - p^t_1 - \frac{1}{2} p^t_2 \Big) ~\le~ e^{-y_1-y_2}
	\]
	and:
	\[
		e^{-y_1-y_2} \Big(1 + \frac{1}{4}y_2\Big)^2 e^{-\frac{1}{2}y_2} + \frac{1}{15} (1-y_1-y_2)
		~.
	\]
	
	The first bound follows by Lemma~\ref{lem:display-ads-two-way-closed-form-basic} and its remark.
	
	It remains to prove the second bound.
	We will next make a sequence of modifications to the instance using Lemmas~\ref{lem:display-ads-merge-minus} and \ref{lem:display-ads-split}.
	In this process, Equation~\eqref{eqn:display-ads-two-way-closed-form-basic} will weakly increase but the instance will become more structured. 
	
	First, we apply Lemma~\ref{lem:display-ads-merge-minus} to modify the instance such that there is a time step $t^*$ for which $p^{t^*}_- = 2 y_-$ and we have $p^t_- = 0$ in the other time steps $t \ne t^*$.
	As a result, Equation~\eqref{eqn:display-ads-two-way-closed-form-basic} weakly increases and becomes:
	\begin{align*}
		&
		\prod_{t \in [T]} \Big( 1 - p^t_1 - \frac{1}{2} p^t_2 \Big)
		~-~ 
		\Big(1 - p^{t^*}_1 - \frac{1}{2} p^{t^*}_2 - 2y_- \Big) \prod_{t \ne t^*} \Big( 1 - p^t_1 - \frac{1}{2} p^t_2\Big) \\
		& \qquad
		+~ 
		\Big( 1 - p^{t^*}_1 - \frac{1}{2} p^{t^*}_2 - 2y_- \Big) \prod_{t \ne t^*} \Big( 1 - p^t_1 - \frac{3}{4} p^t_2 \Big) \\
		& \qquad
		+~
		\frac{1}{4} \Big( 1 - p^{t^*}_1 - \frac{3}{4} p^{t^*}_2 - 2y_- \Big) \sum_{t \ne t^*} p^t_2 \prod_{t' \ne t, t^*} \Big( 1 - p^{t'}_1 - \frac{3}{4} p^{t'}_2 \Big)
		~.	
	\end{align*}

	Next, by Lemma~\ref{lem:display-ads-split}, we may now consider without loss of generality the case when $p^t_1$ and $p^t_2$ are infinitesimally small in any time step $t \ne t^*$.
	The equation then simplifies to:
	\begin{align*}
		&
		\Big( 1 - p^{t^*}_1 - \frac{1}{2} p^{t^*}_2 \Big)
		\exp\Big( - \sum_{t \ne t^*} \Big( p^t_1 + \frac{1}{2} p^t_2 \Big) \Big)
		- 
		\Big(1 - p^{t^*}_1 - \frac{1}{2} p^{t^*}_2 - 2y_- \Big) \exp\Big( - \sum_{t \ne t^*} \Big( p^t_1 + \frac{1}{2} p^t_2 \Big) \Big) \\
		& \qquad
		+ 
		\Big( 1 - p^{t^*}_1 - \frac{1}{2} p^{t^*}_2 - 2y_- \Big) \exp\Big( - \sum_{t \ne t^*} \Big( p^t_1 + \frac{3}{4} p^t_2 \Big) \Big) \\
		& \qquad
		+
		\frac{1}{4} \Big( 1 - p^{t^*}_1 - \frac{3}{4} p^{t^*}_2 - 2y_- \Big) \sum_{t \ne t^*} p^t_2 \cdot \exp\Big( - \sum_{t \ne t^*} \Big( p^t_1 + \frac{3}{4} p^t_2 \Big) \Big)
		~.	
	\end{align*}
	
	To simplify the notations, we introduce $z_1$ and $z_2$ to denote:
	\[
		z_1 = \sum_{t \ne t^*} p^t_1
		\quad,\quad
		z_2 = \frac{1}{2} \sum_{t \ne t^*} p^t_2
		~.
	\]
	
	Merging the first two terms and applying the above notations to the equation, it can then be rewritten as follows:
	\begin{align*}
		&
		2y_- \cdot e^{-z_1-z_2}
		+ 
		\Big( 1 - p^{t^*}_1 - \frac{1}{2} p^{t^*}_2 - 2y_- \Big) \cdot e^{-z_1-\frac{3}{2}z_2} 
		+
		\frac{1}{2} \Big( 1 - p^{t^*}_1 - \frac{3}{4} p^{t^*}_2 - 2y_- \Big) z_2 \cdot e^{-z_1-\frac{3}{2}z_2}
		~.	
	\end{align*}
	
	The terms unrelated to $y_-$ sum to:
	\[
		e^{-z_1-\frac{3}{2} z_2} \bigg( 
		\Big( 1 - p^{t^*}_1 - \frac{1}{2} p^{t^*}_2 \Big) +
		\frac{1}{2} \Big( 1 - p^{t^*}_1 - \frac{3}{4} p^{t^*}_2 \Big) z_2 \bigg)
		~.
	\]
	
	Since $1-a-b \le (1-a)(1-b)$ for any $a, b \ge 0$, this is upper bounded by:
	\[
		e^{-z_1-\frac{3}{2} z_2} \big( 1 - p^{t^*}_1 \big) \bigg( 
		\Big( 1 - \frac{1}{2} p^{t^*}_2 \Big) +
		\frac{1}{2} \Big( 1 - \frac{3}{4} p^{t^*}_2 \Big) z_2 \bigg)
		~.
	\]
	
	Since $z_1 + p^{t^*}_1 = y_1$ and $1-p^{t^*}_1 \le e^{-p^{t^*}_1}$, we further relax it to:
	\[
		e^{-y_1} \cdot \underbrace{e^{-\frac{3}{2} z_2} \bigg( 
		\Big( 1 - \frac{1}{2} p^{t^*}_2 \Big) +
		\frac{1}{2} \Big( 1 - \frac{3}{4} p^{t^*}_2 \Big) z_2 \bigg)}_{(*)}
		~.
	\]
	
	Note that $z_2 + \frac{1}{2} p^{t^*}_2 = y_2$.
	We will next prove that $(*)$ is maximized when $z_2 = y_2$ and $p^{t^*}_2 = 0$.
	For notational simplicity, let $x = \frac{1}{2} p^{t^*}_2$. 
	We just need to show that:
	\[
		1 - x +
		\frac{1}{2} \Big( 1 - \frac{3}{2} x \Big) z_2
		\le 
		e^{-\frac{3}{2} x} \bigg( 1 +
		\frac{1}{2} (x+z_2) \bigg)
		~.
	\]
	
	The coefficients of $z_2$ satisfy:
	\[
		\frac{1}{2} \Big(1 - \frac{3}{2} x\Big) \le \frac{1}{2} e^{-\frac{3}{2} x}
		~.
	\]
	
	The constant terms satisfy (Appendix~\ref{app:e-2x}):
	\begin{equation}
		\label{eqn:e-abx}
		1 - x \le e^{-\frac{3}{2}x} \Big(1 + \frac{1}{2} x \Big)
	\end{equation}

	Therefore, we can upper bound the sum of terms unrelated to $2y_-$ by:
	\[
		e^{-y_1-\frac{3}{2} y_2} \Big( 1 + \frac{1}{2} y_2 \Big)
		~.
	\]
 
	The coefficients of $2y_-$ sum to:
	\[
		e^{-z_1 -z_2} \Big( 1 - e^{-\frac{1}{2} z_2} \Big(1 + \frac{1}{2}z_2\Big) \Big)
		~.
	\]
	
	Since $e^{-z} (1+z)$ is decreasing in $z \ge 0$, this is at most:
	\begin{equation}
		\label{eqn:display-ads-y-minus-coef}
		e^{-z_1 -z_2} \Big( 1 - e^{-\frac{1}{2} (z_1 + z_2)} \Big(1 + \frac{1}{2}(z_1 + z_2)\Big) \Big)
		~.
	\end{equation}
	
	Further, this is non-decreasing in $0 \le z_1 + z_2 \le 1$ (Appendix~\ref{app:display-ads-y-minus-coef}), the sum of coefficients of $2y_-$ is at most:
	\[
		\frac{1}{e} - \frac{3}{2 e^{\frac{3}{2}}} < \frac{1}{30}
		~.
	\]
	
	Finally, the bound follows by $y_- \le 1 -  y_1 - y_2$.
\end{proof}

\subsection{General SOCS from Two-Way SOCS}

 We will next consider a general \textbf{SOCS for Display Ads} obtained by combining the \textbf{Type Decomposition} from Section~\ref{sec:socs-reduction} and the \textbf{Two-Way SOCS for Display Ads} and its convergence rate from the last subsection.

\begin{theorem}
\label{thm:general-socs-display-ads}
    \textrm{\bf SOCS for Display Ads} has convergence rate:
    \[
    	g(y)
        ~=~ 
        \min \Big\{
        	e^{-y}
            \,,\,
            e^{-y} \Big(1 + \frac12 (y-0.44)^+ \Big) e^{- \frac12 (y-0.44)^+} + \frac{1-y}{15}
        \Big\}
        ~.
    \]
    That is, for any offline agent $j \in J$ and any weight-level $w > 0$, the maximum edge-weight allocated to $j$ is at least $w$ with probability at least $1 - g\big(y_j(w)\big)$.
\end{theorem}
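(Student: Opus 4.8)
The plan is to follow the template of the proof of Theorem~\ref{thm:adwords-general-socs}, with the weight-level accounting of Display Ads playing the role of the AdWords budget bookkeeping. Fix an offline agent $j$ and a weight-level $w > 0$; all quantities below track only whether $j$ receives an online item of edge-weight at least $w$. Running \textbf{Type Decomposition} on the optimal Stochastic Matching LP solution, let $y_1$ be the contribution to $y_j(w)$ from one-way surrogate types $i\sim j$ with $w_{ij}\ge w$, let $y_2$ be the contribution from two-way surrogate types $i\sim\{j,k\}$ with $w_{ij}\ge w$, and let $y_-$ be the contribution from two-way types with $w_{ij} < w$. By the Allocation Conservation property (Lemma~\ref{lem:match-rate-conservation}) summed over the heavy types, $y_j(w) = y_1 + y_2$; and since Display Ads is a matching from the offline optimum's viewpoint, Constraint~\eqref{eqn:discrete-time-lp} with $S = T\times I$ gives that the total fractional allocation to $j$ is at most $1$, whence $y_- \le 1 - y_1 - y_2$. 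Lemma~\ref{lem:display-ads-two-way-general} then bounds the probability that $j$ gets no edge of weight $\ge w$ by both $e^{-y_1-y_2}$ and $e^{-y_1-\frac32 y_2}\bigl(1+\tfrac12 y_2\bigr) + \tfrac1{15}(1-y_1-y_2)$, the $y_-$-dependence having already been absorbed into the constant $\tfrac1{15}$.

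Write $y := y_j(w) = y_1 + y_2$. The second bound of Lemma~\ref{lem:display-ads-two-way-general}, viewed as a function of $y_1$ with $y$ held fixed, has derivative $\tfrac14(y - y_1)\,e^{-\frac32 y + \frac12 y_1} \ge 0$, so it is non-decreasing in $y_1$. Consequently, once we show $y_1 \le 0.44$, substituting $y_1 = \min\{0.44, y\}$ and $y_2 = y - y_1$ yields the bound $e^{-y}\bigl(1+\tfrac12(y-0.44)^+\bigr)e^{-\frac12(y-0.44)^+} + \tfrac1{15}(1-y)$, and taking the minimum with the first bound $e^{-y}$ produces exactly $g(y)$. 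So everything reduces to an upper bound on $y_1$ — essentially an approximate Converse Jensen Inequality for the Non-IID matching model.

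To bound $y_1$, recall (Lemma~\ref{lem:type-decomposition-probability}) that a one-way type $i\sim j$ occurs only at \emph{critical} pairs $(t,i)$ with $\matchrate^t_{ij} > \tfrac12$. Let $C$ be the critical pairs that additionally satisfy $w_{ij}\ge w$, and set $q^t = \sum_{i:(t,i)\in C}\mass^t_i$, the probability that some critical pair occurs at step $t$. Then $y_1 = 2\sum_{(t,i)\in C} x^t_{ij} - \sum_t q^t$, and Constraint~\eqref{eqn:discrete-time-lp} with $S = C$ gives $\sum_{(t,i)\in C} x^t_{ij} \le 1 - \prod_t(1-q^t)$, so that $y_1 \le 2\bigl(1-\prod_t(1-q^t)\bigr) - \sum_t q^t$. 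In the non-homogeneous Poisson limit every $q^t$ is infinitesimal, $\prod_t(1-q^t) = e^{-\sum_t q^t}$, and maximizing $2(1-e^{-Q}) - Q$ over $Q\ge 0$ gives $y_1 \le 1-\ln 2 \approx 0.307$. For a general Non-IID instance I would relax $1-q^t \ge e^{-q^t-(q^t)^2}$ to obtain $y_1 \le (1-\ln 2) + O\bigl(\sum_t(q^t)^2\bigr)$, and then split into cases as in the AdWords argument: if every $q^t$ lies below a small threshold, the correction is at most $0.44 - (1-\ln 2)$ and we are done; if some $q^t$ exceeds the threshold, then that single step already endows $j$ with a heavy edge with substantial probability — directly through the factor $e^{-y_1}\le e^{-p^t_1}$ with $p^t_1 = \sum_{i:(t,i)\in C}\mass^t_i(2\matchrate^t_{ij}-1)$, or, if $\matchrate^t_{ij}$ is close to $\tfrac12$ for that step, through the correspondingly large two-way contribution — so $g(y)$ holds with room to spare.

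I expect the last step, the approximate Converse Jensen Inequality together with the case analysis that calibrates the threshold $0.44$, to be the main obstacle. Unlike unweighted and vertex-weighted matching, there is no clean recurrence here to exploit, so the correction coming from non-infinitesimal per-step probabilities must be controlled by hand, and the ``dominant $q^t$'' case requires juggling the first and second bounds of Lemma~\ref{lem:display-ads-two-way-general} while keeping all numerical constants consistent. The remaining ingredients — the Type Decomposition bookkeeping, the monotonicity in $y_1$, and the final minimum with $e^{-y}$ — are routine.
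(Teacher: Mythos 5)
Your first two paragraphs track the paper's proof closely and correctly. The identification of $y_1, y_2, y_-$, the application of Lemma~\ref{lem:display-ads-two-way-general}, and the reduction to showing $y_1 \le 0.44$ are all right, and your monotonicity calculation (derivative $\tfrac14 (y - y_1)\, e^{-\frac32 y + \frac12 y_1} \ge 0$ of the second bound in $y_1$ with $y$ fixed) is a clean way to phrase what the paper states as ``$e^{-x}(1+x)$ is decreasing.'' The Converse Jensen ingredient is also the right one: the paper formalizes it as Lemma~\ref{lem:discrete-converse-jensen}, giving $y_1 \le 1-\ln 2 + 2S$ with $S = \sum_t \bigl( \sum_{i : w_{ij} \ge w} x_{ij}^t \bigr)^2$, matching the Poisson-limit value $1-\ln 2$ plus a second-moment correction.

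The gap is in your final case analysis. You propose to split on whether any individual $q^t$ is large, and to argue in the large case that ``that single step already endows $j$ with a heavy edge with substantial probability.'' This is not the right dichotomy, and the sketch does not close. What matters for the Converse Jensen correction is the aggregate second moment $S$, not the max of $q^t$; a modest per-step threshold does not bound $S$ unless you also control $\sum_t q^t$, and conversely one large $q^t$ does not by itself give a bound you can compare against $g(y)$ — whose target value also depends on $y$, so ``$j$ is matched with substantial probability'' is not a conclusion you can substitute for the precise inequality. The paper instead works directly with the aggregate threshold $S > 0.0665$ and, in that case, sharpens the \emph{first} bound of Lemma~\ref{lem:display-ads-two-way-general}: using $\ln(1-x) \le -x - \tfrac12 x^2$ termwise,
\[
\prod_{t} \Big( 1 - \sum_{i : w_{ij} \ge w} x^t_{ij} \Big) \;\le\; e^{-y_j(w)} \cdot e^{-S/2},
\]
so when $S$ is large the product itself beats $e^{-y}$ by a concrete factor $e^{-S/2} < e^{-0.03325}$, which is then numerically compared against the second term of $g(y)$ (tight at $y=1$: $1.28\,e^{-0.28} \approx 0.9674$ versus $e^{-0.03325} \approx 0.9673$). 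That is the step your per-step argument misses; without it, the constant $0.44$ cannot be calibrated and the two regimes don't fit together.
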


\medskip
We observe that for any $0 \le y_j(w) \le 1$ (see Appendix~\ref{app:display-ads-ratio}):
\begin{equation}
	\label{eqn:display-ads-ratio}
	1 - g\big(y_j(w)\big) \ge 0.644 \cdot y_j(w)
	~.
\end{equation}

Integrating over all weight-levels $w > 0$ and summing over all offline agents $j \in J$, we get the following competitive ratio for the Stochastic Display Ads problem as a corollary.

\begin{corollary}
	\label{cor:display-ads}
	Rounding the solution of \textbf{Stochastic Matching LP} with \textbf{SOCS for Display Ads} is $0.644$-competitive for Stochastic Display Ads.
\end{corollary}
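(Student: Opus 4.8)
The plan is to assemble the convergence-rate guarantee of \textbf{SOCS for Display Ads} (Theorem~\ref{thm:general-socs-display-ads}), the linear bound in Equation~\eqref{eqn:display-ads-ratio}, and the relaxation properties of the \textbf{Stochastic Matching LP}. First I would fix an optimal solution $x_{ij}^t$ of the \textbf{Stochastic Matching LP} (with $w_{ij}$ the Display Ads edge-weights), define the fractional allocation $\matchrate_{ij}^t = \nicefrac{x_{ij}^t}{\mass_i^t}$, routing the residual mass $1 - \sum_j \matchrate_{ij}^t \ge 0$ permitted by the first LP constraint to a dummy zero-value agent, and feed this fractional allocation to the SOCS. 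By Lemma~\ref{lem:discrete-time-lp-optimality} the LP objective is at least $\OPT$, so it suffices to prove $\ALG \ge 0.644 \cdot \big(\sum_{i,j,t} w_{ij}\, x_{ij}^t\big)$.

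Next, I would fix an offline agent $j$ and a weight-level $w > 0$. Applying Constraint~\eqref{eqn:discrete-time-lp} of the \textbf{Stochastic Matching LP} with $S = \{(t,i) : w_{ij} \ge w\}$ gives $y_j(w) = \sum_{(t,i) \in S} x_{ij}^t \le 1 - \prod_t \big(1 - \sum_{i : (t,i) \in S} \mass_i^t\big) \le 1$, so $y_j(w)$ lies in the range $[0,1]$ where Equation~\eqref{eqn:display-ads-ratio} applies. Using the standard weight-level decomposition of a Display Ads agent's value, agent $j$'s expected value equals $\int_0^\infty \Pr{\text{max.\ edge-weight allocated to $j$ is } \ge w}\,\dif w$, and Theorem~\ref{thm:general-socs-display-ads} lower-bounds the integrand by $1 - g\big(y_j(w)\big)$. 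Hence, by Equation~\eqref{eqn:display-ads-ratio} and Fubini's identity $\int_0^\infty y_j(w)\,\dif w = \sum_{i,t} w_{ij}\,\mass_i^t\,\matchrate_{ij}^t = \sum_{i,t} w_{ij}\, x_{ij}^t$, the agent's expected value satisfies
\[
	\E[v_j] ~\ge~ \int_0^\infty \big(1 - g(y_j(w))\big)\,\dif w ~\ge~ 0.644 \int_0^\infty y_j(w)\,\dif w ~=~ 0.644 \sum_{i,t} w_{ij}\, x_{ij}^t
	~.
\]

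Summing over all offline agents yields $\ALG = \sum_{j} \E[v_j] \ge 0.644 \sum_{i,j,t} w_{ij}\, x_{ij}^t \ge 0.644 \cdot \OPT$, establishing the competitive ratio; polynomial running time follows from Lemma~\ref{lem:discrete-time-lp-polytime} together with the efficiency of \textbf{Type Decomposition} and the two-way SOCS. I do not expect a genuine obstacle here, since the substantive work is already packaged in the convergence rate $g$ of Theorem~\ref{thm:general-socs-display-ads} and the inequality $1 - g(y) \ge 0.644\,y$ of Equation~\eqref{eqn:display-ads-ratio}; the only point needing a moment's care is verifying $y_j(w) \le 1$ at every weight-level so that this inequality is applicable, which is exactly what the polymatroid constraint~\eqref{eqn:discrete-time-lp} delivers.
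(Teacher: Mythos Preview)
Your proposal is correct and takes essentially the same approach as the paper, which derives the corollary in one sentence by ``integrating over all weight-levels $w > 0$ and summing over all offline agents $j \in J$'' from Theorem~\ref{thm:general-socs-display-ads} and Equation~\eqref{eqn:display-ads-ratio}. You simply spell out the details the paper leaves implicit---in particular the check that $y_j(w) \le 1$ via Constraint~\eqref{eqn:discrete-time-lp} so that Equation~\eqref{eqn:display-ads-ratio} applies, and the Fubini/weight-level identity---so there is nothing to add.
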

\begin{proof}[Proof of Theorem~\ref{thm:general-socs-display-ads}]
	Fix any agent $j$ and any weight-level $w > 0$.
	Recall our notations from the previous subsection.
	Let $y_1$ and $y_2$ denote the total fractional allocation to agent $j$ from one-way and two-way surrogate types respectively whose edge-weights are at least $w_{ij} \ge w$.
	We have:
	\[
		y_j(w) = y_1 + y_2
		~.
	\]
	
	By the first bound of Lemma~\ref{lem:display-ads-two-way-general}, we have:
	\[
		g \big( y_j(w) \big) \le e^{-y_1-y_2} = e^{- y_j(w)}
		~,
	\]
	matching the first upper bound in the theorem.
	
	By the second bound of Lemma~\ref{lem:display-ads-two-way-general}, we have:
	\begin{align*}
		g \big( y_j(w) \big) 
		&
		\le e^{-y_1-\frac{3}{2} y_2} \Big(1 + \frac{1}{2} y_2 \Big) + \frac{1}{15} \big( 1 - y_1 - y_2 \big) \\
		&
		= e^{-y_j(w)} e^{-\frac{1}{2}y_2} \Big(1 + \frac{1}{2} y_2 \Big) + \frac{1}{15} \big( 1 - y_j(w) \big)
		~.
	\end{align*}
	
	Noting that $e^{-x}(1+x)$ is decreasing in $x \ge 0$, and comparing the above with the second upper bound in the theorem, ideally we would like to show that:
	\[
		y_2 ~\ge~ y_j(w) - 0.44
		~.
	\]
	
	Since $y_1 + y_2 = y_j(w)$, lower bounding $y_2$ is equivalent to upper bounding $y_1$.
	The definition \textbf{Type Decomposition} ensures that (Lemma~\ref{lem:type-decomposition-probability}):
	\[
		y_1 = \sum_{i \in I : w_{ij} \ge w} \sum_{t \in [T]} \big( 2 x_{ij}^t - \mass^t_i \big)^+
		~.
	\]
	
	If we were in the non-homogeneous Poisson arrival model, we can apply a Converse Jensen Inequality similar to the one shown for IID Online Stochastic Matching~\cite{HuangS:STOC:2021} to derive an upper bound of $1 - \ln 2$.
	Unfortunately, the Non-IID model is not asymptotically equivalent to the (non-homogeneous) Poisson arrival model, unlike their IID counterparts.
	Nevertheless, we prove an approximate version of the Converse Jensen Inequality below.
	Compared to the counterpart, we need to increase the right-hand-side by the second moment of the matched probability of different time steps.
	This lemma may be of independent interest.
	We present its proof in Appendix~\ref{app:converse-jensen}.

	\begin{lemma}[Converse Jensen Inequality]
	    \label{lem:discrete-converse-jensen}

	    For any feasible assignment to the Stochastic Matching LP, and subset $I' \subseteq I$ of online item types, and any offline agent $j \in J$, we have:

	    \[
	        \sum_{i \in I'} \sum_{t \in [T]} \big( 2 x_{ij}^t - \mass_i^t \big)^+ \le 1 - \ln 2 + 2 \sum_t \Big( \sum_{i \in I'} x_{ij}^t \Big)^2
	    \]
	\end{lemma}

    Define $S = \sum_t (\sum_{i : w_{ij} \ge w} x^t_{ij})^2$, we have:
    \[
    	y_2 ~\ge~ y_j(w) - 1 + \ln 2 - 2S
    	~.
    \]

	Therefore, we get the desired bound if $S \le 0.665 < \nicefrac{(0.44-1+\ln 2)}{2}$.
	
	Otherwise, i.e., if $S > 0.0665$, we resort back to the first bound of Lemma~\ref{lem:display-ads-two-way-general}:
	\begin{align*}
		\prod_{t \in [T]} \Big(1 - p^t_1 - \frac{1}{2} p^t_2 \Big)
		&
		=
		\prod_{t \in [T]} \bigg(1 - \sum_{i \in I: w_{ij} \ge w} x^t_{ij} \bigg)
		\tag{Allocation Rate Conservation} \\
		&
		= e^{-y_j(w)} \prod_{t \in [T]} \exp \bigg( \sum_{i \in I: w_{ij} \ge w} x^t_{ij} \bigg) \bigg(1 - \sum_{i \in I: w_{ij} \ge w} x^t_{ij} \bigg) \\
		&
		\le e^{-y_j(w)} \prod_{t \in [T]} \exp \bigg( - \frac{1}{2} \Big( \sum_{i \in I: w_{ij} \ge w} x^t_{ij} \Big)^2 \bigg) 
		\tag{$\ln(1-x) \le -x - \frac{1}{2} x^2$}\\
		&
		= e^{-y_j(w)} \cdot e^{-\frac{1}{2}S }
		~.
	\end{align*}

	This is always smaller than the bound in the theorem because:
	\[
		\Big( 1 + \frac{1}{2} (y-0.44) \Big) e^{-\frac{1}{2} (y-0.44)} + \frac{1}{15} (1-y) e^y
	\]
	is decreasing in $0.44 \le y \le 1$, and its minimum value at $y = 1$ is:
	\[
		\Big( 1 + \frac{1}{2} (1-0.44) \Big) e^{-\frac{1}{2} (1-0.44)} > 0.9674 > e^{-\nicefrac{0.0665}{2}}
		~.
	\]
\end{proof}

\bibliographystyle{plainnat}
\bibliography{matching}

\appendix

\addtocontents{toc}{\protect\setcounter{tocdepth}{1}}

\section{Lossless Simulation in the Query-Commit Model}
\label{app:lossless-simulation}

Consider an instance of a bipartite matching problem in the \emph{Query-Commit Model}.
For unweighted matching, we can arbitrarily let one side of the bipartite graph be the offline vertices and the other side be the online vertices.
For vertex-weighted matching, we let the weighted side be the offline vertices.
For edge-weighted matching with free-disposal, we let the side that could be matched more than once be the offline vertices.
Then, the instance in the \emph{Query-Commit Model} induces a distribution over the realization of the neighborhood of each online vertex, and thus, may be viewed as an instance in the \emph{Best/Random-Order Discrete Model}.

Next, consider any online algorithm for the matching problem in the \emph{Best/Random-Order Discrete Model}.
We will simulate it losslessly in the \emph{Query-Commit Model} as follows.
Suppose the algorithm will next inspect an online vertex $i$, and the corresponding matching decisions would yield matching probabilities $x_{ij}$ for each offline vertex $j$.
The vector $(x_{ij})_{j \in J}$ lies within a $|J|$-dimensional polymatroid defined by:
\begin{align}
    \forall S \subseteq J \quad & \sum_{j \in S} x_{ij} \le 1 -  \prod_{j \in J} (1-p_{ij})
    \label{eqn:lossless-simulation-polytope-1}
    \\
    \forall j \in J \quad & x_{ij} \ge 0
    \label{eqn:lossless-simulation-polytope-2}
\end{align}
where the first set of constraints holds because the probability of matching $i$ to an offline vertex $j \in S$ cannot exceed the probability that there is an edge between $i$ and one of these offline vertices.
Therefore, vector $(x_{ij})_{j \in J}$ can be written as the weighted average of at most $m$ vertices of the polymatroid.
The main insight from \citet{GamlathKS:SODA:2019} was the following lemma.

\begin{lemma}[Lemma 5 of \citet{GamlathKS:SODA:2019} rephrased]
    \label{lem:lossless-simulation-polytope}
    Every vertex of the polytope defined by Equations~\eqref{eqn:lossless-simulation-polytope-1} and \eqref{eqn:lossless-simulation-polytope-2} corresponds to a permutation of a subset of the offline vertices $j_1, j_2, \dots, j_\ell$ such that:
    \begin{align*}
    x_{ij_1} & = p_{ij_1} \\
    x_{ij_2} & = (1-p_{ij_1}) p_{ij_2}  \\
    \dots & \\
    x_{ij_\ell} & = (1-p_{ij_1})\dots(1-p_{ij_{\ell-1}})p_{ij_\ell} \\
    x_{ij} & = 0 \quad \text{if $j \ne j_1, j_2, \dots, j_\ell$}
    \end{align*}
\end{lemma}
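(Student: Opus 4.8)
The plan is to characterize the vertices of the polytope defined by \eqref{eqn:lossless-simulation-polytope-1} and \eqref{eqn:lossless-simulation-polytope-2} directly from polymatroid theory. The right-hand side of \eqref{eqn:lossless-simulation-polytope-1} is the set function $r(S) = 1 - \prod_{j \in S}(1-p_{ij})$, which I would first verify is monotone, submodular, and normalized ($r(\varnothing)=0$), so that the polytope is exactly the independence polytope of a polymatroid with rank function $r$. It is a standard fact that every vertex of such a polytope is obtained by the greedy algorithm: fix a linear order on a subset of ground-set elements $j_1, j_2, \dots, j_\ell$ (the elements getting positive value), and set $x_{ij_k} = r(\{j_1,\dots,j_k\}) - r(\{j_1,\dots,j_{k-1}\})$ for each $k$, with $x_{ij}=0$ otherwise. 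A vertex arises precisely when the chain of tight sets $\{j_1\} \subset \{j_1,j_2\} \subset \dots$ pins down all coordinates, which happens exactly for this greedy construction (dropping any $j_k$ with a zero increment, since once $r$ saturates to $1$ all further increments vanish).

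Next I would compute the telescoping increments explicitly for the particular rank function $r(S) = 1 - \prod_{j\in S}(1-p_{ij})$. For the first element, $x_{ij_1} = r(\{j_1\}) - r(\varnothing) = (1 - (1-p_{ij_1})) - 0 = p_{ij_1}$. For the general step, $x_{ij_k} = \big(1 - \prod_{m \le k}(1-p_{ij_m})\big) - \big(1 - \prod_{m \le k-1}(1-p_{ij_m})\big) = \prod_{m\le k-1}(1-p_{ij_m}) \cdot p_{ij_k}$, which is exactly the claimed formula. This is a short, essentially mechanical calculation once the polymatroid structure is in place. I would also remark that this is the natural probabilistic reading: $x_{ij_k}$ is the probability that, probing $i$'s potential neighbors in the order $j_1, j_2, \dots$, the first existing edge is $(i, j_k)$ — which is precisely how the lossless simulation will query edges in the Query-Commit model.

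The main obstacle — though it is more of a bookkeeping point than a genuine difficulty — is to argue that \emph{every} vertex of the polytope has this form, not merely that each greedy output is a vertex. The clean way is to invoke the known correspondence between vertices of a polymatroid independence polytope and the maximal chains of tight constraints: at a vertex, the tight sets in \eqref{eqn:lossless-simulation-polytope-1} together with the tight non-negativity constraints in \eqref{eqn:lossless-simulation-polytope-2} must have rank $|J|$, and by submodularity the tight sets from \eqref{eqn:lossless-simulation-polytope-1} can be taken to form a chain $\varnothing \subsetneq S_1 \subsetneq \dots \subsetneq S_\ell$; the elements outside $S_\ell$ have $x_{ij}=0$, and ordering the single-element jumps of the chain recovers the permutation $j_1,\dots,j_\ell$. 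Since the paper attributes this to Lemma 5 of \citet{GamlathKS:SODA:2019}, I would present the polymatroid argument as a self-contained proof and note that it matches their statement, keeping the exposition brief and citing standard submodular-optimization facts where needed.
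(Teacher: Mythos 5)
Your proposal is correct and takes essentially the same approach as the paper: both identify that at a vertex the tight set constraints form a chain with single-element jumps (using submodularity of $r(S)=1-\prod_{j\in S}(1-p_{ij})$), then telescope the rank increments to recover the claimed product formula. The only difference is framing---you explicitly package this as the standard polymatroid vertex characterization, whereas the paper reproves the chain argument from scratch---but the underlying reasoning is identical.
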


We include a proof sketch below for completeness.
    
\begin{proof}
    Each vertex corresponds to $m$ tight constraints.
    Further, we may assume without loss of generality that for any subset $S$ for which Eqn.~\eqref{eqn:lossless-simulation-polytope-1} is one of these $m$ tight constraints, we have $p_{ij} > 0$ for any $j \in S$.
    The first constraint is tight for a sequence of subsets $S_1 \subset S_2 \subset \dots \subset S_\ell$ where $|S_{k+1}| = |S_k|+1$.
    We first prove that they satisfy pairwise proper containment.
    Otherwise, suppose that the subset is tight for $S, T$ such that $S \not\subset T$ and $T \not\subset S$. Consider subsets $S \cap T$ and $S \cup T$. We have:
    \begin{align*}
        1 - \prod_{j \in S} (1-p_{ij}) + 1 - \prod_{j \in T} (1-p_{ij})
        &
        = \sum_{j \in S} x_{ij} + \sum_{j \in T} x_{ij} \\
        &
        = \sum_{j \in S \cap T} x_{ij} + \sum_{j \in S \cup T} x_{ij} \\
        &
        \le 1 - \prod_{j \in S \cap T} (1-p_{ij}) + 1 - \prod_{j \in S \cup T} (1-p_{ij})
    \end{align*}
    which is a contradiction because:
    $$
    \prod_{j \in S \cap T} (1-p_{ij}) + \prod_{j \in S \cup T} (1-p_{ij}) > \prod_{j \in S} (1-p_{ij}) + \prod_{j \in T} (1-p_{ij})
    $$

    Next, suppose for contradiction that $|S_{k+1}| > |S_k| + 1$. The tightness of constraint~\eqref{eqn:lossless-simulation-polytope-1} for $S_k$ and $S_{k+1}$ pins down the values of $\sum_{j \in S_{k+1} \setminus  S_k} x_{ij}$.
    We further need $x_{ij} = 0$ for at least one $j \in S_{k+1} \setminus S_k$ to pin down vector $(x_{ij})_{j \in J}$.
    However, we now get that constraint~\eqref{eqn:lossless-simulation-polytope-1} is violated for $S_{k+1} - j$.
\end{proof}

Given the lemma, we can simulate the algorithm in the \emph{Query-Commit Model} by first sampling a vertex of the polytope according to the decomposition of $(x_{ij})_{j \in J}$.
Then, we can probe edges $(i,j_1), (i,j_2), \dots, (i,j_\ell)$ from Lemma~\ref{lem:lossless-simulation-polytope} one by one, and match the first edge that exists.
Finally, we probe all the remaining edges without including them.

\section{Omitted Analyses of Univariate Functions}
\label{app:univariate-function}

\subsection{Proof of Equations~\eqref{eqn:e-2x} and \eqref{eqn:e-abx}}
\label{app:e-2x}

We will prove that for any $a \ge b$, and $x \ge 0$:
\[
	1 - ax \le (1+bx) e^{-(a+b)x}
	~.
\]

Consider function $h(x) = (1+bx) e^{-(a+b)x} - 1 + ax$.
We only need to show $h(x) \ge 0$ for $x \ge 0$. 
Since $h(0) = 0$, it suffices to prove $h'(x) \ge 0$. 
This follows by:
\begin{align*}
	h'(x) 
	&
	= a - (a + b(a+b)x) e^{-(a+b)x} \\
	&
	\ge a \Big(1 - \big(1 + (a+b) x \big) e^{-(a+b)x} \Big) \tag{$a \ge b$} \\
	&
	\ge 0 \tag{$1+x \le e^x$}
	~.
\end{align*}

\subsection{Proof of Concavity of Equation~\eqref{eqn:socs-basic-matching-bound}}
\label{app:socs-basic-matching-bound}

We first restate the equation below:
\[
    1 - \big( 1 + (y_j - 1 + \ln 2)^+ \big) e^{-y_j-(y_j-1+\ln 2)^+}
    ~.
\]

For $0 \le y_j \le 1 - \ln 2$, it is $1 - e^{-y_j}$ and therefore is concave.
Further, the derivative at $y_j = 1 - \ln 2$ is equal to $e^{-1+\ln2} = \frac{2}{e}$.

For $1 - \ln 2 \le y_j \le 1$, the function is:
\[
    1 - \frac{e}{2} \big( y_j + \ln 2 \big) e^{-2y_j}
    ~.
\]

Its derivative is:
\[
    \frac{e}{2} \big( 2 y_j + 2\ln 2 - 1 \big) e^{-2y_j}
\]

On one hand, it equals $\frac{2}{e}$ at $y_j = 1-\ln 2$, matching the value from the other case.
On the other hand, it is decreasing for $y_j \ge 1-\ln 2$ because the second-order derivative is:
\[
    - 2e \big( y_j-1+\ln2 \big) e^{-2y_j} \le 0
    ~.
\]

\subsection{Proof of Concavity of Equation~\eqref{eqn:socs-matching}}
\label{app:ocs-matching-bound}

We will prove concavity for $0 \le y_j \le \frac{1}{2}$ and $\frac{1}{2} \le y_j \le 1$ separately, and verify that the left and right derivatives at $y_j = \frac{1}{2}$ are equal.

For $0 \le y_j\le \frac{1}{2}$, the function is:
\[
	\frac14 +\frac{y_j}{2} - \frac{1}{4}e^{-2y_j}
	~.
\]

Its derivative $\frac{1}{2} + \frac{1}{2} e^{-2y_j}$ is decreasing in $y_j$.
Further, the derivative at $y_j = \frac{1}{2}$ equals $\frac12 + \frac{1}{2e}$.

For $\frac{1}{2} \le y_j \le 1$, the function is:
\[
	1 - e^{-2y_j}\left(\frac{e+1}{4} + \frac{e}{2}y_j\right)
	~.
\]

Its derivative is $e^{-2y_j}\left(\frac12 + e\cdot y_j\right)$. On one hand, it equals $\frac12 + \frac{1}{2e}$ at $y_j = \frac12$, matching the value from the other case. On the other hand, it is decreasing for $y_j\ge \frac12$ because the second-order derivative is:
\[
	e^{-2y_j} \big(e - 1 - 2e\cdot y_j \big)
\]
which is negative when $y_j \ge \frac{1}{2}$.

\subsection{Proof of Concavity of Equation~\eqref{eqn:socs-matching-random-order}}
\label{app:socs-matching-random-order}

We will prove concavity for $0 \le y_j \le \frac{1}{2}$ and $\frac{1}{2} \le y_j \le 1$ separately, and verify that the left and right derivatives at $y_j = \frac{1}{2}$ are equal.

For $0 \le y_j\le \frac{1}{2}$, the function is:
\[
	1 - \big(1 + \frac{y_j}{2}\big) e^{-2y_j} - \frac{1}{2} y_j \big(1 - y_j \big)
	~.
\]

Its derivative is $(y_j+\frac32)e^{-2y_j} + y_j - \frac12$;
it equals $\frac{2}{3}$ at $y_j = \frac{1}{2}$.

Further, the second-order derivative is $-(2y_j + 2)e^{-2y_j} + 1$.
Rearranging terms, the non-positivity of this second-order derivative is equivalent to:
\[
	e^{2y_j} \le 2 + 2y_j
	~.
\]

The left-hand-side is convex and the right-hand-side is linear.
Hence, it suffices to verify the inequality at $y_j = 0$, where we have $1 < 2$, and at $y_j = \frac{1}{2}$, where we have $e < 3$.

For $\frac{1}{2} \le y_j \le 1$, the function is:
\[
	1 - e^{-2y_j} \big( 1 + \big(\frac{1}{2} +  \frac{e}{4}\big)  y_j \big)
	~.
\]

Its derivative is $e^{-2y_j}\left(\frac32 - \frac{e}{4} + (1+\frac{e}{2})\cdot y_j\right)$. 
In particular, it equals $\frac2e$ at $y_j = \frac12$, matching the value from the other case. 

Further, it is decreasing for $y_j\ge \frac12$ because the second-order derivative is:
\[
	e^{-2y_j} \big(e - 2 - (2 + e) y_j \big)
\]
which is negative when $y_j \ge \frac{1}{2}$ because $e - 2 - \nicefrac{(2+e)}{2} < 0$.

\subsection{Proof of Concavity of Equation~\eqref{eqn:adwords-convergence-function}}
\label{app:adwords-convergence-function}

We first restate the function, removing a constant term $-\log 4$ and considering the natural logarithm without loss of generality:

\[
    f(x) := \ln\left(3 + (1+x)e^{-x}\right). 
\]

The first-order and second-order derivatives are:
\begin{align*}
    & f'(x) = -\frac{x}{1 + x + 3e^x} ~,\\
    & f''(x) = -\frac{1 + 3(1-x) e^x}{(1 + x + 3e^x)^2} ~.
\end{align*}

For any $0 \le x \le 1$, we have $f''(x) < 0$.

\subsection{Proof of Concavity of Equation~\eqref{eqn:adwords-general-socs-convergence}}
\label{app:adwords-general-socs-convergence}

Let $c = 0.413$.
The equation is:
\[
    1 - g(y_j) = 1 - e^{-y_j}\left(\frac34 + \frac14 \cdot \frac{1 + (y_j-c)^+ / 4}{e^{(y_j - c)^+/4}}\right)^2.
\]

We show the concavity for $0 \le y_j \le c$ and $c \le y_j \le 1$ separately, and verify that the left and right derivatives at $y_j = c$ are equal.

For $0 \le y_j\le c$, the function is:
\[
	1 - e^{-y_j}
	~,
\]
which is concave, and the derivative at $y_j = c$ equals $e^{-c}$.

For $c \le y_j \le 1$, consider the change of variable $x = y_j - c$, the function is:
\[
	1 - e^{-x-c}\left(\frac34 + \frac14 \cdot \frac{1+x/4}{e^{x/4}}\right)^2
	~.
\]

Its derivative is 
\[
    \frac{1}{16} e^{-3x/2-c} \cdot \left(3e^{x/4} + \frac{x}{4} + 1\right)\left(3e^{x/4} + \frac{3x}{8} + 1\right)
\]

It equals $e^{-c}$ at $y_j = c ~ (x = 0)$, matching the value from the other case. Further, it is decreasing since it equals
\[
	\frac{e^{-c}}{16} \cdot e^{-7x/8} \cdot \left(3 + \frac{1+x/4}{e^{x/4}}\right) \left(3e^{-x/8} + \frac{1+3x/8}{e^{3x/8}}\right)
\]
where every term is non-increasing for $x\ge 0$. Hence, the equation is concave.

\subsection{Proof of Monotonicity of Equation~\eqref{eqn:display-ads-y-minus-coef}}
\label{app:display-ads-y-minus-coef}

Recall that we want to prove that:
\[
	f(x) = e^{-x} \big(1-e^{-\frac{x}{2}}(1+\frac{x}{2})\big)
\]
is non-decreasing in $0 \le x \le 1$.

The derivative is
\begin{align*}
    f'(x) = e^{-\frac{3}{2}x} \Big(1+\frac{3}{4}x-e^{\frac{x}{2}} \Big)
    ~.
\end{align*}

We next verify the non-negativity of $1+\frac{3}{4}x-e^{\frac{x}{2}}$.
Since this is concave, it achieves its minimum value at $x = 0$ or $x = 1$.
It is equal to $0$ at $x = 0$, and $1+\frac{3}{4}-\sqrt{e} > 0$ at $x = 1$.

\subsection{Proof of Inequality~\eqref{eqn:display-ads-ratio}}
\label{app:display-ads-ratio}

We first restate the inequality below.
For any $0 \le y \le 1$, we need to show that:
\[
	1 - g(y)
    ~=~ 
    1 - \min \Big\{
    	e^{-y}
        \,,\,
        e^{-y} \Big(1 + \frac12 (y-0.44)^+ \Big) e^{- \frac12 (y-0.44)^+} + \frac{1-y}{15}
    \Big\}
    \ge 0.644 \cdot y
    ~.
\]

We will consider two cases depending on the value of $y$. 
If $0 \leq y < 0.9$, we use $1- g(y) \geq 1 - e^{-y}$.
Since $e^{-y}$ is concave and $1 - g(0) = 0$, we have:
\[
	1- g(y) \geq 1 - e^{-y} \geq \frac{1 - e^{-0.9}}{0.9} y > 0.659 \cdot y
	~.
\]

It remains to consider the case when $0.9 \leq y \leq 1$.
We apply the second bound in this case and need to prove that:
$$
	1 -e^{-y} \Big(1 + \frac12 (y-0.44) \Big) e^{- \frac12 (y-0.44)} - \frac{1-y}{15} > 0.644 \cdot y
	~.
$$
            
We define $$f(y) = 1 -e^{-y} \Big(1 + \frac12 (y-0.44) \Big) e^{- \frac12 (y-0.44)} - \frac{1-y}{15} - 0.644 \cdot y ~.$$

The derivative is 
\begin{align*}
    f'(y) =& \frac{3}{2}e^{-y} \Big(1 + \frac12 (y-0.44) \Big) e^{- \frac12 (y-0.44)} - e^{-y} \Big(1 + \frac12 (y-0.44) \Big) e^{- \frac12 (y-0.44)} + \frac{1}{15} - 0.644\\
    =&e^{-y} \Big(\frac{1}{2} + \frac34 (y-0.44) \Big) e^{- \frac12 (y-0.44)} + \frac{1}{15} - 0.644\\
    \leq &e^{-0.9} \Big(\frac{1}{2} + \frac34 (1-0.44) \Big) e^{- \frac12 (0.9-0.44)} + \frac{1}{15} - 0.644\\
    = &e^{-1.13} \times 0.92 + \frac{1}{15} - 0.644
    <-0.28
    ~.
\end{align*}

Therefore, the function $f(y)$ is decreasing in $0.9 \leq y \leq 1$.
Hence, for any $0.9 \leq y \leq 1$, we have
\begin{align*}
    f(y) \geq& f(1)\\
    =& 1 -e^{-1} \Big(1 + \frac12 (1-0.44) \Big) e^{- \frac12 (1-0.44)} - 0.644\\
    >&0.0001
    ~.
\end{align*}

\section{Missing Proofs from Section~\ref{sec:vertex-weighted}}
\label{app:vertex-weighted}

\subsection{Proof of Lemma~\ref{lem:ocs-basic}}
\label{app:ocs-basic}

We will prove the lemma by an induction on the time step $t$ from $0$ to $T$.
The base case when $t = 0$ is trivial because both sides are equal to $1$.

Suppose that the inequality holds for time step $t-1$.
We next consider time step $t$.
By the inequality in Lemma~\ref{lem:ocs-recurrence-am-gm} and the induction hypothesis for subsets $S$ and $S+k$ at time $t-1$, we have:
\[
	u_S^t \:\le \sum_{\{ j, k \} : j, k \notin S} \mass_{\{j,k\}}^t \cdot e^{-\sum_{\ell \in S} y_\ell^{1:(t-1)}} + \frac{1}{2} \sum_{\{j, k\} : j \in S, k \notin S} \mass_{\{j,k\}}^t \cdot e^{- \sum_{\ell \in S} y_\ell^{1:(t-1)} - y_j^{1:(t-1)}}
\]

We further relax the right-hand-side by dropping $y_j^{1:(t-1)}$ from the exponent of the last term:
\begin{align*}
	u_S^t 
	&
	\: \le \bigg( \sum_{\{ j, k \} : j, k \notin S} \mass_{\{j,k\}}^t \:+\: \frac{1}{2} \sum_{\{j, k\} : j \in S, k \notin S} \mass_{\{j,k\}}^t \bigg) \cdot e^{- \sum_{\ell \in S} y_\ell^{1:(t-1)}} \\
	&
	\: = \bigg( 1 - \sum_{\{ j, k \} : j, k \in S} \mass_{\{j,k\}}^t \:-\: \frac{1}{2} \sum_{\{j, k\} : j \in S, k \notin S} \mass_{\{j,k\}}^t \bigg) \cdot e^{- \sum_{\ell \in S} y_\ell^{1:(t-1)}}
	~.
\end{align*}

Observe that:
\[
	\sum_{j \in S} \: y_j^t ~ = ~ \sum_{\{ j, k \} : j, k \in S} \mass_{\{j,k\}}^t \:+\: \frac{1}{2} \sum_{\{j, k\} : j \in S, k \notin S} \mass_{\{j,k\}}^t
	~.
\]

Hence, we have:
\begin{align*}
	u_S^t 
	&
	~ \le ~ 
	\bigg( 1 - \sum_{j \in S} \: y_j^t \bigg) e^{- \sum_{j\in S} y_j^{1:(t-1)}} \\
	&
	~ \le e^{- \sum_{j \in S} y_j^{1:t}} 
	~,
\end{align*}
where the second inequality follows by $1 - y \le e^{-y}$ and $y_j^{1:t} = y_j^{1:(t-1)} + y_j^t$.

\subsection{Proof of Lemma~\ref{lem:general-ocs-recurrence}}
\label{app:general-ocs-recurrence}

The argument relies on the fact that the arrival of an online vertex at time $t$ is independent to the arrivals of online vertices before time $t$.
There are three cases depending on the arrival at time $t$.

First, if no online vertex arrives at time $t$, or the arrived online vertex's surrogate type does not involve any offline vertex in $S$, then the vertices in $S$ are unmatched after time $t$ if and only if they are unmatched before time $t$, which happens with probability $u_S^{t-1}$.
This corresponds to the first term on the right-hand-side.

Second, if a single-way surrogate type $j \in S$ or a two-way surrogate $\{j,k\}$ with $j, k \in S$ arrives at time $t$, then at least one vertex in $S$ is matched at the end of time $t$.
Hence, this case contributes zero to the right-hand-side.

Finally, if a two-way surrogate type $\{j,k\}$ arrives with $j \in S$ and $k \notin S$, then the vertices in $S$ are unmatched after time $t$ if (1) the vertices in $S+k$ are unmatched before time $t$, and (2) the algorithm matches to $k$ at time $t$.
The former happens with probability $u_{S+k}^{t-1}$.
The latter happens with probability:
\[
    \frac{e^{2y_k^{1:(t-1)}}}{e^{2y_k^{1:(t-1)}} + e^{2y_j^{1:(t-1)}}}
\]
by the definition of the algorithm.
Note that by $\lambda_{j,k}(t) = \lambda_{k,j}(t)$ (Proportionality), the total arrival rate of these two types is $2 \lambda_{j,k}(t)$.
This corresponds to the second term on the right-hand-side.

\subsection{Proof of Lemma~\ref{lem:matching-ocs-basic}}
\label{app:matching-ocs-basic}

We will prove this lemma by induction on $t$.
The base case $t=0$ is trivial since both sides are equal to $1$.
Suppose that the lemma holds for time step $t-1$.
We next consider time step $t$.
By Lemma~\ref{lem:general-ocs-recurrence-am-gm} and applying the induction hypothesis to subsets $S$ and $S+k$, we have:
\begin{align*}
    u_S^t
    &
    ~\le~ \bigg(1 - \sum_{j \in S}\mass_j^t - \sum_{\{j,k\} : j, k \in S} \mass_{\{j,k\}}^t - \sum_{\{j,k\} : j\in S, k\notin S} \mass_{\{j,k\}}^t \bigg) \cdot e^{- \sum_{j \in S} x_j^{1:(t-1)}} \\
    & \qquad\qquad
    + ~ \frac{1}{2} \sum_{\{j,k\} : j\in S, k\notin S}\mass_{\{j,k\}}^t \cdot e^{- \sum_{\ell \in S} x_\ell^{1:(t-1)} - x_j^{1:(t-1)}}
    ~.
\end{align*}

Finally, we drop $x_j^{1:(t-1)}$ from the exponent term of the last term.
We get that:
\begin{align*}
    u_S^t
    &
    ~\le~ 
    \bigg( 1 \:- \: \sum_{j\in S} \mass_j^t \: - \sum_{\{j,k\} : j, k\in S} \mass_{\{j,k\}}^t - \:\frac{1}{2} \sum_{\{j,k\}: j\in S, k\notin S} \mass_{\{j,k\}}^t \bigg) \cdot e^{- \sum_{j \in S} x_j^{1:(t-1)}} \\
    &
    ~=~
    \bigg( 1 - \sum_{j\in S} y_j^t \bigg) \cdot e^{- \sum_{j \in S} y_j^{1:(t-1)}}
    ~.
\end{align*}

The claim now follows by $y_j^{1:t} = y_j^{1:(t-1)} + y_j^t$ and $1-y \le e^{-y}$.

\section{Multi-Way OCS and AdWords: Proof of Lemma~\ref{lem:adwords-competitive}}
\label{app:adwords-competitive}

We first define the following parameters:
\begin{align*}
	\Gamma
	&
	~=~
	1 - \int_0^\infty g(z) e^{-z} \,\dif z
	~;
	\label{eqn:adwords-ratio}
	\\
	\beta(y)
	&
	~=~ - e^y \int_y^\infty g'(z) e^{-z} \, \dif z
	= g(y) - e^y \int_y^\infty g(z) e^{-z} dz
	~;
	\\[1ex]
	\alpha(y)
	&
	~=~
	- g'(y) - \beta(y) 
	~.
\end{align*}

We next explain the Balance-OCS algorithm with the above parameters.
For each time step $t$, imagine that we allocate the item by infinitesimal pieces.
When we allocate an (infinitesimal) $\varepsilon$ amount of the item at time $t$ to an offline vertex $j$, the used portion of $j$'s budget $y_j$ increases by:
\[
	\dif y_j = \varepsilon \cdot \frac{b^t_j}{B_j}
\]

Hence, its expected contribution to the objective, according to the convergence rate of the OCS, increases by:
\[
	\varepsilon \cdot \frac{b^t_j}{B_j} \cdot g'(y_j) \cdot B_j = \dif y_j \cdot g'(y_j) \cdot B_j
	~.
\]

We will distribute this increment between offline vertex $j$ and online vertex $t$.
Concretely, offline vertex $j$ gets:
\[
	\varepsilon \cdot \frac{b^t_j}{B_j} \cdot \alpha(y_j) \cdot B_j = \dif y_j \cdot \alpha(y_j) \cdot B_j
	~,
\]
and online vertex $t$ gets:
\[
	\varepsilon \cdot \frac{b^t_j}{B_j} \cdot \beta(y_j) \cdot B_j = \dif y \cdot \beta(y_j) \cdot B_j
	~.
\]

We will denote the cumulative gain/utility of offline vertices $j$ and online vertices $t$ from the above gain splitting process by $u_j$ and $u_t$.
By the definition of this gain splitting rule, we have the following invariants.

\begin{lemma}
	\label{lem:adwords-objective}
	The expected objective of the rounded solution given by the OCS is at least:
	\[
		\sum_{j \in J} u_j + \sum_{t=1}^T u_t
		~.
	\]	
\end{lemma}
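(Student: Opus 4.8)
The plan is to show that the gain-splitting process accounts for the entire expected objective of the rounded solution, i.e.\ that the total contribution collected by offline vertices and online vertices equals (a lower bound on) the expected objective. First I would fix the internal randomness and realized online types of the OCS and track, for each offline agent $j$, the used-budget variable $y_j$ as it grows from $0$ to its final value $y_j^{\mathrm{final}}$ as the items are (conceptually) allocated in infinitesimal pieces. By construction, when an $\varepsilon$-piece of item $t$ is sent to $j$, the pair $(j,t)$ jointly receives $\dif y_j\cdot\big(\alpha(y_j)+\beta(y_j)\big)\cdot B_j$, which by the definition $\alpha(y)=-g'(y)-\beta(y)$ equals $-\dif y_j\cdot g'(y_j)\cdot B_j$. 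Integrating this over the whole allocation of agent $j$ yields
\[
	\int_0^{y_j^{\mathrm{final}}} \big(-g'(y)\big)\,B_j\,\dif y \;=\; \big(g(0)-g(y_j^{\mathrm{final}})\big)\,B_j \;=\; \big(1-g(y_j^{\mathrm{final}})\big)\,B_j,
\]
using $g(0)=1$. Summing over $j\in J$, the grand total handed out during the whole process is $\sum_j\big(1-g(y_j^{\mathrm{final}})\big)B_j$.

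Next I would take expectations over the randomness of the OCS and the online realizations. On one hand, the convergence-rate guarantee of the OCS (the defining property in Section~\ref{sec:socs}, specialized to the OCS-for-AdWords setting in this section) says precisely that agent $j$'s expected value for the rounded allocation is at least $\big(1-g(y_j)\big)B_j$, where now $y_j=\E[y_j^{\mathrm{final}}]=\sum_t \matchrate^t_j b^t_j/B_j$ is the deterministic quantity defined at the start of the subsection. Here I must be slightly careful: the per-realization identity above involves $g(y_j^{\mathrm{final}})$ for the realized trajectory, whereas the convergence rate is stated in terms of $g$ evaluated at the expectation; I would reconcile this by defining the online/offline utilities $u_j,u_t$ as the \emph{expected} gains accumulated, so that $\sum_j u_j+\sum_t u_t=\E\big[\sum_j(1-g(y_j^{\mathrm{final}}))B_j\big]$, and then invoke the convergence-rate bound directly as the statement that $\ALG\ge\sum_j u_j+\sum_t u_t$ — i.e.\ the gain-splitting is merely a bookkeeping device that redistributes the lower bound $(1-g(y_j))B_j$ between $j$ and the online vertices, with the total preserved. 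On the other hand, $\ALG$, the algorithm's expected objective, is at least $\sum_j\big(1-g(y_j)\big)B_j$ by the convergence rate, which is exactly $\sum_j u_j+\sum_t u_t$. This gives the claimed inequality.

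The main obstacle I anticipate is making the infinitesimal gain-splitting argument rigorous and matching the two notions of $y_j$ (realized versus expected). The cleanest route, which I would adopt, is to \emph{define} $u_j$ and $u_t$ as expectations of integrals of $\alpha(y_j)B_j\,\dif y_j$ and $\beta(y_j)B_j\,\dif y_j$ over the (random) allocation path, observe that $u_j+\sum_{t}(\text{gain of }t\text{ from }j)$ telescopes via $\alpha+\beta=-g'$ to $\E[(1-g(y_j^{\mathrm{final}}))]B_j$, and then bound this last expectation below using the convergence-rate guarantee together with Jensen's inequality if $g$ were convex — or, more simply, take the definition of convergence rate at face value as an inequality $\E[v_j]\ge(1-g(y_j))B_j$ and note that the gain-splitting conserves the left-hand side. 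Since the lemma only asserts a lower bound, I would route the argument so that every inequality points the right way: $\ALG=\sum_j\E[v_j]\ge\sum_j(1-g(y_j))B_j=\sum_j u_j+\sum_t u_t$, where the middle equality is the bookkeeping identity of the gain-splitting rule. The remaining steps — verifying $\alpha(y),\beta(y)\ge0$ so that the split is genuinely a redistribution, and checking the boundary behavior of $\beta$ at $\infty$ — are routine given $g$ nonincreasing with $g(0)=1$, and I would defer them to the calculations establishing $\Gamma$.
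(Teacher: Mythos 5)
Your final argument is correct and is essentially the same one-line observation the paper treats as immediate from the gain-splitting definitions: since $\alpha(y)+\beta(y)=-g'(y)$ and $g(0)=1$, the total gain handed out along agent $j$'s fractional trajectory telescopes to $\big(1-g(y_j)\big)B_j$, so $\sum_j u_j+\sum_t u_t=\sum_j\big(1-g(y_j)\big)B_j$, while the convergence-rate guarantee gives $\ALG=\sum_j\E[v_j]\ge\sum_j\big(1-g(y_j)\big)B_j$.

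One remark on the detour in the middle of your writeup: the worry about reconciling $g(y_j^{\mathrm{final}})$ with $g$ evaluated at an expectation (and possibly needing Jensen) is moot here. In the adversarial OCS-for-AdWords model the fractional allocation trajectory is deterministic — the adversary fixes the bids, and the Balance rule's water-filling produces a fixed sequence $\matchrate^t$; the only randomness is the OCS's internal rounding, which enters the argument solely through the convergence-rate inequality. Hence $y_j^{\mathrm{final}}=y_j$ is a deterministic scalar and the bookkeeping identity $\sum_j u_j+\sum_t u_t=\sum_j\big(1-g(y_j)\big)B_j$ holds exactly, not just in expectation. Likewise, non-negativity of $\alpha$ and $\beta$ plays no role in this particular lemma (the identity is algebraic); you only need it elsewhere, when arguing that the per-vertex gains lower-bound what the optimum collects.
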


\begin{lemma}
	\label{lem:adwords-offline-invariant}
	For any offline vertex $j \in J$, we have:	
	\[
		u_j = \int_0^{y_j} \alpha(y) \:\dif y \cdot B_j
		~.
	\]
\end{lemma}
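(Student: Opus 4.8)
The statement to prove is Lemma~\ref{lem:adwords-offline-invariant}: for any offline vertex $j \in J$, we have $u_j = \int_0^{y_j} \alpha(y)\,\dif y \cdot B_j$.

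\textbf{Approach.} The plan is to track how $u_j$ accumulates as the algorithm allocates items to $j$ in infinitesimal pieces, and to recognize the integrand in the gain-splitting rule as $\alpha(y_j)\cdot B_j$ per unit of $y_j$. Recall from the definition of the Balance-OCS gain splitting that whenever an $\varepsilon$ amount of an item at time $t$ is allocated to $j$, the quantity $y_j$ increases by $\dif y_j = \varepsilon\cdot \nicefrac{b^t_j}{B_j}$, and offline vertex $j$'s share of the resulting increment to the expected objective is $\dif y_j \cdot \alpha(y_j)\cdot B_j$. The key observation is that $u_j$ is, by construction, the sum (integral) of these shares over the entire run of the algorithm, and the only state variable on which each infinitesimal share depends is the current value of $y_j$ at the moment of allocation.

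\textbf{Key steps.} First I would set up the accounting: $u_j = \sum_t (\text{increments to } u_j \text{ at step } t)$, and over a single step the increment is $\int \alpha(y_j)\cdot B_j \,\dif y_j$ where the integral is taken over the portion of $[0, y_j]$ filled in during that step. Second, I would argue that $y_j$ is monotonically non-decreasing over the course of the algorithm (each allocation only adds to it) and that its final value is exactly $y_j = \sum_t \matchrate^t_j \cdot \nicefrac{b^t_j}{B_j}$, matching the definition of $y_j$ used in the convergence-rate statement. Third, I would concatenate the per-step integrals: since the intervals of $[0,y_j]$ filled across the successive steps partition $[0,y_j]$ (allocations are processed step by step, and within each step by infinitesimal pieces), the telescoping/concatenation of $\int \alpha(y)\cdot B_j\,\dif y$ over these pieces gives $\int_0^{y_j}\alpha(y)\,\dif y \cdot B_j$. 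Finally, I would note that $\alpha$ is a fixed deterministic function of a single real variable (defined via $g$ and the auxiliary functions $\beta$, $\Gamma$), so $B_j$ factors out cleanly and the expression is exactly as claimed.

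\textbf{Main obstacle.} The substantive point is not any computation but making the infinitesimal bookkeeping rigorous: one must be careful that ``allocating the item by infinitesimal pieces'' is a genuine description of a continuous limit of the discrete algorithm, that the share rule depends only on the \emph{current} $y_j$ and not on which step or which item is responsible for the increment, and that no increments to $u_j$ come from steps where $j$ receives nothing (indeed then $\dif y_j = 0$ and the contribution vanishes). Once these points are granted — and they follow directly from the definition of the gain-splitting process as stated — the identity is just the fundamental theorem of calculus applied to the accumulated share, with $B_j$ pulled out as a constant. I expect the write-up to be short, essentially unwinding the definitions of $u_j$, $\dif y_j$, and $\alpha$.
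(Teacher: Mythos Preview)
Your proposal is correct and follows the same approach as the paper: the paper states this lemma as an immediate consequence of the gain-splitting rule (``By the definition of this gain splitting rule, we have the following invariants'') without giving a separate proof. Your write-up simply unpacks that definition-chasing --- noting that each infinitesimal increment to $u_j$ equals $\alpha(y_j)\,B_j\,\dif y_j$ and that $y_j$ increases monotonically from $0$ to its final value --- which is exactly the intended argument.
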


The allocation rule will greedily maximize the gain distributed to online vertex $t$, by allocating each infinitesimal piece of the item to the offline vertex with the maximum:
\[
	b^t_j \cdot \beta(y_j)
	~.
\]

Note that the marginal return for allocating pieces over an online vertex $t$ is non-increasing in this continuous allocation process.
An equivalent way to define this algorithm is to find a threshold marginal return $\theta > 0$, such that the total amount of allocation to the offline vertices subject to having marginal return at least $\theta$ is equal to $1$.
More precisely, given $\theta$, the amount of allocation to offline vertex $j$ can be computed as:
\[
	\matchrate^t_j(\theta) = \frac{B_j}{b^t_j} \left( \beta^{-1} \Big( \frac{\theta}{b^t_j} \Big) - y_j \right)^+
	~.
\]
We will artificially define $\matchrate^t_j(\theta) = 0$ if online vertex $t$ bids zero for offline vertex $j$, i.e., if $b^t_j = 0$.

For any offline vertex $j$ with a positive bid $b^t_j > 0$, this is decreasing in $\theta$, and satisfies that $\lim_{\theta \to 0} \matchrate^t_j(\theta) = \infty$ and $\lim_{\theta \to \infty} \matchrate^t_j(\theta) = 0$.
Hence, there exists a $\theta$ for which:
\begin{equation}
	\label{eqn:adwords-balance-ocs-allocation}
	\sum_{j \in J} \matchrate^t_j(\theta) = 1
	~.
\end{equation}

\bigskip

\begin{tcolorbox}
	\textbf{Balance-OCS for AdWords}\\[2ex]
	For each time step $t \in [T]$:
	\begin{enumerate}
		\item Find a threshold $\theta \in (0, \infty)$ that satisfies Equation~\eqref{eqn:adwords-balance-ocs-allocation}.
		\item Let $\matchrate^t_j(\theta)$ be the fractional allocation of item $t$ to each offline vertex $j$.
		\item Let the OCS for AdWords select based on fractional allocation $\matchrate^t(\theta)$.
	\end{enumerate}
\end{tcolorbox}

\bigskip

We will now prove that this algorithm is $\Gamma$ competitive.
Consider any offline vertex $j$, and the subset of online vertices allocated to it in the optimal allocation, denoted as $S_j$.
It suffices to prove that the total gains of offline vertex $j$ and online vertices in $t$ sum to at least:
\[
	u_j + \sum_{t \in S_j} u_t \ge \Gamma \cdot \min \Big\{ \sum_{t \in S_j} b^t_j, B_j \Big\}
	~.
\]

We have already characterized Offline vertex $j$'s contribution in Lemma~\ref{lem:adwords-offline-invariant}.
It remains to analyze the contribution of online vertices $t \in S_j$.
By the greedy fractional allocation rule, the marginal return per unit of online vertex $t$ is at least:
\[
	b^t_j \cdot \beta(y_j)
\]
because allocating to offline vertex $j$ at time $t$ would yield at least as much marginal return.
Hence, we get that:
\[
	u_t \ge b^t_j \cdot \beta(y_j)
	~.
\]

Summing over the online vertices $t \in S_j$, we have:
\[
	\sum_{t \in S_j} u_t \ge \sum_{t \in S_J} b^t_j \cdot \beta(y_j)
	~.
\]

Combining with Lemma~\ref{lem:adwords-offline-invariant} gives:
\[
	u_j + \sum_{t \in S_j} u_t \ge \left( \int_0^{y_j} \alpha(y) \:\dif y + \beta(y_j) \right) \cdot \min \Big\{ \sum_{t \in S_j} b^t_j, B_j \Big\}
	~.
\]

Finally, we verify that for any $y_j$:
\begin{equation}
	\label{eqn:adwords-alpha-beta-invariant}
	\int_0^{y_j} \alpha(y) \:\dif y + \beta(y_j) = \Gamma
	~.
\end{equation}

For $y_j = 0$, it holds because:
\begin{align*}
	\beta(0)
	&
	= g(0) - \int_0^\infty g(z) e^{-z} \,\dif z \\
	&
	= 1 - \int_0^\infty g(z) e^{-z} \,\dif z  
	= \Gamma
	~.
\end{align*}

Further, note that the derivative of $\beta(y)$ is:
\[
	- e^{y} \int_y^\infty g'(z) e^{-z} \,\dif z + e^y g'(y) e^{-y} = \beta(y) + g'(y)
	~.
\]

Hence, its derivative of Equation~\eqref{eqn:adwords-alpha-beta-invariant} w.r.t.\ $y_j$ equals:
\[
	\alpha(y_j) + \beta'(y_j) = 0
	~.
\]

\section{Converse Jensen Inequality: Proof of Lemma~\ref{lem:discrete-converse-jensen}}
\label{app:converse-jensen}

Consider all $i \in I'$ and any $t \in [T]$ such that $x_{ij}^t \ge \frac{1}{2} \mass_i^t$.
Without loss of generality, we may consider the case when for each $t$ there is only one such $i$; 
otherwise, we can merge them into a single type in the following argument. 
Hence, we will omit the subscript $i$ and rename these steps as $t = 1, 2, 3, \dots, n$.
We will fix $x_j^t$’s and consider $\mass^t$’s as variables.
We next argue that subject to the above constraints, $\sum_{t \in [n]} \mass^t$ is minimized when:
\begin{equation}
    \label{eqn:discrete-converse-jensen-characterize}
    \mass^1 = x_j^1 
    ~,\quad
    \mass^2 = \frac{x_j^2}{1-x_j^1} 
    ~,\quad
    \mass^n = \frac{x_j^n}{1-x_j^1-x_j^2-\dots-x_j^{n-1}}
    = \frac{x_j^n}{1-x_j^{1:(n-1)}}
    ~.
\end{equation}
for some order of the rounds. 
In fact, it is maximized when $x_j^1 \le x_j^2 \le \dots \le x_j^n$ but we do not need this for proving our claim.

First, it is easy to verify that the above $\mass(t)$’s are feasible. For any subset $S \subseteq [n]$, we have:
\begin{align*}
    \prod_{t \in S} \big(1 - \mass^t\big)
    &
    = \prod_{t \in S} \frac{1-x_j^{1:t}}{1-x_j^{1:(t-1)}} \\
    &
    \le \prod_{t \in S} \frac{1-\sum_{t' \in S : t' \le t} x_j^{t'}}{1-\sum_{t' \in S : t' < t} x_j^{t'}} \\[1ex]
    &
    = 1 - x_j^S
    ~.
\end{align*}

Next, we prove its optimality.
We change variables by letting $\alpha^t = \ln \big(1- \mass^t\big)$.
The problem becomes maximizing a convex function:
\[
    \sum_{t \in [n]} e^{\alpha^t}
\]
subject to the constraints that for any subset of steps $S \subseteq [n]$:
\[
    \alpha^S = \sum_{t \in S} \alpha^t \le \ln \Big( 1 - x_j^S\Big)
    ~.
\]

This is maximized at a vertex. 
Each vertex corresponds to $n$ tight constraints.
The constraints must correspond to $n$ subsets $S_1 \subset S_2 \subset \dots \subset S_n$ satisfying proper containments. Otherwise, suppose the constraints are tight for two subsets $S, T$ but $S \not\subset T$ and $T \not\subset S$. We have:
\begin{align*}
    \ln \Big( 1 - x_j^S \Big) + \ln \Big( 1 - x_j^T \Big)
    &
    = \alpha^S+ \alpha^T \\
    &
    = \alpha^{S \cup T} + \alpha^{S \cap T} \\
    &
    \le \ln \Big( 1 - x_j^{S \cup T} \Big) + \ln \Big( 1 - x_j^{S \cap T} \Big)
    ~.
\end{align*}
This is a contradiction to the concavity of function $\ln (1-x)$, because $x^S + x^T = x^{S \cup T} + x^{x \cap T}$ and $x^{S \cup T} \ge x^S, x^T$, and $x^S, x^T \ge x^{S \cap T}$.

Further, the adjacent sets' sizes differ by $1$, i.e., $|S_i| = |S_{i-1}| + 1$ for any $1 < i \le n$.
Otherwise, there must be $t \in S_i \setminus S_{i-1}$ such that $x_j(t) = 0$.
However, the constraint \eqref{eqn:discrete-time-lp} is then violated for $S = S_i - t$.

Finally, we prove the stated inequality subject to the characterization in Eqn.~\eqref{eqn:discrete-converse-jensen-characterize}.
Hence, the left-hand-side of the inequality equals:
\[
    \sum_{t \in [n]} \big( 2 x_j^t - \mass^t \big) = \sum_{t\in[n]} \Big( 2 x_j^t - \frac{x_j^t}{1-x_j^{1:(t-1)}} \Big) 
    ~.
\]

We will compare:
\[
    2 x_j^t - \frac{x_j^t}{1-x^{1:(t-1)}}
\]
with:
\[
    \int_{x_j^{1:(t-1)}}^{x_j^{1:t}} \Big( 2 - \frac{1}{1-z} \Big)^+ \,\dif z
\]
and bound the difference by $2 x_j(t)^2$.

For $t < n$, we have $x_j^{1:t} \le \frac{1}{2}$, because otherwise we would have $x_j^{t+1} < \frac{1}{2} \mass^{t+1}$, and thus, round $t+1$ would not qualified as one of the $n$ rounds.
Then, the difference for $t$ is:
\[
    - \ln \Big( 1 - \frac{x_j^t}{1-x^{1:(t-1)}} \Big) - \frac{x_j^t}{1-x^{1:(t-1)}} 
    ~.
\]

Since $- \ln (1-y) - y$ is increasing and $x_j^{1:(t-1)} \le \frac{1}{2} - x_j^t$, this is at most:
\[
    - \ln \Big( 1 - \frac{x_j^t}{\frac{1}{2}+x_j^t} \Big) - \frac{x_j^t}{\frac{1}{2}+x_j^t}
    \le 2 \big( x_j^t \big)^2
    ~.
\]
    
For $t = n$, if $x_j^{1:n} \le \frac{1}{2}$, the above argument still works.
If $x_j^{1:n} > \frac{1}{2}$, the benchmark that we compared to does not depend on $x_j^n$ as long as $x_j^n \ge \frac{1}{2} - x_j^{1:(n-1)}$.
We want to show that:
\[
    2 \big( x_j^n \big)^2 - \Big( 2 x_j^n - \frac{x_j^n}{1-x_j^{1:(n-1)}} \Big) + \int_{x_j(1:n-1)}^{\frac{1}{2}} \Big( 2 - \frac{1}{1-z} \Big) \,\dif z \ge 0
    ~.
\]

The derivative with respect to $x_j^n$ is:
\begin{align*}
    4x_j^n - 2 + \frac{1}{1-x_j^{1:(n-1)}} &
    ~ \ge ~ 4 \Big(\frac{1}{2} - x_j^{1:(n-1)}\Big) - 2 + \frac{1}{1 - x_j^{1:(n-1)}} \\
    &
    ~ = ~ \frac{(1-2x_j^{1:(n-1)})^2}{1-x_j^{1:(n-1)}} \ge 0
    ~.
\end{align*}

Hence, it is minimized when $x_j^n = \frac{1}{2} - x^{1:(n-1)}$, and thus, reducing to the case when $x_j^{1:n} \le \frac{1}{2}$ for which the above argument still works.

In sum, the left-hand-side of the inequality is at most:
\[
    \int_0^{x_j^{1:n}} \Big(2 - \frac{1}{1-z}\Big)^+ \,\dif z + 2 \sum_{t \in [n]} \big( x_j^t \big)^2
    \le \int_0^1 \Big(2 - \frac{1}{1-z}\Big)^+ \,\dif z + 2 \sum_{t \in [T]} \big( x_j^t \big)^2
    ~.
\]

The theorem then follows by verifying that the integral equals $1 - \ln 2$.

\addtocontents{toc}{\protect\setcounter{tocdepth}{2}}

\end{document}